\documentclass[a4paper,UKenglish,cleveref, autoref, thm-restate]{lipics-v2021}

\nolinenumbers
\input{Preamble.tex}

 \title{Lossy Kernelization for (Implicit) Hitting Set Problems}

\author{Fedor V. Fomin}{University of Bergen, Bergen, Norway.}{ fomin@ii.uib.no}{}{Research Council of Norway via the project BWCA (grant no. 314528).}
\author{Tien-Nam Le}{\'{E}cole Normale Sup\'{e}rieure de Lyon, Lyon, France.}{tien-nam.le@ens-lyon.fr}{}{}
\author{Daniel Lokshtanov}{University of California Santa Barbara, USA.}{daniello@ucsb.edu}{}{Supported by NSF award CCF-2008838.}
\author{Saket Saurabh}{The Institute of Mathematical Sciences, HBNI, Chennai, India, and University of Bergen, Bergen, Norway.} {saket@imsc.res.in}{}{European Research Council (ERC) grant agreement no.~819416, and
Swarnajayanti Fellowship no.~DST/SJF/MSA01/2017-18.}
\author{St\'{e}phan Thomass\'{e}}{\'{E}cole Normale Sup\'{e}rieure de Lyon, Lyon, France.}{stephan.thomasse@ens-lyon.fr}{}{ANR projects TWIN-WIDTH (CE48-0014-01) and DIGRAPHS (CE48-0013-01).}
\author{Meirav Zehavi}{Ben-Gurion University of the Negev, Beersheba, Israel.}{zehavimeirav@gmail.com}{}{European Research Council (ERC) grant titled PARAPATH.}

%\author{Anonymous}{Anonymous}{Anonymous}{Anonymous}{Anonymous}

\authorrunning{F.~V.~Fomin, T.~Le, D.~Lokshtanov, S.~Saurabh, S.~Thomass\'{e} and M.~Zehavi}

\Copyright{Fedor V.~Fomin, Tien-Nam Le, Daniel Lokshtanov, Saket Saurabh, St\'{e}phan Thomass\'{e} and Meirav Zehavi}

\begin{CCSXML}
<ccs2012>
<concept>
<concept_id>10003752.10003809.10010052</concept_id>
<concept_desc>Theory of computation~Parameterized complexity and exact algorithms</concept_desc>
<concept_significance>500</concept_significance>
</concept>
</ccs2012>
\end{CCSXML}

\ccsdesc[500]{Theory of computation~Parameterized complexity and exact algorithms}

\keywords{Hitting Set, Lossy Kernelization} 

%Editor-only macros:: begin (do not touch as author)%%%%%%%%%%%%%%%%%%%%%%%%%%%%%%%%%%
\EventEditors{John Q. Open and Joan R. Access}
\EventNoEds{2}
\EventLongTitle{42nd Conference on Very Important Topics (CVIT 2016)}
\EventShortTitle{CVIT 2016}
\EventAcronym{CVIT}
\EventYear{2016}
\EventDate{December 24--27, 2016}
\EventLocation{Little Whinging, United Kingdom}
\EventLogo{}
\SeriesVolume{42}
\ArticleNo{23}
%%%%%%%%%%%%%%%%%%%%%%%%%%%%%%%%%%%%%%%%%%%%%%%%%%%%%%

\begin{document}

\maketitle

\begin{abstract} 
We re-visit the complexity of polynomial time pre-processing (kernelization) for the {\sc $d$-Hitting Set} problem. This is one of the most classic  problems in Parameterized Complexity by itself, and, furthermore, it encompasses several other of the most well-studied problems in this field, such as {\sc Vertex Cover}, \fvstfull (\fvst) and \cvdfull (\cvd). In fact, {\sc $d$-Hitting Set} encompasses any deletion problem to a hereditary property that can be characterized by a finite set of forbidden induced  subgraphs. With respect to bit size, the kernelization complexity of {\sc $d$-Hitting Set} is essentially settled: there exists a kernel with  $\OO(k^d)$ bits ($\OO(k^d)$ sets and $\OO(k^{d-1})$ elements)   and this it tight by the result of Dell and van Melkebeek~[STOC 2010, JACM 2014]. Still, the question of whether there exists a kernel for  {\sc $d$-Hitting Set} with {\em fewer elements} has remained one of the most major open problems~in~Kernelization. 

In this paper, we first show that if we allow the kernelization to be {\em lossy} with a qualitatively better loss than the best possible approximation ratio of polynomial time approximation algorithms, then one can obtain kernels where the number of elements is linear for every fixed $d$.  Further, based on this, we present our main result: we show that there exist approximate Turing kernelizations for {\sc $d$-Hitting Set} that even beat the established bit-size lower bounds for exact kernelizations---in fact, we use a {\em constant} number of oracle calls, each with {\em ``near linear''} ($\OO(k^{1+\epsilon})$) bit size, that is, almost the best one could hope for. Lastly, for two special cases of implicit {\sc 3-Hitting set}, namely, \fvst and \cvd, we obtain the ``best of both worlds'' type of results---$(1+\epsilon)$-approximate kernelizations with a linear number of vertices. In terms of size, this substantially improves the exact kernels of Fomin et al.~[SODA 2018, TALG 2019], with simpler~arguments.  %On the way, we use the following fact about the natural Linear Programming (LP)  relaxation of {\sc $d$-Hitting Set}: there exists a set $S$ of at most $k\cdot d$ elements that contains the support of every optimum LP solution to the LP-relaxation of  {\sc $d$-Hitting Set}. 
\end{abstract}

\newpage
%!TEX root = Main.tex
%%%%%%%%%%%%%%%%%%%

\newcommand{\cO}{\mathcal{O}}
\newcommand{\Oh}{\cO}
\newcommand{\eps}{{\varepsilon}}

\newcommand{\compass}{\NP\subseteq \coNP/\poly}
\newcommand{\ncompass}{\NP\nsubseteq \coNP/\poly}

\newcommand{\ncocompass}{\coNP\nsubseteq \NP/\poly}
 \newcommand{\conppoly}{\coNP/\poly} 
 \newcommand{\nppoly}{ \NP/\poly} 
 \newcommand{\nppolyfull}{\NP\subseteq\coNP/\poly} 
 \newcommand{\conppolyfull}{\coNP  \subseteq \NP/\poly} 

%% Problem names
%%%%%%%%%%%%%%%%%%%

\newcommand{\ProblemFormat}[1]{{\sc #1}}

\newcommand{\ProblemIndex}[1]{\index{problem!\ProblemFormat{#1}}}

\newcommand{\ProblemName}[1]{\ProblemFormat{#1}\ProblemIndex{#1}\xspace}
\newcommand{\ProblemNameX}[2]{\ProblemFormat{#1}\ProblemIndex{#2}\xspace}
\newcommand{\probdHS}{\ProblemName{$d$-Hitting Set}}
\newcommand{\probTHS}{\ProblemName{$3$-Hitting Set}}
\newcommand{\probCVD}{\ProblemName{Cluster Vertex Deletion}}
%%%Commands for open problems
\newcommand{\ProblemUberOpen}[2]{\parbox{\linewidth}{\noindent{}\colorbox{black!15}{\parbox{\linewidth}{#1}}\\[1mm]#2}\medskip}
\newcommand{\ProblemOpen}[3]{\ProblemUberOpen{#1}{\textbf{Input:}  #2\\\textbf{Question:}  #3}}

%%%%%%%%%%%%%%%%%
\section{Introduction}\label{sec:intro}
In \probdHS,  the
 input consists of a universe $U$, a family $\mathcal{F}$ of sets over $U$,
where each set in $\mathcal{F}$ is of size at most $d$, and an integer $k$. The task is to determine whether there exists a set  $S \subseteq U$, called a \emph{hitting set},  of size at most $k$ that has a nonempty
intersection with every set of $\mathcal{F}$.
The \probdHS problem is a classical optimization problem whose computational complexity has been studied for decades from the perspectives of different algorithmic paradigms.  Notably, \probdHS  is a generic problem, and hence, in particular, various computational problems can be re-cast in terms of it.  Of course,  {\sc Vertex Cover}, the most well-studied problem in Parameterized Complexity, is the special case of \probdHS with $d=2$.  More generally,  \probdHS encompasses a variety of (di)graph modification problems, where the task is to delete at most $k$ vertices (or edges) from a graph such that the resulting graph does not contain an induced subgraph (or a subgraph) from a family of forbidden  graphs  $\mathcal{F}$.
 Examples of some such well-studied problems include \cvdfull, {\sc $d$-Path Vertex Cover}, {\sc $d$-Component Order Connectivity},  {\sc $d$-Bounded-Degree Vertex Deletion}, {\sc Split Vertex Deletion} and    
 \fvstfull.

Kernelization, a subfield of Parameterized Complexity, provides a mathematical framework to capture  the performance of  polynomial time preprocessing. It makes it possible to quantify the degree to which polynomial time algorithms succeed at reducing input instances of \NP-hard problems. 
More formally,  every instance of a parameterized problem $\Pi$ is associated with an integer $k$,
% The integer  to each input instance, 
%In parameterized complexity, each problem instance is accompanied with an integer $k$, 
which is called the {\em parameter}, and $\Pi$ is said to admit a {\em kernel} if there is a polynomial-time algorithm, called a {\em kernelization algorithm}, that reduces the input instance of $\Pi$ down to an equivalent instance of $\Pi$ whose size is bounded by a function $f(k)$ of $k$. (Here, two instances are equivalent if both of them are either \yes-instances or \no-instances.) Such an algorithm is called an {\em $f(k)$-kernel} for $\Pi$. If $f(k)$ is a polynomial function of $k$, then we say that the kernel is a \emph{polynomial kernel}.  Over the last decade, Kernelization has become a central and active field of study, which stands at the forefront of Parameterized Complexity, especially with the development of complexity-theoretic lower bound tools for kernelization. These tools can be used to show that a polynomial kernel~\cite{BDFH09,D15,FS11,KratschW12}, or a kernel of a specific size~\cite{DM12,DM14,HW12} for concrete problems would imply an unlikely complexity-theoretic collapse. 
We refer to the recent book on kernelization \cite{FominLSZ19} 
%surveys~\cite{FominS2014l,GN07SIGACT,Kratsch14,LokshtanovMS12}, as well as the books~\cite{cygan2015parameterized,DowneyFbook13,FG06,Niedermeierbook06}, 
for a detailed treatment of the area of kernelization. In this paper, we provide a number of positive results on the kernelization complexity of \probdHS, as well as on several special cases of   \probTHS.

The most well-known example of a polynomial kernel, which, to the best of our knowledge, is taught in the first class/chapter on kernelization of any course/book that considers this subject,  is the classic kernel for {\sc Vertex Cover} ({\sc $2$-Hitting Set}) that is based on Buss rule. 
More generally, one of the most well-known examples of a polynomial kernel is a kernel with $\OO(k^d)$ sets and elements for \probdHS (when $d$ is a fixed constant) using the Erd\"{o}s-Rado Sunflower lemma.\footnote{The origins of this result are unclear. The first kernel with $\OO(k^d)$ sets appeared in 2004~\cite{FellowsKNRRSTW08},
 but the authors do not make use of the Sunflower Lemma. To the best of our knowledge, the first exposition of the kernel based on the Sunflower Lemma appears in the book of Flum and
 Grohe~\cite{FG06}.}
Complementing this positive result,  originally in 2010, a celebrated result by Dell and van Melkebeek~\cite{DM14} showed that unless $\conppolyfull$, for any $d\geq 2$ and any $\epsilon >0$,  \dhsfull{d} % and \dspfull{d} 
 does not admit a kernel with $\OO(k^{d-\epsilon})$ sets. 
 Hence, the kernel with $\OO(k^d)$ sets is essentially tight with respect to size. However, when it comes to the bound on the number of elements in a kernel, the situation is unclear. 
Abu-Khzam~\cite{Abu-Khzam10}  showed that   \probdHS  admits a kernel with at most $(2d - 1)k^{d-1} + k$ elements. However, we do not know whether this bound is tight or even close to that. As it was written in 
\cite[page~470]{FominLSZ19}: 
\begin{quote}
\emph{Could it be that  \probdHS admits a kernel with a polynomial in $k$ number of elements, where the degree of the polynomial does not depend on $d$?
 This does not look like a plausible conjecture, but we do not know how  to refute it either.}
 \end{quote}
 The origins of this question can be traced back to the open problems from  WorKer 2010~\cite[page 4]{BFS10-worker}. Moreover, in the list of open problems from WorKer 2013 and FPT School 2014~\cite[page 4]{cygan2014open}, the authors asked whether \probdHS  admits a kernel with $f(d)\cdot k$ elements for some function $f$ of $d$ only. After being explicitly stated at these venues, this question and its variants have been re-stated in a considerable number of papers~(see, e.g., \cite{DomGHNT10,FominLSZ19,YouW017,BessyFGPPST11}), and is being repeatedly asked in annual meetings centered around parameterized complexity. Arguably, this question has become the most major and longstanding open problem in kernelization for a specific problem.  In spite of many attempts, even for $d=3$,  the question whether \probdHS admits a kernel with $\OO(k^{2-\varepsilon})$ elements, for some $\epsilon >0$, has still remained open.

 From an approximation perspective, the optimization version of  \probdHS admits a trivial $d$-approximation. Up to the Unique Game Conjecture, this bound is tight---for any $\varepsilon>0$,  \probdHS does not admit a polynomial time $(d- \varepsilon)$-approximation~\cite{DBLP:journals/jcss/KhotR08}. So, at this front, the problem is essentially resolved.
 
With respect to kernelization, firstly, the barrier in terms of number of sets, and secondly, the lack of progress in terms of the number of elements, coupled with the likely impossibility of $(d- \varepsilon)$-approximation of \probdHS, bring lossy kernelization as a natural tool for further exploring of the complexity of this fundamental problem. We postpone the formal definition of  lossy kernelization to Section~\ref{sec:prelims}. Informally,   
a polynomial size $\alpha$-approximate kernel consists of two polynomial-time procedures. The first is a pre-processing algorithm that takes as input an instance $(I,k)$ to a parameterized problem, and outputs another instance $(I',k')$ to the same problem, such that $|I'|+k' \leq k^{\OO(1)}$. The second transforms, for every $c \geq 1$, a $c$-approximate solution $S'$ to the pre-processed instance $(I',k')$ into a $(c \cdot \alpha)$-approximate solution $S$ to the original instance $(I,k)$. 
Then, the main question(s) that we address in this paper is:
 \smallskip

\begin{tcolorbox}[colback=gray!5!white,colframe=gray!75!black]
Is it possible to obtain a lossy kernel for \probdHS 
with a qualitatively better loss than $d$ and with $\OO(k^{d-1-\varepsilon})$ bit-size, or at least  with $\OO(k^{d-1-\varepsilon})$ elements?  
\end{tcolorbox}
 
In this paper, we present a surprising answer: {\em not only the number of elements can be bounded by $\OO(k)$ (rather than just $\OO(k^{d-1-\varepsilon})$), but even the bit-size can ``almost'' be bounded by $\OO(k)$!} From the perspective of the size of the kernel, this is essentially the best that one could have hoped for. Still, we only slightly (though non-negligibly) improve on the approximation ratio $d$. For example, for $d=2$ ({\sc Vertex Cover}), we attain an approximation ratio of $1.721$. So, while we make a critical step that is also the first---in particular, we show that, conceptually, the combination of kernelization and approximation breaks their independent barriers---we also open up the door for further research of this kind, on this problem as well as other problems.

More precisely, we present the following results and concept.
We remark that for all of our results, we use an interesting fact about the natural Linear Programming (LP)  relaxation of 
 {\sc $d$-Hitting Set}: the support of any optimal LP solution to the LP-relaxation of {\sc $d$-Hitting Set} is of size at most $d\cdot\mathsf{frac}$ where $\mathsf{frac}$ is the optimum (minimum value) of the LP~\cite{furedi1988matchings}. Furthermore, to reduce bit-size rather than only element number, we introduce an ``adaptive sampling strategy'' that is, to the best of our knowledge, also novel in parameterized complexity. We believe that these ideas will find further applications in kernelization in the future.
 More information on our methods can be found in the next section.
 
\begin{itemize}
\item {\bf Starting Point: Linear-Element Lossy Kernel for {\sc $d$-Hitting Set}.} First, we show that {\sc $d$-Hitting Set} admits a $(d-\frac{d-1}{d})$-approximate $d\cdot\mathsf{opt}$-element kernel, where $\mathsf{opt}\leq k$ is the (unknown) optimum (that is, size of smallest solution).\footnote{In fact, when the parameter is $k$, we show that the bound is better.} For example, when $d=3$, the approximation ratio is $d-\frac{d-1}{d}=2\frac{1}{3}$, which is a notable improvement over $3$.  When $d=2$, this result encompasses the classic (exact) $2\cdot\mathsf{opt}$-vertex kernel for {\sc Vertex Cover}~\cite{chen2001vertex,nemhauser1974properties}. 
We also remark that our linear-element lossy kernel for {\sc $d$-Hitting Set} is a  critical component (used as a black box) in all of our other results. 

\item {\bf Conceptual Contribution: Lossy Kernelization Protocols.} We extend the notions of lossy kernelization and kernelization protocols\footnote{We remark that kernelization protocols are a highly restricted special case of Turing kernels, that yet generalizes kernels.} to {\em lossy kernelization protocols}. Roughly speaking, an $\alpha$-approximate kernelization protocol can perform a bounded in $k$ number of calls (called {\em rounds}) to an oracle that solves the problem on instances of size (called {\em call size}) bounded in $k$, and besides that it runs in polynomial time. Ideally, the number of calls is bounded by a fixed constant, in which case the protocol is called {\em pure}. Then, if the oracle outputs $c$-approximate solutions to the instances it is given, the protocol should output a $(c\cdot\alpha)$-approximate solution to the input instance. In particular, a lossy kernel is the special case of a lossy protocol with one oracle call. The {\em volume} of a lossy kernelization protocol is the sum of the sizes of the calls it performs.

\item {\bf Main Contribution: Near-Linear Volume and Pure Lossy Kernelization Protocol for {\sc $d$-Hitting Set}.} We remark that the work of Dell and van Melkebeek~\cite{DM14} further asserts that also the existence of an exact (i.e., $1$ approximate in our terms) kernelization protocol for {\sc $d$-Hitting Set} of volume $\OO(k^{d-\epsilon})$ is impossible unless $\conppolyfull$.

First, we show that {\sc Vertex Cover} admits a (randomized) 1.721-approximate kernelization protocol of $2$ rounds and call size $\OO(k^{1.5})$. This special case is of major interest in itself: {\sc Vertex Cover} is the most well-studied problem in Parameterized Complexity, and, until now, no result that breaks both bit-size and approximation ratio barriers simultaneously has been known.

Then, we build upon the ideas exemplified for the case of {\sc Vertex Cover} to significantly generalize the result: while {\sc Vertex Cover} corresponds to $d=2$, we are able to capture {\em all} choices of $d$. Thereby, we prove our main result: for any $\epsilon>0$,  {\sc $d$-Hitting Set} admits a (randomized) pure $(d-\delta)$-approximate kernelization protocol of call size $\OO(k^{1+\epsilon})$. Here, the number of rounds and $\delta$ are fixed constants that depend only on $d$ and $\epsilon$. While the improvement over the barrier of $d$ in terms of approximation is minor (though still notable when $d=2$), it is a {\em proof of concept}---that is, it asserts that $d$ is not an impassable barrier.\footnote{Possibly, building upon our work, further improvements on the approximation factor (though perhaps at the cost of an increase in the output size) may follow.} Moreover, it does so with almost the best possible (being almost linear) output~size. 

\item {\bf Outlook: Relation to Ruzsa-Szemer\'{e}di Graphs.} Lastly, we present a connection between the possible existence of a $(1+\epsilon)$-approximate kernelization protocol for {\sc Vertex Cover} of call size $\OO(k^{1.5})$ and volume $\OO(k^{1.5+o(1)})$ and a known open problem about Ruzsa-Szemer\'{e}di graphs (defined in Section \ref{sec:prelimsAppendix}). We discuss this result in more detail in Section~\ref{sec:overview}.
\end{itemize}

\bigskip
\noindent{\bf Kernels for Implicit $3$-Hitting Set Problems}.
Lastly, we provide better lossy kernels for two well-studied graph problems, namely,  \probCVD and \fvstfull, which are known to be implicit \dhsfull{3} problems~\cite{pcbook}. Notably, both our algorithms are based on some of the ideas and concepts that are part of our previous results, and, furthermore, we believe that the approach underlying the parts common to both these algorithms may be useful when dealing also with other hitting and packing problems of constant-sized objects. 
In the \probCVD problem, we are given 
a graph $G$ and an integer $k$. The task is to decide whether 
 there exists a set $S$ of at most $k$ vertices of $G$ such that $G-S$ is a cluster graph. Here, a cluster graph is a graph where every connected component is a clique. It is known that this problem can be formulated as a \dhsfull{3} problem where the family $\cal F$ contains the vertex sets of all {\em induced $P_3$'s} of $G$. (An induced $P_3$ is a path on three vertices where the first and last vertices are non-adjacent in $G$.)
 In the \fvstfull\ problem, we are given  a  {tournament} $G$ and an integer $k$. The task is to decide whether  there is a set $S$ of $k$ vertices such that each directed cycle of $G$ contains a member of $S$ (i.e., $G-S$ is acyclic). It is known that \fvstfull can be formulated as a \dhsfull{3} problem as well, where the family $\cal F$ contains the vertex sets of all directed cycles on three vertices (triangles) of $G$. 
 
 In  \cite{FominLLSTZ19}, it was shown that \fvstfull and \cvdfull  
admit kernels with $\OO(k^{\frac{3}{2}})$ vertices and  $\OO(k^{\frac{5}{3}})$ vertices, respectively. This answered an open question from WorKer 2010~\cite[page 4]{BFS10-worker}, regarding the existence of kernels with $\OO(k^{2-\epsilon})$ vertices for these problems. The question of the existence of linear-vertex kernels for these problems is open.
 In the realm of approximation algorithms, for \fvstfull, Cai , Deng and Zang~\cite{CaiDZ00} gave a factor $2.5$ approximation algorithm, which was later improved to $7/3$ by Mnich, Williams and Végh~\cite{MnichWV16}. Recently, Lokshtanov, Misra, Mukherjee, Panolan, Philip and Saurabh~\cite{DBLP:conf/soda/LokshtanovMMPP020} gave a $2$-approximation algorithm for \fvstfull.
 For \cvdfull, You, Wang and Cao~\cite{YouW017} gave a factor $2.5$ approximation algorithm, which later was improved to $7/3$ by Fiorini, Joret and Schaudt~\cite{FioriniJS16}. It is open whether \cvdfull admits a $2$-approximation algorithm. We remark that both problems admit approximation-preserving reductions from {\sc Vertex Cover}, and hence they too do not admit $(2-\epsilon)$-approximation algorithms up to the Unique Games Conjecture.

We provide  the following results for \fvstfull and \cvdfull.
\begin{itemize}
 \item {\bf Cluster Vertex Deletion.} For any $0<\epsilon<1$, the {\sc Cluster Vertex Deletion} problem admits a $(1+\epsilon)$-approximate $\OO(\frac{1}{\epsilon}\cdot\mathsf{opt})$-vertex kernel.
 
\item {\bf Feedback Vertex Set in Tournaments.} For any $0<\epsilon<1$, the {\sc Feedback Vertex Set in Tournaments} problem admits a $(1+\epsilon)$-approximate $\OO(\frac{1}{\epsilon}\cdot\mathsf{opt})$-vertex kernel.
\end{itemize}

\bigskip
\noindent{\bf Reading Guide.}
First, in Section \ref{sec:prelims}, we present the concept lossy kernelization.  Then, in Section \ref{sec:overview}, we present an overview of our proofs. 
In Section \ref{sec:prelimsAppendix}, we present some basic terminology used throughout the paper. In Section \ref{sec:support}, we present a known result regarding the support of optimum LP solutions to the LP-relaxation of {\sc $d$-Hitting Set}. In Section~\ref{sec:HSElement}, we present our lossy linear-element kernel for {\sc $d$-Hitting Set}. In Section \ref{sec:HSSize}, we present our three lossy kernelization protocols (for {\sc Vertex Cover}, its generalization to {\sc $d$-Hitting Set} with near-linear call size, and a protocol relating the problem to Ruzsa-Szemer\'{e}di graphs). In Section~\ref{sec:lossyImplicit}, we present our $(1+\epsilon)$-approximate linear-vertex kernels for \cvdfull and \fvstfull. 
Lastly, in 
Section~\ref{sec:conclusion}, we conclude with some open problems.  
For easy reference, problem definitions can be found in Appendix~\ref{app:problems}.

%!TEX root = Main.tex

\section{Lossy Kernelization: Algorithms and Protocols}\label{sec:prelims}

{\bf Lossy Kernelization Algorithms.} We follow the framework of lossy kernelization presented in \cite{LPRS16}. Here, we deal only with minimization problems where the value of a solution is its size, and where the computation of an arbitrary solution (where no optimization is enforced) is trivial. Thus, for the sake of clarity of presentation, we only formulate the definitions for this context, and remark that the definitions can be extended to the more general setting in the straightforward way (for more information, see \cite{LPRS16}).
To present the definitions, consider a parameterized problem $\Pi$. Given an instance $I$ of $\Pi$ with parameter $k=\kappa(I)$, denote: if $k$ is a structural parameter, then $\pi_I(\mathsf{opt})=\mathsf{opt}$, and otherwise (if $k$ is a bound on the solution size given as part of the input) $\pi_I(\mathsf{opt})=\min\{\mathsf{opt},k+1\}$. Moreover, for any solution $S$ to $I$, denote: if $k$ is a structural parameter, then $\pi_I(S)=|S|$, and otherwise $\pi_I(S)=\min\{|S|,k+1\}$. We remark that when $\pi$ is irrelevant (e.g., when the parameter is structural), we will drop it. A discussion of the motivation behind this definition of $\pi_I$ can be found in~\cite{LPRS16}; here, we only briefly note that it signifies that we ``care'' only for solutions of size at most $k$---all other solutions are considered equally bad, treated as having size $k+1$.

\begin{definition}
Let $\Pi$ be a parameterized minimization problem. Let $\alpha\geq 1$. An {\em $\alpha$-approximate kernelization algorithm} for $\Pi$ consists of two polynomial-time procedures: {\bf reduce} and {\bf lift}. Given an instance $I$ of $\Pi$ with parameter $k=\kappa(I)$, {\bf reduce} outputs another instance $I'$ of $\Pi$  with parameter $k'=\kappa(I')$ such that $|I'|\leq f(k,\alpha)$ and $k'\leq k$. Given $I,I'$ and a solution $S'$ to $I'$, {\bf lift} outputs a solution $S$ to $I$ such that $\displaystyle{\frac{\pi_{I}(S)}{\pi_{I}(\mathsf{opt}(I))}\leq\alpha\frac{\pi_{I'}(S')}{\pi_{I'}(\mathsf{opt}(I'))}}$. %\\In case $\displaystyle{\frac{\pi_{I}(S)}{\pi_{I}(\mathsf{opt}(I))}\leq\max\{\alpha,\frac{\pi_{I'}(S')}{\pi_{I'}(\mathsf{opt}(I'))}\}}$, we further call the kernelization {\em strict}.
If $\displaystyle{\frac{\pi_{I}(S)}{\pi_{I}(\mathsf{opt}(I))}\leq\max\{\alpha,\frac{\pi_{I'}(S')}{\pi_{I'}(\mathsf{opt}(I'))}}\}$ holds, then the algorithm is termed~{\em strict}.
\end{definition}
In case $\Pi$ admits an $\alpha$-approximate kernelization algorithm where the output has size $f(k,\alpha)$, or where the output has $g(k,\alpha)$ ``elements'' (e.g., vertices), we say that $\Pi$ admits an $\alpha$-approximate kernel of size $f(k,\alpha)$, or an $\alpha$-approximate $g(k,\alpha)$-element kernel, respectively. When it is clear from context, we simply write $f(k)$ and $g(k)$. When it is guaranteed that $|I'|\leq f(k',\alpha)$ rather than only $|I'|\leq f(k,\alpha)$, then we say that the lossy kernel is {\em \good}.

We only deal with problems that have constant-factor polynomial-time approximation algorithms, and where we may directly work with (the unknown) $\mathsf{opt}$ as the parameter (then, $\pi$ can be dropped). However, working with $k$ (and hence $\pi$) has the effect of artificially altering kernel sizes, but not so if one remembers that $k$ and $\mathsf{opt}$ are different parameterizations. The following lemma clarifies a relation between these two parameterizations. 

\begin{lemma}\label{lem:lossyKerOptToK}
Let $\Pi$ be a minimization problem that, when parameterized by the optimum, admits an $\alpha$-approximate kernelization algorithm $\mathfrak{A}$ of size $f(\mathsf{opt})$ (resp., an $\alpha$-approximate $g(\mathsf{opt})$-element kernel).  Then, when parameterized by $k$, a bound on the solution size that is part of the input, it admits an $\alpha$-approximate kernelization algorithm $\mathfrak{B}$ of size $f(\frac{k+1}{\alpha})$ %for minimization and $f(\alpha(k+1))$ for maximization 
(resp., an $\alpha$-approximate $g(\frac{k+1}{\alpha})$-element kernel).
% for minimization and $g(\alpha(k+1))$ for maximization
\end{lemma}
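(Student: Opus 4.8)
The plan is to prove Lemma~\ref{lem:lossyKerOptToK} by a simple but careful reduction that runs the ``optimum-parameterized'' kernelization $\mathfrak{A}$ on the given instance, while accounting for the fact that the new parameter $k$ is only an upper bound on the solution size we actually care about (recall $\pi_I(\mathsf{opt})=\min\{\mathsf{opt},k+1\}$). First I would fix an instance $(I,k)$ of $\Pi$ parameterized by $k$. The key observation is a standard dichotomy: either $\mathsf{opt}(I)\leq \frac{k+1}{\alpha}$, or $\mathsf{opt}(I)>\frac{k+1}{\alpha}$. In the first case, running $\mathfrak{A}$ on $I$ (viewed as parameterized by $\mathsf{opt}$) produces $(I',\mathsf{opt}')$ with $|I'|\leq f(\mathsf{opt}(I))\leq f(\frac{k+1}{\alpha})$ (using monotonicity of $f$, which we may assume WLOG), and we set $k'=\min\{k,\text{something valid}\}$; the \textbf{lift} procedure of $\mathfrak{B}$ just calls the \textbf{lift} of $\mathfrak{A}$. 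In the second case, we cannot afford to run $\mathfrak{A}$ (its output size bound $f(\mathsf{opt})$ may be too large), so instead $\mathfrak{B}$ outputs a trivial constant-size \noinstance{} (or an instance encoding ``no solution of size $\leq k$ exists''), and \textbf{lift} outputs an arbitrary trivial solution to $I$.

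The crux is verifying the approximation guarantee of $\mathfrak{B}$ in the second case, and that $\mathfrak{B}$ can actually \emph{decide} which case it is in without solving $\Pi$. For the latter: we do not need to decide exactly; it suffices to use the constant-factor polynomial-time approximation algorithm that $\Pi$ is assumed to admit. Running it gives a solution of size at most $c\cdot\mathsf{opt}(I)$ for some fixed constant $c$; comparing this value to $k$ lets us branch safely (with the threshold adjusted by the factor $c$, which only changes the hidden constants in $f(\frac{k+1}{\alpha})$ and $g(\frac{k+1}{\alpha})$, not the functional form). Concretely: if the approximate solution has size $\leq k$, we are morally in ``$\mathsf{opt}$ small'' territory and run $\mathfrak{A}$; if it has size $>k$ but we still want to be safe, we note $\mathsf{opt}(I)>k/c$, and then, because $\mathfrak{A}$ was applied to an instance whose relevant optimum is capped, we still get the size bound $f(\min\{\mathsf{opt}(I),k+1\})$; since $f$ is increasing this is at most $f(k+1)$ — but that is not $f(\frac{k+1}{\alpha})$.

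This last point is where the real subtlety — and the main obstacle — lies, and it is exactly why the factor $\alpha$ appears in the statement. The correct accounting is: $\mathfrak{B}$ outputs a \noinstance{} (trivially of bounded size) precisely when it has certified that $\pi_I(\mathsf{opt}(I)) = k+1$, i.e. when no solution of size $\leq k$ exists; in that regime \emph{any} output solution $S$ satisfies $\pi_I(S)=k+1=\pi_I(\mathsf{opt}(I))$, so the ratio $\pi_I(S)/\pi_I(\mathsf{opt}(I))=1\leq \alpha\cdot(\text{anything}\geq 1)$ and strictness is even maintained. The threshold $\frac{k+1}{\alpha}$ rather than $k+1$ is forced by the following: when $\mathsf{opt}(I)\leq\frac{k+1}{\alpha}$ we run $\mathfrak{A}$, obtaining $(I',\mathsf{opt}')$ with $\mathsf{opt}' \leq \mathsf{opt}(I) \leq \frac{k+1}{\alpha}$ and $|I'|\leq f(\mathsf{opt}(I))\leq f(\frac{k+1}{\alpha})$; a $c$-approximate solution $S'$ to $I'$ lifts via $\mathfrak{A}$ to $S$ with $|S|\leq \alpha c\cdot\mathsf{opt}(I)\leq \alpha c\cdot\frac{k+1}{\alpha}=c(k+1)$ — wait, we must ensure $\pi_I(S)\leq \alpha\cdot c\cdot\pi_I(\mathsf{opt})$, and since $\pi_I(\mathsf{opt})=\mathsf{opt}(I)$ here (as $\mathsf{opt}(I)\leq k$), the guarantee of $\mathfrak{A}$ gives exactly $\pi_I(S)/\pi_I(\mathsf{opt})\leq \alpha\cdot\pi_{I'}(S')/\pi_{I'}(\mathsf{opt}')$, as required, with no loss. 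So the only place the factor $\frac{1}{\alpha}$ is genuinely needed is to guarantee that when $\mathsf{opt}(I)$ is \emph{just above} the threshold we can still safely declare ``no'' — because a $c$-approximate solution to an $I'$ coming from a small-opt instance could have size up to $\alpha\cdot\mathsf{opt}(I)$ after lifting, and we need this to stay $\leq k$ whenever $\mathfrak{A}$ was legitimately invoked, i.e. $\alpha\cdot\mathsf{opt}(I)\leq k+1$, which is exactly $\mathsf{opt}(I)\leq\frac{k+1}{\alpha}$. The remaining work is bookkeeping: checking that \textbf{reduce} and \textbf{lift} of $\mathfrak{B}$ run in polynomial time (immediate, since they call polynomial-time subroutines a constant number of times), and handling the element-count version identically with $g$ in place of $f$. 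I would present all of this in roughly one page, with the case analysis organized around the value returned by the assumed constant-factor approximation algorithm.
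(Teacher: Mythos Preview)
Your high-level dichotomy ($\mathsf{opt}(I)\le\frac{k+1}{\alpha}$ versus $\mathsf{opt}(I)>\frac{k+1}{\alpha}$) is exactly the right case split for the \emph{analysis}, but you try to implement it as a branch inside the \emph{algorithm}, and that is where the proposal breaks down. You suggest using an auxiliary $c$-approximation to decide the branch; however, a $c$-approximate solution of size $s$ only tells you $\mathsf{opt}\ge s/c$, which does not give you the precise threshold $\frac{k+1}{\alpha}$. Your own parenthetical (``only changes the hidden constants'') concedes this, but the lemma is stated with the exact bound $f(\frac{k+1}{\alpha})$, not $f(O(\frac{k+1}{\alpha}))$, and it is applied downstream with exact constants (e.g.\ Corollary~\ref{thm:HSElementK}). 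So as written your construction does not prove the statement.

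The paper's fix is simple and avoids the whole issue: \textbf{reduce} of $\mathfrak{B}$ \emph{always} runs \textbf{reduce} of $\mathfrak{A}$ and then checks whether the output already has size at most $f(\frac{k+1}{\alpha})$ (resp.\ at most $g(\frac{k+1}{\alpha})$ elements). If yes, it uses that output with $k'=k$; if no, it outputs a trivial constant-size instance. The dichotomy then lives purely in the correctness proof: when $\mathsf{opt}(I)<\frac{k+1}{\alpha}$, monotonicity of $f$ guarantees the output of $\mathfrak{A}$ passed the size test, so \textbf{lift} of $\mathfrak{A}$ applies and gives the $\alpha$-ratio (with a further sub-case on whether $|S'|\le k'+1$, which you also skip); when $\mathsf{opt}(I)\ge\frac{k+1}{\alpha}$, one has $\pi_I(S)/\pi_I(\mathsf{opt}(I))\le (k+1)/\frac{k+1}{\alpha}=\alpha$ regardless of what was output. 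No approximation oracle is needed, and the exact bound $f(\frac{k+1}{\alpha})$ drops out directly.
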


\begin{proof}
We design $\mathfrak{B}$ as follows.  Given an instance $(I,k)$ of $\Pi$, {\bf reduce} of $\mathfrak{B}$ calls {\bf reduce} of $\mathfrak{A}$ on $I$. If the output instance size is at most $f(\frac{k+1}{\alpha})$ % for minimization and $f(\alpha(k+1))$ for maximization
(resp., the output has at most $g(\frac{k+1}{\alpha})$ elements),
% for minimization and $g(\alpha(k+1))$ for maximization
 then it outputs this instance with parameter $k'=k$. Otherwise, it outputs a trivial constant-sized instance. Given $(I,k),(I',k')$ and a solution $S'$ to $(I',k')$, if $I'$ is the output of the {\bf reduce} procedure of $\mathfrak{A}$ on $I$, then {\bf lift} of $\mathfrak{B}$ calls {\bf lift} of $\mathfrak{A}$ on $I,I',S'$ and outputs the result. Otherwise, it outputs a trivial solution to $I$.

The {\bf reduce} and {\bf lift} procedures of  $\mathfrak{B}$ clearly have polynomial time complexities, and the definition of $\mathfrak{B}$ implies the required size (or element) bound on the output of {\bf reduce}. It remains to prove that the approximation ratio is $\alpha$. To this end, consider an input $(I,k),(I',k'),S'$ to {\bf lift} of $\mathfrak{B}$. Let $S$ be its output. We differentiate between two cases.
%First, consider the case where $\Pi$ is a minimization problem. Then, 
%We have two subcases:
\begin{itemize}
\item First, suppose that $\mathsf{opt}(I)\geq\frac{k+1}{\alpha}$. Then, $\displaystyle{\frac{\pi_{I}(S)}{\pi_{I}(\mathsf{opt}(I))} \leq \frac{k+1}{\frac{k+1}{\alpha}} = \alpha\leq \alpha\frac{\pi_{I'}(S')}{\pi_{I'}(\mathsf{opt}(I'))}}$ (where the last inequality follows because $|S'|\geq \mathsf{opt}(I')$ and hence $\pi_{I'}(S')\geq\pi_{I'}(\mathsf{opt}(I'))$).

\item Second, suppose that $\mathsf{opt}(I)<\frac{k+1}{\alpha}$. Then, it necessarily holds that $I'$ is the output of the {\bf reduce} procedure of $\mathfrak{A}$ on $I$. Moreover, note that $\mathsf{opt}(I')\leq \mathsf{opt}(I)$ and $k'=k$. So, if $|S'|\geq k'+2$, then $\displaystyle{\frac{\pi_{I}(S)}{\pi_{I}(\mathsf{opt}(I))} \leq \frac{k+1}{\pi_{I}(\mathsf{opt}(I))} =\frac{k'+1}{\mathsf{opt}(I)} \leq \frac{k'+1}{\mathsf{opt}(I')} = \frac{\pi_{I'}(S')}{\pi_{I'}(\mathsf{opt}(I'))}}$. Else, we suppose that $|S'|\leq k'+1$ and hence $\pi_{I'}(S')=|S'|$. Then, 
\[\displaystyle{\frac{\pi_{I}(S)}{\pi_{I}(\mathsf{opt}(I))}\leq \frac{|S|}{\pi_{I}(\mathsf{opt}(I))}=\frac{|S|}{\mathsf{opt}(I)}\leq\alpha\frac{|S'|}{\mathsf{opt}(I')}=\alpha\frac{\pi_{I'}(S')}{\pi_{I'}(\mathsf{opt}(I'))}}.\]
Here, the second inequality follows because the approximation ratio of $\mathfrak{A}$ is $\alpha$.
\end{itemize}
This completes the proof.
\end{proof}

%\begin{lemma}
%\hly{lemma: $k$ to $opt$.}
%\end{lemma}
%
%\begin{proof}
%\hly{todo}
%\end{proof}

Approximate kernelization algorithm often use strict reduction rules, defined as follows.

\begin{definition}
Let $\Pi$ be a parameterized minimization problem. Let $\alpha\geq 1$. An $\alpha$-strict reduction rule for $\Pi$ consists of two polynomial-time procedures: {\bf reduce} and {\bf lift}. Given an instance $I$ of $\Pi$ with parameter $k=\kappa(I)$, {\bf reduce} outputs another instance $I'$ of $\Pi$  with parameter $k'=\kappa(I')\leq k$. Given $I,I'$ and a solution $S'$ to $I'$, {\bf lift} outputs a solution $S$ to $I$ such that $\displaystyle{\frac{\pi_{I}(S)}{\pi_{I}(\mathsf{opt}(I))}\leq\max\{\alpha,\frac{\pi_{I'}(S')}{\pi_{I'}(\mathsf{opt}(I'))}\}}$.
\end{definition}

\begin{proposition}[\cite{LPRS16}]\label{prop:strictRule}
Let $\Pi$ be a parameterized problem. For any $\alpha\geq 1$, an approximate kernelization algorithm for $\Pi$ that consists only of $\alpha$-strict reduction rules has approximation ratio $\alpha$. Furthermore, it is strict.
\end{proposition}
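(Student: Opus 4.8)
The plan is to view the \textbf{reduce} procedure as producing a finite chain of instances, and to lift a solution back along that chain one rule application at a time, showing that the ``$\max$ with $\alpha$'' shape of the strict guarantee is exactly what is preserved under composition.

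First I would fix the bookkeeping. Since \textbf{reduce} runs in polynomial time, on input $I$ it applies some sequence of $\alpha$-strict reduction rules, yielding a chain $I = I_0, I_1, \dots, I_t = I'$ where $t$ is bounded by a polynomial in $|I|$ (in particular finite), each $I_j$ is obtained from $I_{j-1}$ by the \textbf{reduce} step of an $\alpha$-strict rule, and $\kappa(I_0) \ge \kappa(I_1) \ge \dots \ge \kappa(I_t)$. The \textbf{lift} procedure, given a solution $S' = S_t$ to $I_t$, applies the corresponding \textbf{lift} steps in reverse order to produce $S_t, S_{t-1}, \dots, S_0 = S$, where $S_{j-1}$ is the solution to $I_{j-1}$ obtained by lifting $S_j$ through the $j$-th rule. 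Each $S_j$ is a genuine solution to $I_j$ by definition of a reduction rule, so this is well defined; and since $t$ is polynomial and each step is polynomial time, \textbf{lift} runs in polynomial time.

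The core of the argument is the inductive inequality
\[
\frac{\pi_{I_{0}}(S_{0})}{\pi_{I_{0}}(\mathsf{opt}(I_{0}))} \ \le\ \max\left\{\alpha,\ \frac{\pi_{I_{t}}(S_{t})}{\pi_{I_{t}}(\mathsf{opt}(I_{t}))}\right\}.
\]
For a single step, the definition of an $\alpha$-strict reduction rule gives exactly $\frac{\pi_{I_{j-1}}(S_{j-1})}{\pi_{I_{j-1}}(\mathsf{opt}(I_{j-1}))} \le \max\{\alpha,\ \frac{\pi_{I_{j}}(S_{j})}{\pi_{I_{j}}(\mathsf{opt}(I_{j}))}\}$; note that the parameter drop $\kappa(I_j)\le\kappa(I_{j-1})$ is already absorbed into this per-step guarantee, so nothing extra is needed there. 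I would then chain these from $j=t$ down to $j=1$, using the idempotence $\max\{\alpha,\max\{\alpha,x\}\} = \max\{\alpha,x\}$, to obtain the displayed inequality — which is precisely the strictness condition. Finally, since $S'$ is a solution to $I'$ we have $|S'|\ge\mathsf{opt}(I')$, hence $\pi_{I'}(S') \ge \pi_{I'}(\mathsf{opt}(I'))$, so the right-hand ratio is at least $1$; combined with $\alpha\ge 1$ this gives $\max\{\alpha,x\}\le\alpha x$ whenever $x\ge 1$, so the strict bound implies the $\alpha$-approximate bound. Together with the size guarantee inherited from the hypothesis that the algorithm is an approximate kernelization algorithm, this shows it has approximation ratio $\alpha$ and is strict.

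The only real subtlety — the ``obstacle'' such as it is — is bookkeeping: one must be sure the chain of rule applications is finite (guaranteed by \textbf{reduce} running in polynomial time) so the reverse lifting terminates and stays polynomial-time, and one must observe that it is the specific $\max\{\alpha,\cdot\}$ form of the strict guarantee, rather than a multiplicative $\alpha\cdot(\cdot)$ loss, that survives an unbounded-but-finite number of compositions; a plain multiplicative loss per rule would degrade to $\alpha^{t}$. Beyond this there is no genuine difficulty.
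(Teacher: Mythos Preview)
The paper does not actually prove this proposition; it merely cites it from~\cite{LPRS16}. Your argument is correct and is essentially the standard one: chain the per-step strict guarantees, use that $\max\{\alpha,\cdot\}$ is idempotent under iteration, and then observe that $\max\{\alpha,x\}\le\alpha x$ for $x\ge 1$ to pass from the strict bound to the multiplicative one.
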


\medskip
\noindent{\bf Lossy Kernelization Protocols.} We extend the notion of lossy kernelization algorithms to lossy kernelization protocols as follows.

\begin{definition}[{\bf Lossy Kernelization Protocol}]
Let $\Pi$ be a parameterized minimization problem with parameter $k$. Let $\alpha\geq 1$. An {\em $\alpha$-approximate kernelization protocol} of {\em call size $f(k,\alpha)$} and $g(k,\alpha)$ {\em rounds} for $\Pi$ is defined as follows.
First, the protocol assumes to have access to an oracle $\mathfrak{O}$ that, given an instance $I'$ of $\Pi$ of size at most $f(k,\alpha)$, returns a solution $S'$ to $I'$ such that $\displaystyle{\pi_{I'}(S')\leq\beta\pi_{I'}(\mathsf{opt}(I'))}$ for minimization and $\displaystyle{\pi_{I'}(S')\geq\frac{1}{\beta}\pi_{I'}(\mathsf{opt}(I'))}$ for maximization, for some fixed $\beta>0$. Second, for the same fixed $\beta>0$, given an instance $I$ of $\Pi$, the protocol may perform $g(k,\alpha)$ calls to $\mathfrak{O}$ and other operations in polynomial time, and then output a solution $S$ to $I$ such that $\displaystyle{\frac{\pi_{I}(S)}{\pi_{I}(\mathsf{opt}(I))}\leq\alpha\beta}$.

The {\em volume} (or {\em size}) of the protocol is $f(k,\alpha)g(k,\alpha)$. In case $g(k,\alpha)=g(\alpha)$ (i.e., $g$ depends only on $\alpha$), the protocol is called {\em pure}.
\end{definition}
Notice that an $\alpha$-approximate kernelization algorithm is the special case of an $\alpha$-approximate kernelization protocol when the number of rounds is $1$.

Practically, we think that (lossy) kernelization protocols can often be as useful as standard (lossy) kernels, and, in some cases, more useful. Like standard (lossy) kernels, they reduce the total size of what we need to solve, only that now what we need to solve is split into several instances, to be solved one after another. On the one hand, this relaxation seems to, in most cases, not be restrictive (as what we really care about is the total size of what we need to solve). On the other hand, it might be helpful if by using this relaxation one can achieve better bounds than what is known (or, even, what is possible) on the sizes of the reduced instances, or to simplify the algorithm. For example, for the case of {\sc $d$-Hitting Set}, we do not know how to beat $\OO(k^d)$ using a lossy kernel rather than a protocol.

%!TEX root = Main.tex

\section{Overview of Our Proof Ideas}\label{sec:overview}

In this section, we present a high-level overview of our proof ideas. For standard terminology not defined here or earlier, we refer the reader to Section \ref{sec:prelimsAppendix}.

\subsection{Linear-Element Lossy Kernel for {\sc $d$-Hitting Set}} 
We make use of a known result about the natural LP relaxation of  {\sc $d$-Hitting Set}: the support of any optimal LP solution to the LP-relaxation of {\sc $d$-Hitting Set} is of size at most $d\cdot\mathsf{frac}$ where $\mathsf{frac}$ is the optimum (minimum value) of the LP~\cite{furedi1988matchings}. For the sake of completeness, we provide a proof. We then provide a lossy reduction rule that computes an optimal LP solution, and deletes all vertices assigned values at least $\frac{1}{d-1}$. Having applied this rule exhaustively, we arrive at an instance having an optimal LP solution that assigns only values strictly smaller than $\frac{1}{d-1}$. Then, it can be shown that all hitting sets are contained within the support of this LP solution. In turn, in light of the aforementioned known result, this yields an approximate $d\cdot\mathsf{frac}$-element and $(d\mathsf{frac})^d$-set kernel that is \good. 

The analysis that the approximation factor is $d-\frac{d-1}{d}$ is slightly more involved, and is based on case distinction. In case the number of vertices deleted is ``small enough'', the cost of adding them is ``small enough'' as well. In the more difficult case where the number of vertices deleted is ``large'', by making use of the already established bound on the output size as well as the drop in the fractional optimum, we are able to show that, in fact, we return a solution of  approximation factor $d-\frac{d-1}{d}$ irrespective of the approximation ratio of the solution we are given. More generally, the definition of ``small enough'' and ``large'' gives rise to a trade-off that is critical for our kernelization protocol for {\sc $d$-Hitting Set}, which in particular yields that we can either obtain a {\em negligible additive error} or directly a solution of the desired (which is some fixed constant better than $d$ but worse than $d-\frac{d-1}{d}$) approximation ratio. Specifically, this means that it is ``safe'' to compose our element kernel as part of other kernelization algorithms or protocols.

\begin{figure}
  \begin{center}\fbox{\includegraphics[scale=0.5]{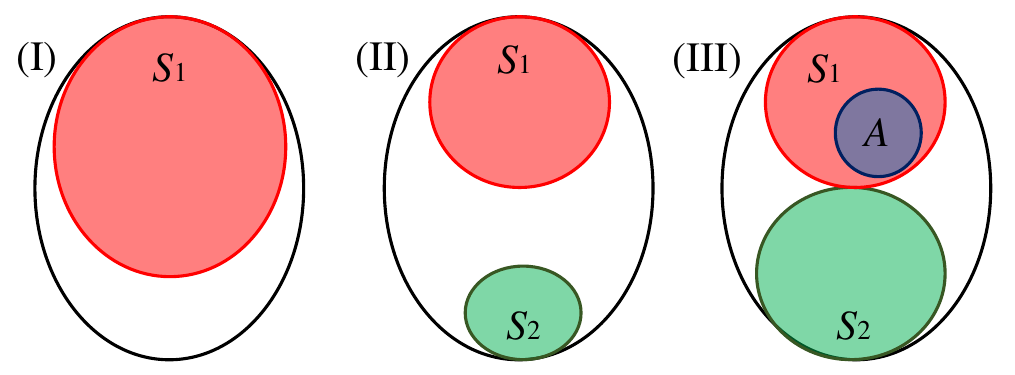}}
  \caption{The three cases encountered by our $2$-call lossy kernelization protocol for {\sc Vertex Cover}: (I) $|S_1|$ is large, and we return $V(G)$; (II) $|S_1|$ is small and $|S_2|$ is small, and we return $S_1\cup S_2$; (III) $|S_1|$ is small and $|S_2|$ is large, and we return $(V(G)\setminus S_1)\cup A$.}
  \label{fig:vc}
  \end{center}
\end{figure}

\subsection{2-Round $\OO(\mathsf{frac}^{1.5})$-Call Size Lossy Kernelization Protocol for {\sc Vertex Cover}}  Towards the presentation of our near-linear call size lossy kernelization protocol for {\sc $d$-Hitting Set}, we abstract some of the ideas using a simpler 2-round $\OO(\mathsf{frac}^{1.5})$-call size $1.721$-approximate kernelization protocol for {\sc Vertex Cover} (where $\mathsf{frac}\leq\mathsf{opt}\leq k$ is the optimum of the natural LP relaxation of {\sc Vertex Cover}). First, we apply an (exact) kernelization algorithm to have a graph $G$ on at most $2\mathsf{frac}$ vertices. The purpose of having only $2\mathsf{frac}$ vertices is twofold. First, it means that to obtain a ``good enough'' approximate solution, it suffices that we do not pick a ``large enough'' (linear fraction) of vertices of $G$ to our solution. Second, it is required for a probability bound derived using union bound over vertex subsets to hold. Then, roughly speaking, the utility of the first oracle call is mainly, indirectly, to uncover a ``large'' (linear in $n=|V(G)|$) induced subgraph of $G$ that is ``sparse'', and hence can be sent to the second oracle call to be solved optimally. 

More precisely, after applying the initial kernelization, we begin by sampling roughly $\mathsf{frac}^{1.5}$ edges from $G$. Then, we call the oracle on the sampled graph to obtain a solution $S_1$ to it (but not to $G$). In case that solution $S_1$ is ``large'' compared to the size of the vertex set of $G$ (that is, sufficiently larger than $n/2\leq \mathsf{frac}$), we can just return the entire vertex set of $G$ (see Fig.~\ref{fig:vc}). Else, we know that the subgraph of the sampled graph that is induced by $V(G)\setminus S_1$ is edgeless. In addition, we can show (due to the initial kernelization) that with high probability, every set of edges of size (roughly) at least $\mathsf{frac}^{1.5}$ that is the edge set of some induced subgraph of $G$ has been hit by our edge sample. Together, this implies that the subgraph of $G$ induced by $V(G)\setminus S_1$ has at most $\mathsf{frac}^{1.5}$ edges, and hence can be solved optimally by a second oracle call. Then, because we know that this subgraph is large compared to $G$ (else $S_1$ is large), if the oracle returned a ``small'' solution $S_2$ to it, we may just take this solution together with $S_1$ (which will form a vertex cover), and yet not choose sufficiently many vertices so that this will be good enough in terms of the approximation ratio achieved. Else, also because we know that this subgraph is large compared to $G$, if the second oracle returned a ``large'' solution $S_2$, then we know that every optimal solution must take many vertices from this subgraph, and hence, to compensate for this, the optimum of $G[S_1]$ must be ``very small''. So, we compute a $2$-approximate solution $A$ to $G[S_1]$, which we know should not be ``too large'', and output the union of $A$ and $V(G)\setminus S_1$ (which yields a vertex cover).

\subsection{Near-Linear Volume and Pure Lossy Kernelization Protocol for {\sc $d$-Hitting Set}} For any fixed $\epsilon>0$, we present a pure $d(1-h(d,\epsilon))$-approximate (randomized) kernelization protocol for {\sc $d$-Hitting Set} with call size $\OO((\mathsf{frac})^{1+\epsilon})$ where $h(d,\epsilon)$ is a fixed positive constant that depends only on $d,\epsilon$. On a high-level, the idea of our more general lossy kernelization protocol is to compute a nested family of solutions based on the approach described above for {\sc Vertex Cover} (see Fig.~\ref{fig:hs}). Intuitively, as we now can sample only few sets (that is, $\mathsf{frac}^{1+\epsilon}$), when we compute a solution that hits them using an oracle call, the number of sets it misses can still be huge, and hence we will need to iteratively use the oracle (a constant number of times) until we reach   a subuniverse such that we can optimally solve the subinstance induced by it by a single oracle call. Below, we give a more elaborate overview.

First, we apply our linear-element lossy kernel to have an instance $I_0=(U_0,{\cal T}_0)$ where the universe $U_0$ consists of at most $d\mathsf{frac}$ elements. Here, the error of this application is not multiplied by the error attained next, but will only yield (as mentioned earlier) a negligible additive error (or directly a solution of the desired approximation ratio). The purpose of having only $d\mathsf{frac}$ elements is twofold, similarly as it is in the protocol described earlier for {\sc Vertex Cover}. Afterwards, we begin by sampling a family ${\cal F}_1$ of roughly $\mathsf{frac}^{1+\epsilon}$ sets from ${\cal T}_0$. Then, we call the oracle on the sampled family ${\cal F}_1$ to obtain a solution $S_1$ to it. In case that solution $S_1$ is ``large'' (sufficiently larger than $|U_0|/d\leq \mathsf{frac}$), we can just return $U_0$. Else, we know that the family of sets corresponding to the subinstance $I_1$ induced by $U_1=U_0\setminus S_1$---that is, the family of all sets in ${\cal T}_0$ contained in $U_1$, which we denote by ${\cal T}_1$---was missed by our set sample. In addition, we can show (due to the initial kernelization) that with high probability, every family of sets of size (roughly) at least $\mathsf{frac}^{d-\epsilon}$ that corresponds to a subinstance induced by a subset of $U_0$ has been hit by our set sample. Together, this implies that ${\cal T}_1$ has at most $\mathsf{frac}^{d-\epsilon}$ (rather than $\mathsf{frac}^{d}$) sets. Hence, in some sense, we have made progress towards the discovery of a sparse subinstance that we can optimally solve.

\begin{figure}
  \begin{center}\fbox{\includegraphics[scale=0.5]{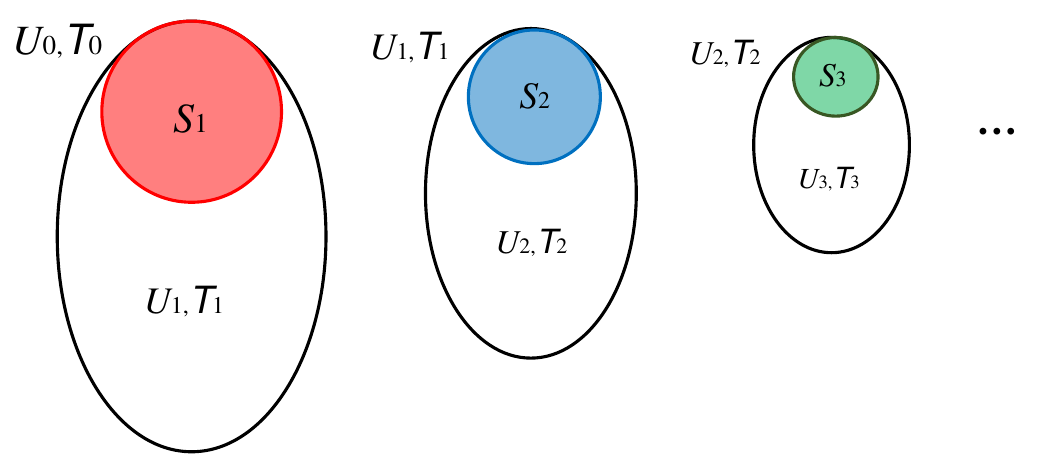}}
  \caption{The nested solutions computed by oracle calls in our lossy kernelization protocol for {\sc $d$-Hitting Set}. Each $S_i$ is a solution to a subinstance $(U_{i-1},{\cal F}_{i-1})$ sampled from $(U_{i-1},{\cal T}_{i-1})$.}
  \label{fig:hs}
  \end{center}
\end{figure}

Due to important differences, let us describe also the second iteration---among at most $\frac{1}{\epsilon}(d-1)$ iterations performed in total---before skipping to the (last) one where we have a subinstance that we can optimally solve by an oracle call. The last iteration may not even be reached, if we find a ``good enough'' solution earlier. We remark that it is critical to stop and return a solution as soon as we find a ``large enough'' one by an oracle call\footnote{The solution we return is not the one given by the oracle call, but its union with another solution, as will be clarified immediately, or just $U_0$ in case of the first iteration describe above.} as for our arguments to work, we need to always deal with subinstances whose universe is large (a linear fraction of $|U_0|$), and these are attained by removing oracle solutions we got along the way. We begin the second iteration by sampling a family ${\cal F}_2$ of roughly $\mathsf{frac}^{1+\epsilon}$ sets from ${\cal T}_1$. Then, we call the oracle on the sampled family ${\cal F}_2$ to obtain a solution $S_2$ to it. On the one hand, in case that solution $S_2$ is ``large'' (sufficiently larger than $|U_1|/d$), we cannot just return $U_0$ as in the first iteration, as now it may not be true that the optimum of $I_0$ is large compared to $|U_0|$.  Still, it is true that the optimum of $I_1$ is large compared to $|U_1|$. So,  every optimal solution (to $I_0$) must take many elements from $U_1\setminus S_2$, and hence, to compensate for this, the optimum of the subinstance induced by $S_1$ must be ``very small''. So, we compute a $d$-approximate solution to this subinstance, which we know should not be ``too large'' , and output the union of it and $U_1$ (which yields a hitting set). On the other hand, in case $S_2$ is ``small'', we proceed as follows. We observe that the family of sets corresponding to the subinstance $I_2$ induced by $U_2=U_1\setminus S_2$, whose family of sets we denote by ${\cal T}_2$, was missed by our set sample. In addition, we can show (due to the initial kernelization) that with high probability, every family of sets of size (roughly) at least $\mathsf{frac}^{d-2\epsilon}$ that corresponds to a subinstance induced by a subset of $U_1$ has been hit by our set sample. Together, this implies that ${\cal T}_2$ has at most $\mathsf{frac}^{d-2\epsilon}$ (rather than just $\mathsf{frac}^{d-\epsilon}$ as in the first iteration) sets. Hence, in some sense, we have made further progress towards the discovery of a sparse subinstnace that we can optimally solve.

Finally, we arrive at a subinstance $I'$ induced by a subuniverse $U'\subseteq U_0$ that is of size linear in $U_0$ (else we should have returned a solution earlier) and where the family of sets, ${\cal F}'$, is of size at most $\mathsf{frac}^{1+\epsilon}$. Then, we call the oracle on $I'$ to obtain a solution $S'$ to it. On the one hand, in case that solution $S'$ is ``large'' (sufficiently larger than $|U'|/d$), we compute a $d$-approximate solution to the subinstance induced by $U_0\setminus U'$ (which is the union of all solutions returned by oracle calls except the last one), and output the union of it and $U'$. Otherwise, we output $(U_0\setminus U')\cup S'$, which is ``good enough'' because $U'$ is sufficiently large while $S'$ is sufficiently small compared to it, it does not contain a ``large enough'' number of elements from $U_0$.

\subsection{Outlook: Relation to Ruzsa-Szemer\'{e}di Graphs}   A graph $G$ is an {\em $(r,t)$-Ruzsa-Szemer\'{e}di graph} if its edge set can be partitioned into $t$ edge-disjoint induced matchings, each of size $r$. These graphs were introduced in 1978~\cite{ruzsa1978triple},  and have been extensively studied since then. When $r$ is a function of $n$, let $\gamma(r)$ denote the maximum $t$ (which is a function of $n$) such that there exists an $(r,t)$-Ruzsa-Szemer\'{e}di graph. 
 In~\cite{fox2017graphs}, the authors considered the case where $r=cn$. They showed that when $c=\frac{1}{4}$, $\gamma(r)\in\Theta(\log n)$, and when $\frac{1}{5}\leq c\leq\frac{1}{4}$, $t\in\OO(\frac{n}{\log n})$. It is an open problem whether whenever $c$ is a fixed constant, $t\in\OO(n^{1-\epsilon})$. For any fixed constant $0<c<\frac{1}{4}$, we present a $(1+4c)$-approximate (randomized) kernelization protocol for {\sc Vertex Cover} with $t+1$ rounds and call size $\OO(t(\mathsf{frac})^{1.5})$. Clearly, this result makes sense only when $t\in o(\sqrt{n})$, preferably $t\in\OO(n^{\frac{1}{2}-\lambda})$ for $\lambda$ as close to $1/2$ as possible, because the volume is $\OO(\mathsf{opt}^{2-\lambda})$. If $t$ is ``sufficiently small'' (depending on the desired number of rounds) whenever $c$ is a fixed constant (specifically, substitute $c=\frac{\epsilon}{4}$), this yields a $(1+\epsilon)$-approximate kernelization protocol. 

We observe that, for a graph $G$, $r=r(n),t=t(n)\in\mathbb{N}$ and $U_1,U_2,\ldots,U_t\subseteq V(G)$ such that for all $i\in \{1,2,\ldots,t\}$, $G[U_i]$ has a matching $M_i$ of size at least $r$, and for all distinct $i,j\in \{1,2,\ldots,t\}$, $E(G[U_i])\cap E(G[U_j])=\emptyset$, we have that $G$ is a supergraph of an $(r,t)$-Ruzsa-Szemer\'{e}di graph. Having this observation in mind, we devise our protocol as follows. After applying an exact $2\mathsf{frac}$-vertex kernel, we initialize $E'=\emptyset$, and we perform $t+1$ iterations of the following procedure. We sample a set of roughly $\mathsf{frac}^{1.5}$ edges from $G$, and call the oracle on the subgraph of $G$ whose edge set is the set of samples edges union $E'$ to obtain a solution $S$ to it (but not to $G$), and compute a maximal matching $M$ in $G-S$. If $|M|$ is smaller than $cn\leq 2c\mathsf{frac}$, then we return the union of the set of vertices incident to edges in $M$ (which is a solution to $G-S$) and $S$. Else, similarly to the first protocol we described for {\sc Vertex Cover}, we can show that with high probability, $G-S$ has (roughly) at most $k^{1.5}$ edges, and we add this set of edges to $E'$. The crux of the proof is in the argument that, at the latest, at the $(t+1)$-st iteration the computed matching will be of size smaller than $cn\leq 2c\mathsf{frac}$, as otherwise we can use the matchings we found, together with the vertex sets (of the form $G-S$) we found them in, to construct an $(r,t+1)$-Ruzsa-Szemer\'{e}di graph based on the aforementioned observation, which contradicts the choice of $t$.

\subsection{$(1+\epsilon)$-Approximate $\OO(\frac{1}{\epsilon}\cdot\mathsf{opt})$-Vertex Kernel for Implicit {\sc $3$-Hitting Set} Problems} Both of our lossy kernels share a common scheme, which might be useful to derive $(1+\epsilon)$-approximate linear-vertex kernels for other implicit hitting and packing problems as well. Essentially, they both consist of two rules (although in the presentation, they are merged for simplicity). To present them, we remind that a module (in a graph) is a set of vertices having the same neighborhood relations with all vertices outside the set. Now, our first rule reveals some modules  in the graph, and our second rule shrinks their size. The first rule in both of our lossy kernels is essentially the same.

Now, we elaborate on the first rule. We start by computing an optimal solution $\alpha$ to the LP-relaxation of the corresponding {\sc $3$-Hitting Set} problem. Notice that $\mathsf{support}(\alpha)$ is a solution, and its size is at most 3$\mathsf{frac}$ (in fact, we show that it is at most $3\mathsf{frac}-2|\alpha^{-1}(1)|$). Then, the first rule is as follows. At the beginning, no vertex is marked. Afterwards, one-by-one, for each vertex $v$ assigned $1$ by $\alpha$ (i.e., which belongs to $\alpha^{-1}(1)$), we construct a graph whose vertex set is the set of yet unmarked vertices in $V(G)\setminus\mathsf{support}(\alpha)$ and where there is an edge between every two vertices that create an obstruction together with $v$ (that is, an induced $P_3$ in {\sc Cluster Vertex Deletion} and a triangle in {\sc Feedback Vertex Set in Tournaments}). We compute a maximal matching in this graph, and decrease its size to $\frac{1}{\epsilon}$ if it is larger (in which case, it is no longer maximal). The vertices incident to the edges in the matching are then considered marked. 
We prove that among the vertices in $\alpha^{-1}(1)$ whose matching size was decreased, whose set is denoted by $D$, any solution can only exclude an $\epsilon$ fraction of its size among the vertices in $D$, and hence it is ``safe'' (in a lossy sense) to delete $D$. Let $M$ be the set of all marked vertices. Then, we show that $(\mathsf{support}(\alpha)\cup M)\setminus\{v\}$, for any $v\in\mathsf{support}(\alpha)$ (including those not in $\alpha^{-1}(1)$), is also a solution. 

For {\sc  Cluster Vertex Deletion}, we prove that the outcome of the first rule means that the vertex set of every clique in $G-(\mathsf{support}(\alpha)\cup M)$ is a module in $G-D$, and that for every vertex in $\mathsf{support}(\alpha)$, the set of its neighbors in $V(G-(\mathsf{support}(\alpha)\cup M))$  is the vertex set of exactly one of these cliques. So, for {\sc  Cluster Vertex Deletion}, this gives rise to the following second reduction rule (which is, in fact, exact) to decrease the size of module. For every clique among the aforementioned cliques whose size is larger than that of its neighborhood, we arbitrarily remove some of its vertices so that its size will be equal to the size of its neighborhood. This rule is safe since if at least one of the vertices in such a clique is deleted by a solution, then because it is a module, either that deletion is irrelevant or the entire clique is deleted, and in the second case we might just as well delete its neighborhood instead. Because the neighborhoods of the cliques are pairwise-disjoint  (since for every vertex in $\mathsf{support}(\alpha)$, the set of its neighbors in $V(G-(\mathsf{support}(\alpha)\cup M))$  is the vertex set of exactly one of the cliques), this means that now their total size is at most $(\mathsf{support}(\alpha)\setminus D)\cup M$, and hence we arrive at the desired kernel.

For {\sc Feedback Vertex Set in Tournaments}, we consider the unique (because $G$ is a tournament) topological ordering of the vertices in $G-\mathsf{support}(\alpha)$, so that all arcs are ``forward'' arcs. We prove that the outcome of the first rule means that each vertex $v\in\mathsf{support}(\alpha)$ has a unique position within this ordering when restricted to $G-(\mathsf{support}(\alpha)\cup M)$, so that still all arcs (that is, including those incident to $v$) are forward arcs in $G-(\mathsf{support}(\alpha)\cup M)\cup\{v\}$. (Further, the vertex set of each subtournament induced by the vertices ``between'' any two marked vertices in $G-\mathsf{support}(\alpha)$ is a module in $G-D$.) We are thus able to characterize  all triangles in $G-D$ as follows: each either consists of three vertices in $(\mathsf{support}(\alpha)\setminus D)\cup M$, or it consists of a vertex $v\in\mathsf{support}(\alpha)\setminus D$, a vertex $u\in(\mathsf{support}(\alpha)\setminus D)\cup M$ and a vertex $w\in V(G)\setminus (\mathsf{support}(\alpha)\cup M)$ with a backward arc between $v$ and $u$ and where $w$ is ``in-between'' the positions of $v$ and $u$. This gives rise to a reduction rule for module shrinkage whose presentation and analysis are more technical than that of {\sc  Cluster Vertex Deletion}  (in particular, unlike the second rule of {\sc  Cluster Vertex Deletion}, the second rule of {\sc Feedback Vertex Set in Tournaments} is lossy) and of the first rule, and hence we defer them to the appropriate Section \ref{sec:fvst}.

%!TEX root = Main.tex

\section{Preliminaries}\label{sec:prelimsAppendix}

\subsection{General Notation}
The {\em support} of a function $f: A\rightarrow\mathbb{R}$ is $\{a\in A: f(a)\neq 0\}$, denoted by $\mathsf{support}(f)$.

Given an instance $I$ of some optimization problem $\Pi$, we denote by $\mathsf{opt}(I)$ the optimum (value of an optimal solution, if one exists) of $I$. When $I$ is clear from context, we simple write $\mathsf{opt}$.

To bound the approximation ratios of our algorithms, we will use the following fact.

\begin{proposition}[Folklore, see, e.g., \cite{LPRS16}]\label{prop:numbers}
For any positive reals  $x,y,p$ and $q$, $\min\left(\frac{x}{p},\frac{y}{q}\right)\leq \frac{x+y}{p+q} \leq \max\left(\frac{x}{p},\frac{y}{q}\right)$.
\end{proposition}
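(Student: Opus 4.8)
The plan is to reduce the two-sided inequality to a single elementary comparison of cross-products, exploiting the symmetry of the statement under swapping the pair $(x,p)$ with the pair $(y,q)$. First I would observe that the quantities $\min(\frac{x}{p},\frac{y}{q})$, $\max(\frac{x}{p},\frac{y}{q})$, and the mediant $\frac{x+y}{p+q}$ are all invariant under this swap, so it is enough to treat the case $\frac{x}{p}\le\frac{y}{q}$; the other case follows by relabelling. Under this assumption the assertion becomes the chain $\frac{x}{p}\le\frac{x+y}{p+q}\le\frac{y}{q}$, since then $\min(\frac{x}{p},\frac{y}{q})=\frac{x}{p}$ and $\max(\frac{x}{p},\frac{y}{q})=\frac{y}{q}$.

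Next, because $p$, $q$, and $p+q$ are all strictly positive, each inequality in the chain is equivalent, after clearing denominators, to a statement about products. Concretely,
\[
\frac{x}{p}\le\frac{x+y}{p+q}
\iff x(p+q)\le p(x+y)
\iff xq\le py
\iff \frac{x}{p}\le\frac{y}{q},
\]
and the analogous manipulation gives $\frac{x+y}{p+q}\le\frac{y}{q}\iff q(x+y)\le y(p+q)\iff xq\le py\iff\frac{x}{p}\le\frac{y}{q}$. Since $xq\le py$ is exactly the standing assumption (it is $\frac{x}{p}\le\frac{y}{q}$ multiplied through by $pq>0$), both parts of the chain hold, which proves the proposition.

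I do not expect any genuine obstacle here: the statement is folklore and the argument is a few lines. The only points that warrant an explicit word are that clearing denominators preserves rather than reverses the inequalities — this is precisely where positivity of $p$, $q$, and $p+q$ is used — and that the ``without loss of generality'' reduction is legitimate, which is justified by the swap-invariance noted above. I would therefore present the proof essentially as the displayed equivalence chain together with its symmetric counterpart and a one-line remark handling the symmetry.
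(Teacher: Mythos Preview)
Your argument is correct: the reduction by symmetry to the case $\frac{x}{p}\le\frac{y}{q}$ is justified, and clearing the positive denominators shows that both inequalities in the chain are equivalent to the single assumption $xq\le py$. The paper does not supply a proof of this proposition at all; it is stated as folklore with a reference, so there is nothing to compare your approach against.
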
 

We now present a well-known Chernoff bound, to be used in the analysis of our (randomized) lossy kernelization protocols.

\begin{proposition}\label{prop:chernoff}
Let $X_1, ..., X_n$ be independent random variables over $\{0,1\}$.  Let $X$ denote their sum and let $\mu = \mathsf{E}[X]$ denote the expected value of $X$. Then, for any $0\leq\delta\leq 1$,
\[\mathsf{Prob}[X\geq (1+\delta)\mu] \leq e^{-\frac{\delta^2\mu}{3}}.\]
\end{proposition}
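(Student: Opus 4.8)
The plan is to use the standard exponential-moment (Chernoff--Bernstein) method. First I would fix an arbitrary $t>0$ and apply Markov's inequality to the nonnegative random variable $e^{tX}$, obtaining
\[
\mathsf{Prob}[X\geq(1+\delta)\mu]=\mathsf{Prob}\bigl[e^{tX}\geq e^{t(1+\delta)\mu}\bigr]\leq \frac{\mathsf{E}[e^{tX}]}{e^{t(1+\delta)\mu}}.
\]
Then I would bound $\mathsf{E}[e^{tX}]$: by independence it factors as $\prod_{i=1}^n\mathsf{E}[e^{tX_i}]$, and writing $p_i=\mathsf{Prob}[X_i=1]$ we have $\mathsf{E}[e^{tX_i}]=1+p_i(e^t-1)\leq e^{p_i(e^t-1)}$ by the inequality $1+x\leq e^x$. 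Multiplying these estimates and using $\sum_i p_i=\mu$ gives $\mathsf{E}[e^{tX}]\leq e^{(e^t-1)\mu}$, hence $\mathsf{Prob}[X\geq(1+\delta)\mu]\leq \exp\!\bigl((e^t-1-t(1+\delta))\mu\bigr)$.

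Next I would optimize the exponent over $t>0$. The case $\delta=0$ is trivial since the right-hand side is then $1$, so assume $\delta>0$ and take $t=\ln(1+\delta)>0$; this yields the classical bound
\[
\mathsf{Prob}[X\geq(1+\delta)\mu]\leq\left(\frac{e^{\delta}}{(1+\delta)^{1+\delta}}\right)^{\mu}=\exp\!\bigl(\mu(\delta-(1+\delta)\ln(1+\delta))\bigr).
\]

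Finally, to reach the stated form it remains to verify the elementary inequality $(1+\delta)\ln(1+\delta)-\delta\geq \delta^2/3$ for all $0\leq\delta\leq 1$. I would do this by setting $g(\delta)=(1+\delta)\ln(1+\delta)-\delta-\tfrac{1}{3}\delta^2$, noting $g(0)=0$ and $g'(\delta)=\ln(1+\delta)-\tfrac{2}{3}\delta$ with $g'(0)=0$, and checking that $g'\geq 0$ on $[0,1]$: since $g''(\delta)=\tfrac{1}{1+\delta}-\tfrac{2}{3}$ is positive for $\delta<\tfrac12$ and negative for $\delta>\tfrac12$, the function $g'$ first increases from $0$ and then decreases, and since $g'(1)=\ln 2-\tfrac23>0$ it stays nonnegative on the whole interval; hence $g$ is nondecreasing and $g(\delta)\geq g(0)=0$. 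Substituting this into the previous display gives $\mathsf{Prob}[X\geq(1+\delta)\mu]\leq e^{-\delta^2\mu/3}$, as desired. The only non-mechanical step is this last monotonicity check, and it is where I would be most careful: the constant $3$ (rather than $2$) is exactly what keeps the bound valid over the full range $\delta\in[0,1]$, avoiding a more delicate Taylor-remainder estimate near $\delta=1$.
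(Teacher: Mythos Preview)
Your proof is correct and is the standard exponential-moment derivation of the multiplicative Chernoff bound; the monotonicity check for $g(\delta)=(1+\delta)\ln(1+\delta)-\delta-\tfrac{1}{3}\delta^2$ is carried out cleanly. Note, however, that the paper does not prove this proposition at all: it is stated in the preliminaries as a well-known Chernoff bound and used as a black box, so there is no paper proof to compare against.
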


\subsection{Graph Notation}

Given a graph $G$, let $V(G)$ and $E(G)$ denote its vertex set and edge (or arc) set, respectively. When clear from context, $n=|V(G)|$ and $m=|E(G)|$. Given a vertex $v\in V(G)$, let $N_G(v)$ denote the set of neighbors of $v$ in $G$, and given a subset $U\subseteq V(G)$, let $N_G(U)$ denote the open neighborhood of $U$ in $G$. Given a subset $U\subseteq V(G)$, let $G[U]$ denote the subgraph of $G$ induced by $U$, that is, the graph on vertex set $U$ and edge set $\{\{u,v\}\in E(G): u,v\in U\}$. Moreover, given a subgraph $G'$ of $G$ (possibly $G'=G$) and a subset $U\subseteq V(G)$ (possibly $U\setminus V(G')\neq\emptyset$) , let $G'-U$ denote the graph on vertex set $V(G')\setminus U$ and edge set $\{\{u,v\}\in E(G'): u,v\notin U\}$. A {\em module} in $G$ is a subset $U\subseteq V(G)$ such that for every vertex $v\in V(G)\setminus U$ either $U\subseteq N_G(v)$ or $U\cap N_G(v)=\emptyset$. Given a subset $W\subseteq E(G)$, let $G-W$ denote the graph on vertex set $V(G)$ and edge set $E(G)\setminus W$. An {\em induced $P_3$} in $G$ is a path on three vertices in $G$ whose endpoints are not adjacent in $G$. A {\em cluster graph} is a graph in which every connected component is a clique. An {\em acyclic} digraph is a digraph that contains no directed cycles. A {\em tournament} is a digraph where for every two vertices $u,v$, exactly one among the arcs $(u,v)$ and $(v,u)$ belongs to the digraph.

%\begin{proposition}[CITE]\label{prop:matching}
%The {\sc Maximum Matching} problem \todo{cite} (given a graph $G$, output a maximum matching in $G$) can be computed in polynomial time.
%\end{proposition}

\begin{definition}\label{def:ruzsa}
 A graph $G$ is an {\em $(r,t)$-Ruzsa-Szemer\'{e}di graph} if its edge set can be partitioned into $t$ edge-disjoint induced matchings, each of size $r$.
\end{definition}

These graphs were introduced in 1978~\cite{ruzsa1978triple},  and have been extensively studied since then. In \cite{fox2017graphs}, the authors considered the case where $r=cn$. They showed that when $c=\frac{1}{4}$, the maximum $t$, which we denote by $\gamma(r)$, is $\Theta(\log n)$, and when $\frac{1}{5}\leq c\leq\frac{1}{4}$, $t=\OO(\frac{n}{\log n})$. It is an open problem whether when $c$ is a fixed constant, $t=\OO(n^{1-\epsilon})$.

\subsection{Linear Programming}
A canonical form of a linear program (LP) is  $[\max \sum_{i=1}^nc_ix_i$ s.t.~$\forall j=1,\ldots,m: \sum_{i=1}^na_{ji}x_i\leq b_j; \forall i=1,\ldots,n: x_i\geq 0]$, or $[\min \sum_{j=1}^mb_jy_j$ s.t.~$\forall i=1,\ldots,n: \sum_{j=1}^ma_{ij}y_j\geq c_i; \forall j=1,\ldots,m: y_j\geq 0]$. Here, the $x_i$'s ($y_j$'s) are variables.  Moreover, two programs of the aforementioned forms that refer to the same set of coefficients $\{c_i\}|_{i=1}^n,\{a_{ij}: i\in\{1,\ldots,n\},j\in\{1,\ldots,m\}\},\{b_j\}|_{j=1}^m$ are {\em dual} of each other. A {\em solution} to an LP is an assignment of real values to its variables so that all constraints are satisfied. Further, a solution is {\em optimal} is it also optimizes (maximizes or minimizes) the value of the objective function. The optimum (value of an optimal solution, if one exists) of an LP $I$ (or which is associated with some entity $I$, where no confusion can arise) is denoted by $\mathsf{frac}(I)$. When $I$ is clear from context, we simple write $\mathsf{frac}$.

\begin{proposition}[\cite{matousek2007understanding}]\label{prop:solveLP}
Any LP (with rational coefficients) that admits a solution, admits an optimal solution that assigns only rational values. Furthermore, such an optimal solution an be computed in polynomial time.
\end{proposition}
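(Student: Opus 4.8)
The plan is to assemble classical facts of linear-programming theory in the right order; once the polynomial-time solvability of LPs is imported as a black box, the rest is bookkeeping. First I would put the LP into equational standard form: replace each unconstrained variable $x_i$ by $x_i^{+}-x_i^{-}$ with $x_i^{+},x_i^{-}\ge 0$, and turn each inequality into an equality by adding a non-negative slack variable. This yields an equivalent LP whose feasible region has the shape $P=\{x\in\mathbb{R}^{N}_{\ge 0}: Ax=b\}$ for rational $A,b$; clearing denominators we may take $A,b$ integral. The polyhedron $P$ is pointed (it contains no line, since $P\subseteq\mathbb{R}^{N}_{\ge 0}$), so whenever the LP has an optimal solution the fundamental theorem of linear programming guarantees that the optimum is attained at a vertex of $P$. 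I would also remark that for the covering and packing LPs arising from \probdHS (and its dual) the objective is bounded on the non-empty feasible region, hence an optimal solution does exist — in full generality a feasible LP can be unbounded and then no optimum exists.

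Next I would establish rationality of a vertex. A vertex $x^{\star}$ of $P$ is characterized by a subset $B\subseteq\{1,\dots,N\}$ with $|B|=\operatorname{rank}(A)$ such that the columns $\{A_j : j\in B\}$ are linearly independent, $x^{\star}_j=0$ for all $j\notin B$, and $x^{\star}_B$ is the unique solution of the square nonsingular system $A_B x_B=b$. By Cramer's rule each coordinate of $x^{\star}_B$ equals $\det(A_B^{j})/\det(A_B)$, where $A_B^{j}$ is obtained from $A_B$ by replacing column $j$ with $b$; this is a ratio of determinants of integer matrices, hence rational, and the remaining coordinates are $0$. Undoing the reduction (subtracting the two halves of each split variable and discarding the slacks) preserves rationality, so the original LP has a rational optimal solution.

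For the "computed in polynomial time" assertion I would invoke Khachiyan's theorem that the ellipsoid method solves an LP in time polynomial in the bit-size of $(A,b,c)$ (interior-point methods provide an alternative), together with the observation — immediate from the Cramer expression above and the Hadamard bound on determinants — that every vertex of $P$ has encoding length polynomially bounded in that of $(A,b)$. Concretely, a near-optimal point returned by such an algorithm can be postprocessed in polynomial time into an exact optimal vertex: one identifies the optimal basis $B$ (the tight constraints at the optimal face, of bounded bit-complexity) and solves $A_B x_B=b$ exactly over the rationals. This exact rational optimum is precisely the object used as a subroutine in later sections.

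The only genuinely non-elementary ingredient is the polynomial-time solvability of LP itself — the simplex method, although the natural choice for the structured LPs in this paper, is not known to run in polynomial time — so the sole real obstacle is that this step must be taken from the ellipsoid / interior-point literature rather than derived from first principles; everything else is the vertex characterization of polyhedra together with Cramer's rule.
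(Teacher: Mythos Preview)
The paper does not prove this proposition at all; it is stated with a citation to a textbook and used as a black box throughout. Your sketch is a correct and standard derivation of the result from the classical ingredients (reduction to equational form, the vertex characterization of pointed polyhedra, Cramer's rule for rationality of basic feasible solutions, and Khachiyan's theorem for polynomial-time solvability together with a rounding/basis-identification postprocessing step). Your side remark that the proposition as phrased tacitly assumes boundedness is also apt; the LPs actually used in the paper are covering/packing LPs whose feasible regions are nonempty and whose objectives are bounded, so the issue does not arise in context.
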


We will need a well-known proposition relating optimal solutions to LPs and their duals, known as strong duality and complementary slackness: 

\begin{proposition}[\cite{matousek2007understanding}]\label{prop:slackness}
Let {\bf (P)} $[\max \sum_{i=1}^nc_ix_i$ s.t.~$\forall j=1,\ldots,m: \sum_{i=1}^na_{ji}x_i\leq b_j; \forall i=1,\ldots,n: x_i\geq 0]$ be a primal LP; {\bf (D)} $[\min \sum_{j=1}^mb_jy_j$ s.t.~$\forall i=1,\ldots,n: \sum_{j=1}^ma_{ij}y_j\geq c_i; \forall j=1,\ldots,m: y_j\geq 0]$ be the dual LP. Let $\alpha$ and $\beta$ be solutions to {\bf (P)} and {\bf (D)}, respectively. Then, $\alpha$ and $\beta$ are both optimal  if and only if $\sum_{i=1}^nc_i\alpha(x_i)=\sum_{j=1}^mb_j\beta(y_j)$ [strong duality]. Moreover, $\alpha$ and $\beta$ are both optimal  if and only if [complementary slackness]:
\begin{itemize}
\item For $i=1,\ldots,n$: $\alpha(x_i)>0$ if and only if $\sum_{j=1}^ma_{ij}\beta(y_j)= c_i$.
\item For $j=1,\ldots,m$: $\beta(y_j)>0$ if and only if $\sum_{i=1}^na_{ji}\alpha(x_i)= b_j$.
\end{itemize}
\end{proposition}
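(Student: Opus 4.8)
The plan is to derive everything from the \emph{weak duality inequality} together with one genuinely nontrivial input: the existence of a primal--dual pair of feasible solutions attaining equal objective values. First I would record weak duality. Let $\alpha$ be any solution to {\bf (P)} and $\beta$ any solution to {\bf (D)}. Using dual feasibility $\sum_{j}a_{ij}\beta(y_j)\ge c_i$ together with $\alpha(x_i)\ge 0$, and then primal feasibility $\sum_{i}a_{ij}\alpha(x_i)\le b_j$ together with $\beta(y_j)\ge 0$, one obtains the chain
\[
\sum_{i=1}^n c_i\alpha(x_i)\ \le\ \sum_{i=1}^n\alpha(x_i)\sum_{j=1}^m a_{ij}\beta(y_j)\ =\ \sum_{j=1}^m\beta(y_j)\sum_{i=1}^n a_{ij}\alpha(x_i)\ \le\ \sum_{j=1}^m b_j\beta(y_j),
\]
where the middle equality is a mere reordering of the double sum. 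In particular the optimal value of {\bf (P)} never exceeds that of {\bf (D)}.

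The second step, and the one I expect to be the main obstacle, is \emph{strong duality}: that there are feasible $\alpha^\star,\beta^\star$ with $\sum_i c_i\alpha^\star(x_i)=\sum_j b_j\beta^\star(y_j)$. This is precisely the content of the LP duality theorem and cannot be bypassed by elementary arithmetic; it is also the only place where the structure of polyhedra (or completeness of $\mathbb{R}$) is used. I would obtain it from Farkas' lemma, equivalently from the separating-hyperplane theorem applied to a suitable polyhedral cone, or---since Proposition~\ref{prop:solveLP} already guarantees a rational optimal vertex of {\bf (P)} computable in polynomial time---from an optimal-basis argument, where the reduced costs at such a vertex furnish a feasible dual solution of matching value; either route is standard and I would cite \cite{matousek2007understanding}. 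Given strong duality, the ``iff'' of the strong-duality clause is then immediate from weak duality: if $\sum_i c_i\alpha(x_i)=\sum_j b_j\beta(y_j)$, then for any feasible $\alpha'$ we get $\sum_i c_i\alpha'(x_i)\le\sum_j b_j\beta(y_j)=\sum_i c_i\alpha(x_i)$, so $\alpha$ is optimal for {\bf (P)}, and symmetrically $\beta$ is optimal for {\bf (D)}; conversely, if both are optimal, then each attains the optimal value of its program, and these two values coincide by strong duality, so the equality holds.

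Finally, for \emph{complementary slackness}, I would revisit the weak-duality chain with $\alpha,\beta$ both optimal. By strong duality its two endpoints agree, so both ``$\le$'' must be equalities. The first equality reads $\sum_i \alpha(x_i)\bigl(\sum_j a_{ij}\beta(y_j)-c_i\bigr)=0$; every summand is nonnegative, by $\alpha(x_i)\ge 0$ and dual feasibility, hence every summand vanishes, which is exactly the stated condition on the $x_i$. The same reasoning applied to the second equality, now using $\beta(y_j)\ge 0$ and primal feasibility, gives the condition on the $y_j$. For the converse direction, run the chain backwards: the two stated conditions force both products above to be zero, hence $\sum_i c_i\alpha(x_i)=\sum_j b_j\beta(y_j)$, and the argument of the previous paragraph then yields optimality of both $\alpha$ and $\beta$. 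The only subtlety worth flagging is that ``$\alpha(x_i)>0$ if and only if the $i$-th dual constraint is tight'' is to be read as the pair of implications ``$\alpha(x_i)>0\Rightarrow$ tight'' and ``not tight $\Rightarrow\alpha(x_i)=0$'', which is what the vanishing-summand argument delivers; strict complementarity is not claimed.
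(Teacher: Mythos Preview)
The paper does not prove this proposition; it is stated with a citation to \cite{matousek2007understanding} and used as a black box. Your proof plan is the standard textbook derivation (weak duality chain, strong duality via Farkas or an optimal basis, then complementary slackness by forcing equality throughout the chain), and it is correct.

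One remark: you are right to flag the ``if and only if'' phrasing. As written, the proposition asserts that $\alpha(x_i)>0$ is \emph{equivalent} to tightness of the $i$-th dual constraint, which would be strict complementarity and is not true for arbitrary optimal pairs (one can have $\alpha(x_i)=0$ with the dual constraint tight). What your vanishing-summand argument actually proves, and what the paper later uses in the proof of Theorem~\ref{thm:support}, is only the implication ``$\alpha(x_i)>0\Rightarrow$ tight'' (equivalently, its contrapositive). So your reading of the statement is the correct one for the paper's purposes.
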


%!TEX root = Main.tex

\section{The Support Size of Any Optimal Solution to the LP of {\sc $d$-Hitting Set}}\label{sec:support}

In this section, we present a tight bound on the support size of any optimal solution to the classic LP of the {\sc $d$-Hitting Set} problem, defined as follows. 

\begin{definition}\label{def:HSLP}
Let $(U,{\cal F})$ be an instance of {\sc $d$-Hitting Set}. Then, the classic LP that corresponds to $(U,{\cal F})$ is defined as follows: $[\min \sum_{u\in U} y_u$ s.t.~$\forall S\in{\cal F}: \sum_{u\in S}y_u\geq 1; \forall u\in U: y_u\geq 0]$.
\end{definition}

We will re-name $y$ by $x$ when it is more convenient (in Section \ref{sec:HSElement}) and no confusion arises.

We present the following theorem, which has been originally proved in \cite{furedi1988matchings}. For the sake of completeness, we present a short proof here.

\begin{theorem}[\cite{furedi1988matchings}]\label{thm:support}
Let $I=(U,{\cal F})$ be an instance of {\sc $d$-Hitting Set}. Let $\beta$ be an optimal solution to its classic LP. Then, $|\mathsf{support}(\beta)|\leq d\cdot \mathsf{frac}(I)$. 
In particular, $|\mathsf{support}(\beta)|\leq d\cdot \mathsf{opt}(I)$. 
\end{theorem}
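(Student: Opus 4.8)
The plan is to prove the bound via LP duality, using the dual of the classic LP from Definition~\ref{def:HSLP}. First I would write down the dual: the primal is $[\min \sum_{u\in U} y_u \text{ s.t.~} \forall S\in{\cal F}: \sum_{u\in S}y_u\geq 1; \forall u\in U: y_u\geq 0]$, whose dual is a packing LP $[\max \sum_{S\in{\cal F}} z_S \text{ s.t.~} \forall u\in U: \sum_{S\ni u}z_S\leq 1; \forall S\in{\cal F}: z_S\geq 0]$. Let $\beta$ be the given optimal primal solution and let $\gamma$ be any optimal dual solution (which exists and is rational by Propositions~\ref{prop:solveLP} and~\ref{prop:slackness}). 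By strong duality, $\mathsf{frac}(I)=\sum_{u\in U}\beta(y_u)=\sum_{S\in{\cal F}}\gamma(z_S)$.

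The key step is complementary slackness applied to the primal variables: for every $u\in U$ with $\beta(y_u)>0$, i.e.\ every $u\in\mathsf{support}(\beta)$, we have $\sum_{S\ni u}\gamma(z_S)=1$ (the dual constraint for $u$ is tight). Now I would sum this equality over all $u\in\mathsf{support}(\beta)$ to get
\[
|\mathsf{support}(\beta)| \;=\; \sum_{u\in\mathsf{support}(\beta)}\;\sum_{S\ni u}\gamma(z_S)\;=\;\sum_{S\in{\cal F}}\gamma(z_S)\cdot|S\cap\mathsf{support}(\beta)|.
\]
Since every set $S\in{\cal F}$ has size at most $d$, we have $|S\cap\mathsf{support}(\beta)|\leq d$, so the right-hand side is at most $d\sum_{S\in{\cal F}}\gamma(z_S)=d\cdot\mathsf{frac}(I)$, where the last equality is strong duality again. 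This gives $|\mathsf{support}(\beta)|\leq d\cdot\mathsf{frac}(I)$. The ``in particular'' statement follows because the optimal LP value is a lower bound on the optimal integral value, i.e.\ $\mathsf{frac}(I)\leq\mathsf{opt}(I)$ (any hitting set gives a feasible $0/1$ primal solution).

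I do not anticipate a serious obstacle here; the only thing to be careful about is the bookkeeping when swapping the order of summation (each term $\gamma(z_S)$ gets counted once for each $u\in S\cap\mathsf{support}(\beta)$, which is exactly why the factor $|S\cap\mathsf{support}(\beta)|$ appears), and noting that sets $S$ disjoint from $\mathsf{support}(\beta)$ contribute zero and so can be dropped. One should also confirm that the LP indeed admits an optimal solution — it is bounded below by $0$ and the all-ones assignment is feasible (each constraint sums at least one variable equal to $1$), so by Proposition~\ref{prop:solveLP} an optimal rational solution exists, and likewise for the dual, so Proposition~\ref{prop:slackness} applies.
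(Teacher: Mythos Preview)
Your proof is correct and essentially identical to the paper's: both take an optimal dual solution, invoke complementary slackness to see that the dual constraint $\sum_{S\ni u}\gamma(z_S)=1$ is tight for every $u\in\mathsf{support}(\beta)$, swap the order of summation, and bound each set's contribution by $d$ using $|S|\leq d$ together with strong duality. The only difference is cosmetic---the paper runs the chain from $\mathsf{frac}(I)$ down to $\frac{1}{d}|\mathsf{support}(\beta)|$, whereas you run it from $|\mathsf{support}(\beta)|$ up to $d\cdot\mathsf{frac}(I)$.
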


\begin{proof} Let us denote the classic LP that corresponds to $(U,{\cal F})$ by {\bf (D)}. We note that the dual LP of {\bf (D)}, which we denote by {\bf (P)}, is defined as follows: $[\max \sum_{S\in{\cal F}} x_S$ s.t.~$\forall u\in U: \sum_{S\in {\cal F}: u\in S}x_S\leq 1; \forall S\in{\cal F}: x_S\geq 0]$. Let $\alpha$ be an optimal solution to {\bf (P)}. Then,
\[\begin{array}{lll}
\smallskip
\mathsf{frac}(I) & = \displaystyle{\sum_{u\in U} \beta(y_u)} & [\beta \mbox{ is optimal}]\\
\smallskip
& = \displaystyle{\sum_{S\in{\cal F}} \alpha(x_S)} & [\mbox{strong duality}]\\
\smallskip
& = \displaystyle{\frac{1}{d}\cdot \sum_{S\in{\cal F}} (d\cdot \alpha(x_S))} & \\
\smallskip
& \geq \displaystyle{\frac{1}{d}\cdot \sum_{S\in{\cal F}}\sum_{u\in S}\alpha(x_S)} & [\forall S\in {\cal F}: |S|\leq d]\\
\smallskip
& = \displaystyle{\frac{1}{d}\cdot \sum_{u\in U}\sum_{S\in{\cal F}: u\in S}\alpha(x_S)} &\\
\smallskip
& \geq \displaystyle{\frac{1}{d}\cdot \sum_{u\in \mathsf{support}(\beta)}\sum_{S\in{\cal F}: u\in S}\alpha(x_S)} &\\
\smallskip
& = \displaystyle{\frac{1}{d}\cdot \sum_{u\in \mathsf{support}(\beta)}1} & [\mbox{complementary slackness}]\\
& = \displaystyle{\frac{1}{d}\cdot |\mathsf{support}(\beta)|} & \\
\end{array}\]
We conclude that $|\mathsf{support}(\beta)|\leq d\cdot \mathsf{opt}(I)$. Because $\mathsf{frac}(I)\leq \mathsf{opt}(I)$, the proof is complete. 
\end{proof}

Observe that the bound in Theorem \ref{thm:support} is tight, that is, it is satisfied with equality for infinitely many instances of {\sc $d$-Hitting Set}. To see this, for any $n\in\mathbb{N}$ that is a multiple of $d$, consider an instance $I=(U,{\cal F})$ where $|U|=n$ and ${\cal F}$ is a partition of $U$ into parts of equal size $d$ (so, $|{\cal F}|=n/d$). Then, the optimum of the corresponding classic LP is easily seen to be $n/d$, and it can be attained by an assignment that assigns $1/d$ to each variable, and thus has support size $n=d\cdot\mathsf{opt}(I)$.

%!TEX root = Main.tex

\section{A $(d-\frac{d-1}{d})$-Approximate Linear-Element Kernel for {\sc $d$-Hitting Set}}\label{sec:HSElement}

We first present the following reduction rule that is the basis of our kernelization algorithm.

\begin{definition}\label{def:HSRule}
The {\em {\sc $d$-Hitting Set} element reduction rule} is defined as follows:
\begin{itemize}
\item {\bf reduce:} Let $I=(U,{\cal F})$ be an instance of {\sc $d$-Hitting Set}. Use the algorithm in Proposition \ref{prop:solveLP} to compute an optimal solution $\alpha$ to the classic LP corresponding to it (Definition \ref{def:HSLP}). Let $H=\{u\in U: \alpha(u)\geq\frac{1}{d-1}\}$.
Output $I'=(U',{\cal F}')$ where ${\cal F}'=\{S\in{\cal F}: S\cap H=\emptyset\}$ and $U'=\bigcup{\cal F}'$.
\item {\bf lift:} Given $I,I'$ and a solution $S'$ to $I'$, output $S=S'\cup H$.
\end{itemize}
\end{definition}

Essentially, our approximate kernelization algorithm will consist of exhaustive (i.e., as long as $|H|\geq 1$) application of the {\sc $d$-Hitting Set} element reduction rule. Unfortunately, the $d$-Hitting Set rule is {\em not} $(1-\frac{d-1}{d})$-strict, and hence, unlike other lossy kernelization algorithms that consist of repetitive applications of one or more reduction rules, we cannot make direct use of Proposition \ref{prop:strictRule}. So, we present the algorithm explicitly in order to ease its analysis. 

\begin{definition}\label{def:HSKernelization}
The {\em {\sc $d$-Hitting Set} element kernelization algorithm} is defined as follows:
\begin{itemize}
\item {\bf reduce:} Let $I=(U,{\cal F})$ be an instance of {\sc $d$-Hitting Set}. Let $i=1,{\cal F}_1={\cal F}$ and $U_1=\bigcup{\cal F}_1$. As long as a break command is not reached:
	\begin{enumerate}
	\item Use the algorithm in Proposition \ref{prop:solveLP} to compute an optimal solution $\alpha_i$ to the classic LP corresponding to $I_i=(U_i,{\cal F}_i)$ (Definition \ref{def:HSLP}).
	\item Let $H_i=\{u\in U: \alpha(u)\geq\frac{1}{d-1}\}$. If $H_i=\emptyset$, then break the loop.
	\item Increase $i$ by $1$, and let ${\cal F}_i=\{S\in{\cal F}_{i-1}: S\cap H_{i-1}=\emptyset\}$ and $U_i=\bigcup{\cal F}_i$.
	\end{enumerate}
Let $H^\star=\bigcup_{j=1}^{i-1}H_j$. Output $I'=(U',{\cal F}')$ where ${\cal F}'=\{S\in{\cal F}: S\cap H^\star=\emptyset\}$ (which equals ${\cal F}_i$)  and $U'=\bigcup{\cal F}'$ (which equals $U_i$).

\item {\bf lift:} Given $I,I'$ and a solution $S'$ to $I'$, output $S=S'\cup H^\star$.
\end{itemize}
\end{definition}

In order to bound the output size of our kernelization algorithm, we will make use of the following lemma, whose proof is based on Theorem \ref{thm:support}.

\begin{lemma}\label{lem:dHSFinalElemSize}
Let $I=(U,{\cal F})$ be an instance of {\sc $d$-Hitting Set} where $U=\bigcup_{S\in{\cal F}}S$, and let $\alpha$ be an optimal solution to its classic LP that assigns only values strictly smaller than $\frac{1}{d-1}$. Then, $|U|\leq d\cdot\mathsf{frac}(I)$.
\end{lemma}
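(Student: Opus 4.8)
The plan is to show that, for an instance $I=(U,{\cal F})$ with $U=\bigcup_{S\in{\cal F}}S$ and an optimal LP solution $\alpha$ all of whose positive values are strictly below $\frac{1}{d-1}$, the support of $\alpha$ is in fact all of $U$; the bound $|U|\leq d\cdot\mathsf{frac}(I)$ then follows immediately from Theorem~\ref{thm:support}. So the real content is the claim $\mathsf{support}(\alpha)=U$, equivalently: no element of $U$ is assigned $0$ by $\alpha$.

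\medskip
\noindent\textbf{Main argument.} Suppose toward a contradiction that some $u\in U$ has $\alpha(u)=0$. Since $u\in U=\bigcup_{S\in{\cal F}}S$, there is a set $S\in{\cal F}$ with $u\in S$. The key observation is that $S$ is ``slack'' for $\alpha$ in a strong way: because $|S|\leq d$ and $S$ contains $u$ with $\alpha(u)=0$, the other at most $d-1$ elements of $S$ each carry weight strictly less than $\frac{1}{d-1}$, so
\[
\sum_{v\in S}\alpha(v)=\sum_{v\in S\setminus\{u\}}\alpha(v)<(d-1)\cdot\frac{1}{d-1}=1,
\]
violating the constraint $\sum_{v\in S}\alpha(v)\ge 1$ for $S$. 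Hence $\alpha$ is not even feasible, contradicting that it is an optimal (in particular feasible) LP solution. Therefore $\alpha(u)>0$ for every $u\in U$, i.e.\ $\mathsf{support}(\alpha)=U$. Applying Theorem~\ref{thm:support} to $I$ with its optimal solution $\alpha$ gives $|U|=|\mathsf{support}(\alpha)|\leq d\cdot\mathsf{frac}(I)$, as desired.

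\medskip
\noindent\textbf{Where the subtlety lies.} The argument above is essentially a one-line feasibility check, so I would expect the only thing to be careful about is the exact threshold: the rule deletes elements with $\alpha$-value at least $\frac{1}{d-1}$, so after exhaustive application every surviving element has value \emph{strictly} less than $\frac{1}{d-1}$, and a set of size at most $d$ one of whose elements has value $0$ has at most $d-1$ remaining contributors, each bounded by just under $\frac{1}{d-1}$ — the product is just under $1$, which is exactly tight enough to break feasibility. (If the threshold had been strictly greater than $\frac{1}{d-1}$, or the surviving values were allowed to equal $\frac{1}{d-1}$, the argument would fail, which is presumably why the constant $\frac{1}{d-1}$ is chosen as it is.) I would also note for cleanliness that the empty-instance case ($U=\emptyset$) is trivial, and that we do not even need complementary slackness here — pure LP feasibility suffices — with the support bound itself (which does use duality) quarantined inside Theorem~\ref{thm:support}, which we invoke as a black box.
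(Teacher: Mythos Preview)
Your proof is correct and essentially identical to the paper's: both argue that every element of $U=\bigcup\mathcal{F}$ lies in $\mathsf{support}(\alpha)$ via the same feasibility contradiction (a zero-valued element in some $S\in\mathcal{F}$ leaves at most $d-1$ contributors each strictly below $\tfrac{1}{d-1}$), and then invoke Theorem~\ref{thm:support}. The only cosmetic difference is that the paper phrases it as ``every $S\in\mathcal{F}$ is contained in $\mathsf{support}(\alpha)$'' while you phrase it as ``every $u\in U$ is in $\mathsf{support}(\alpha)$''; these are equivalent given $U=\bigcup\mathcal{F}$.
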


\begin{proof}
We first claim that every $S\in{\cal F}$ is a subset of $\mathsf{support}(\alpha)$. To this end, consider some set $S\in{\cal F}$. Then, because $\alpha$ is a solution, it satisfies $\sum_{u\in S}\alpha(x_u)\geq 1$. Targeting a contradiction, suppose that there exists $u'\in S\setminus\mathsf{support}(\alpha)$. Then, because $|S|\leq d$ and $\alpha$ assigns only values strictly smaller than $\frac{1}{d-1}$, we have that 
\[\displaystyle{\sum_{u\in S}\alpha(x_u) = \sum_{u\in S\setminus \{u'\}}\alpha(x_u)<\sum_{u\in S\setminus \{u'\}}\frac{1}{d-1}\leq 1},\]
which yields a contradiction.

We conclude that $\bigcup{\cal F}\subseteq \mathsf{support}(\alpha)$. By Theorem \ref{thm:support}, $|\mathsf{support}(\alpha)|\leq d\cdot \mathsf{frac}(I)$, and hence $|\bigcup{\cal F}|\leq d\cdot \mathsf{frac}(I)$. Because $U=\bigcup_{S\in{\cal F}}S$, the proof is complete.
\end{proof}

In particular, we now show this lemma yields the desired bound on the number of elements in the output instance of our kernelization algorithm:

\begin{lemma}\label{lem:dHSElemSize}
Let $I=(U,{\cal F})$ be an instance of {\sc $d$-Hitting Set}. Consider  a call to {\bf reduce} of the {\sc $d$-Hitting Set} element kernelization algorithm on input $I=(U,{\cal F})$ and whose output is $I'=(U',{\cal F}')$. Then, $|U'|\leq d\cdot\mathsf{frac}(I')$ and $|{\cal F}'|\leq(d\cdot\mathsf{frac}(I'))^d$.
\end{lemma}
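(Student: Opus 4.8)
\textbf{Proof plan for Lemma~\ref{lem:dHSElemSize}.} The plan is to reduce the statement to Lemma~\ref{lem:dHSFinalElemSize} by verifying that the output instance $I'=(U',{\cal F}')$ satisfies its two hypotheses. First I would observe that, by construction, $U'=\bigcup{\cal F}'$, so $U'$ equals the union of the sets in its own family; this is exactly the hypothesis $U=\bigcup_{S\in{\cal F}}S$ of Lemma~\ref{lem:dHSFinalElemSize}. The second and main point to establish is that there is an optimal LP solution to the classic LP of $I'$ that assigns only values strictly smaller than $\tfrac{1}{d-1}$.

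For that, I would argue that when the loop in the kernelization algorithm terminates, it does so because $H_i=\emptyset$ at iteration $i$, i.e.\ the optimal LP solution $\alpha_i$ computed for $I_i=(U_i,{\cal F}_i)$ assigns every element a value strictly less than $\tfrac{1}{d-1}$. Since the algorithm outputs exactly $I'=I_i$ (the excerpt explicitly notes ${\cal F}'={\cal F}_i$ and $U'=U_i$), the solution $\alpha_i$ is precisely an optimal LP solution to the classic LP of $I'$ with all values below $\tfrac{1}{d-1}$. One subtlety to double check here is the typo in Definition~\ref{def:HSKernelization} where $H_i$ is written as $\{u\in U:\alpha(u)\ge\tfrac{1}{d-1}\}$ rather than $\{u\in U_i:\alpha_i(u)\ge\tfrac{1}{d-1}\}$; I would read this in the evidently intended way, namely with respect to $\alpha_i$ on $U_i$, so that $H_i=\emptyset$ genuinely means $\alpha_i$ is $(<\tfrac{1}{d-1})$-valued everywhere on $U_i$. (A degenerate case also worth a sentence: if ${\cal F}'=\emptyset$ then $U'=\emptyset$, $\mathsf{frac}(I')=0$, and both inequalities hold trivially.)

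With the two hypotheses of Lemma~\ref{lem:dHSFinalElemSize} verified for $I'$, that lemma immediately gives $|U'|\le d\cdot\mathsf{frac}(I')$, which is the first claimed bound. For the second bound, I would simply count sets: every $S\in{\cal F}'$ has $|S|\le d$ and $S\subseteq U'$, so ${\cal F}'$ is a family of subsets of $U'$ of size at most $d$, whence $|{\cal F}'|\le\sum_{j=0}^{d}\binom{|U'|}{j}\le |U'|^d$ (for $|U'|\ge 1$; the case $|U'|=0$ is trivial). Combining with $|U'|\le d\cdot\mathsf{frac}(I')$ yields $|{\cal F}'|\le (d\cdot\mathsf{frac}(I'))^d$, completing the proof.

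\textbf{Anticipated main obstacle.} The genuine mathematical content is entirely inherited from Lemma~\ref{lem:dHSFinalElemSize} (and hence Theorem~\ref{thm:support}); the only real work here is the bookkeeping that the algorithm's termination condition hands us an LP solution of the required form \emph{for the output instance}. The one place requiring a little care is making sure that $\alpha_i$, which is optimal for $I_i$, is still the relevant object for $I'$ — but since $I'=I_i$ verbatim, this is immediate once the notational typo is read correctly. I expect no serious difficulty beyond stating these observations cleanly.
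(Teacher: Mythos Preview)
Your proposal is correct and follows essentially the same approach as the paper: verify that the output instance satisfies the hypotheses of Lemma~\ref{lem:dHSFinalElemSize} (using the loop's break condition to get an optimal LP solution with all values below $\tfrac{1}{d-1}$, and the definition $U'=\bigcup{\cal F}'$), apply that lemma to get $|U'|\le d\cdot\mathsf{frac}(I')$, and then bound $|{\cal F}'|$ by a crude count of subsets of $U'$ of size at most $d$. Your handling of the typo in Definition~\ref{def:HSKernelization} and the degenerate case is more explicit than the paper's, but the argument is the same.
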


\begin{proof}
Due to the condition to break the loop in {\bf reduce}, we have an instance $I'$ whose classic LP admits an optimal solution $\alpha'$ that assigns only values strictly smaller than $\frac{1}{d-1}$. Moreover, recall that $U'=\bigcup_{S\in{\cal F}'}S$.  So, by Lemma \ref{lem:dHSFinalElemSize}, $|U'|\leq d\cdot\mathsf{frac}(I')$. Clearly, this also implies that $|{\cal F}'|\leq {|U'|\choose d}\leq {d\cdot\mathsf{frac}(I')\choose d}\leq(d\cdot\mathsf{frac}(I))^d$.
\end{proof}

We now justify the approximation ratio of our kernelization algorithm. We remark that the particular way in which we phrase it, in particular distinguishing between the two items in its statement rather than only in its proof, is required for later purposes, as we explain before stating Theorem~\ref{thm:HSElementFrac}.

\begin{lemma}\label{lem:dHSElementApprox}
Let $I=(U,{\cal F})$ be an instance of {\sc $d$-Hitting Set}. Consider  a call to  {\bf lift} of the {\sc $d$-Hitting Set} element kernelization algorithm on input $I=(U,{\cal F}),I'=(U',{\cal F}'),S'$ and whose output is $S$. 
For any $0<\rho$, at least one of the following conditions holds:
\begin{enumerate}
\item  $|S|-|S'|\leq \rho\cdot\mathsf{opt}(I)$.
\item $\displaystyle{\frac{|S|}{\mathsf{opt}(I)}\leq d-\frac{\rho}{d-1}}$.
\end{enumerate}
Furthermore, $\displaystyle{\frac{|S|}{\mathsf{opt}(I)}\leq(d-\frac{d-1}{d})\frac{|S'|}{\mathsf{opt}(I')}}$.
\end{lemma}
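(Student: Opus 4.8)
The statement has two parts: a dichotomy (item 1 or item 2 holds, for every $\rho>0$) and the unconditional bound $\frac{|S|}{\mathsf{opt}(I)}\le(d-\frac{d-1}{d})\frac{|S'|}{\mathsf{opt}(I')}$. The plan is to first establish the dichotomy by tracking what the \textbf{reduce} loop does, then derive the unconditional bound by instantiating $\rho=\frac{d-1}{d}$ and combining the two cases with Proposition~\ref{prop:numbers}.

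\textbf{Setting up the dichotomy.} Recall $S=S'\cup H^\star$ where $H^\star=\bigcup_{j=1}^{i-1}H_j$, so $|S|-|S'|\le|H^\star|$. First I would observe that $H^\star$ is itself a hitting set for all sets of $\mathcal F$ that are \emph{not} in $\mathcal F'$: indeed, by construction every such set met some $H_{j-1}$ when it was removed. Combined with $S'$ being a solution to $I'=(U',\mathcal F')$, this confirms $S$ is a solution to $I$ (sanity check needed before the ratio makes sense). Now the key quantitative input: each $H_j\subseteq\mathsf{support}(\alpha_j)$ since every element in $H_j$ has $\alpha_j$-value $\ge\frac1{d-1}>0$; moreover each $u\in H_j$ contributes at least $\frac1{d-1}$ to the LP objective $\mathsf{frac}(I_j)=\sum_u\alpha_j(u)$. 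Since the families $\mathcal F_j$ are nested and shrinking, I expect a ``potential drop'' argument: removing $H_j$ drops the fractional optimum by at least $\frac{|H_j|}{d-1}$ (one must check that the restriction of $\alpha_j$ to $U_{j+1}$ remains feasible for $I_{j+1}$, which it does since $\mathcal F_{j+1}\subseteq\mathcal F_j$). Telescoping over $j=1,\dots,i-1$ gives $\mathsf{frac}(I')\le\mathsf{frac}(I)-\frac{|H^\star|}{d-1}$, i.e.
\[
|H^\star|\le(d-1)\big(\mathsf{frac}(I)-\mathsf{frac}(I')\big)\le(d-1)\,\mathsf{frac}(I)\le(d-1)\,\mathsf{opt}(I).
\]
Now split on size. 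If $|H^\star|\le\rho\cdot\mathsf{opt}(I)$ then item~1 holds (since $|S|-|S'|\le|H^\star|$). Otherwise $|H^\star|>\rho\cdot\mathsf{opt}(I)$; plugging into the telescoped inequality, $\rho\cdot\mathsf{opt}(I)<(d-1)(\mathsf{frac}(I)-\mathsf{frac}(I'))\le(d-1)(\mathsf{opt}(I)-\mathsf{frac}(I'))$, so $\mathsf{frac}(I')\le\mathsf{opt}(I)\big(1-\frac{\rho}{d-1}\big)$. Using Lemma~\ref{lem:dHSElemSize}, $|U'|\le d\cdot\mathsf{frac}(I')\le d\cdot\mathsf{opt}(I)\big(1-\frac{\rho}{d-1}\big)$. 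Since any solution $S$ we return has $|S|\le|H^\star|+|S'|$ and $S'\subseteq U'$, while also trivially $|S|\le|H^\star|+|U'|$, I would bound $|S|\le(d-1)(\mathsf{opt}(I)-\mathsf{frac}(I'))+d\cdot\mathsf{frac}(I')=(d-1)\mathsf{opt}(I)+\mathsf{frac}(I')\le(d-1)\mathsf{opt}(I)+\mathsf{opt}(I)(1-\frac{\rho}{d-1})$, which after simplification is $\le\mathsf{opt}(I)\big(d-\frac{\rho}{d-1}\big)$, giving item~2. (I should double-check whether the clean bound $|S|\le|H^\star|+|U'|$ is the right one or whether one needs $S'$ to be small; if $S'$ could exceed $|U'|$ one replaces $S'$ by $\min\{|S'|,|U'|\}$ via $\pi$, or just notes $S'\subseteq U'$ so $|S'|\le|U'|$.)

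\textbf{The unconditional bound.} Set $\rho=\frac{d-1}{d}$ in the dichotomy. If item~2 holds we get $\frac{|S|}{\mathsf{opt}(I)}\le d-\frac{1}{d}=d-\frac{d-1}{d}\cdot\frac{d}{d-1}$... more simply $\le d-\frac1d\le(d-\frac{d-1}{d})\cdot 1\le(d-\frac{d-1}{d})\frac{|S'|}{\mathsf{opt}(I')}$, since $\frac{|S'|}{\mathsf{opt}(I')}\ge1$ always. If instead item~1 holds, $|S|\le|S'|+\frac{d-1}{d}\mathsf{opt}(I)$. Here I would write $\frac{|S|}{\mathsf{opt}(I)}\le\frac{|S'|}{\mathsf{opt}(I)}+\frac{d-1}{d}$ and use $\mathsf{opt}(I')\le\mathsf{opt}(I)$ (clear since $\mathcal F'\subseteq\mathcal F$) together with $\frac{|S'|}{\mathsf{opt}(I')}\ge1$: then $\frac{|S'|}{\mathsf{opt}(I)}\le\frac{|S'|}{\mathsf{opt}(I')}$ and $\frac{d-1}{d}\le\frac{d-1}{d}\cdot\frac{|S'|}{\mathsf{opt}(I')}$, so the sum is $\le(1+\frac{d-1}{d})\frac{|S'|}{\mathsf{opt}(I')}$. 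Finally $1+\frac{d-1}{d}=d-\frac{d-1}{d}$ iff $d-2=\frac{d-1}{d}\cdot(\text{something})$—I should verify this identity: $1+\frac{d-1}{d}=\frac{2d-1}{d}$ while $d-\frac{d-1}{d}=\frac{d^2-d+1}{d}$, and $\frac{d^2-d+1}{d}\ge\frac{2d-1}{d}\iff d^2-3d+2\ge0\iff(d-1)(d-2)\ge0$, true for $d\ge2$. So the bound follows.

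\textbf{Main obstacle.} The delicate point is the telescoping inequality $\mathsf{frac}(I')\le\mathsf{frac}(I)-\frac{|H^\star|}{d-1}$, i.e.\ correctly arguing that the restricted LP solutions stay feasible and that each round's objective genuinely drops by the claimed amount — there is a subtlety that the \emph{new} optimal solution $\alpha_{j+1}$ on $I_{j+1}$ may differ from the restriction of $\alpha_j$, so one only gets an inequality (optimum $\le$ value of the feasible restriction), which is exactly the direction needed, but must be stated carefully. The rest is bookkeeping with Proposition~\ref{prop:numbers} and the elementary inequality $(d-1)(d-2)\ge0$.
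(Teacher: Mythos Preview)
Your dichotomy argument is correct and matches the paper's approach: the same case split on $|H^\star|$ versus $\rho\cdot\mathsf{opt}(I)$, the same telescoping to get $\mathsf{frac}(I')\le\mathsf{frac}(I)-\frac{|H^\star|}{d-1}$, and the same use of $|S'|\le|U'|\le d\cdot\mathsf{frac}(I')$ in Case~2. Your algebraic route there is a slight variant of the paper's but lands at the identical bound.

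The genuine gap is in the unconditional bound: your choice $\rho=\frac{d-1}{d}$ is wrong. With that $\rho$, item~2 yields
\[
\frac{|S|}{\mathsf{opt}(I)}\le d-\frac{\rho}{d-1}=d-\frac{1}{d},
\]
and you then assert $d-\frac{1}{d}\le d-\frac{d-1}{d}$. This is equivalent to $\frac{d-1}{d}\le\frac{1}{d}$, i.e.\ $d\le 2$, so the inequality fails for every $d\ge 3$ (e.g.\ $d=3$ gives $8/3\not\le 7/3$). Your Case~1 analysis is fine, yielding $(1+\frac{d-1}{d})\frac{|S'|}{\mathsf{opt}(I')}=\frac{2d-1}{d}\cdot\frac{|S'|}{\mathsf{opt}(I')}$, which is indeed at most $(d-\frac{d-1}{d})\frac{|S'|}{\mathsf{opt}(I')}$; but that does not rescue Case~2.

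The fix is to take $\rho=\frac{(d-1)^2}{d}$, which is exactly what the paper does. Then Case~2 gives $d-\frac{\rho}{d-1}=d-\frac{d-1}{d}$ on the nose, and Case~1 gives the multiplicative constant $1+\frac{(d-1)^2}{d}=\frac{d^2-d+1}{d}=d-\frac{d-1}{d}$, again on the nose. So the correct $\rho$ is the one that equalizes the two cases, not the one you chose.
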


\begin{proof}
We consider two cases, depending on $|H^\star|$.
\begin{enumerate}
\item First, suppose that $|H^\star|\leq \rho\cdot\mathsf{opt}(I)$. Then, because $|S|-|S'|=|H^\star|$, we directly have that $|S|-|S'|\leq \rho\cdot\mathsf{opt}(I)$.

\item Second, suppose that $|H^\star|\geq\rho\cdot\mathsf{opt}(I)$.  Let $t$ denote the number of iterations performed by the {\sc $d$-Hitting Set} element kernelization algorithm. For every $i\in\{1,2,\ldots,t-1\}$, observe that $\alpha_i|_{\{x_u: u\in U_{i+1}\}}$ is a solution to the classic LP corresponding to $I_{i+1}$, therefore $\displaystyle{\sum_{u\in U_{i+1}}\alpha_{i+1}(x_u)\leq \sum_{u\in U_{i+1}}\alpha_i(x_u)=\sum_{u\in U_i}\alpha_i(x_u)-\sum_{u\in H_i}\alpha_i(x_u)\leq \sum_{u\in U_i}\alpha_i(x_u)-\frac{1}{d-1}|H_i|}$.\\
 (Here, the last inequality follows since $\alpha_i(x_u)\geq \frac{1}{d-1}$ for every $u\in H_i$.)
So,
\[\begin{array}{ll}
\displaystyle{\sum_{u\in U_t}\alpha_t(x_u)} &\leq \displaystyle{\sum_{u\in U_{t-1}}\alpha_{t-1}(x_u)-\frac{1}{d-1}|H_{t-1}|}\\
&\leq \displaystyle{\sum_{u\in U_{t-2}}\alpha_{t-2}(x_u)-\frac{1}{d-1}|H_{t-1}|-\frac{1}{d-1}|H_{t-2}|}\\
& ...\\
&\leq \displaystyle{\sum_{u\in U_1}\alpha_1(x_u)-\frac{1}{d-1}|H_{t-1}|-\frac{1}{d-1}|H_{t-2}| - \ldots - \frac{1}{d-1}|H_{1}|}\\
& = \displaystyle{\sum_{u\in U_1}\alpha_1(x_u)-\frac{1}{d-1}|H^\star|}.
\end{array}\]
In particular, $\mathsf{frac}(I')\leq\mathsf{frac}(I)-\frac{1}{d-1}|H^\star|$. Moreover, by Lemma \ref{lem:dHSElemSize} and because $S'\subseteq U'$, we know that $|S'|\leq d\cdot \mathsf{frac}(I')$. So, 
\[\begin{array}{ll}
|S|=|S'|+|H^\star|&\leq d\cdot \mathsf{frac}(I’)+|H^\star|\\
&\leq d\cdot (\mathsf{frac}(I)-|H^\star|/(d-1))+|H^\star|\\
& \leq d\cdot \mathsf{opt}(I)-|H^\star|/(d-1)\\
& \leq (d-\rho/(d-1))\cdot \mathsf{opt}(I).
\end{array}\]
This directly implies that $\displaystyle{\frac{|S|}{\mathsf{opt}(I)}\leq d-\frac{\rho}{d-1}}$.
\end{enumerate}
This proves the first part of the lemma. For the second part, we choose $\rho=\frac{(d-1)^2}{d}$. Now, we show that in each of the aforementioned two cases,  $\displaystyle{\frac{|S|}{\mathsf{opt}(I)}\leq(d-\frac{d-1}{d})\frac{|S'|}{\mathsf{opt}(I')}}$. For the second case, this directly follows by substituting $\rho$ by $\frac{(d-1)^2}{d}$. So, in what follows, we only consider the first case, where $|S|-|S'|\leq \rho\cdot\mathsf{opt}(I)=\frac{(d-1)^2}{d}\cdot\mathsf{opt}(I)$, 
%Then,
%\[\begin{array}{ll}
%|S| & \leq |S'|+\frac{(d-1)^2}{d}\cdot\mathsf{opt}(I)\\
%& \leq 
%\end{array}\]
and hence $|S|\leq |S'|+\frac{(d-1)^2}{d}\cdot\mathsf{opt}(I)$. Then,
\[\begin{array}{ll}
\displaystyle{\frac{|S|}{\mathsf{opt}(I)}}& \leq \displaystyle{\frac{|S'|+\frac{(d-1)^2}{d}\cdot\mathsf{opt}(I)}{\mathsf{opt}(I)}}\\

&\leq\displaystyle{\frac{|S'|}{\mathsf{opt}(I')} + \frac{(d-1)^2}{d}}\\

&\leq\displaystyle{(1+ \frac{(d-1)^2}{d})\frac{|S'|}{\mathsf{opt}(I')}}
\end{array}\]
Here, the second inequality follows since $\mathsf{opt}(I')\leq \mathsf{opt}(I)$, and the third inequality follows since $|S'|\geq \mathsf{opt}(I')$. Now, observe that $1+\frac{(d-1)^2}{d}=\frac{d}{d}+\frac{d^2-2d+1}{d}=\frac{d^2-d+1}{d}=d-\frac{d-1}{d}$. So, indeed $\displaystyle{\frac{|S|}{\mathsf{opt}(I)}\leq(d-\frac{d-1}{d})\frac{|S'|}{\mathsf{opt}(I')}}$.
\end{proof}

We are now ready to prove the main theorem of this subsection. In particular, while we prove that our kernelization algorithm is a $(d-\frac{d-1}{d})$-approximate $d\cdot\mathsf{frac}$-element and $(d\cdot\mathsf{frac})^d$-set kernel, we also state that it is \good, and we should keep in mind that it also satisfies the two conditions in Lemma \ref{lem:dHSElementApprox}. In particular, we will need the two conditions in this lemma for the purpose of being able to compose it later: rather than incurring  a $(d-\frac{d-1}{d})$ multiplicative error, it can be used so that it either incurs an {\em (essentially) negligible additive} error, or returns a solution $S$ of approximation ratio better than $d$ (though not $(d-\frac{d-1}{d})$, but depending on how ``negligible'' the additive error in the first case should be) irrespective of the approximation ratio of the solution $S'$ given to it.  These conditions will be necessary for the correctness of our approximate kernelization protocol for {\sc $d$-Hitting Set} that is given in the next section.

\begin{theorem}\label{thm:HSElementFrac}
The {\sc $d$-Hitting Set} problem, parameterized by the fractional optimum of the classic LP, admits a $(d-\frac{d-1}{d})$-approximate $d\cdot\mathsf{frac}$-element and $(d\cdot\mathsf{frac})^d$-set kernel. Furthermore, it is  \good.
\end{theorem}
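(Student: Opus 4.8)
The plan is to assemble Theorem~\ref{thm:HSElementFrac} directly from the lemmas already established in this section. I would first argue the polynomial-time claim: both \textbf{reduce} and \textbf{lift} of the \probdHS element kernelization algorithm (Definition~\ref{def:HSKernelization}) run in polynomial time. The loop in \textbf{reduce} performs at most $|U|$ iterations, since each non-terminating iteration removes at least one element from the universe (as $H_{i-1}\neq\emptyset$ implies $U_i\subsetneq U_{i-1}$, because any $u\in H_{i-1}$ lies in some set of ${\cal F}_{i-1}$ that is then discarded, or is simply no longer in $\bigcup{\cal F}_i$), and each iteration invokes the polynomial-time LP solver of Proposition~\ref{prop:solveLP}; \textbf{lift} merely forms a union.

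Next I would verify that \textbf{lift} indeed outputs a solution: if $S'$ hits every set in ${\cal F}'$, then $S=S'\cup H^\star$ hits every set in ${\cal F}$, because a set $S\in{\cal F}$ is either in ${\cal F}'$ (hence hit by $S'$) or satisfies $S\cap H^\star\neq\emptyset$ (hence hit by $H^\star$) --- this is immediate from the definition ${\cal F}'=\{S\in{\cal F}: S\cap H^\star=\emptyset\}$. Then the size bounds follow directly from Lemma~\ref{lem:dHSElemSize}: the output $I'=(U',{\cal F}')$ satisfies $|U'|\leq d\cdot\mathsf{frac}(I')$ and $|{\cal F}'|\leq(d\cdot\mathsf{frac}(I'))^d$. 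Since the parameter here is $\mathsf{frac}$, and (as noted in the proof of Lemma~\ref{lem:dHSElemSize}) $\mathsf{frac}(I')\leq\mathsf{frac}(I)$ --- the restriction of $\alpha_1$ to $U'$ is a feasible LP solution for $I'$, so actually one also gets $|U'|\le d\cdot\mathsf{frac}(I)$ --- the bound holds both in terms of $\mathsf{frac}(I')$ and $\mathsf{frac}(I)$; the former is exactly what ``\good'' demands. The approximation ratio $d-\frac{d-1}{d}$ is precisely the ``Furthermore'' clause of Lemma~\ref{lem:dHSElementApprox}, which states $\frac{|S|}{\mathsf{opt}(I)}\leq(d-\frac{d-1}{d})\frac{|S'|}{\mathsf{opt}(I')}$; since the parameterization is by $\mathsf{opt}$ (equivalently $\mathsf{frac}$) rather than a budget $k$, $\pi$ can be dropped and this is exactly the inequality required in the definition of an $\alpha$-approximate kernelization algorithm.

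Honestly, this theorem is essentially a packaging statement: all the real work is in Theorem~\ref{thm:support}, Lemma~\ref{lem:dHSFinalElemSize}, Lemma~\ref{lem:dHSElemSize}, and Lemma~\ref{lem:dHSElementApprox}. So I do not expect a genuine obstacle. The one point that needs a sentence of care is the \good\ qualification: one must observe that the break condition of the loop guarantees the output instance $I'$ has an optimal LP solution using only values strictly below $\frac{1}{d-1}$, so Lemma~\ref{lem:dHSFinalElemSize} applies \emph{with $I'$ in the role of $I$}, yielding $|U'|\leq d\cdot\mathsf{frac}(I')$ --- a bound in terms of the \emph{output} parameter, which is stronger than the $d\cdot\mathsf{frac}(I)$ bound one would get by merely tracking the drop in $\mathsf{frac}$. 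This is already recorded in Lemma~\ref{lem:dHSElemSize}, so the proof of the theorem can simply cite it. I would close by remarking (as the paragraph preceding the theorem statement already flags) that, beyond the multiplicative guarantee, the algorithm also satisfies the two-case dichotomy of Lemma~\ref{lem:dHSElementApprox} for every $\rho>0$, which is what makes it safely composable inside the protocol of the next section --- but this is a remark, not part of the theorem's proof.
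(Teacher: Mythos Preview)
Your proposal is correct and follows essentially the same approach as the paper's own proof: polynomial running time from the bounded number of loop iterations plus Proposition~\ref{prop:solveLP}, size and \good\ bounds from Lemma~\ref{lem:dHSElemSize}, and the approximation ratio from Lemma~\ref{lem:dHSElementApprox}. You add a couple of sentences (that $S'\cup H^\star$ is indeed a hitting set, and why $\mathsf{frac}(I')\le\mathsf{frac}(I)$) that the paper leaves implicit, but the structure is identical.
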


\begin{proof}
Clearly, the {\bf lift} procedure of the kernelization algorithm is performed in polynomial time. Further, the loop of the {\bf reduce} procedure can perform at most $|U|$ iterations before the one where it breaks (since each of them removes at least one element from the universe), and each is performed in polynomial time, so overall this procedure is performed in polynomial time. The bounds on the number of elements in the output as well as its size, along with the property of being \good, follow from Lemma \ref{lem:dHSElemSize}. Lastly, the approximation ratio follows from Lemma \ref{lem:dHSElementApprox}. This completes the proof.
\end{proof}

Because parameterization by the fractional optimum of the classic LP is lower bounded by parameterization by the optimum, and due to Lemma \ref{lem:lossyKerOptToK}, we have the following corollaries of Theorem \ref{thm:HSElementFrac}.

\begin{corollary}
The {\sc $d$-Hitting Set} problem, parameterized by the optimum, admits a $(d-\frac{d-1}{d})$-approximate $d\cdot\mathsf{opt}$-element $(d\cdot\mathsf{opt})^d$-set kernel.
\end{corollary}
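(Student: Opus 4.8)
The plan is to derive the corollary directly from Theorem~\ref{thm:HSElementFrac} by the single observation that $\mathsf{frac}\le\mathsf{opt}$, so no new algorithmic work is needed. Concretely, I would use the very same {\bf reduce} and {\bf lift} procedures of the {\sc $d$-Hitting Set} element kernelization algorithm (Definition~\ref{def:HSKernelization}); note that these procedures never consult the parameter during their execution (they only compute optimal LP solutions), so they are equally valid whether we regard the parameter as $\mathsf{frac}$ or as $\mathsf{opt}$. Polynomial running time of both procedures and the approximation ratio $d-\frac{d-1}{d}$ then carry over verbatim from Theorem~\ref{thm:HSElementFrac} (the latter being exactly the ``Furthermore'' part of Lemma~\ref{lem:dHSElementApprox}).

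It remains to re-express the size guarantee in terms of $\mathsf{opt}$. Let $I'=(U',{\cal F}')$ be the output on input $I=(U,{\cal F})$. By Lemma~\ref{lem:dHSElemSize} we have $|U'|\le d\cdot\mathsf{frac}(I')$ and $|{\cal F}'|\le (d\cdot\mathsf{frac}(I'))^d$. Since the classic LP is a relaxation of {\sc $d$-Hitting Set}, $\mathsf{frac}(I')\le\mathsf{opt}(I')$ (this inequality is already used inside Theorem~\ref{thm:support}), hence $|U'|\le d\cdot\mathsf{opt}(I')$ and $|{\cal F}'|\le (d\cdot\mathsf{opt}(I'))^d$. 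Because the bound is stated in terms of the parameter $\mathsf{opt}(I')$ of the output instance, this is a \emph{\good} $d\cdot\mathsf{opt}$-element, $(d\cdot\mathsf{opt})^d$-set kernel; to also get the non-adaptive bound in terms of the input parameter, I would note $\mathsf{frac}(I')\le\mathsf{frac}(I)\le\mathsf{opt}(I)$, using that ${\cal F}'\subseteq{\cal F}$ only relaxes the LP constraints.

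Finally, to check that $I'$ is a legitimate reduced instance for the parameterization by $\mathsf{opt}$, I would verify $\mathsf{opt}(I')\le\mathsf{opt}(I)$: given any hitting set $S$ of $I$, the set $S\cap U'$ still hits every $T\in{\cal F}'$, because $T\in{\cal F}$ is hit by $S$ and $T\subseteq U'$ forces $S\cap T=(S\cap U')\cap T\neq\emptyset$; thus $S\cap U'$ is a hitting set of $I'$ of size at most $|S|$. I do not anticipate a real obstacle here; the only points requiring (routine) care are the monotonicity facts $\mathsf{frac}(I')\le\mathsf{frac}(I)$, $\mathsf{opt}(I')\le\mathsf{opt}(I)$, and $\mathsf{frac}\le\mathsf{opt}$, all of which are immediate. (An alternative, essentially equivalent, route is to invoke Lemma~\ref{lem:lossyKerOptToK} together with the remark that parameterization by $\mathsf{frac}$ is dominated by parameterization by $\mathsf{opt}$, but the direct argument above is cleaner.)
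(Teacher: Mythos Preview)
Your proposal is correct and matches the paper's approach: the corollary is stated immediately after Theorem~\ref{thm:HSElementFrac} with the one-line justification that $\mathsf{frac}\le\mathsf{opt}$, so the element and set bounds transfer directly. One small remark: your parenthetical ``alternative route'' via Lemma~\ref{lem:lossyKerOptToK} is not quite right, since that lemma converts from parameterization by $\mathsf{opt}$ to parameterization by the input bound $k$ (it is what the paper uses for the \emph{next} corollary, Corollary~\ref{thm:HSElementK}), not from $\mathsf{frac}$ to $\mathsf{opt}$; but your direct argument is the intended one and stands on its own.
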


\begin{corollary}\label{thm:HSElementK}
The {\sc $d$-Hitting Set} problem, parameterized by a bound $k$ on the solution size, admits a $(d-\frac{d-1}{d})$-approximate $\frac{d}{d-\frac{d-1}{d}}\cdot (k+1)$-element $(\frac{d}{d-\frac{d-1}{d}}\cdot (k+1))^d$-set kernel.
\end{corollary}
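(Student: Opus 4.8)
The plan is to derive Corollary~\ref{thm:HSElementK} from Theorem~\ref{thm:HSElementFrac} by composing two facts: first, that the parameterization by $\mathsf{frac}$ (the fractional optimum of the classic LP) is at most the parameterization by $\mathsf{opt}$ (since $\mathsf{frac}(I)\leq\mathsf{opt}(I)$ always), so the $d\cdot\mathsf{frac}$-element bound of Theorem~\ref{thm:HSElementFrac} immediately upgrades to a $d\cdot\mathsf{opt}$-element bound; and second, to pass from parameterization by $\mathsf{opt}$ to parameterization by $k$, invoke Lemma~\ref{lem:lossyKerOptToK}. That lemma states precisely that if a minimization problem parameterized by the optimum admits an $\alpha$-approximate $g(\mathsf{opt})$-element kernel, then parameterized by $k$ it admits an $\alpha$-approximate $g\bigl(\frac{k+1}{\alpha}\bigr)$-element kernel.

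Concretely, I would first record the intermediate corollary: applying Theorem~\ref{thm:HSElementFrac} with $\mathsf{frac}\leq\mathsf{opt}$ gives that {\sc $d$-Hitting Set} parameterized by $\mathsf{opt}$ admits a $(d-\frac{d-1}{d})$-approximate kernel with at most $d\cdot\mathsf{opt}$ elements and at most $(d\cdot\mathsf{opt})^d$ sets. Then I would set $\alpha=d-\frac{d-1}{d}$ and $g(\mathsf{opt})=d\cdot\mathsf{opt}$ (for the element count) and $g(\mathsf{opt})=(d\cdot\mathsf{opt})^d$ (for the set count), and plug into Lemma~\ref{lem:lossyKerOptToK}. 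Substituting $\mathsf{opt}\mapsto\frac{k+1}{\alpha}=\frac{k+1}{d-\frac{d-1}{d}}$ yields an element bound of $d\cdot\frac{k+1}{d-\frac{d-1}{d}}=\frac{d}{d-\frac{d-1}{d}}\cdot(k+1)$ and a set bound of $\bigl(d\cdot\frac{k+1}{d-\frac{d-1}{d}}\bigr)^d=\bigl(\frac{d}{d-\frac{d-1}{d}}\cdot(k+1)\bigr)^d$, which are exactly the bounds claimed in the statement. The approximation ratio is preserved at $d-\frac{d-1}{d}$ by the lemma.

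There is essentially no hard part here — the statement is a bookkeeping consequence of two already-proved results, and the proof is a one- or two-sentence invocation (indeed the excerpt presents it as an immediate corollary, introduced by ``we have the following corollaries''). The only point requiring a moment's care is making sure the substitution into $g$ is applied correctly to both the element-count function and the set-count function simultaneously, and that Lemma~\ref{lem:lossyKerOptToK} is being used in its ``$g$-element'' form rather than its ``$f$-size'' form; both forms are stated there, so this is routine. One might also note in passing that the use of $\mathsf{frac}\leq\mathsf{opt}$ is exactly the ``parameterization by $\mathsf{frac}$ is lower bounded by parameterization by $\mathsf{opt}$'' remark made just before the corollaries, so no new argument is needed for that step either.
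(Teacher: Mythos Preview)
Your proposal is correct and matches the paper's own justification exactly: the paper simply states that the corollary follows from Theorem~\ref{thm:HSElementFrac} because $\mathsf{frac}\leq\mathsf{opt}$ and by invoking Lemma~\ref{lem:lossyKerOptToK}. Your explicit substitution of $\alpha=d-\frac{d-1}{d}$ into the element and set bounds via $\mathsf{opt}\mapsto\frac{k+1}{\alpha}$ is precisely the intended derivation.
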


It is noteworthy that when $d=2$, in which {\sc $d$-Hitting Set} equals {\sc Vertex Cover}, we retrieve the classic result that {\sc Vertex Cover} admits a $1$-approximate (i.e., exact) $2k$-vertex kernelization algorithm~\cite{FominLSZ19}. This does not follow directly from the stated approximation ratio of $d-\frac{d-1}{d}$ (which equals $1\frac{1}{2}$ rather than $1$ when $d=2$). However, the argument used to prove the correctness of the classic result, that is, that there exists a solution that contains all vertices whose variables are assigned $1$, also implies for our kernel that it is exact (see, e.g., \cite{FominLSZ19}).
Thus, our theorem regarding {\sc $d$-Hitting Set} can be viewed as a generalization of this classic result.

%!TEX root = Main.tex

\newcommand{\coeffConst}{\frac{1}{10}}
\newcommand{\coeffConstD}{\frac{1}{10d}}
\newcommand{\approxVC}{\frac{2}{\sqrt{10}-2}}

\section{A Pure $d'$-Approximate Kernelization Protocol for {\sc $d$-Hitting Set} of Almost Linear Call Size where $d'<d$}\label{sec:HSSize}

For the sake of clarity, we first give a warm-up example. Afterwards, we present our general result that is based on the approach presented by that warm-up example, non-trivial insights regarding how to apply that approach in a recursive manner, and critically also on Theorem~\ref{thm:support} (via Theorem \ref{thm:HSElementFrac}). Lastly, we present some further outlook by relating a method to prove the existence of a $(1+\epsilon)$-approximate kernelization protocol for {\sc Vertex Cover} to the non-existence of $(r,t)$-Ruzsa-Szemer\'{e}di graphs where $r$ is linear in $n$ (the number of vertices) and $t$ is ``large'', which is an open problem.

We will make use of a polynomial-time $d$-approximation algorithm for {\sc $d$-Hitting Set}:

\begin{proposition}[Folklore]\label{prop:approxHS}
The {\sc $d$-Hitting Set} problem admits a polynomial-time $d$-approximation algorithm.
\end{proposition}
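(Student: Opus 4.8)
The plan is to give the standard greedy argument, which is arguably the simplest route to a $d$-approximation and requires no linear programming machinery. Given an instance $(U,{\cal F})$, I would maintain a partial solution $S$, initially empty, together with the subfamily ${\cal F}_S\subseteq{\cal F}$ of sets not yet hit by $S$. As long as ${\cal F}_S\neq\emptyset$, pick an arbitrary set $R\in{\cal F}_S$, add all of $R$ to $S$, and recompute ${\cal F}_S$. Since $|R|\le d$, this adds at most $d$ elements per iteration; since each iteration removes at least one set from ${\cal F}_S$ (namely $R$ itself, now hit), the procedure terminates after at most $|{\cal F}|$ iterations, and each iteration is clearly polynomial-time, so the whole algorithm runs in polynomial time.

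For correctness of the approximation ratio, the key observation is that the sets $R$ picked across the iterations, say $R_1,R_2,\ldots,R_t$, are pairwise disjoint: when $R_j$ is picked it belongs to ${\cal F}_S$, meaning it is disjoint from $S$, which at that point contains all of $R_1\cup\cdots\cup R_{j-1}$. Hence any hitting set, in particular an optimal one $S^\star$, must contain at least one element of each $R_j$, and since the $R_j$ are disjoint this gives $\mathsf{opt}=|S^\star|\ge t$. On the other hand the output $S=\bigcup_{j=1}^t R_j$ has size at most $\sum_{j=1}^t|R_j|\le dt\le d\cdot\mathsf{opt}$, which is the desired bound. Finally $S$ is indeed a hitting set because the loop terminates only when ${\cal F}_S=\emptyset$, i.e.\ every set of ${\cal F}$ is hit by $S$.

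I do not anticipate a genuine obstacle here; the only point requiring a line of care is the disjointness of the picked sets, which underlies both the lower bound $\mathsf{opt}\ge t$ and the upper bound $|S|\le dt$. As an alternative proof (useful if one wants a stronger guarantee in terms of $\mathsf{frac}$ rather than $\mathsf{opt}$), one could instead compute an optimal LP solution via Proposition~\ref{prop:solveLP}, round up to $1$ every variable assigned value at least $\tfrac1d$, and observe that every set of size at most $d$ has some element rounded (its LP weights sum to at least $1$), so the rounded vector is a hitting set of size at most $d\cdot\mathsf{frac}\le d\cdot\mathsf{opt}$; this is precisely the viewpoint already exploited in Section~\ref{sec:HSElement}. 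Either argument suffices, and I would present the greedy one for brevity.
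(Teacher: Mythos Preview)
Your proposal is correct; the greedy maximal-packing argument is the standard folklore proof, and your disjointness observation is exactly the point that makes the $t\le\mathsf{opt}$ lower bound work. Note that the paper itself provides no proof of this proposition---it is simply stated as folklore---so there is nothing to compare against, and either of your two suggested arguments (greedy or LP rounding) would be an appropriate justification.
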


\subsection{Warm-Up Example: A 1.721-Approximate Kernelization Protocol for {\sc Vertex Cover} of $2$ Rounds and Call Size $(2k)^{1.5}$}\label{sec:warmUp}

We start with a warm-up and, in a sense, toy example which exemplifies a main insight behind our more general result, that is, that essentially we may use the oracle to find a ``large subinstance'' that is ``sparse'', and hence which (with another oracle call), we can solve optimally.  We will make use of Theorem \ref{thm:support} (as to stay as close as possible to the proof of the more general result, where it is necessary), though here, as $d=2$, one can equally use the classic $1$-approximate $2k$-vertex kernel for {\sc Vertex Cover}~\cite{FominLSZ19}.

\begin{theorem}\label{thm:warmUp}
The {\sc Vertex Cover} problem, parameterized by the fractional optimum of the classic LP, admits a pure, having $2$ rounds, $\approxVC$-approximate\footnote{Note that $\approxVC\leq 1.721$.} (randomized)\footnote{Here, randomization means that we may fail to return a $(\approxVC)$-approximate solution (i.e., we may return a ``worse'' solution), but we must succeed with probability, say, at least $9/10$. It should be clear that the success probability can be boosted to any constant arbitrarily close to $1$.} kernelization protocol with call size $2\mathsf{frac}+2(2\mathsf{frac})^{1.5}$ (where the number of edges is at most $2(2\mathsf{frac})^{1.5}$).
\end{theorem}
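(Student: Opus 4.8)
The plan is to follow the three-case structure depicted in Figure~\ref{fig:vc}. First I would apply the $1$-approximate $2\mathsf{frac}$-vertex kernel for \vcfull (guaranteed by Theorem~\ref{thm:support}, or the classic kernel) to obtain a graph $G$ with $n\le 2\mathsf{frac}$ vertices; this is the first ``round'' of the protocol and it does not lose anything. Set $k=\mathsf{frac}$ as the working parameter. Now sample each edge of $G$ independently with probability $p=c\cdot\frac{\mathsf{frac}^{1.5}}{|E(G)|}$ for a suitable constant $c$ (say $c=1$, or tuned to the final bound), obtaining a subgraph $G_1$ with $\OO(\mathsf{frac}^{1.5})$ edges in expectation; truncate/repeat to guarantee the call-size bound deterministically. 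Call the oracle on $G_1$ to get a $\beta$-approximate vertex cover $S_1$ of $G_1$ (the oracle guarantee is $\pi(S_1)\le\beta\,\pi(\mathsf{opt}(G_1))$ for the same fixed $\beta$). This is the second round.

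The case analysis then goes as follows. \emph{Case (I): $|S_1|$ is large}, say $|S_1|\ge \lambda n$ for a threshold $\lambda$ to be chosen (something like $\lambda=\frac{1}{\sqrt{10}}$, matching the $\approxVC$ bound). Then $\mathsf{opt}(G)\ge\mathsf{opt}(G_1)\ge|S_1|/\beta\ge\lambda n/\beta$, so returning $V(G)$ (a trivial vertex cover of size $n$) gives ratio $|V(G)|/\mathsf{opt}(G)\le n/(\lambda n/\beta)=\beta/\lambda$, which I need to be $\le\alpha\beta$, i.e.\ $\lambda\ge 1/\alpha$. \emph{If $|S_1|<\lambda n$:} the set $Y:=V(G)\setminus S_1$ has $|Y|>(1-\lambda)n$, and $G_1[Y]$ is edgeless since $S_1$ is a vertex cover of $G_1$. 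Here comes the key probabilistic lemma: with the chosen $p$, with probability $\ge 9/10$, \emph{every} induced subgraph $G[W]$ of $G$ with at least $\mathsf{frac}^{1.5}$ edges contains a sampled edge. This is proved by a union bound over the at most $2^n\le 4^{\mathsf{frac}}$ vertex subsets $W$, using $\Pr[G[W]\text{ misses the sample}]\le(1-p)^{|E(G[W])|}\le e^{-p\,\mathsf{frac}^{1.5}}$, and choosing $c$ so that $e^{-c\,\mathsf{frac}^{1.5}\cdot\mathsf{frac}^{1.5}/|E(G)|}\cdot 4^{\mathsf{frac}}$ — wait, more carefully: since $|E(G)|\le\binom{2\mathsf{frac}}{2}\le 2\mathsf{frac}^2$, we get $p\ge c/(2\sqrt{\mathsf{frac}})$, hence $e^{-p\,\mathsf{frac}^{1.5}}\le e^{-(c/2)\mathsf{frac}}$, and $4^{\mathsf{frac}}e^{-(c/2)\mathsf{frac}}\le 1/10$ for $c$ a large enough absolute constant. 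Conditioning on this good event, $G[Y]$ has fewer than $\mathsf{frac}^{1.5}$ edges, so it is a legal oracle call.

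\emph{Case (II) and (III):} Call the oracle on $G[Y]$ to get a $\beta$-approximate vertex cover $S_2$ of $G[Y]$. \emph{Case (II): $|S_2|$ is small}, say $|S_2|\le\mu n$ for a threshold $\mu$. Then $S_1\cup S_2$ is a vertex cover of $G$ (edges inside $S_1$ are trivially covered by $S_1$; edges with an endpoint in $Y\setminus S_2$... actually $S_2$ covers $E(G[Y])$ and $S_1$ covers all edges with an endpoint in $S_1$, together all edges), of size $\le|S_1|+|S_2|\le(\lambda+\mu)n$, giving ratio $\le(\lambda+\mu)n/\mathsf{opt}(G)$; since $\mathsf{opt}(G)\ge n/2$ (every vertex cover of the $2\mathsf{frac}$-vertex kernel has size $\ge\mathsf{frac}\ge n/2$ — this uses the kernel's optimality property, that its LP optimum is $n/2$), this is $\le 2(\lambda+\mu)$, which I need $\le\alpha\beta$; since $\beta\ge 1$ it suffices that $2(\lambda+\mu)\le\alpha$. \emph{Case (III): $|S_2|$ is large}, $|S_2|>\mu n$. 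Then $\mathsf{opt}(G[Y])\ge|S_2|/\beta>\mu n/\beta$, so every vertex cover of $G$ uses more than $\mu n/\beta$ vertices inside $Y$, hence at most $\mathsf{opt}(G)-\mu n/\beta$ vertices inside $S_1$, i.e.\ $\mathsf{opt}(G[S_1])\le\mathsf{opt}(G)-\mu n/\beta$. Compute (in polynomial time, Proposition~\ref{prop:approxHS} with $d=2$) a $2$-approximate vertex cover $A$ of $G[S_1]$, so $|A|\le 2(\mathsf{opt}(G)-\mu n/\beta)$. Return $Y\cup A=(V(G)\setminus S_1)\cup A$, which is a vertex cover of $G$ (it covers $E(G[S_1])$ via $A$ and every edge with an endpoint in $Y$ via $Y$), of size $\le(n-|S_1|)+2(\mathsf{opt}(G)-\mu n/\beta)\le n+2\mathsf{opt}(G)-2\mu n/\beta$; dividing by $\mathsf{opt}(G)\ge n/2$ gives ratio $\le 2+2-4\mu/\beta=4-4\mu/\beta$, and since $\beta$ could be as small as... hmm, for the bound to be $\le\alpha\beta$ uniformly in $\beta\ge 1$ I should write it as $\le(4/\beta)\cdot\beta-(4\mu/\beta)\le$ — actually the clean way is: ratio $\le 4-4\mu/\beta\le 4\beta-4\mu = 4(\beta-\mu)$, no. Let me just say: with $\mu$ chosen appropriately the worst case over $\beta\ge1$ of $4-4\mu/\beta$ is $4-4\mu$ (at $\beta=1$), wait that is wrong direction. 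The point is that in Case~(III) the ratio is bounded by something that, combined with the oracle factor correctly accounted for, yields $\alpha\beta$; this is exactly where balancing the three thresholds $\lambda,\mu$ against $\alpha=\approxVC=\frac{2}{\sqrt{10}-2}$ comes in.

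The final step is the arithmetic of choosing $\lambda,\mu$ (and the sampling constant inside the $(2k)^{1.5}$ bound, matching $2\mathsf{frac}+2(2\mathsf{frac})^{1.5}$) so that all three case-bounds simultaneously give ratio $\le\alpha\beta$ with $\alpha=\frac{2}{\sqrt{10}-2}\le 1.721$; the symmetric choice is $\lambda=\mu=\frac{1}{\sqrt{10}}$ (so $2c\mathsf{frac}^{1.5}$ edges with $c=\coeffConst$ as the macro suggests), for which Case~(I) gives $\beta/\lambda=\sqrt{10}\,\beta$... which is too big, so in fact the thresholds must be chosen asymmetrically and tied to $n/2\le\mathsf{frac}$ rather than to $n$ — I would recompute these constants carefully at this point. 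Then I collect the failure probability (only the probabilistic lemma can fail, with probability $\le 1/10$) and conclude the protocol is $2$-round, pure, has call size $2\mathsf{frac}$ (first round) and $\le 2(2\mathsf{frac})^{1.5}$ edges on a $\le 2\mathsf{frac}$-vertex graph (second/third rounds, noting rounds~2 and~3 reuse the bound), and achieves approximation factor $\approxVC$ with probability $\ge 9/10$. \textbf{The main obstacle} I anticipate is getting the three threshold constants $\lambda,\mu$ and the sampling density to line up so that all three cases give exactly the $\frac{2}{\sqrt{10}-2}$ factor — in particular handling the oracle factor $\beta$ correctly in Cases~(I) and~(III), where $\beta$ appears in denominators, so that the final bound is a clean multiple $\alpha\beta$ uniformly over all $\beta\ge 1$; the probabilistic lemma itself is routine Chernoff-plus-union-bound (Proposition~\ref{prop:chernoff}), leveraging crucially that the kernelization already forced $n\le 2\mathsf{frac}$ so the union bound over $2^n$ subsets is dominated by the per-subset failure probability $e^{-\Omega(\mathsf{frac})}$.
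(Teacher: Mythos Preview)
Your high-level structure is right and matches the paper: kernelize to $n\le 2\mathsf{frac}$ vertices, sample $\Theta(\mathsf{frac}^{1.5})$ edges, call the oracle on the sample to get $S_1$, return $V(G)$ if $|S_1|\ge\nu n$, otherwise call the oracle on $G[V\setminus S_1]$ to get $S_2$, and combine with a $2$-approximate cover $A$ of $G[S_1]$. Your probabilistic lemma (union bound over $2^n$ vertex subsets, using $n\le 2\mathsf{frac}$) is exactly what the paper does.

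The genuine gap is in how you choose between the two candidate solutions $S_1\cup S_2$ and $(V(G)\setminus S_1)\cup A$. You threshold on $|S_2|$ (Cases~II and~III), but the paper \emph{always computes both and returns the smaller one}. This is not a cosmetic difference: with your threshold rule the best achievable ratio is $\alpha=2$, not $\frac{2}{\sqrt{10}-2}$. To see why, work at $\beta=1$. Case~III's bound $4-4\mu\le\alpha$ forces $\mu\ge 1-\alpha/4$, so for any $\alpha<2$ you need $\mu>1/2$. But then in Case~II the constraint $|S_2|\le\mu n$ is slack (since $|S_2|\le\mathsf{opt}(G[Y])\le\mathsf{opt}\le n/2$ at $\beta=1$), and the worst Case~II instance has $|S^\star\cap S_1|=0$, giving ratio $|S_1\cup S_2|/\mathsf{opt}\approx(\nu n+n/2)/(n/2)=2\nu+1$. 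Balancing against Case~I's $1/\nu$ gives $2\nu^2+\nu-1=0$, i.e.\ $\nu=1/2$ and $\alpha=2$. So the obstacle you anticipate is not one of tuning constants; no choice of $(\lambda,\mu)$ rescues the thresholding scheme.

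The paper's fix is to return $\min(|S_1\cup S_2|,\,|(V\setminus S_1)\cup A|)$ and to analyze via the \emph{unobservable} parameter $\lambda=|S^\star\cap S_1|/|S_1|$ (together with $\rho=|S_1|/|U|$). One derives $|S_1\cup S_2|\le(\beta+2\rho-2\beta\lambda\rho)\,\mathsf{opt}$ and $|(V\setminus S_1)\cup A|\le(2+4\lambda\rho-2\rho)\,\mathsf{opt}$; the adversarial $\lambda$ equalizes these two, and maximizing over $\rho<\nu$ yields (at $\beta=1$) the equation $\frac{4}{3}+\frac{2}{3}\nu=\frac{1}{\nu}$, whose positive root is $\nu=\frac{\sqrt{10}}{2}-1$ and gives $\alpha=\frac{2}{\sqrt{10}-2}$. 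Your write-up would go through once you make this single change. (Minor remark: the initial kernelization is a polynomial-time step of the protocol, not an oracle round; the two rounds are the two oracle calls.)
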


\begin{proof}
We first describe the algorithm. To this end, consider some input $\widehat{I}=(\widehat{U},\widehat{\cal F})$ (in terms of graphs, $\widehat{U}$ is the vertex set and $\widehat{\cal F}$ is the edge set of the input graph).\footnote{We represent the input using a universe and sets so that it will  resemble our more general protocol more.} Then:
\begin{enumerate}
\item Call the {\bf reduce} procedure of the algorithm in Theorem \ref{thm:HSElementFrac} on $\widehat{I}$ to obtain a new instance $I=(U,{\cal F})$ where $|U|\leq 2\mathsf{frac}(I)$. (Recall that when $d=2$, this algorithm is exact.)\footnote{See the discussion at the end of Section \ref{sec:HSElement}.}

\item Let $0<\nu<1$ (analogous to $\frac{\mu}{2}$ in the general result) be a fixed constant that will be determined later.

\item Sample ${\cal F}_1$ from ${\cal F}$ as follows: Insert each set $S\in{\cal F}$ to ${\cal F}_1$ independently at random with probability $p_1=\displaystyle{\frac{1}{(2\mathsf{frac}(I))^{0.5}}}$.

\item If $|{\cal F}_1|>2p_1|{\cal F}|$, then let $S$ be an arbitrary solution to $I$, and proceed directly to Step \ref{step:lift1}. [\#Failure]

\item Call the oracle on $(U,{\cal F}_1)$, and let $S_1$ denote its output. 

	\item\label{step:success1} If $|S_1|\geq \nu|U|$, then let $S=U$, and proceed directly to Step \ref{step:lift1}.  [\#Success]

\item Let $U_1=U\setminus S_1$ and ${\cal T}_1=\{S\in{\cal F}: S\subseteq U_1\}$.

\item If $|{\cal T}_1|>2(2\mathsf{frac}(I))^{1.5}$, then let $S$ be an arbitrary solution to $I$, and proceed directly to Step \ref{step:lift1}. [\#Failure]

\item Call the oracle on $I'=(U_1,{\cal T}_1)$, and let $S_2$ denote its output. 

\item\label{step:success2} Let $S'=S_2\cup S_1$ and $S''=U_1\cup T$ where $T$ is a $2$-approximate solution to $\widetilde{I}=(S_1,\{S\in{\cal F}: S\subseteq S_1\})$ (computed using Proposition \ref{prop:approxHS}). Let $S$ be a minimum-sized set among $S'$ and $S''$. [\#Success]

\item\label{step:lift1} Call the {\bf lift} procedure of the algorithm in Theorem \ref{thm:HSElementFrac} on $\widehat{I},I,S$ to obtain a solution $\widehat{S}$ to $\widehat{I}$. Output $\widehat{S}$.
\end{enumerate}

Clearly, the algorithm runs in polynomial time, and only two oracle calls are performed. Further, when we call the oracle on $(U,{\cal F}_1)$,  $|{\cal F}_1|\leq 2p_1|{\cal F}|\leq 2\cdot \displaystyle{\frac{1}{(2\mathsf{frac}(I))^{0.5}}}\cdot (2\mathsf{frac}(I))^{2}=2(2\mathsf{frac}(I))^{1.5}$ (due to {\bf reduce}). Thus, each oracle call is performed on an instance with at most $2\mathsf{frac}(I)$ vertices (as $|U|\leq 2\mathsf{frac}(I)$ due to {\bf reduce}) and $2(2\mathsf{frac}(I))^{1.5}$ edges, and since $\mathsf{frac}(I)\leq\mathsf{frac}(\widehat{I})$, the statement in the lemma regarding the call size is satisfied.

We now consider the probability of failure. By Chernoff bound (Proposition \ref{prop:chernoff}), the probability that $|{\cal F}_1|>2p_1|{\cal F}|$ is at most $e^{-\frac{p_1|{\cal F}|}{3}}$. Further, by union bound, the probability that there exists a subset $U'\subseteq U$ such that ${\cal F}_1\cap{\cal F}_{U'}=\emptyset$ (where ${\cal F}_{U'}=\{S\in{\cal F}: S\subseteq U'\}$) under the assumption that $|{\cal F}_{U'}|>2(2\mathsf{frac}(I))^{1.5}$ is at most $2^{2\mathsf{frac}(I)}\cdot (1-p_1)^{2(2\mathsf{frac}(I))^{1.5}}=2^{2\mathsf{frac}(I)}\cdot (1-\frac{1}{(2\mathsf{frac}(I))^{0.5}})^{2(2\mathsf{frac}(I))^{1.5}}\leq 2^{2\mathsf{frac}(I)}\cdot e^{-4\mathsf{frac}(I)}$. Thus, by union bound, under the implicit supposition that $\mathsf{frac}$ (and $|{\cal F}|$) is a large enough constant (e.g., $10$),\footnote{Otherwise, the instance can be solved optimally in polynomial time using brute-force.} the probability that at least one of the events in the steps marked by ``failure'' occurs is at most $1/10$. Notice that if these events occur, $S$ is a solution. Further, we now claim that if these events do not occur, then we compute a set $S$ that is a solution to $I$ and, furthermore, it is $\approxVC$-approximate. Then, by the correctness of {\bf lift} (in particular, since the kernelization algorithm in Theorem \ref{thm:HSElementFrac} is $1$-approximate, that is, exact, for $d=2$), this will conclude the proof. For this purpose, we have the following case distinction, where $\beta$ is the approximation ratio of the oracle.

First, suppose that $S$ is computed in Step \ref{step:success1}. Then, $|S_1|\geq\nu|U|$ and $S=U$. Clearly, $S$ is a solution to $I$. Because $S_1$ is a $\beta$-approximate solution to $(U,{\cal F}_1)$, which is a subinstance of $(U,{\cal F})$, this means that $\mathsf{opt}(I)\geq \frac{\nu}{\beta}|U|$. So, in this case, the approximation ratio is $\frac{|S|}{\mathsf{opt}(I)}\leq\frac{|U|}{\frac{\nu}{\beta}|U|}=\beta\frac{1}{\nu}$. 

Second, suppose that $S$ is computed in Step \ref{step:success2}. Then, $|S_1|<\nu|U|$. On the one hand, because $S_2$ is a solution to $I'=(U_1,{\cal T}_1)$, and, as ${\cal T}_1=\{S\in{\cal F}:S\subseteq U_1\}$, every set in ${\cal F}\setminus{\cal T}_1$ contains at least one vertex from $U\setminus U_1=S_1$,  we have that $S'=S_2\cup S_1$ is a solution to $I$. Further, since $S_2$ is a $\beta$-approximate solution to $I'$, $|S'|\leq \beta\mathsf{opt}(I')+|S_1|$. On the other hand, because $T$ is a solution to $\widetilde{I}=(S_1,\{S\in{\cal F}: S\subseteq S_1\})$, and every set in ${\cal F}\setminus\{S\in{\cal F}: S\subseteq S_1\}$ contains at least one vertex from $U_1$,  we have that $S''=U_1\cup T$ is also a solution to $I$. Further, because $T$ is a $2$-approximate solution to $\widetilde{I}$, $|S''|\leq 2\mathsf{opt}(\widetilde{I})+|U_1|=2\mathsf{opt}(\widetilde{I})+|U|-|S_1|$.

Consider some optimal solution $S^\star$ to $I$. Then, $S^\star\setminus S_1$ is a solution to $I'$, and $S^\star\cap S_1$ is a solution to $\widetilde{I}$, which means that $\mathsf{opt}(I')\leq |S^\star\setminus S_1|$ and $\mathsf{opt}(\widetilde{I})\leq |S^\star\cap S_1|$. So, denoting $\lambda=\frac{|S^\star\cap S_1|}{|S_1|}$ ($0\leq\lambda\leq 1$) and $\rho=\frac{|S_1|}{|U|}$ ($0\leq\rho<\nu$), we know that
\begin{itemize}
\item $|S'|\leq \beta|S^\star\setminus S_1|+|S_1|=\beta\mathsf{opt}(I)-\beta|S^\star\cap S_1|+|S_1|=\beta\mathsf{opt}(I)+(1-\beta\lambda)\rho|U|\leq (\beta+2\rho-2\beta\lambda\rho)\mathsf{opt}(I)$.
\item $|S''|\leq 2|S^\star\cap S_1|+|U|-|S_1|=(1+2\lambda\rho-\rho)|U|\leq (2+4\lambda\rho-2\rho)\mathsf{opt}(I)$.
\end{itemize}
As $\lambda$ grows larger, the first term becomes better, and as it grows smaller, the second term is better. So, the worst case is such that equality is attained when $\lambda=\frac{\beta+4\rho-2}{2(2+\beta)\rho}$. Then, the approximation ratio is $2+4\frac{\beta+4\rho-2}{2(2+\beta)\rho}\rho-2\rho=2+\frac{2(\beta+4\rho-2)}{2+\beta}-2\rho=2-\frac{4-2\beta}{2+\beta}+(\frac{4-2\beta}{2+\beta})\rho$. When $\beta\geq 2$, the correctness of the approximation ratio is trivial, since then even returning all of $U$ is a $\beta$-approximation. So, suppose that $\beta<2$. Then, the aforementioned function grows larger as $\rho$ grows larger (since when $\beta<2$, its coefficient is positive), and as $\rho<\nu$, an upper bound on the maximum is $2+(\frac{4-2\beta}{2+\beta})\nu-\frac{4-2\beta}{2+\beta}$. Now, we fix $\nu$ such that $2+(\frac{4-2\beta}{2+\beta})\nu-\frac{4-2\beta}{2+\beta}=\frac{\beta}{\nu}$ when $\beta=1$. So, we require $\frac{4}{3}+\frac{2}{3}\nu=\frac{1}{\nu}$, that is, $2\nu^2+4\nu-3=0$, which is satisfied when $\nu=\frac{\sqrt{10}}{2}-1$. Then, in the first case, the approximation ratio is at most $\beta\frac{1}{\nu}=\beta\approxVC$ as required. In the second case, the approximation ratio is also $2+(\frac{4-2\beta}{2+\beta})\nu-\frac{4-2\beta}{2+\beta}\leq \beta(\frac{4}{3}+\frac{2}{3}\nu)=\beta\approxVC$ as required. This completes the proof.
\end{proof}

\begin{corollary}
The {\sc Vertex Cover} problem, parameterized by the optimum, admits a pure, having $2$ rounds, $\approxVC$-approximate (randomized) kernelization protocol with call size $2\mathsf{opt}+2(2\mathsf{opt})^{1.5}$ (where the number of edges is at most $2(2\mathsf{opt})^{1.5}$).
\end{corollary}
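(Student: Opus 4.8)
The plan is to observe that the stated corollary is an immediate consequence of Theorem~\ref{thm:warmUp} together with the elementary fact that the linear programming relaxation lower-bounds the integral optimum. Concretely, I would run the very same $2$-round randomized protocol constructed in the proof of Theorem~\ref{thm:warmUp}, without any modification whatsoever; the only point that has to be re-examined is the size bound on the oracle calls, since we now wish to express it in terms of $\mathsf{opt}$ rather than in terms of $\mathsf{frac}$.

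For that, recall that for every instance $I$ of {\sc Vertex Cover} (indeed of {\sc $d$-Hitting Set}) one has $\mathsf{frac}(I)\le\mathsf{opt}(I)$, since any hitting set yields a feasible integral solution to the classic LP. The function $x\mapsto 2x+2(2x)^{1.5}$ is monotone nondecreasing on $x\ge 0$, so the call-size bound $2\mathsf{frac}(\widehat I)+2(2\mathsf{frac}(\widehat I))^{1.5}$ guaranteed by Theorem~\ref{thm:warmUp} is at most $2\mathsf{opt}(\widehat I)+2(2\mathsf{opt}(\widehat I))^{1.5}$, and likewise the edge-count bound $2(2\mathsf{frac}(\widehat I))^{1.5}$ is at most $2(2\mathsf{opt}(\widehat I))^{1.5}$. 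Everything else transfers verbatim: the protocol performs exactly $2$ oracle calls (hence it is pure with $2$ rounds), runs in polynomial time, fails with probability at most $1/10$ (which can be boosted arbitrarily), and, whenever it does not fail, outputs a solution witnessing the approximation ratio $\approxVC$ required of the protocol; since $\mathsf{opt}$ is a structural parameter, $\pi_I$ plays no role in the approximation guarantee.

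I do not expect any genuine obstacle here: this is precisely the standard phenomenon (already invoked right before the corollaries of Theorem~\ref{thm:HSElementFrac}) that parameterization by $\mathsf{frac}$ is ``lower bounded by'' parameterization by $\mathsf{opt}$, so that a protocol whose resource bounds are controlled by $\mathsf{frac}$ is in particular a protocol whose resource bounds are controlled by $\mathsf{opt}$. If a more formal presentation is desired, one can alternatively state and prove a one-line transfer lemma in the spirit of Lemma~\ref{lem:lossyKerOptToK}, specialized to replacing the sharper parameter $\mathsf{frac}$ by the coarser parameter $\mathsf{opt}$, for which no case analysis is needed at all.
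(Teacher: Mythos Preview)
Your proposal is correct and matches the paper's own (implicit) reasoning: the paper states the corollary without proof, relying on exactly the observation that $\mathsf{frac}\le\mathsf{opt}$ so that any bound expressed in terms of $\mathsf{frac}$ is also a bound in terms of $\mathsf{opt}$.
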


\subsection{Generalization to Almost Linear Call Size and $d\geq 2$}

A critical part of our algorithm is Theorem \ref{thm:HSElementFrac}. First, after calling its algorithm to reduce the number of elements, there will only be $2^{d\mathsf{frac}}$ many subsets of $U$ such that, if the instance induced by them is not ``sparse enough'' (where the definition of sparse enough becomes stricter and stricter as the execution of our algorithm proceeds), then with high probability we will ``hit'' at least one of their sets when using an oracle call.
Further, Theorem \ref{thm:HSElementFrac} will be used to prove that, after calling its algorithm to reduce the number of elements, once we find a ``sufficiently'' large (linear in $k$) subset of $U$ along with a solution to the instance induced by that subset that is large compared to its size (in particular, consisting of more that a fraction of $1/d$ of its elements), we are essentially done. Our algorithm will repeatedly try to find subsets as mentioned above, while, if it fails at every step, it eventually arrives at a ``sufficiently'' large (linear in $k$) subset of $U$ such that it can optimally solve the instance induced by that subset. 

\begin{theorem}\label{thm:HSSize}
For any fixed $\epsilon>0$, the {\sc $d$-Hitting Set} problem, parameterized by the fractional optimum of the classic LP, admits a pure $d(1-h(d,\epsilon))$-approximate (randomized)\footnote{Here, randomization means that we may fail to return a $(d-h(d,\epsilon))$-approximate solution (i.e., we may return a ``worse'' solution), but we must succeed with probability, say, at least $9/10$. It should be clear that the success probability can be boosted to any constant arbitrarily close to $1$.} kernelization protocol with call size $d\cdot\mathsf{frac}+2^{\frac{d}{\epsilon}}(d\cdot\mathsf{frac})^{1+\epsilon}$ (where the number of sets is at most $(d\cdot\mathsf{frac})^{1+\epsilon}$) where $h(d,\epsilon)=\displaystyle{\coeffConstD(\frac{1}{4})^{\frac{d}{\epsilon}}}$ is a fixed positive constant that depends only on $d,\epsilon$.\footnote{We remark that we preferred to simplify the algorithm and its analysis rather than to optimize $h(d,\epsilon)$ (in fact, the same algorithms with slightly more careful analysis already yields a much better yet ``uglier'' constant). In particular, our approximation ratio is a {\em fixed constant} (under the assumption that $d,\epsilon$ are fixed) strictly smaller than $d$.}
\end{theorem}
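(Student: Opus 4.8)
## Proof Proposal for Theorem~\ref{thm:HSSize}

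\textbf{Overall approach.} The plan is to iterate the ``discover a sparse subinstance'' idea from the warm-up (Theorem~\ref{thm:warmUp}), using the linear-element kernel of Theorem~\ref{thm:HSElementFrac} as the opening move so that the universe has only $d\cdot\mathsf{frac}$ elements (hence only $2^{d\mathsf{frac}}$ candidate subuniverses for the union bound). Set $\mu = \frac{1}{10d}(\frac14)^{d/\epsilon}$ (so that $h(d,\epsilon)=\mu/d$ in the stated form, up to constants) and let $r = \lceil (d-1)/\epsilon\rceil$ be the number of iterations. Before the first oracle call, apply Theorem~\ref{thm:HSElementFrac} to get $I_0=(U_0,\mathcal{T}_0)$ with $|U_0|\le d\cdot\mathsf{frac}$; crucially we invoke its \emph{good}-ness and the two-case guarantee of Lemma~\ref{lem:dHSElementApprox} (with a $\rho$ chosen tiny relative to $\mu$), so that this step either contributes a negligible additive error or already hands us a solution of ratio $<d$ regardless of everything that follows. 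Then run $r$ rounds: in round $i$ sample $\mathcal{F}_i\subseteq\mathcal{T}_{i-1}$ by including each set independently with probability $p_i = (d\cdot\mathsf{frac})^{-(d-1-(i-1)\epsilon)/(1+\epsilon)}$-ish — more simply, tuned so that the expected sample size is $\Theta((d\cdot\mathsf{frac})^{1+\epsilon})$ given that $|\mathcal{T}_{i-1}|\le (d\cdot\mathsf{frac})^{d-(i-1)\epsilon}$; call the oracle on $(U_{i-1},\mathcal{F}_i)$ to get $S_i$; if $|S_i|\ge \mu|U_{i-1}|$ \emph{stop} and output (the union of $U_{i-1}$ with a $d$-approximate solution on the subinstance induced by $U_0\setminus U_{i-1}$); otherwise set $U_i = U_{i-1}\setminus S_i$, $\mathcal{T}_i=\{S\in\mathcal{T}_{i-1}: S\subseteq U_i\}$, and continue. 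If all $r$ rounds pass without stopping, $|\mathcal{T}_r|\le (d\cdot\mathsf{frac})^{1+\epsilon}$, so one final oracle call solves $(U_r,\mathcal{T}_r)$ exactly, and we output $(U_0\setminus U_r)\cup S_{\mathrm{final}}$ or $U_r\cup(d\text{-approx on }U_0\setminus U_r)$, whichever is smaller.

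\textbf{Key steps, in order.} (1) Size bookkeeping: prove by induction on $i$ that conditioned on the good events, $|\mathcal{T}_{i}|\le (d\cdot\mathsf{frac})^{d-i\epsilon}$ — the inductive step is exactly the warm-up's union-bound argument: with probability $\ge 1 - 2^{d\mathsf{frac}}(1-p_i)^{(d\cdot\mathsf{frac})^{d-i\epsilon}}$ every subuniverse $U'\subseteq U_{i-1}$ whose induced family has more than $(d\cdot\mathsf{frac})^{d-i\epsilon}$ sets gets a sample hit, while $\mathcal{T}_i$ (the family missed by $\mathcal{F}_i$, since $S_i$ hits all of $\mathcal{F}_i$ and is disjoint from $U_i$) is missed; choosing $p_i$ as above makes this failure probability $\le \frac{1}{10r}$, and a Chernoff bound (Proposition~\ref{prop:chernoff}) controls the event $|\mathcal{F}_i|>2p_i|\mathcal{T}_{i-1}|$ similarly. (2) Key invariant for the approximation: maintain that $\mathsf{opt}$ of the subinstance $(U_{i-1},\mathcal{T}_{i-1})$ is $\ge \frac{\mu}{\beta}|U_{i-1}|$ — this holds at $i=1$ by a separate argument (or because if not, $|U_0|$ being small relative to $\mathsf{opt}$ lets us return $U_0$), and is preserved because we only continue past round $i$ when $|S_i|<\mu|U_{i-1}|$, so $|U_i|\ge(1-\mu)|U_{i-1}|$ stays a large linear fraction. (3) Output analysis in the stopping cases: mirror the two-term min-analysis of Theorem~\ref{thm:warmUp}. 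When we stop because $|S_i|$ is large, $\mathsf{opt}$ on $(U_{i-1},\mathcal{T}_{i-1})$ is large relative to $|U_{i-1}|$, so any optimal solution to $I_0$ spends most of its budget inside $U_{i-1}$, forcing the optimum of the subinstance induced by $U_0\setminus U_{i-1}$ to be small; hence a $d$-approximate solution there plus $U_{i-1}$ is a hitting set of the right size. When we reach the final exact call, balance $(U_0\setminus U_r)\cup S_{\mathrm{final}}$ against $U_r\cup T$ as in the warm-up, parametrizing by $\lambda=\frac{|S^\star\cap(U_0\setminus U_r)|}{|U_0\setminus U_r|}$ and $\rho=|U_0\setminus U_r|/|U_0|$, using Proposition~\ref{prop:numbers} to take the min. (4) Combine: total failure probability $\le \frac{1}{10}$ by a union bound over the $\le r+1$ rounds and the initial Chernoff events; the final ratio is $\beta\cdot d(1-h(d,\epsilon))$ after plugging in the chosen $\mu$ and $r$, and the call-size bound $d\cdot\mathsf{frac}+2^{d/\epsilon}(d\cdot\mathsf{frac})^{1+\epsilon}$ follows from $|\mathcal{F}_i|\le 2p_i|\mathcal{T}_{i-1}|$ with the worst-case $|\mathcal{T}_{i-1}|\le(d\cdot\mathsf{frac})^{d}$ in round $1$ (the $2^{d/\epsilon}$ factor absorbing the constant blow-ups across the $r=\Theta(d/\epsilon)$ rounds).

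\textbf{Main obstacle.} The delicate point is the interplay between the \emph{three} competing requirements on the iteration count $r$ and the stopping threshold $\mu$: we need $r$ large enough that after $r$ rounds the family is down to $(d\cdot\mathsf{frac})^{1+\epsilon}$ sets (forcing $r\ge(d-1)/\epsilon$), small enough that the accumulated failure probability and the accumulated ``we threw away $S_i$'' loss stay bounded, and $\mu$ small enough that $|U_r|\ge(1-\mu)^r|U_0|$ is still a constant fraction of $|U_0|$ (which is why $\mu$ must shrink like $(1/4)^{d/\epsilon}$, driving $h(d,\epsilon)$ down to the stated constant) — yet $\mu$ large enough that the ``large $S_i$'' stopping case yields a genuinely sub-$d$ ratio. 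Threading this needle, i.e.\ verifying that the quantitative choices $\mu=\frac{1}{10d}(\frac14)^{d/\epsilon}$, $r=\lceil(d-1)/\epsilon\rceil$ simultaneously make all the bounds in steps (1)–(3) go through, is the bulk of the technical work; everything else is a recursive re-run of the warm-up. A secondary subtlety is that the initial Theorem~\ref{thm:HSElementFrac} step must be composed via its two-case guarantee (Lemma~\ref{lem:dHSElementApprox}), not its nominal $(d-\frac{d-1}{d})$ ratio, since $(d-\frac{d-1}{d})\cdot d(1-h)$ would exceed $d$; handling the ``additive error'' branch cleanly — absorbing it into $h(d,\epsilon)$ — requires choosing the $\rho$ in that lemma as a function of $\mu$, which I would do at the very start of the proof.
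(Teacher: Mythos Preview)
Your overall architecture is right---iterate the warm-up after applying the linear-element kernel, and compose the kernel via the two-case guarantee of Lemma~\ref{lem:dHSElementApprox} rather than its nominal ratio. The gap is in your choice of the stopping threshold $\mu$. You set $\mu=\frac{1}{10d}(\tfrac14)^{d/\epsilon}$, i.e.\ exponentially tiny, and then argue that this keeps $|U_r|\ge(1-\mu)^r|U_0|$ a constant fraction of $|U_0|$. But with $\mu$ this small the stopping branch collapses: already at $i=1$, ``$|S_1|\ge\mu|U_0|$'' is almost always satisfied (any oracle answer of positive size triggers it), you output $T\cup U_0=U_0$, and the only lower bound you have on $\mathsf{opt}$ is $\mathsf{opt}\ge|S_1|/\beta\ge\mu|U_0|/\beta$, giving ratio $|U_0|/\mathsf{opt}\le\beta/\mu$, which is enormous, not below $d$. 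Your claimed ``key invariant'' in step~(2) is nearly vacuous for tiny $\mu$ and does not rescue this; the sentence ``$|U_i|$ stays a large linear fraction'' is unrelated to a lower bound on $\mathsf{opt}/|U_i|$.

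The paper's threading is the opposite: the threshold is \emph{large}---it tests $|S_i|\ge\tfrac{\mu}{d}|U_{i-1}|$ with $\mu=\tfrac{d+1}{2}$, so the threshold $\tfrac{d+1}{2d}\in[\tfrac12,\tfrac34)$. Because $\mu>1\ge\beta$ in the interesting range, when you stop you get $|S^\star\cap U_{i-1}|\ge\tfrac{\mu}{\beta d}|U_{i-1}|$, and then $|T|+|U_{i-1}|\le d|S^\star|-d(1-\tfrac{\beta}{\mu})|S^\star\cap U_{i-1}|$ is a genuine saving over $d\cdot\mathsf{opt}$. The price is that when you do \emph{not} stop, $|U_i|\ge(1-\tfrac{\mu}{d})|U_{i-1}|$ with $1-\tfrac{\mu}{d}\ge\tfrac14$, so after $\tau=(d-1)/\epsilon$ rounds the universe has shrunk to $\ge(\tfrac14)^{d/\epsilon}|U_0|$; it is \emph{this} shrinkage factor, not the threshold, that produces the $(\tfrac14)^{d/\epsilon}$ in $h(d,\epsilon)$. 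In the final (non-stopping) case there is no two-term $\min$ as in the warm-up: one simply outputs $S_\tau\cup(U\setminus U_{\tau-1})$ and bounds its size by $|U|-(1-\tfrac{\mu}{d})|U_{\tau-1}|\le(1-(1-\tfrac{\mu}{d})(\tfrac14)^{d/\epsilon})\,d\cdot\mathsf{opt}$. Once you swap to a large threshold the rest of your outline (Chernoff plus union bound for the size bookkeeping, the $d$-approximation on $U_0\setminus U_{i-1}$ in the stopping branch, and absorbing the additive $\rho$ from Lemma~\ref{lem:dHSElementApprox} into $h$) goes through essentially as the paper does it.
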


\begin{proof}
We first describe the algorithm. To this end, consider some input $\widehat{I}=(\widehat{U},\widehat{\cal F})$. Then:
\begin{enumerate}
\item Call the {\bf reduce} procedure of the algorithm in Theorem \ref{thm:HSElementFrac} on $I$ to obtain a new instance $I=(U,{\cal F})$ where $|U|\leq d\cdot\mathsf{frac}(I)$.

\item Denote $\mu=\mu(d)=\frac{d+1}{2}$, and $\tau=\tau(d,\epsilon)=\frac{1}{\epsilon}(d-1)$.

\item Initialize $U_0=U$ and ${\cal T}_0={\cal F}$.

\item For $i=1,2,\ldots,\tau$:
	\begin{enumerate}
	\item Sample ${\cal F}_i$ from ${\cal T}_{i-1}$ as follows: Insert each set $S\in{\cal T}_{i-1}$ to ${\cal F}_i$ independently at random with probability $p_i=\displaystyle{\frac{1}{(d\cdot\mathsf{frac}(I))^{d-1-i\cdot\epsilon}}}$.
	\item\label{step:failure} If $|{\cal F}_i|>2^i(d\cdot\mathsf{frac}(I))^{1+\epsilon}$, then let $S$ be an arbitrary solution to $I$, and proceed directly to Step \ref{step:lift}. [\#Failure]
	\item Call the oracle on $(U_{i-1},{\cal F}_i)$, and let $S_i$ denote its output. [\#We will verify in the proof that all calls are done with at most $(d\cdot\mathsf{frac}(\widehat{I}))^{1+\epsilon}$ sets.]
	\item\label{step:checkLarge} If $|S_i|\geq \frac{\mu}{d}|U_{i-1}|$, then:
		\begin{enumerate}
		\item Call the algorithm in Proposition \ref{prop:approxHS} on $(U\setminus U_{i-1},\{S\in{\cal F}: S\subseteq U\setminus U_{i-1}\})$, and let $T$ denote its output.
		\item\label{step:existLarge} Let $S=T\cup U_{i-1}$ and proceed directly to Step \ref{step:lift}.  [\#Success]
		\end{enumerate}
		\item\label{step:Ti} Let $U_i=U_{i-1}\setminus S_i$ and ${\cal T}_{i}=\{S\in{\cal T}_{i-1}: S\subseteq U_i\}$.
	\end{enumerate}

\item\label{step:existSmall} Let $S=S_{\tau}\cup (U\setminus U_{\tau-1})$. [\#Success]

\item\label{step:lift} Call the {\bf lift} procedure of the algorithm in Theorem \ref{thm:HSElementFrac} on $\widehat{I},I,S$ to obtain a solution $\widehat{S}$ to $\widehat{I}$. Output $\widehat{S}$.
\end{enumerate}

Clearly, the algorithm runs in polynomial time. Further, each oracle call has at most $d\cdot\mathsf{frac}(I)$ many elements. We first verify that it also has at most $(d\cdot\mathsf{frac}(I))^{1+\epsilon}$ sets. To this end, we have two preliminary claims.

\begin{claim}\label{claim:callBoundHelper1}
For all $i=1,2,\ldots,\tau$, if the algorithm reaches iteration $i$ and $|{\cal T}_{i-1}|\leq 2^{i-1}(d\cdot\mathsf{frac}(I))^{d-(i-1)\epsilon}$, then $|{\cal F}_i|\leq 2^i(d\cdot\mathsf{frac}(I))^{1+\epsilon}$ with probability at least $1-e^{-\frac{2^{i-1}(d\cdot\mathsf{frac}(I))^{1+\epsilon}}{3}}$.
\end{claim}

\begin{proof}
Observe that the expected size of ${\cal F}_i$ is:
\[\begin{array}{ll}
\mathsf{E}[|{\cal F}_i|]& =|{\cal T}_{i-1}|\cdot p_i\\
& \leq \displaystyle{2^{i-1}(d\cdot\mathsf{frac}(I))^{d-(i-1)\epsilon} \cdot \frac{1}{(d\cdot\mathsf{frac})^{d-1-i\cdot\epsilon}}}\\
& = 2^{i-1}(d\cdot\mathsf{frac}(I))^{1+\epsilon}.
\end{array}\]
Thus, Chernoff bound (Proposition \ref{prop:chernoff}) implies that
\[\mathsf{Prob}[|{\cal F}_i|>2^i(d\cdot\mathsf{frac}(I))^{1+\epsilon}]\leq e^{-\frac{2^{i-1}(d\cdot\mathsf{frac}(I))^{1+\epsilon}}{3}}.\]
This completes the proof of the claim.
\end{proof}

\begin{claim}\label{claim:callBoundHelper2}
For all $i=1,2,\ldots,\tau$, if the algorithm reaches Step \ref{step:Ti} in iteration $i$ and $|{\cal T}_{i-1}|\leq 2^{i-1}(d\cdot\mathsf{frac}(I))^{d-(i-1)\epsilon}$, then $|{\cal T}_i|\leq 2^i(d\cdot\mathsf{frac}(I))^{d-i\epsilon}$ with probability at least $1-\displaystyle{(\frac{2}{e})^{2^i\cdot d\cdot\mathsf{frac}(I)}}$.
\end{claim}

\begin{proof}
Consider some iteration $i\in\{1,2,\ldots,\tau\}$, and suppose that the algorithm reaches Step \ref{step:Ti} iteration $i$. Hence, $|{\cal F}_i|\leq 2^i(d\cdot\mathsf{frac}(I))^{1+\epsilon}$. Consider some subfamily ${\cal T}'\subseteq {\cal T}_{i-1}$ such that $|{\cal T}'|>2^i(d\cdot\mathsf{frac}(I))^{d-i\epsilon}$.
Then,
\[
\begin{array}{ll}
\mathsf{Prob}({\cal T}\cap{\cal F}_i=\emptyset) &=(1-p_i)^{|{\cal T}'|}\\
& \leq \displaystyle{(1-\frac{1}{(d\cdot\mathsf{frac})^{d-1-i\cdot\epsilon}})^{2^i(d\cdot\mathsf{frac}(I))^{d-i\epsilon}}}\\
& \leq \displaystyle{e^{-2^i\cdot d\cdot\mathsf{frac}(I)}}.
\end{array}\]
Because there exist at most $2^{d\cdot\mathsf{frac}(I)}$ subsets of $U_{i-1}$, union bound implies that the probability that there exists $U'\subseteq U_{i-1}$ such that the subfamily $\{S\in{\cal T}_{i-1}: S\subseteq U'\}$ is of size larger than $2^i(d\cdot\mathsf{frac}(I))^{d-i\epsilon}$ and has empty intersection with ${\cal F}_i$ is at most $\displaystyle{(\frac{2}{e})^{2^i\cdot d\cdot\mathsf{frac}(I)}}$. Recall that ${\cal T}_{i}=\{S\in{\cal T}_{i-1}: S\subseteq U_i\}$ and note that ${\cal T}_i$ has empty intersection with ${\cal F}_i$ because $S_i=U_{i-1}\setminus U_i$ is a solution to $(U_i,{\cal F}_i)$ (by the correctness of the oracle). This completes the proof of the claim.
\end{proof}

We now prove the desired bound on each call size, based on Claims \ref{claim:callBoundHelper1} and \ref{claim:callBoundHelper2}.

\begin{claim}\label{claim:callBound}
The following statement holds with probability at least $9/10$: For all $i=1,2,\ldots,\tau$, if the algorithm reaches iteration $i$ and calls the oracle, then $|{\cal F}_i|\leq 2^i(d\cdot\mathsf{frac}(I))^{1+\epsilon}$ and the algorithm does not exit the loop in Step \ref{step:failure}.
\end{claim}

\begin{proof}
We claim that for every $j\in\{0,1,\ldots,\tau\}$, the following holds with probability at least $1-\sum_{i=1}^j\displaystyle{(\frac{2}{e})^{2^i\cdot d\cdot\mathsf{frac}(I)}}$: for every $i\in\{0,1,\ldots,j\}$ such that the algorithm reaches Step \ref{step:Ti} in iteration $i$ (when $i=0$, we mean the initialization), $|{\cal T}_i|\leq 2^i(d\cdot\mathsf{frac}(I))^{d-i\epsilon}$. The proof is by induction on $j$. At the basis, where $j=0$, ${\cal T}_0={\cal F}$, and hence due to {\bf  reduce}, with probability $1$, $|{\cal T}|\leq (d\cdot\mathsf{frac}(I))^d$. Now, suppose that the claim is true for $j-1$, and let us prove it for $j$. By the inductive hypothesis, with probability at least $1-\sum_{i=1}^{j-1}\displaystyle{(\frac{2}{e})^{2^i\cdot d\cdot\mathsf{frac}(I)}}$, the following holds: for every $i\in\{0,1,\ldots,j-1\}$ such that the algorithm reaches Step \ref{step:Ti} in iteration $i$, $|{\cal T}_i|\leq 2^i(d\cdot\mathsf{frac}(I))^{d-i\epsilon}$. Now, if the algorithm further reaches Step \ref{step:Ti} in iteration $j$, Claim \ref{claim:callBoundHelper2} implies that $|{\cal T}_j|\leq 2^j(d\cdot\mathsf{frac}(I))^{d-j\epsilon}$ with probability at least $1-\displaystyle{(\frac{2}{e})^{2^j\cdot d\cdot\mathsf{frac}(I)}}$. So, by union bound, the claim for $j$ it true.

In particular, by setting $j=\tau$, we have that with probability at least $1-\sum_{i=1}^\tau\displaystyle{(\frac{2}{e})^{2^i\cdot d\cdot\mathsf{frac}(I)}}$, the following holds: for every $i\in\{0,1,\ldots,\tau\}$ such that the algorithm reaches Step \ref{step:Ti} in iteration $i$ (when $i=0$, we mean the initialization), $|{\cal T}_i|\leq 2^i(d\cdot\mathsf{frac}(I))^{d-i\epsilon}$. However, by Claim \ref{claim:callBoundHelper1} and union bound, this directly extends to the following statement:  with probability at least $1-\sum_{i=1}^\tau\displaystyle{(\frac{2}{e})^{2^i\cdot d\cdot\mathsf{frac}(I)}}-\sum_{i=1}^\tau e^{-\frac{2^{i-1}(d\cdot\mathsf{frac}(I))^{1+\epsilon}}{3}}$, the following holds: for every $i\in\{0,1,\ldots,\tau\}$ such that the algorithm reaches iteration $i$ and calls the oracle, then $|{\cal F}_i|\leq 2^i(d\cdot\mathsf{frac}(I))^{1+\epsilon}$ and the algorithm does not exit the loop in Step \ref{step:failure}. Now, observe that
\[\begin{array}{ll}
\displaystyle{\sum_{i=1}^\tau\displaystyle{(\frac{e}{2})^{-2^i\cdot d\cdot\mathsf{frac}(I)}}+\sum_{i=1}^\tau e^{-\frac{2^{i-1}(d\cdot\mathsf{frac}(I))^{1+\epsilon}}{3}}} & \leq \tau\cdot((\frac{e}{2})^{-d\cdot\mathsf{frac}(I)}+e^{-\frac{(d\cdot\mathsf{frac}(I))^{1+\epsilon}}{3}})\\

&\leq 2\tau\cdot (\frac{e}{2})^{-\frac{(d\cdot\mathsf{frac}(I))^{1+\epsilon}}{3}} \leq \frac{1}{10}.
\end{array}\]
Here, the last inequality follows by assuming that $\mathsf{frac}(I)$ is large enough (to ensure that the inequality is satisfied) compared to $d,\epsilon$. Indeed, if this is not the case, then $\mathsf{frac}(I)$ (and hence also $\mathsf{opt}(I)$, because it bounded by $d\cdot\mathsf{frac}(I)$) is a fixed constant (that depends only on $d,\epsilon$), and hence the problem can just be a-priori solved in polynomial time by, e.g., brute force search. We thus conclude that the failure probability is at most $1/10$, which completes the proof of the claim.
\end{proof}

Let $\beta\geq 1$ denote the approximation ratio of the oracle. We now turn to analyze the approximation ratio. Towards that, we present a lower bound on the size of each universe $U_i$.

\begin{claim}\label{claim:universeLarge}
For all $i=1,2,\ldots,\tau$, if the algorithm reaches iteration $i$ and computes $U_i$, then $|U_i|\geq \displaystyle{(1-\frac{\mu}{d})^{i}|U|\geq (\frac{1}{4})^{\frac{1}{\epsilon}d}|U|}$. 
\end{claim}

\begin{proof}
We first claim that for all $i\in\{1,2,\ldots,\tau\}$, if the algorithm reaches iteration $i$ and computes $U_i$, then $|U_i|\geq \displaystyle{(1-\frac{\mu}{d})^{i}|U|}$. The proof is by induction on $i$, where we let $i=0$ be the basis.  Then, in the basis, $U_0=U$ and the claim trivially holds. Now, suppose that the claim holds for $i-1$, and let us prove it for $i$. By the inductive hypothesis, $|U_{i-1}|\geq (1-\frac{\mu}{d})^{i-1}|U|$. Further, by the definition of $U_i$, $U_i=U_{i-1}\setminus S_i$, and as the algorithm reaches the computation of $U_i$, $|S_i|<\frac{\mu}{d}|U_{i-1}|$. Thus, we have that
\[|U_i|\geq |U_{i-1}|-|S_i|> (1-\frac{\mu}{d})|U_{i-1}|\geq (1-\frac{\mu}{d})^{i}|U|.\]
Hence, the claim holds for $i$, and therefore our (sub)claim holds.

Lastly, observe that for all $i\in\{1,2,\ldots,\tau\}$, $|U_i|\geq |U_\tau|$. Moreover, due to our (sub)claim and substitution of $\tau$ and $\mu$, and because $\frac{x+1}{2x}\leq \frac{3}{4}$ for all $x\geq 2$ (the maximum is achieved when $x=2$), we have that
\[|U_\tau|\geq (1-\frac{\mu}{d})^{\tau}|U|=(1-\frac{d+1}{2d})^{\frac{1}{\epsilon}(d-1)}|U|\geq (\frac{1}{4})^{\frac{1}{\epsilon}d}|U|.\]
This completes the proof of the claim.
\end{proof}

Now, having the property that each universe $U_i$ is ``large enough'', we  argue that if $S$ is computed in Step \ref{step:existLarge}, then it is a solution of the approximation ratio $d(1-d\cdot h(d,\epsilon))$.

\begin{claim}\label{claim:exitLarge}
For all $i=1,2,\ldots,\tau$, if the algorithm reaches iteration $i$ and Step \ref{step:existLarge} of that iteration, then $S$  is a solution to $I$ such that $\frac{|S|}{\mathsf{opt}(I)}\leq \beta d(1-d\cdot h(d,\epsilon))$.
\end{claim}

\begin{proof}
Let $i\in\{1,2,\ldots,\tau\}$ such that the algorithm reaches iteration $i$ and Step \ref{step:existLarge} of that iteration. Then, $|S_i|\geq\frac{\mu}{d}|U_{i-1}|$ and $S=T\cup U_{i-1}$ (I). Let $S^\star$ be an optimal solution to $I$, so $|S^\star|=\mathsf{opt}(I)$ (II). Consider the following subinstances of $I$:
\begin{itemize}
\item $I'=(U_{i-1},{\cal F}_i)$. Because $S_i$ is a $\beta$-approximate solution to $I'$, we have that $\mathsf{opt}(I')\geq \frac{|S_i|}{\beta}\geq\frac{\mu}{\beta d}|U_{i-1}|$.
\item $I''=(U_{i-1},\{S\in{\cal F}: S\subseteq U_{i-1}\}))$. Because $I'$ is a subinstance of $I''$, we have that $\mathsf{opt}(I'')\geq\mathsf{opt}(I')$, and hence $\mathsf{opt}(I'')\geq \frac{\mu}{\beta d}|U_{i-1}|$. In particular, since $S^\star\cap U_{i-1}$ is a solution to $I''$, we have that $|S^\star\cap U_{i-1}|\geq \frac{\mu}{\beta d}|U_{i-1}|$. This has two consequences: first, $|U_{i-1}|\leq \frac{\beta d}{\mu}|S^\star\cap U_{i-1}|$ (III); second, due to Claim \ref{claim:universeLarge}, $|S^\star\cap U_{i-1}|\geq \frac{\mu}{\beta d}|U_{i-1}|\geq \frac{\mu}{\beta d}(\frac{1}{4})^{\frac{1}{\epsilon}d}|U|$ (IV).
\item $I'''=(U\setminus U_{i-1},\{S\in{\cal F}: S\subseteq U\setminus U_{i-1}\})$. Due to Proposition \ref{prop:approxHS}, $T$ is a solution to $I'''$ such that $|T|\leq d\cdot\mathsf{opt}(I''')$. Note that all sets in ${\cal F}$ that do not occur in this instance have non-empty intersection with $U_{i-1}$, and hence $S$ is a solution to $I$. Further, $S^\star\setminus U_{i-1}$ is a solution to $I'''$, and hence $|S^\star\setminus U_{i-1}|\geq \mathsf{opt}(I''')$. Thus, $|T|\leq d|S^\star\setminus U_{i-1}|$ (V). 
\end{itemize}
So, we have proved that $S$ is a solution to $I$, and we have that
\[\begin{array}{lll}
\smallskip
|S| & = |T|+|U_{i-1}| & [\mbox{(I)}]\\

\smallskip
& \leq d|S^\star\setminus U_{i-1}| + \frac{\beta d}{\mu}|S^\star\cap U_{i-1}| & [\mbox{(III)+(V)}]\\

\smallskip
& = d(|S^\star| - (1-\frac{\beta}{\mu})|S^\star\cap U_{i-1}|) & \\

\smallskip
& \leq d(|S^\star| - (1-\frac{\beta}{\mu})\frac{\mu}{\beta d}(\frac{1}{4})^{\frac{1}{\epsilon}d}|U|) & [\mbox{(IV)}]\\

\smallskip
& = d(\mathsf{opt}(I) - (1-\frac{\beta}{\mu})\frac{\mu}{\beta d}(\frac{1}{4})^{\frac{1}{\epsilon}d}|U|) & [\mbox{(II)}]\\

\smallskip
& \leq d(\mathsf{opt}(I) - (1-\frac{\beta}{\mu})\frac{\mu}{\beta d}(\frac{1}{4})^{\frac{1}{\epsilon}d}d\cdot\mathsf{frac}(I)) & [\mbox{Application of {\bf reduce}}] \\

\smallskip
& \leq \left(1-(1-\frac{\beta}{\mu})\frac{\mu}{\beta}(\frac{1}{4})^{\frac{1}{\epsilon}d}\right)d\cdot\mathsf{opt}(I)& \\

& = \left(\frac{1}{\beta}-(\frac{d+1}{2\beta^2}-\frac{1}{\beta})(\frac{1}{4})^{\frac{d}{\epsilon}}\right)\beta d\cdot\mathsf{opt}(I). &
\end{array}\]
Hence, $\frac{|S|}{\mathsf{opt}(I)}\leq \left(\frac{1}{\beta}-(\frac{d+1}{2\beta^2}-\frac{1}{\beta})(\frac{1}{4})^{\frac{d}{\epsilon}}\right)\beta d$. So, because $h(d,\epsilon)=\coeffConstD(\frac{1}{4})^{\frac{d}{\epsilon}}$, to conclude that $\frac{|S|}{\mathsf{opt}(I)}\leq  \beta d(1-d\cdot h(d,\epsilon))$, it suffices to prove that $\frac{1}{\beta}-(\frac{d+1}{2\beta^2}-\frac{1}{\beta})(\frac{1}{4})^{\frac{d}{\epsilon}}\leq 1-\coeffConst(\frac{1}{4})^{\frac{d}{\epsilon}}$. For this, we have the following case distinction.
\begin{itemize}
\item Suppose that $\beta\geq \frac{10}{9}$. Then, $\frac{1}{\beta}-(\frac{d+1}{2\beta^2}-\frac{1}{\beta})(\frac{1}{4})^{\frac{d}{\epsilon}}\leq \frac{1}{\beta}\leq \frac{9}{10}\leq 1-\coeffConst(\frac{1}{4})^{\frac{d}{\epsilon}}$.

\item Suppose that $\beta\leq \frac{10}{9}$. As $\frac{1}{\beta}\leq 1$, it suffices to prove that $\frac{d+1}{2\beta^2}-\frac{1}{\beta}\geq \coeffConst$, and as $d\geq 2$, it further suffices to prove that $\frac{3}{2\beta^2}-\frac{1}{\beta}\geq \coeffConst$. Because $\beta\leq \frac{10}{9}$, we have that $\frac{3}{2\beta^2}-\frac{1}{\beta}\geq  \frac{3}{2(\frac{10}{9})^2}-\frac{9}{10}\geq \coeffConst$.
\end{itemize}
This completes the proof.
\end{proof}

Further, we argue that if $S$ is computed in Step \ref{step:existSmall}, then also it is a solution of this approximation ratio. Towards that, we have the following trivial claim.

\begin{claim}\label{claim:exitSmallHelper}
For all $i=1,2,\ldots,\tau$, if the algorithm reaches iteration $i$ and computes ${\cal T}_i$, then ${\cal T}_i=\{S\in{\cal F}: S\subseteq U_i\}$.
\end{claim}

\begin{proof}
The proof is by induction on $i$ (where we use $i=0$ as basis). When $i=0$, $U_0=U$ and ${\cal T}_0={\cal F}$, thus the claim trivially holds. Now, suppose that it holds for $i-1$, and let us prove it for $i$. By the inductive hypothesis and the definition of ${\cal T}_i$, we have that 
\[{\cal T}_i=\{S\in{\cal T}_{i-1}: S\subseteq U_i\}=\{S\in{\cal F}: S\subseteq U_i\}.\]
This completes the proof of the claim.
\end{proof}

We now present the promised claim.

\begin{claim}\label{claim:exitSmall}
For all $i=1,2,\ldots,\tau$, if the algorithm reaches Step \ref{step:existSmall}, then $S$  is a solution to $I$ such that $\frac{|S|}{\mathsf{opt}(I)}\leq  d(1-d\cdot h(d,\epsilon))$.
\end{claim}

\begin{proof}
In this case, $S=S_{\tau}\cup(U\setminus U_{\tau-1})$. We first argue that $S$ is a solution to $I$. To this end, notice that $p_{\tau}=1$, so ${\cal F}_\tau={\cal T}_{\tau-1}$. This means, by the correctness of the oracle, that $S_\tau$ is a solution to $(U_{\tau-1},{\cal T}_{\tau-1})$. That is, it has non-empty intersection with every set in ${\cal T}_{\tau-1}$.  By Claim \ref{claim:exitSmallHelper}, ${\cal T}_{\tau-1}=\{S\in{\cal F}: S\subseteq U_{\tau-1}\}$, so $U\setminus U_{\tau-1}$ has non-empty intersection with every set in ${\cal F}\setminus {\cal T}_{\tau-1}$. Thus, $S$ has non-empty intersection with every set in ${\cal F}$, and is therefore a solution to $I$.

For the approximation ratio, note that the condition in Step \ref{step:checkLarge} is false when $i=\tau$, else the algorithm would not have reached Step \ref{step:existSmall}. Thus,
\[\begin{array}{lll}
\smallskip
|S| & = |S_{\tau}| + |U| - |U_{\tau-1}| &  [S_\tau\subseteq U_{\tau-1}\subseteq U]\\

\smallskip
& < |U|-(1-\frac{\mu}{d})|U_{\tau-1}| & [\mbox{The condition in Step \ref{step:checkLarge} is false}]\\

\smallskip
& \leq |U|- (1-\frac{\mu}{d})(\frac{1}{4})^{\frac{1}{\epsilon}d}|U| & [\mbox{Claim \ref{claim:universeLarge}}]\\

\smallskip
& = \left(1-(1-\frac{\mu}{d})(\frac{1}{4})^{\frac{1}{\epsilon}d}\right)|U|& \\

\smallskip
& \leq \left(1-(1-\frac{\mu}{d})(\frac{1}{4})^{\frac{1}{\epsilon}d}\right)d\cdot\mathsf{frac}(I) & [\mbox{Application of {\bf reduce}}] \\

& \leq \left(1-(1-\frac{\mu}{d})(\frac{1}{4})^{\frac{1}{\epsilon}d}\right)d\cdot\mathsf{opt}(I). & 
\end{array}\]
Hence, $\frac{|S|}{\mathsf{opt}(I)}\leq \left(1-(1-\frac{\mu}{d})(\frac{1}{4})^{\frac{1}{\epsilon}d}\right)d$. So, because $h(d,\epsilon)=\displaystyle{\coeffConstD(\frac{1}{4})^{\frac{d}{\epsilon}}}$, to conclude that $\frac{|S|}{\mathsf{opt}(I)}\leq  d(1-d\cdot h(d,\epsilon))$, it suffices to prove that $(1-\frac{\mu}{d})(\frac{1}{4})^{\frac{d}{\epsilon}}\geq \coeffConst(\frac{1}{4})^{\frac{d}{\epsilon}}$, which follows by substitution of $\mu=\frac{d+1}{2}$. This completes the proof of the claim.
\end{proof}

Lastly, we turn to conclude the proof of the theorem. First, because $\mathsf{frac}(I)\leq\mathsf{frac}(\widehat{I})$ (by the correctness of {\bf reduce}), Claim \ref{claim:callBound} implies that each call is of size as stated in the theorem. Further, this claim implies that with probability at least $9/10$, the algorithm does not exit in Step \ref{step:failure}. Under the assumption that the algorithm does not exit in Step \ref{step:failure}, notice that Claims~\ref{claim:exitLarge}  and~\ref{claim:exitSmall} ensure that $S$ is a solution and that $\frac{|S|}{\mathsf{opt}(I)}\leq \beta d(1-d\cdot h(d,\epsilon))$. So, by Lemma \ref{lem:dHSElementApprox} with $\rho=\displaystyle{\frac{d-1}{10}(\frac{1}{4})^{\frac{d}{\epsilon}}}$, at least one of the following conditions holds:
\begin{enumerate}
\item  $|\widehat{S}|-|S|\leq \rho\cdot\mathsf{opt}(\widehat{I})$, and hence $|\widehat{S}|\leq |S|+\displaystyle{\frac{d-1}{10}(\frac{1}{4})^{\frac{d}{\epsilon}}\cdot\mathsf{opt}(\widehat{I})}$. Then,
\[\begin{array}{ll}
\displaystyle{\frac{|\widehat{S}|}{\mathsf{opt}(\widehat{I})}}& \leq \displaystyle{\frac{|S|+\frac{d-1}{10}(\frac{1}{4})^{\frac{d}{\epsilon}}\cdot\mathsf{opt}(\widehat{I})}{\mathsf{opt}(\widehat{I})}}\\

&\leq \displaystyle{\frac{|S|}{\mathsf{opt}(I)} + \frac{d-1}{10}(\frac{1}{4})^{\frac{d}{\epsilon}}}\\

&\leq\displaystyle{\beta d(1-d\cdot h(d,\epsilon))+\frac{d-1}{10}(\frac{1}{4})^{\frac{d}{\epsilon}}}\\

&=\displaystyle{\beta d(1-d\cdot h(d,\epsilon))+d(d-1)\cdot h(d,\epsilon)}\\
&\leq \beta d(1-h(d,\epsilon)).
\end{array}\]

\item $\displaystyle{\frac{|\widehat{S}|}{\mathsf{opt}(\widehat{I})}\leq d-\frac{\rho}{d-1}=d-\frac{1}{10}(\frac{1}{4})^{\frac{d}{\epsilon}}=d(1-h(d,\epsilon))\leq \beta d(1-h(d,\epsilon))}$. 
\end{enumerate}
So, in both cases we got that
$\frac{|\widehat{S}|}{\mathsf{opt}(\widehat{I})}\leq\beta d(1-h(d,\epsilon)).$
This completes the proof.
\end{proof}

\begin{corollary}
For any fixed $\epsilon>0$, the {\sc $d$-Hitting Set} problem, parameterized by the optimum, admits a pure $d(1-h(d,\epsilon))$-approximate (randomized) kernelization protocol with call size $d\cdot\mathsf{opt}+2^{\frac{d}{\epsilon}}(d\cdot\mathsf{opt})^{1+\epsilon}$ (where the number of sets is at most $2^{\frac{d}{\epsilon}}(d\cdot\mathsf{opt})^{1+\epsilon}$) where $h(d,\epsilon)=\displaystyle{\coeffConstD(\frac{1}{4})^{\frac{d}{\epsilon}}}$ is a fixed positive constant that depends only on $d,\epsilon$.
\end{corollary}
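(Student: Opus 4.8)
The plan is to derive the corollary directly from Theorem~\ref{thm:HSSize} via the elementary inequality $\mathsf{frac}(I)\leq\mathsf{opt}(I)$, which holds for every instance $I=(U,{\cal F})$ of {\sc $d$-Hitting Set} because the classic LP of Definition~\ref{def:HSLP} is a relaxation of the integer program defining $\mathsf{opt}$. First I would invoke, verbatim, the pure $d(1-h(d,\epsilon))$-approximate randomized kernelization protocol for {\sc $d$-Hitting Set} guaranteed by Theorem~\ref{thm:HSSize} for the parameterization by $\mathsf{frac}$. Its number of rounds is $\tau=\tau(d,\epsilon)=\frac{1}{\epsilon}(d-1)$, which is a constant depending only on $d$ and $\epsilon$, so the protocol remains pure; its approximation ratio $d(1-h(d,\epsilon))$ with $h(d,\epsilon)=\coeffConstD(\frac14)^{\frac{d}{\epsilon}}$, its polynomial running time, its success probability $\geq 9/10$, and its lifting guarantee are all unaffected by the change of parameter, since none of them refers to the numerical value of the parameter.

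Next I would upper-bound the call size in terms of $\mathsf{opt}$. For the input instance $\widehat{I}$ (and, a fortiori, for every instance the protocol touches, since {\bf reduce} of Theorem~\ref{thm:HSElementFrac} is \good\ and every subinstance considered has fractional optimum at most that of $\widehat{I}$), the proof of Theorem~\ref{thm:HSSize} bounds each oracle call by $d\cdot\mathsf{frac}(\widehat{I})+2^{\frac{d}{\epsilon}}(d\cdot\mathsf{frac}(\widehat{I}))^{1+\epsilon}$ sets-plus-elements, with at most $2^{\frac{d}{\epsilon}}(d\cdot\mathsf{frac}(\widehat{I}))^{1+\epsilon}$ sets. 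Since the map $x\mapsto dx+2^{\frac{d}{\epsilon}}(dx)^{1+\epsilon}$ is monotone nondecreasing for $x\geq 0$ and $\mathsf{frac}(\widehat{I})\leq\mathsf{opt}(\widehat{I})$, these quantities are at most $d\cdot\mathsf{opt}(\widehat{I})+2^{\frac{d}{\epsilon}}(d\cdot\mathsf{opt}(\widehat{I}))^{1+\epsilon}$ and $2^{\frac{d}{\epsilon}}(d\cdot\mathsf{opt}(\widehat{I}))^{1+\epsilon}$ respectively. Thus the very same protocol has call size $f(k,\alpha)$ and round count $g(k,\alpha)=g(\alpha)$ that are functions of $k=\mathsf{opt}(\widehat{I})$ alone, matching the bounds claimed in the corollary.

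I expect no real obstacle here; the only point requiring a word of care is bookkeeping: one must check that the object produced indeed meets the Definition of an $\alpha$-approximate kernelization protocol (Section~\ref{sec:prelims}) when $\kappa(\widehat{I})=\mathsf{opt}(\widehat{I})$, i.e.\ that the call-size function and round function are expressed purely in terms of this parameter --- which is exactly what the two monotonicity inequalities above supply --- and that, as noted at the start of Section~\ref{sec:prelims}, for parameterization by the optimum $\pi$ may be dropped, so the approximation guarantee $\pi_{\widehat{I}}(\widehat{S})/\pi_{\widehat{I}}(\mathsf{opt}(\widehat{I}))\leq d(1-h(d,\epsilon))\cdot\beta$ is inherited verbatim from Theorem~\ref{thm:HSSize}. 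Hence the corollary follows with a one-line reduction and no new quantitative work.
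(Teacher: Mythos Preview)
Your proposal is correct and matches the paper's approach: the paper states this corollary without proof immediately after Theorem~\ref{thm:HSSize}, relying (as it does for the analogous corollary after Theorem~\ref{thm:HSElementFrac}) on the observation that $\mathsf{frac}\leq\mathsf{opt}$, so bounds expressed in $\mathsf{frac}$ transfer monotonically to bounds in $\mathsf{opt}$. Your write-up supplies more explicit bookkeeping than the paper does, but the underlying one-line argument is the same.
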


%!TEX root = Main.tex

\subsection{Relation Between a $(1+\epsilon)$-Approximate Kernelization Protocol for {\sc Vertex Cover} and the Ruzsa-Szemer\'{e}di Problem}\label{sec:VC}

We first present the following simple lemma.

\begin{lemma}\label{lem:charRuzsa}
Let $G$ be an $n$-vertex graph. Let $r=r(n),t=t(n)\in\mathbb{N}$. Let $U_1,U_2,\ldots,U_t\subseteq V(G)$ such that
\begin{itemize}
\item for all $i\in \{1,2,\ldots,t\}$, $G[U_i]$ has a matching $M_i$ of size at least $r$, and
\item for all distinct $i,j\in \{1,2,\ldots,t\}$, $E(G[U_i])\cap E(G[U_j])=\emptyset$.
\end{itemize}
Then, $G$ is a supergraph of an $(r,t)$-Ruzsa-Szemer\'{e}di graph.
\end{lemma}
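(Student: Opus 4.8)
The plan is to construct the required Ruzsa-Szemer\'{e}di subgraph $H$ of $G$ explicitly, by keeping exactly $r$ edges out of each matching $M_i$. First, for every $i\in\{1,\ldots,t\}$, I fix an arbitrary subset $M_i'\subseteq M_i$ with $|M_i'|=r$ (possible since $|M_i|\geq r$), and let $H$ be the graph on vertex set $V(G)$ with edge set $E(H)=\bigcup_{i=1}^t M_i'$. Since $M_i'\subseteq E(G[U_i])\subseteq E(G)$ for each $i$, the graph $H$ is a subgraph of $G$; and since the sets $E(G[U_1]),\ldots,E(G[U_t])$ are pairwise disjoint, the families $M_1',\ldots,M_t'$ form a partition of $E(H)$ into $t$ edge-disjoint parts, each of which is a matching of size exactly $r$ (being a subset of the matching $M_i$).

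The only point that needs a short argument is that each $M_i'$ is an \emph{induced} matching in $H$, i.e.\ that $H[V(M_i')]=M_i'$, where $V(M_i')$ denotes the set of $2r$ endpoints of the edges of $M_i'$. Consider any edge $e\in E(H)$ with both endpoints in $V(M_i')$; by construction $e\in M_j'$ for some $j\in\{1,\ldots,t\}$. If $j=i$, then $e\in M_i'$, as wanted. If $j\neq i$, then on the one hand $e\in M_j'\subseteq E(G[U_j])$, and on the other hand $e\in E(G)$ has both endpoints in $V(M_i')\subseteq U_i$, so $e\in E(G[U_i])$; hence $e\in E(G[U_i])\cap E(G[U_j])$, contradicting the hypothesis that these edge sets are disjoint. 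Thus the only edges of $H$ with both endpoints in $V(M_i')$ are the edges of $M_i'$ itself, so $M_i'$ is an induced matching in $H$.

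Putting the two steps together, $E(H)$ is partitioned into $t$ edge-disjoint induced matchings $M_1',\ldots,M_t'$, each of size $r$, so $H$ is an $(r,t)$-Ruzsa-Szemer\'{e}di graph; since $H$ is a subgraph of $G$, the graph $G$ is a supergraph of an $(r,t)$-Ruzsa-Szemer\'{e}di graph. I do not expect a genuine obstacle here: the one place to be careful is precisely the verification that a ``cross'' edge lying inside $V(M_i')\subseteq U_i$ cannot belong to a different part $M_j'$, and this is exactly what the edge-disjointness hypothesis $E(G[U_i])\cap E(G[U_j])=\emptyset$ is designed to deliver; everything else is bookkeeping.
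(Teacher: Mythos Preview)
Your proof is correct and follows essentially the same approach as the paper: pick size-$r$ submatchings $M_i'\subseteq M_i$, form the subgraph whose edge set is their union, and use the edge-disjointness of the $E(G[U_i])$ together with $V(M_i')\subseteq U_i$ to rule out any cross edge and conclude each $M_i'$ is induced. The only cosmetic difference is that the paper restricts the vertex set of the subgraph to $\bigcup_i V(M_i')$ rather than all of $V(G)$, which is immaterial since the definition of an $(r,t)$-Ruzsa-Szemer\'{e}di graph concerns only the edge set.
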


\begin{proof}
For all $i\in \{1,2,\ldots,t\}$, let $M'_i$ be a matching in $G[U_i]$ of size exactly $r$, and let $U_i'\subseteq U_i$ be the vertices incident to at least one edge in $M'_i$. 
Let $G'$ be the graph on vertex set $\bigcup_{i=1}^tU_i'$ and edge set $\bigcup_{i=1}^tM'_i$. Notice that $M_1',M_2'\ldots,M_t'$ are matchings in $G'$.
Because for all distinct $i,j\in \{1,2,\ldots,t\}$, $E(G[U_i])\cap E(G[U_j])=\emptyset$, we have that $M'_1,M_2',\ldots,M_t'$ are pairwise disjoint, and hence form a partition of $E(G')$. Lastly, we claim that for all $i\in\{1,2,\ldots,t\}$, $M_i'$ is an induced matching in $G'$. Targeting a contradiction, suppose that this is false for some $i\in\{1,2,\ldots,t\}$. So, there exist  $u,v\in U_i'$ such that $\{u,v\}\notin M_i'$ but $\{u,v\}\in M_j'$ for some $j\in\{1,2,\ldots,t\}\setminus\{i\}$. However, this means that $\{u,v\}\in E(G[U_i])\cap E(G[U_j])$, which is a contradiction. This completes the proof.
\end{proof}

We are now ready to present our main theorem, which follows the lines of the kernelization protocol presented in Section \ref{sec:warmUp}. Clearly, this result makes sense only for choices of $c<\frac{1}{4}$ (so that the approximation ratio will be below $2$) and when $t=o(\sqrt{n})$, preferably $t=\OO(n^{\frac{1}{2}-\lambda})$ for $\lambda$ as close to $1/2$ as possible, so that the volume will be $\OO(\mathsf{opt}^{2-\lambda})$. Further, if $t$ is ``sufficiently small'' (depending on the desired number of rounds) whenever $c$ is a fixed constant, this yields a $(1+\epsilon)$-approximate kernelization protocol.

\begin{theorem}\label{thm:rusza}
Let $0<c<\frac{1}{4}$ be a fixed constant. For $r=r(n)=cn$, let $t=t(n)=\gamma(r)$.\footnote{That is, $t$ is the maximum value (as a function of $n$) such that there exists a $(r,t)$-Ruzsa-Szemer\'{e}di graph where $r=cn$ (see Definition \ref{def:ruzsa} and the discussion below it).} Then, the {\sc Vertex Cover} problem, parameterized by the  optimum, admits a $(1+4c)$-approximate (randomized) kernelization protocol with $t+1$ rounds and call size $2\mathsf{frac}+2(t+1)(2\mathsf{frac})^{1.5}$ (where the number of edges is at most $2(t+1)(2\mathsf{frac}^{1.5}$).
\end{theorem}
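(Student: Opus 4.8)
The plan is to follow the template of the warm-up protocol in Section~\ref{sec:warmUp}, but instead of performing a single ``second round'' to clean up a sparse subinstance, we iterate the sparsification step up to $t$ times, using the extremal bound on Ruzsa-Szemer\'{e}di graphs to guarantee that the process must terminate with a small matching within $t+1$ rounds. First I would apply the exact $2\mathsf{frac}$-vertex kernel of Theorem~\ref{thm:HSElementFrac} (for $d=2$ this is exact, see the discussion at the end of Section~\ref{sec:HSElement}) to obtain a graph $G$ on $n\le 2\mathsf{frac}$ vertices, and initialize $E'=\emptyset$. Then I would run $t+1$ iterations of: sample roughly $(2\mathsf{frac})^{1.5}$ edges of $G$ (each independently with probability $p=(2\mathsf{frac})^{-0.5}$), call the oracle on the graph with edge set (sampled edges)$\,\cup\,E'$ to get a solution $S$, compute a maximal matching $M$ in $G-S$; if $|M|<cn$, return $S$ together with the endpoints of $M$ (a vertex cover of $G$) and stop; otherwise add $E(G-S)$ to $E'$ and continue. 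A final {\bf lift} step maps the solution back through the kernel (trivial, since the kernel is exact).

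\textbf{Correctness and approximation ratio.} For the case where the protocol stops in some iteration with $|M|<cn$: the set $S\cup V(M)$ is a vertex cover of $G$ (since $G-S-V(M)$ has no edge, because $M$ is maximal in $G-S$), and $|V(M)|<2cn\le 4c\mathsf{frac}\le 4c\cdot\mathsf{opt}$, while $|S|\le\beta\cdot\mathsf{opt}(G_{\text{call}})\le\beta\cdot\mathsf{opt}(G)$ since the graph given to the oracle is a subgraph of $G$. Hence $|S\cup V(M)|\le(\beta+4c)\mathsf{opt}\le\beta(1+4c)\mathsf{opt}$, giving the claimed $(1+4c)$ factor (when $\beta\ge 1$). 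I must also verify the oracle is only ever called on instances of the stated size: by a Chernoff bound the sampled set has at most $2p|E(G)|\le 2(2\mathsf{frac})^{1.5}$ edges w.h.p., and by a union bound over the $\le 2^{2\mathsf{frac}}$ vertex subsets (using $n\le2\mathsf{frac}$), with high probability every induced subgraph of $G$ with more than $2(2\mathsf{frac})^{1.5}$ edges is hit by each sample; since $G-S$ is edge-disjoint from all previously sampled edge sets and is not hit, $|E(G-S)|\le 2(2\mathsf{frac})^{1.5}$ in each iteration, so $|E'|$ and the call size stay within $2(t+1)(2\mathsf{frac})^{1.5}$ throughout. Summing the per-round failure probabilities over the $t+1$ rounds and assuming $\mathsf{frac}$ is larger than a constant depending on $c$ (else brute force) keeps the total failure probability below $1/10$; the success probability is then boostable.

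\textbf{The crux: termination within $t+1$ rounds.} The main obstacle is to show the protocol cannot reach iteration $t+1$ still finding a matching of size $\ge cn$ in every iteration. Suppose it does: let $S^{(1)},\dots,S^{(t+1)}$ be the oracle outputs and $U_j=V(G)\setminus S^{(j)}$, with $M^{(j)}$ a maximal matching in $G[U_j]$ of size $\ge cn=r$. The key point is that the edge sets $E(G[U_j])$ are pairwise edge-disjoint: by construction $E(G[U_j])$ was added to $E'$ at the end of iteration $j$, and in iteration $j'>j$ the oracle is run on a graph containing $E'$, so $S^{(j')}$ hits every edge of $E(G[U_j])$, whence $E(G[U_j])\cap E(G[U_{j'}])=\emptyset$. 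Applying Lemma~\ref{lem:charRuzsa} with these $t+1$ sets $U_1,\dots,U_{t+1}$ and matchings of size $\ge r$, we conclude that $G$ is a supergraph of an $(r,t+1)$-Ruzsa-Szemer\'{e}di graph with $r=cn$, contradicting the maximality defining $t=\gamma(r)$. (One must take a little care with the edge-disjointness argument to ensure $E(G[U_j])$, and not merely the sampled edges, lands in $E'$; this is exactly why in each iteration we add all of $E(G-S^{(j)})=E(G[U_j])$ to $E'$, which is feasible precisely because that set is small with high probability as shown above.) Therefore the protocol terminates and returns a valid $(1+4c)$-approximate vertex cover within $t+1$ rounds, and the size bounds are as claimed; parameterization by $\mathsf{opt}$ rather than $\mathsf{frac}$ follows since $\mathsf{frac}\le\mathsf{opt}$.
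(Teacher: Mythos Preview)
Your proposal is correct and follows essentially the same approach as the paper's proof: apply the exact $2\mathsf{frac}$-vertex kernel, iterate (sample edges, call the oracle on the sample together with the accumulated set $E'$, compute a maximal matching in $G-S$, stop if it is smaller than $cn$, otherwise add $E(G-S)$ to $E'$), and use Lemma~\ref{lem:charRuzsa} together with the definition of $t=\gamma(r)$ to argue that the process must terminate within $t+1$ rounds. The size bounds, the Chernoff/union-bound failure analysis, and the approximation calculation $|S\cup V(M)|\le\beta(1+4c)\mathsf{opt}$ all match the paper's argument.
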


\begin{proof}
We first describe the algorithm. To this end, consider some input $\widehat{I}=\widehat{G}$. Then:
\begin{enumerate}
\item Call the {\bf reduce} procedure of the algorithm in Theorem \ref{thm:HSElementFrac} on $I$ to obtain a new instance $I=G$ where $|V(G)|\leq 2\mathsf{frac}(I)$.

\item Initialize $E_0=\emptyset$.

\item For $i=1,2,\ldots,t+1$:
	\begin{enumerate}
\item Sample $W_i$ from $E(G)$ as follows: Insert each edge $e\in E(G)$ to $W_i$ independently at random with probability $p=\displaystyle{\frac{1}{(2\mathsf{frac}(I))^{0.5}}}$.

\item If $|W_i|>2p|E(G)|$, then let $S$ be an arbitrary solution to $I$, and proceed directly to Step \ref{step:leftBet}. [\#Failure]

\item Call the oracle on $G_i=G-E(G)\setminus(E_{i-1}\cup W_i)$, and let $S_i$ denote its output. 

\item Let $M_i$ be some maximal matching in $G-S_i$, and let $T_i=E(G-S_i)$.

\item If $|M_i|<c|V(G)|$, then let $S=S_i\cup (\bigcup M_i)$,\footnote{That is, $S$ is the set that contains every vertex in $S_i$ as well every vertex incident to an edge in $M_i$.} and proceed directly to Step \ref{step:leftBet}. [\#Success]

\item If $|T_i|>2(2\mathsf{frac}(I))^{1.5}$, then let $S$ be an arbitrary solution to $I$, and proceed directly to Step \ref{step:leftBet}. [\#Failure]
	
\item Let $E_i=E_{i-1}\cup T_i$.	
	\end{enumerate}

\item\label{step:reach} Let $S$ be an arbitrary solution to $I$, and proceed directly to Step \ref{step:lift}. [\#Never Reach]

\item\label{step:leftBet} Call the {\bf lift} procedure of the algorithm in Theorem \ref{thm:HSElementFrac} on $\widehat{I},I,S$ to obtain a solution $\widehat{S}$ to $\widehat{I}$. Output $\widehat{S}$.
\end{enumerate}

Clearly, the algorithm runs in polynomial time, and only $t+1$ oracle calls are performed. Further, when we call the oracle on $G_i$,  then $|E(G_i)|\leq i\cdot 2p|E(G)|\leq 2(t+1)(2\mathsf{frac}(I))^{1.5}$ (due to {\bf reduce}). Thus, each oracle call is performed on an instance with at most $2\mathsf{frac}(I)$ vertices (as $|V(G)|\leq 2\mathsf{frac}(I)$ due to {\bf reduce}) and $2(t+1)(2\mathsf{frac}(I))^{1.5}$ edges, and since $\mathsf{frac}(I)\leq\mathsf{frac}(\widehat{I})$, the statement in the lemma regarding the call size is satisfied. 

Now, due to the correctness of {\bf lift}, it remains to show that we compute a solution $S$ to $I$ that, with probability at least $9/10$, is a $\beta(1+4c)$-approximate solution to $I$, where $\beta$ is the approximation ratio of the solutions returned by the oracle. Notice that if $S$ is computed in the step marked ``success'', say, at some iteration $i$, then clearly $|S|=|S_i|+2|M_i|<\beta\mathsf{opt}(I)+2c|V(G)|\leq \beta\mathsf{opt}(I)+4c\mathsf{opt}(I)\leq \beta(1+4c)\mathsf{opt}(I)$. Moreover, since $M_i$ is a maximal matching, every edge in $G$ that is not incident to $S_i$ must share an endpoint with at least one edge in $M_i$. So, $S$ is then a solution to $I$. Thus, it suffices to show that with probability at least $9/10$, $S$ is computed in the step marked by ``success''.

Just like in the proof of Theorem \ref{thm:warmUp}, we can show that the probability that the conditions in the steps marked by ``failure'' are not satisfied with probability at least $9/10$. So, it remains to show that we never reach Step \ref{step:reach}. Targeting a contradiction, suppose that we reach this step. For all $i\in\{1,2,\ldots,t+1\}$, let $U_i=V(G)\setminus S_i$. Then, $G[U_i]$ has a matching of size at least $r=c|V(G)|$ (that is $M_i$). Further, for all $1\leq i<j\leq t+1$, because $S_j$ is a vertex cover of $G_j$ and $E(G[U_i])=T_i\subseteq E(G_j)$, $E(G[U_i])\cap E(G[U_j])=\emptyset$. By Lemma \ref{lem:charRuzsa}, this means that $G$ is a supergraph of an $(r,t+1)$-Ruzsa-Szemer\'{e}di graph. However, this contradicts the definition of $t$. Thus, the proof is complete.
\end{proof}

\begin{corollary}
Let $0<c<\frac{1}{4}$ be a fixed constant. For $r=r(n)=cn$, let $t=t(n)=\gamma(r)$. Then, the {\sc Vertex Cover} problem, parameterized by the  optimum, admits a $(1+4c)$-approximate (randomized) kernelization protocol with $t+1$ rounds and call size $2\mathsf{opt}+2(t+1)(2\mathsf{opt})^{1.5}$ (where the number of edges is at most $2(t+1)(2\mathsf{opt})^{1.5}$).
\end{corollary}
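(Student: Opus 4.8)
The plan is to obtain this statement as an immediate corollary of Theorem~\ref{thm:rusza}, by the standard parameter-translation step that the paper already uses to pass from Theorem~\ref{thm:warmUp} and Theorem~\ref{thm:HSSize} to their respective $\mathsf{opt}$-parameterized corollaries. Theorem~\ref{thm:rusza} supplies, for the fixed constant $0<c<\frac14$ and $r=cn$, $t=\gamma(r)$, a $(1+4c)$-approximate randomized kernelization protocol for {\sc Vertex Cover} that performs $t+1$ oracle calls, each on an instance whose vertex set has size at most $2\mathsf{frac}(I)$ and whose edge set has size at most $2(t+1)(2\mathsf{frac}(I))^{1.5}$, where $I$ is the instance handed to the protocol. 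Since $\mathsf{frac}(I)\le\mathsf{opt}(I)$ for every instance $I$ (the classic LP is a relaxation of the integer program) and both $x\mapsto 2x$ and $x\mapsto 2(t+1)(2x)^{1.5}$ are non-decreasing, the very same protocol has call size at most $2\mathsf{opt}(I)+2(t+1)(2\mathsf{opt}(I))^{1.5}$ and at most $2(t+1)(2\mathsf{opt}(I))^{1.5}$ edges per call. The approximation ratio $(1+4c)$, the success-probability guarantee, and the bound of $t+1$ on the number of rounds are all inherited verbatim, which is exactly the claimed statement.

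The one point I would state with a little care is the reading of $t=t(n)=\gamma(r)$. In the proof of Theorem~\ref{thm:rusza}, after the initial application of the exact $2\mathsf{frac}$-vertex kernel of Theorem~\ref{thm:HSElementFrac} the protocol works on a graph $G$ with $|V(G)|\le 2\mathsf{frac}(I)\le 2\mathsf{opt}(I)$, and the ``never reach'' argument shows that if the loop ran to completion it would exhibit an $(r,t+1)$-Ruzsa--Szemer\'edi subgraph of $G$ (via Lemma~\ref{lem:charRuzsa}, using the maximal matchings $M_i$ and the vertex sets $V(G)\setminus S_i$, which are pairwise edge-disjoint because each $S_j$ is a vertex cover of $G_j\supseteq G[V(G)\setminus S_i]$ for $i<j$), contradicting the maximality defining $t$. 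Since $\gamma(c\cdot)$ is treated as non-decreasing in the number of vertices, reading $t$ as $\gamma$ evaluated at (an upper bound on) the reduced vertex count is consistent under both the $\mathsf{frac}$- and the $\mathsf{opt}$-parameterization, so the ``at most $t+1$ rounds'' bound continues to hold.

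I do not expect any genuine obstacle here: the entire substance of the argument---the edge sampling with probability $p=(2\mathsf{frac}(I))^{-1/2}$, the Chernoff/union-bound control of the ``failure'' steps, the dichotomy on whether the maximal matching $M_i$ in $G-S_i$ has fewer than $c|V(G)|$ edges, and the Ruzsa--Szemer\'edi contradiction---already lives in the proof of Theorem~\ref{thm:rusza}, which we are entitled to invoke. If a fully self-contained argument under the $\mathsf{opt}$-parameterization were preferred, one would simply re-run that proof word for word, replacing every occurrence of $\mathsf{frac}(I)$ by $\mathsf{opt}(I)$ and using $|V(G)|\le 2\mathsf{frac}(I)\le 2\mathsf{opt}(I)$ wherever the vertex bound from \textbf{reduce} is invoked; no estimate degrades because $\mathsf{frac}(I)\le\mathsf{opt}(I)$. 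The corollary then follows, concluding the proof.
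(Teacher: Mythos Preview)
Your proposal is correct and matches the paper's approach: the paper states this corollary immediately after Theorem~\ref{thm:rusza} with no proof, treating it as the routine consequence of $\mathsf{frac}(I)\le\mathsf{opt}(I)$ that you describe. Your extra care about the reading of $t=t(n)$ is harmless but unnecessary here, since both the theorem and the corollary refer to the same $t$, and only the call-size bound is being relaxed from $\mathsf{frac}$ to $\mathsf{opt}$.
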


\section{$(1+\epsilon)$-Approximate Linear-Vertex Kernels for Implicit {\sc $3$-Hitting Set} Problems}
\label{sec:lossyImplicit}
In this section, we present lossy kernels for two well-known implicit {\sc $3$-HS} problems, called {\sc Cluster Vertex Deletion}, and {\sc Feedback Vertex Set in Tournaments}. In {\sc Cluster Vertex Deletion}, given a graph $G$, the task is to compute a minimum-sized subset $S\subseteq V(G)$ such that $G-S$ is a cluster graph. In {\sc Feedback Vertex Set in Tournaments}, give a tournament $G$, the task is to compute a minimum-sized subset $S\subseteq V(G)$ such that $G-S$ is acyclic.
We attain a linear number of vertices at an approximation cost of only $(1+\epsilon)$ rather than $2$ as is given for {\sc $3$-HS} in Section \ref{sec:HSElement}.  Notably, both our algorithms follow similar lines, and we believe that the approach underlying their common parts may be useful when dealing also with other hitting and packing problems of constant-sized objects. In particular, in both algorithms we first ``reveal modules'' using essentially the same type of marking scheme, which yields a lossy rule, and afterwards we shrink the size of these modules using yet another rule that, unlike the first one, is problem-specific.

%!TEX root = Main.tex

%\section{$(1+\epsilon)$-Approximate Linear-Vertex Kernels for Hitting Induced $P_3$'s}\label{sec:P3}

%In this section, we present a lossy kernel for an implicit {\sc $3$-HS} problem, called {\sc Cluster Vertex Deletion}, where we attain a linear number of vertices at an approximation cost of only $(1+\epsilon)$ rather than $2$ as in given for {\sc $3$-HS} in Section \ref{sec:HSElement}. %After that, we lift our ideas also to apply to the packing dual of the problem, called {\sc Induced $P_3$-Packing}.

\subsection{{\sc Cluster Vertex Deletion}}\label{sec:P3}

Our lossy kernel will use Theorem \ref{thm:support} and consist of two rules, one lossy rule and one exact rule, each to be applied only once. The first rule (to which we will refer as the ``module revealing operation'') will ensure that all unmarked vertices in a clique (in some subgraph of the original graph, obtained by the removal of an approximate solution) form a module and furthermore that certain vertices among those removed have neighbors in only one of them, and the second one (``module shrinkage operation'') will reduce the size of each such module. For simplicity, we will actually merge them together to a single rule. We begin by reminding that  {\sc Cluster Vertex Deletion} can be interpreted as a special case of {\sc $3$-Hitting Set}:

\begin{proposition}[\cite{pcbook}]
A graph $G$ is a cluster graph if and only if it does not have any induced~$P_3$.
\end{proposition}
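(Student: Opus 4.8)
The plan is to prove both directions of the equivalence directly from the definition of a cluster graph as a graph all of whose connected components are cliques; no auxiliary results are needed.

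For the ``only if'' direction I would argue by contradiction. Suppose $G$ is a cluster graph but contains an induced $P_3$ on vertices $a,b,c$ with $\{a,b\},\{b,c\}\in E(G)$ and $\{a,c\}\notin E(G)$. Since $a\sim b$ and $b\sim c$, the three vertices lie in a common connected component $C$ of $G$, which by assumption is a clique; hence $\{a,c\}\in E(G)$, contradicting that $a,b,c$ induce a $P_3$. For the ``if'' direction, assume $G$ has no induced $P_3$ and show every connected component is a clique. The key observation is a ``transitivity of adjacency'': for distinct vertices $a,b,c$ with $\{a,b\},\{b,c\}\in E(G)$ one must have $\{a,c\}\in E(G)$, as otherwise $a$--$b$--$c$ is an induced $P_3$. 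Then I would fix a connected component $C$ and distinct $u,v\in C$, take a shortest path $x_0,x_1,\dots,x_\ell$ from $u=x_0$ to $v=x_\ell$ inside $C$, and note that if $\ell\ge 2$ the observation applied to $x_0,x_1,x_2$ gives $\{x_0,x_2\}\in E(G)$, so $x_0,x_2,x_3,\dots,x_\ell$ is a strictly shorter $u$--$v$ path, a contradiction; hence $\ell=1$ and $\{u,v\}\in E(G)$. Since $u,v$ were arbitrary, $C$ is a clique, so $G$ is a cluster graph.

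The statement is folklore and both implications are immediate from the definitions, so there is essentially no obstacle. The only place warranting a sentence of care is the sufficiency direction, where one should route the argument through a shortest path (or, equivalently, an induction on path length) in order to conclude that connectedness forces full adjacency inside a component; everything else is a one-line deduction.
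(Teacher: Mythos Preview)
Your proof is correct. The paper does not give its own proof of this proposition; it simply cites it as a well-known fact from \cite{pcbook}, so there is nothing to compare against beyond noting that your argument is the standard one.
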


\begin{definition}\label{def:CVDtoHS}
Given a graph $G$, define the {\em {\sc $3$-Hitting Set} instance corresponding to $G$} by $\mathsf{HS}(G)=(V(G),\{\{u,v,w\}\subseteq V(G): G[\{u,v,w\}]$ is an induced $P_3\})$.
\end{definition}

\begin{corollary}
Let $G$ be a graph. Then, a subset $S\subseteq V(G)$ is a solution to the {\sc $3$-Hitting Set} instance corresponding to $G$ if and only if $G-S$ is a cluster graph.
\end{corollary}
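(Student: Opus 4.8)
The statement is an immediate consequence of the preceding proposition (a graph is a cluster graph if and only if it has no induced $P_3$) together with Definition~\ref{def:CVDtoHS}. The plan is to prove the equivalence by unfolding both sides into the single common condition ``$G-S$ has no induced $P_3$'' and chaining the resulting equivalences.

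First I would recall that, by definition, $S$ is a solution to $\mathsf{HS}(G)=(V(G),\{\{u,v,w\}\subseteq V(G): G[\{u,v,w\}]\text{ is an induced }P_3\})$ precisely when $S\cap \{u,v,w\}\neq\emptyset$ for every triple $\{u,v,w\}$ such that $G[\{u,v,w\}]$ is an induced $P_3$. I would then argue that this holds if and only if $G-S$ contains no induced $P_3$. For the forward direction: if $\{u,v,w\}\subseteq V(G)\setminus S$ induced a $P_3$ in $G-S$, then since $G-S$ is the subgraph of $G$ induced by $V(G)\setminus S$, the same triple would induce a $P_3$ in $G$ and would be disjoint from $S$, contradicting that $S$ hits every induced-$P_3$ triple. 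For the reverse direction: if $\{u,v,w\}$ induces a $P_3$ in $G$ and $S\cap\{u,v,w\}=\emptyset$, then $\{u,v,w\}\subseteq V(G)\setminus S$ and this triple still induces a $P_3$ in $G-S$ (again because $G-S$ is an induced subgraph of $G$), contradicting that $G-S$ is $P_3$-free. Finally, by the proposition, $G-S$ contains no induced $P_3$ if and only if $G-S$ is a cluster graph, which closes the chain.

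The only point requiring (minimal) care is the interplay between the two induced-subgraph operations: one must observe that for $W\subseteq V(G)\setminus S$, the graph $(G-S)[W]$ equals $G[W]$, so ``induced $P_3$ of $G-S$'' and ``induced $P_3$ of $G$ lying in $V(G)\setminus S$'' are literally the same objects. Beyond that there is no obstacle; the proof is a direct combination of the definition of a solution to an instance of {\sc $3$-Hitting Set} and the forbidden-induced-subgraph characterization of cluster graphs.

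\begin{proof}
By Definition~\ref{def:CVDtoHS}, a subset $S\subseteq V(G)$ is a solution to $\mathsf{HS}(G)$ if and only if for every triple $\{u,v,w\}\subseteq V(G)$ with $G[\{u,v,w\}]$ an induced $P_3$ we have $S\cap\{u,v,w\}\neq\emptyset$. Since for any $W\subseteq V(G)\setminus S$ the graph $(G-S)[W]$ coincides with $G[W]$, this condition is equivalent to saying that there is no triple $\{u,v,w\}\subseteq V(G)\setminus S$ inducing a $P_3$ in $G-S$, i.e., that $G-S$ has no induced $P_3$. By the proposition above, $G-S$ has no induced $P_3$ if and only if $G-S$ is a cluster graph. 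Combining the two equivalences yields the claim.
\end{proof}
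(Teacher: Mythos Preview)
Your proof is correct and follows exactly the approach the paper intends: the corollary is stated without proof in the paper, being an immediate consequence of the preceding proposition and Definition~\ref{def:CVDtoHS}, and your argument simply unpacks that implication carefully. If anything, you have been more thorough than necessary in spelling out the equivalence (in particular the observation that $(G-S)[W]=G[W]$ for $W\subseteq V(G)\setminus S$), but there is nothing to fault.
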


To perform the module revealing operation, given a graph $G$, we will be working with an optimal solution $\alpha$ to the classic LP of the {\sc $3$-Hitting Set} instance corresponding to $G$. The approximate solution we will be working with will be the support of $\alpha$. For the sake of clarity, we slightly abuse notation and use vertices to refer both to vertices and to the variables corresponding to them, as well as use an instance of {\sc Cluster Vertex Deletion} to refer also to  the {\sc $3$-Hitting Set} instance corresponding to it when no confusion arises. We first show that the cliques in $G-\mathsf{support}(\alpha)$ are already modules in $G-\alpha^{-1}(1)$ (i.e., in the graph obtained by removing all vertices to which $\alpha$ assigns $1$). Thus, to reveal modules, we will only deal with vertices in $\alpha^{-1}(1)$.

\begin{lemma}\label{lem:modWRTnot1}
Let $G$ be a graph, and let $\alpha$ be a solution to the {\sc $3$-Hitting Set} instance corresponding to $G$. Let $C$ be a clique in $G-\mathsf{support}(\alpha)$. Then, $V(C)$ is a module in $G-\alpha^{-1}(1)$.
\end{lemma}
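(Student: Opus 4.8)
The plan is to show that for any clique $C$ in $G - \mathsf{support}(\alpha)$ and any vertex $v \in V(G) \setminus (V(C) \cup \alpha^{-1}(1))$, either $V(C) \subseteq N_G(v)$ or $V(C) \cap N_G(v) = \emptyset$. Since the vertices outside $\alpha^{-1}(1)$ in $G - \alpha^{-1}(1)$ are exactly $V(G) \setminus \alpha^{-1}(1)$, and $V(C) \subseteq V(G) \setminus \mathsf{support}(\alpha) \subseteq V(G) \setminus \alpha^{-1}(1)$, this is precisely the statement that $V(C)$ is a module in $G - \alpha^{-1}(1)$.

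First I would fix such a $v$ and suppose, for contradiction, that $v$ is adjacent to some $x \in V(C)$ but non-adjacent to some $y \in V(C)$. Since $x, y \in V(C)$ and $C$ is a clique, $x$ and $y$ are adjacent. Hence the vertex set $\{v, x, y\}$ induces a $P_3$ in $G$ (with $x$ as the middle vertex, and $v, y$ the non-adjacent endpoints). Therefore $\{v, x, y\} \in {\cal F}$, the family of the {\sc $3$-Hitting Set} instance corresponding to $G$, so any solution to this instance — in particular $\alpha$ — must satisfy the constraint $\alpha(v) + \alpha(x) + \alpha(y) \geq 1$.

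Now I would use the key structural facts: since $x, y \in V(C)$ and $C$ is a clique in $G - \mathsf{support}(\alpha)$, we have $x, y \notin \mathsf{support}(\alpha)$, so $\alpha(x) = \alpha(y) = 0$. Moreover, $v \notin \alpha^{-1}(1)$, so $\alpha(v) < 1$; combined with the fact that $\alpha$ is a solution (hence all values are nonnegative and, being at most $1$ for an optimal or any sensible LP solution — actually we only need $\alpha(v) < 1$), we get $\alpha(v) + \alpha(x) + \alpha(y) = \alpha(v) < 1$, contradicting the constraint above. This contradiction shows no such $v$ exists, completing the proof.

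The main obstacle — which is really a minor one — is just making sure the roles of the $P_3$ are set up correctly: we need $x$ adjacent to both $v$ and $y$ while $v$ and $y$ are non-adjacent, which is exactly an induced $P_3$ on $\{v,x,y\}$, so it lies in ${\cal F}$ by Definition~\ref{def:CVDtoHS}. One should also note that $v$ being outside $\mathsf{support}(\alpha)$ is not needed — only $v \notin \alpha^{-1}(1)$ matters — which is why the lemma is stated with respect to $G - \alpha^{-1}(1)$ rather than $G - \mathsf{support}(\alpha)$; this is the point that makes the subsequent marking scheme (which only touches $\alpha^{-1}(1)$) suffice.
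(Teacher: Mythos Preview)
Your proof is correct and follows essentially the same route as the paper's: both suppose a vertex $v \notin \alpha^{-1}(1)$ distinguishes two clique vertices, observe that the three vertices induce a $P_3$, and derive a contradiction from the two clique vertices having $\alpha$-value $0$ and $\alpha(v)<1$. The paper makes the step $\alpha(v)\leq 1$ explicit by invoking optimality of $\alpha$ (a hypothesis omitted from the lemma statement but used in its proof), which is exactly the point you flag parenthetically.
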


\begin{proof}
First, notice that as $\alpha$ is optimal, it does not assign values greater than $1$. Targeting a contradiction, suppose that $V(C)$ is not a module in $G-\alpha^{-1}(1)$. So, there exist vertices $v\in \mathsf{support}(\alpha)\setminus\alpha^{-1}(1)$ and $u,w\in V(C)$ such that $\{u,v\}\in E(G)$ and $\{w,v\}\notin E(G)$. Then,  as $\{u,w\}\in E(G)$ (since $C$ is a clique), $G[\{v,u,w\}]$ is an induced $P_3$. However, $\alpha$ assigns $0$ to the variables of $u$ and $w$, and a value smaller than $1$ to the variable of $v$, while the sum of these variables should be at least $1$ (because $\alpha$ is a solution). Thus, we have reached a contradiction.
\end{proof}

To deal with the vertices in $\alpha^{-1}(1)$, we now define a marking procedure that will be used by the first (implicit) rule.

\begin{definition}\label{def:markCVD}
Given $0<\epsilon<1$, a graph $G$ and an optimal solution $\alpha$ to the {\sc $3$-Hitting Set} instance corresponding to $G$, {\sf Marking}$(\epsilon,G,\alpha)$ is defined as follows.
\begin{enumerate}
\item For every vertex $v\in \alpha^{-1}(1)$, initialize {\sf mark}$(v)=\emptyset$.
\item For every vertex $v\in\alpha^{-1}(1)$:
	\begin{enumerate}
	\item Let $H_v$ be the graph defined as follows: $V(H_v)=V(G)\setminus(\mathsf{support}(\alpha)\cup(\bigcup_{u\in\alpha^{-1}(1)}\mathsf{mark}(u)))$, and $E(H_v)=\{\{w,r\}\subseteq V(H_v): G[\{v,w,r\}]$ is an induced $P_3\}$.
	\item Compute a maximal matching $\mu_v$ in $H_v$.\footnote{For example, by greedily picking edges so that the collection of edges remains a matching as long as it possible.}
	\item If $|\mu_v|>\frac{1}{\epsilon}$, then let $\nu_v$ be some (arbitrary) subset of $\mu_v$ of size exactly $\frac{1}{\epsilon}$, and otherwise let $\nu_v=\mu_v$.  Let $\mathsf{mark}(v)=\bigcup\nu_v$ (i.e., $\mathsf{mark}(v)$ is the set of vertices incident to edges in $\nu_v$).
	\end{enumerate}
\item For every vertex $v\in \alpha^{-1}(1)$, output $\mathsf{mark}(v)$. Moreover, output $D=\{v\in \alpha^{-1}(1): |\mathsf{mark}(v)|=\frac{1}{\epsilon}\}$.
\end{enumerate}
\end{definition}

We now prove that when all marked vertices are removed, the remainders of the cliques form modules in $G-D$.

\begin{lemma}\label{lem:createModules}
Given $0<\epsilon<1$, a graph $G$ and an optimal solution $\alpha$ to the {\sc $3$-Hitting Set} instance corresponding to $G$, let $\{\mathsf{mark}(v)\}|_{v\in\alpha^{-1}(1)},D$ be the output of {\sf Marking}$(\epsilon,G,\alpha)$. Then, the vertex set of every clique $C$ in $G-(\mathsf{support}(\alpha)\cup(\bigcup_{v\in\alpha^{-1}(1)}\mathsf{mark}(v)))$ is a module in $G-D$.
\end{lemma}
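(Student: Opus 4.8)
I would prove the contrapositive: assume $V(C)$ is not a module in $G-D$ and derive a contradiction with the maximality of one of the matchings $\mu_v$ produced by {\sf Marking}$(\epsilon,G,\alpha)$. So suppose there is a vertex $z\in V(G-D)\setminus V(C)$ and vertices $u,w\in V(C)$ with $\{z,u\}\in E(G)$ but $\{z,w\}\notin E(G)$. Since $C$ is a clique in $G-(\mathsf{support}(\alpha)\cup(\bigcup_{v\in\alpha^{-1}(1)}\mathsf{mark}(v)))$, we have $u,w\notin \mathsf{support}(\alpha)\cup(\bigcup_{v}\mathsf{mark}(v))$ and $\{u,w\}\in E(G)$; hence $G[\{z,u,w\}]$ is an induced $P_3$ (with $u$ the center). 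The plan is to locate where $z$ ``lives'' with respect to the marking: $z\notin V(C)$ together with $z\in V(G-D)$ and the fact that $C$ is a clique of $G-(\mathsf{support}(\alpha)\cup(\bigcup_v\mathsf{mark}(v)))$ should force $z\in\mathsf{support}(\alpha)\cup(\bigcup_v\mathsf{mark}(v))$.

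First I would split on whether $z\in\alpha^{-1}(1)$, $z\in\mathsf{support}(\alpha)\setminus\alpha^{-1}(1)$, or $z\in\bigcup_v\mathsf{mark}(v)$.

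\emph{Case $z\in\mathsf{support}(\alpha)\setminus\alpha^{-1}(1)$.} Then $G[\{z,u,w\}]$ is an induced $P_3$ whose three vertices carry $\alpha$-values $\alpha(z)<1$, $\alpha(u)=0$, $\alpha(w)=0$, so their sum is strictly less than $1$, contradicting that $\alpha$ is a solution to the $3$-Hitting Set instance. (This is just Lemma~\ref{lem:modWRTnot1} applied to the clique $C$, noting $D\subseteq\alpha^{-1}(1)$.)

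\emph{Case $z\in\bigcup_v\mathsf{mark}(v)$.} But $u,w\notin\bigcup_v\mathsf{mark}(v)$, so $u,w\in V(H_v)$ for \emph{every} $v\in\alpha^{-1}(1)$ (the vertex set of $H_v$ only shrinks as more vertices get marked, so a vertex never marked is present in all of them). This is harmless on its own; the real content is the last case.

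\emph{Case $z\in\alpha^{-1}(1)$, say $z=v$.} This is the main obstacle and the crux of the argument. Since $z$ is never in $\mathsf{mark}(\cdot)$ of anyone---wait, we only know $z\in V(G-D)$, i.e., $z\notin D$, so $|\mathsf{mark}(z)|<\tfrac1\epsilon$ and hence $\mu_z=\nu_z$ and $\mathsf{mark}(z)=\bigcup\mu_z$ with $\mu_z$ a \emph{maximal} matching in $H_z$. Now $u,w\in V(H_z)$ (they are unmarked and outside $\mathsf{support}(\alpha)$), and $G[\{z,u,w\}]$ is an induced $P_3$, so $\{u,w\}\in E(H_z)$ by definition of $H_z$. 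But $u,w\notin\mathsf{mark}(z)=\bigcup\mu_z$, so neither $u$ nor $w$ is matched by $\mu_z$; thus $\mu_z\cup\{\{u,w\}\}$ is a strictly larger matching in $H_z$, contradicting maximality of $\mu_z$. (I should double-check the edge case where $|\mu_z|$ was about to exceed $\tfrac1\epsilon$: but $z\notin D$ precisely rules that out, so $\nu_z=\mu_z$ and maximality is genuine.)

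Putting the three cases together contradicts the assumption, so $V(C)$ is a module in $G-D$. The only subtlety worth being careful about in the write-up is the bookkeeping that $V(H_v)$ is monotone non-increasing in the order vertices of $\alpha^{-1}(1)$ are processed, so that ``$u,w$ never marked'' genuinely implies ``$u,w\in V(H_v)$ throughout,'' and the observation that $z\in V(G-D)$ forces $\mathsf{mark}(z)$ to come from a truly maximal matching rather than a truncated one. I expect no hard calculation---the whole proof is a short case analysis leveraging Lemma~\ref{lem:modWRTnot1} for the non-$1$ vertices and matching-maximality for the $\alpha^{-1}(1)$ vertices.
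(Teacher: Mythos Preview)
Your approach is the same as the paper's: a case split on where the offending vertex $z$ lives, using Lemma~\ref{lem:modWRTnot1} for $z\in\mathsf{support}(\alpha)\setminus\alpha^{-1}(1)$ and maximality of $\mu_z$ for $z\in\alpha^{-1}(1)\setminus D$. Those two cases are handled correctly.

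However, your Case~2 ($z\in\bigcup_v\mathsf{mark}(v)$) is not actually dispatched. You make an observation about $u,w\in V(H_v)$ and then write ``This is harmless on its own; the real content is the last case,'' but you never derive a contradiction from $z\in M$. You must: the fix is to note that every marked vertex lies in $V(G)\setminus\mathsf{support}(\alpha)$ (by construction, $\mathsf{mark}(v)\subseteq V(H_v)\subseteq V(G)\setminus\mathsf{support}(\alpha)$), so $z,u,w$ all sit inside the cluster graph $G-\mathsf{support}(\alpha)$, which contains no induced $P_3$---contradiction. The paper folds this case together with ``$z$ outside $\mathsf{support}(\alpha)\cup M$'' into a single sentence: any vertex of $G-\mathsf{support}(\alpha)$ is adjacent to all of $V(C)$ or none of it, because $G-\mathsf{support}(\alpha)$ is a disjoint union of cliques. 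That same one-line observation also rigorously justifies your ``should force $z\in\mathsf{support}(\alpha)\cup(\bigcup_v\mathsf{mark}(v))$,'' which you left as an assertion.
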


\begin{proof}
Consider some clique $C$ in $G-(\mathsf{support}(\alpha)\cup(\bigcup_{v\in\alpha^{-1}(1)}\mathsf{mark}(v)))$. Clearly, every vertex in $G-\mathsf{support}(\alpha)$ is adjacent to either all vertices in $C$ (when they belong to the same clique in $G-\mathsf{support}(\alpha)$) or to none (when they belong to different cliques). Further, due to Lemma \ref{lem:modWRTnot1}, every vertex in $\mathsf{support}(\alpha)\setminus\alpha^{-1}(1)$ also has this property. So, it remains to prove that every vertex in $\alpha^{-1}(1)\setminus D$ also has this property. To this end, consider some vertex $v\in\alpha^{-1}(1)\setminus D$. Targeting a contradiction, suppose that there exist vertices $u,w\in V(C)$ such that $\{u,v\}\in E(G)$ but $\{w,v\}\notin E(G)$. As $V(C)\cap (\bigcup_{v'\in\alpha^{-1}(1)}\mathsf{mark}(v')))=\emptyset$, $H_v$ (in Definition \ref{def:markCVD}) contained the edge $\{u,w\}$. Moreover, neither $u$ nor $w$ was inserted into $\mathsf{mark}(v)$ and hence, as $\nu_v=\mu_v$ (because $v\notin D$), none of them is incident to an edge in $\mu_v$. However, this contradicts that $\mu_v$ is a maximal matching, as we can insert $\{u,w\}$ to it and it would remain a matching.
\end{proof}

We now argue that every optimal solution contains all of the vertices of $D$ except of an $\epsilon$-fraction of the optimum, and hence it is not ``costly'' to seek only solutions that contain $D$.

\begin{lemma}\label{lem:cvdSafe1}
Let $I=G$ be an instance of {\sc Cluster Vertex Deletion}.
Given $0<\epsilon<1$, $G$ and an optimal solution $\alpha$ to the {\sc $3$-Hitting Set} instance corresponding to $G$, let $\{\mathsf{mark}(v)\}|_{v\in\alpha^{-1}(1)},D$ be the output of {\sf Marking}$(\epsilon,G,\alpha)$. Let $S^\star$ be an optimal solution to $I$. Then, $|D\setminus S^\star|\leq\epsilon\mathsf{opt}(I)$.
\end{lemma}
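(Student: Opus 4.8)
The plan is a counting argument: each vertex of $D$ comes with a ``certificate'' --- the matching $\nu_v$ --- of $\frac{1}{\epsilon}$ pairwise vertex-disjoint induced $P_3$'s all passing through $v$, and since the marks of distinct vertices of $\alpha^{-1}(1)$ are disjoint, any solution $S^\star$ that omits some $v\in D$ must ``pay'' $\frac{1}{\epsilon}$ fresh vertices to hit that vertex's certificate, with the payments for different such $v$ not overlapping.

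First I would record two facts about the output of {\sf Marking}$(\epsilon,G,\alpha)$. (i) The sets $\{\mathsf{mark}(v):v\in\alpha^{-1}(1)\}$ are pairwise disjoint, and each is disjoint from $\mathsf{support}(\alpha)$. Indeed, when the procedure processes a vertex $v$, the graph $H_v$ has vertex set $V(G)\setminus(\mathsf{support}(\alpha)\cup\bigcup_{u\in\alpha^{-1}(1)}\mathsf{mark}(u))$, and $\mathsf{mark}(v)=\bigcup\nu_v\subseteq V(H_v)$; hence $\mathsf{mark}(v)$ is disjoint from $\mathsf{support}(\alpha)$ and from every $\mathsf{mark}(u)$ already finalized before $v$ was processed, and by symmetry from those finalized afterwards as well. (ii) For every $v\in D$ the set $\nu_v$ is a matching consisting of exactly $\frac{1}{\epsilon}$ edges (this is precisely the truncation case), and for each edge $\{w,r\}\in\nu_v$ the triple $\{v,w,r\}$ induces a $P_3$ in $G$, by the definition of $E(H_v)$. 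Since $G-S^\star$ is a cluster graph, $S^\star$ meets every such triple $\{v,w,r\}$.

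Next I would lower-bound $|S^\star\cap\mathsf{mark}(v)|$ for each $v\in D\setminus S^\star$. Fix such a $v$. For every edge $\{w,r\}\in\nu_v$, the induced $P_3$ on $\{v,w,r\}$ is hit by $S^\star$, and since $v\notin S^\star$ this forces $w\in S^\star$ or $r\in S^\star$; as $\nu_v$ is a matching, the vertices chosen this way over the $\frac{1}{\epsilon}$ edges of $\nu_v$ are pairwise distinct and all lie in $\mathsf{mark}(v)=\bigcup\nu_v$. Hence $|S^\star\cap\mathsf{mark}(v)|\ge\frac{1}{\epsilon}$. Summing over $v\in D\setminus S^\star$ and using the disjointness from (i),
\[
\frac{1}{\epsilon}\,|D\setminus S^\star|\;\le\;\sum_{v\in D\setminus S^\star}|S^\star\cap\mathsf{mark}(v)|\;=\;\Bigl|S^\star\cap\bigcup_{v\in D\setminus S^\star}\mathsf{mark}(v)\Bigr|\;\le\;|S^\star|\;=\;\mathsf{opt}(I),
\]
so $|D\setminus S^\star|\le\epsilon\cdot\mathsf{opt}(I)$, as required.

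I do not anticipate a genuine obstacle; the argument is elementary. The one point deserving care is keeping the two disjointness ingredients straight: across distinct vertices of $\alpha^{-1}(1)$ the marks are disjoint (so the $\frac{1}{\epsilon}$-sized payments do not double count), while inside a single $\nu_v$ the matching structure guarantees the $\frac{1}{\epsilon}$ forced solution vertices are themselves distinct. Both are read off directly from Definition \ref{def:markCVD} once the progressive updating of $V(H_v)$ is made explicit.
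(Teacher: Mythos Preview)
Your proposal is correct and follows essentially the same argument as the paper's own proof: for each $v\in D\setminus S^\star$, the $\frac{1}{\epsilon}$ edges of $\nu_v$ yield $\frac{1}{\epsilon}$ induced $P_3$'s through $v$ that $S^\star$ must hit inside $\mathsf{mark}(v)$, and the pairwise disjointness of the marks then gives $\frac{1}{\epsilon}|D\setminus S^\star|\le|S^\star|$. Your write-up is simply more explicit about the two disjointness ingredients than the paper's terse version.
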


\begin{proof}
Consider some vertex $v\in D$. Notice that $v$ together with any edge in $\nu_v$ form an induced $P_3$ in $G$. Thus, if $v\notin S^\star$, then from every edge in $\nu_v$, at least one vertex must belong to $S^\star$. As $\nu_v$ is a matching, and its size is $\frac{1}{\epsilon}$, this means that $S^\star$ has to contain at least $\frac{1}{\epsilon}$ vertices from $\mathsf{mark}(v)$. 
As the sets assigned by $\mathsf{mark}$ are pairwise disjoint, we have that $|D\setminus S^\star|$ can be of size at most $\epsilon|S^\star|=\epsilon\mathsf{opt}(I)$.
\end{proof}

Intuitively, the arguments above naturally give rise to a reduction rule that deletes $D$. However, a minor technicality arises---that is, we will need to transmit $\alpha$ and the marked sets to the reduced instance in order for our next arguments to work, which, when complying with necessary formalities, requires to define an annotated version of the problem. We avoid this by merging the rule implicitly in our main rule later, which simplifies the presentation.

Before we proceed to shrink the size of the modules, we argue that every vertex outside them (except for those in $D$) has neighbors in at most one of them.

\begin{lemma}\label{lem:cvdSeeOne}
Given $0<\epsilon<1$, a graph $G$ and an optimal solution $\alpha$ to the {\sc $3$-Hitting Set} instance corresponding to $G$, let $\{\mathsf{mark}(v)\}|_{v\in\alpha^{-1}(1)},D$ be the output of {\sf Marking}$(\epsilon,G,\alpha)$. Then, for every vertex $v\in \mathsf{support}(\alpha)\setminus D$, $N_G(v)\setminus(\mathsf{support}(\alpha)\cup(\bigcup_{v\in\alpha^{-1}(1)}\mathsf{mark}(v)))$ is either empty or equals the vertex set of exactly one clique $C$ in $G-(\mathsf{support}(\alpha)\cup(\bigcup_{v\in\alpha^{-1}(1)}\mathsf{mark}(v)))$.
\end{lemma}

\begin{proof}
Consider some vertex $v\in \mathsf{support}(\alpha)\setminus D$.  Targeting a contradiction, suppose that the lemma is false with respect to $v$. Due to Lemma \ref{lem:createModules}, this necessarily means that there exist two distinct cliques $C,C'$ in $G-(\mathsf{support}(\alpha)\cup(\bigcup_{v\in\alpha^{-1}(1)}\mathsf{mark}(v)))$ such that $v$ has neighbors in both. So, let $u\in V(C)$ and $w\in V(C')$ be such that $\{u,v\},\{w,v\}\in E(G)$. Observe that as $u,w$ belong to different cliques, $\{u,w\}\notin E(G)$. As $(V(C)\cup V(C'))\cap (\bigcup_{v'\in\alpha^{-1}(1)}\mathsf{mark}(v')))=\emptyset$, $H_v$ (in Definition \ref{def:markCVD}) contained the edge $\{u,w\}$. Moreover, neither $u$ nor $w$ was inserted into $\mathsf{mark}(v)$ and hence, as $\nu_v=\mu_v$ (because $v\notin D$), none of them is incident to an edge in $\mu_v$. However, this contradicts that $\mu_v$ is a maximal matching, as we can insert $\{u,w\}$ to it and it would remain a matching.
\end{proof}

We now proceed to shrink the size of the modules we have just revealed. We note that this part is, in fact, exact (i.e.~$1$-approximate). For this purpose, we start with the following observation.

\begin{lemma}\label{lem:moduleUnitHelper}
Let $I=G$ be an instance of {\sc Cluster Vertex Deletion}.
Let $T\subseteq V(G)$ be a module in $G$ such that $G[T]$ is a clique. Then, any induced $P_3$ in $G$ that contains at least one vertex from $T$, contains exactly one vertex from $T$ and at least one vertex from $N_G(T)$.
\end{lemma}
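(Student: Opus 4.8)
The plan is to argue directly, using only two elementary facts: since $G[T]$ is a clique, any two vertices of $T$ are adjacent; and since $T$ is a module, any vertex of $V(G)\setminus T$ that is adjacent to one vertex of $T$ is adjacent to all of $T$. Fix an induced $P_3$ on vertex set $\{a,b,c\}$ with center $b$, so that $\{a,b\},\{b,c\}\in E(G)$ and $\{a,c\}\notin E(G)$, and assume it contains at least one vertex of $T$.

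First I would show it contains \emph{at most} one vertex of $T$. If it contained two, those two cannot be the non-adjacent endpoint pair $\{a,c\}$, because $G[T]$ being a clique would force $\{a,c\}\in E(G)$; so, up to relabeling the two endpoints, the two vertices in $T$ are $a$ and $b$, with $c\notin T$. But then $c$ is adjacent to $b\in T$, hence by the module property $c$ is adjacent to $a\in T$ as well, contradicting $\{a,c\}\notin E(G)$. The case where all three vertices lie in $T$ is subsumed, since it in particular puts both $a$ and $c$ in $T$. Combined with the hypothesis that the $P_3$ meets $T$, this yields that it contains \emph{exactly} one vertex of $T$.

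It remains to exhibit a vertex of the $P_3$ in $N_G(T)$, which I would do by a short case split on which of $a,b,c$ lies in $T$. If the center $b$ lies in $T$, then $a$ and $c$ are both adjacent to $b\in T$ and both lie outside $T$, so $a,c\in N_G(T)$. If instead an endpoint lies in $T$, say $a\in T$ (so $b,c\notin T$), then $b$ is adjacent to $a\in T$ and $b\notin T$, hence $b\in N_G(T)$. In all cases the $P_3$ contains a vertex of $N_G(T)$, completing the proof.

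I do not expect a genuine obstacle here; the proof is a routine case analysis. The only points needing care are to use the \emph{open} neighborhood $N_G(T)$ consistently (so the vertex exhibited is certifiably outside $T$), and to handle the symmetry cleanly when it is an endpoint of the $P_3$, rather than its center, that belongs to $T$.
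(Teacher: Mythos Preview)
Your proof is correct and follows essentially the same approach as the paper's: rule out three vertices in $T$ via the clique property, rule out two via the module property (the third vertex would have to be adjacent to both), and then observe that the single $T$-vertex is adjacent in the $P_3$ to some vertex outside $T$, hence in $N_G(T)$. The only difference is that you spell out the final step with an explicit case split on whether the $T$-vertex is the center or an endpoint, whereas the paper dispatches it in a single clause.
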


\begin{proof}
Let $P$ be an induced $P_3$ in $G$ that contains at least one vertex from $T$. Clearly, $P$ cannot contain three from $T$ as $T$ induces a clique. Further, if it contains exactly two vertices from $T$, then they must be adjacent in $P$ as $T$ induces a clique, but then the third vertex in $P$ will be a neighbor of one but non-neighbor of the other, which is a contradiction because $T$ is a module. So, $P$ contains exactly one vertex from $T$, which also implies that it must contain at least one vertex from $N_G(T)$.
\end{proof}

We use the above lemma to give a simple lemma that says that if at least one vertex of a module that induces a clique is deleted, then unless that deletion is unnecessary, all of its vertices are deleted.

\begin{lemma}\label{lem:moduleUnit}
Let $I=G$ be an instance of {\sc Cluster Vertex Deletion}.
Let $T\subseteq V(G)$ be a module in $G$ such that $G[T]$ is a clique. Then, for every solution $S$ to $I$, either $T\subseteq S$ or $S\setminus T$ is also a solution to $I$.
\end{lemma}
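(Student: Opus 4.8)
Recall that, via the correspondence with {\sc $3$-Hitting Set}, a set $S$ is a solution to $I=G$ exactly when $G-S$ contains no induced $P_3$. So fix a solution $S$, assume $T\not\subseteq S$, and pick a vertex $t\in T\setminus S$; the goal is to show that $S'=S\setminus T$ is also a solution, i.e.\ that $G-S'$ has no induced $P_3$. The first step is to set up a contradiction: suppose $P=G[\{x,y,z\}]$ is an induced $P_3$ in $G-S'$. Since $G-S$ has no induced $P_3$ and the only vertices removed in passing from $G-S'$ to $G-S$ are those in $S\cap T\subseteq T$, the triple $\{x,y,z\}$ must meet $T$. As an induced $P_3$ of $G-S'$ is also an induced $P_3$ of $G$, I can apply Lemma~\ref{lem:moduleUnitHelper} to conclude that $P$ contains \emph{exactly one} vertex of $T$; rename the vertices so that $x\in T$ and $y,z\notin T$.

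The heart of the argument is a ``swap'' using the module structure. Because $T$ is a module in $G$ and $y,z\notin T$, vertex $t$ has exactly the same adjacencies to $y$ and to $z$ as $x$ does; and the adjacency between $y$ and $z$ is untouched. Hence $G[\{t,y,z\}]$ is isomorphic (as a labelled graph on these external pairs) to $G[\{x,y,z\}]=P$, so $G[\{t,y,z\}]$ is again an induced $P_3$ in $G$. It remains to check that none of $t,y,z$ lies in $S$: we have $t\notin S$ by choice; and since $y,z\notin T$, membership of $y$ (or $z$) in $S$ would force it into $S\setminus T=S'$, contradicting $y,z\in V(P)\subseteq V(G-S')$. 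Therefore $\{t,y,z\}$ is an induced $P_3$ entirely inside $G-S$, contradicting that $S$ is a solution. Consequently no such $P$ exists, and $S'=S\setminus T$ is a solution, which is precisely the conclusion when $T\not\subseteq S$.

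I expect the only point needing care — the ``main obstacle'', such as it is — to be the bookkeeping that all three vertices of the swapped $P_3$ avoid $S$: this is where one must use that $P$ meets $T$ in exactly one vertex (so $y,z\notin T$) together with the identity $S'=S\setminus T$, rather than anything deeper. The module property does the rest of the work essentially for free, and Lemma~\ref{lem:moduleUnitHelper} supplies the structural fact that exactly one vertex of any relevant $P_3$ lies in $T$. Note also that the statement is symmetric in a convenient way: if $T\subseteq S$ there is nothing to prove, so the case analysis is just ``$T\subseteq S$ or not'', matching the dichotomy in the lemma.
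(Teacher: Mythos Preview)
Your proof is correct and follows essentially the same approach as the paper: pick $t\in T\setminus S$, assume an induced $P_3$ survives in $G-(S\setminus T)$, use Lemma~\ref{lem:moduleUnitHelper} to see it meets $T$ in exactly one vertex, and swap that vertex for $t$ via the module property to obtain an induced $P_3$ in $G-S$. Your explicit check that $y,z\notin S$ is a small extra bit of care the paper leaves implicit, but otherwise the arguments are the same.
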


\begin{proof}
Consider a solution $S$ to $I$ such that $T$ is not contained in $S$. So, there exists some vertex $v\in T\setminus S$. We claim that $S\setminus T$ is also a solution to $I$, which will complete the proof. Targeting a contradiction, suppose that $S\setminus T$ is not a solution to $I$, thus there exists an induced $P_3$, say, $P$, in $G-(S\setminus T)$.  Observe that $P$ contains at least one vertex from $T$, as $S$ is a solution. By Lemma \ref{lem:moduleUnitHelper}, this means that it must contain exactly one vertex from $T$. So, let $u$ denote the only vertex in $T\cap V(P)$. As $T$ is a module, by replacing $u$ by $v$, we obtain yet another induced $P_3$, and this one belongs to $G-S$. This is a contradiction as $S$ is a solution to $I$.
\end{proof}

Further, given a solution that contains $D$, we may exchange a clique by its neighborhood  and still have a solution, as stated below.

\begin{lemma}\label{lem:moduleExchange}
Let $I=G$ be an instance of {\sc Cluster Vertex Deletion}.
Given $0<\epsilon<1$, $G$ and an optimal solution $\alpha$ to the {\sc $3$-Hitting Set} instance corresponding to $G$, let $\{\mathsf{mark}(v)\}|_{v\in\alpha^{-1}(1)},D$ be the output of {\sf Marking}$(\epsilon,G,\alpha)$. Let $S^\star$ be a solution to $I$. Then, for every clique $C$ in $G-(\mathsf{support}(\alpha)\cup(\bigcup_{v\in\alpha^{-1}(1)}\mathsf{mark}(v)))$, we have that $(S^\star\setminus V(C))\cup N_G(V(C))$ is a solution to~$I$.
\end{lemma}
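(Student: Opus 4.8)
The plan is to prove directly that $G-S'$ contains no induced $P_3$, where $S'=(S^\star\setminus V(C))\cup N_G(V(C))$; by the characterization of cluster graphs (no induced $P_3$) this is exactly the statement that $S'$ is a solution to $I$. In fact I expect to need only that $C$ induces a clique in $G$ — neither the optimal LP solution $\alpha$ nor the output of the marking procedure will enter the argument. The key structural observation is that in $G-S'$ the set $V(C)$ is ``sealed off'': since $N_G(V(C))\subseteq S'$ while $V(C)\cap N_G(V(C))=\emptyset$ (here $N_G(\cdot)$ denotes the \emph{open} neighborhood of a vertex set), every vertex of $V(C)$ survives in $G-S'$, and no surviving vertex outside $V(C)$ is adjacent to any vertex of $V(C)$. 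Because $G[V(C)]$ is complete, this means $V(C)$ is a union of clique connected components of $G-S'$.

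First I would use this to rule out any induced $P_3$ meeting $V(C)$. If an induced $P_3$ of $G-S'$ on vertices $a,b,c$ (with $b$ the center) had a vertex in $V(C)$, then since the $P_3$ is connected and every edge leaving $V(C)$ has its outside endpoint in $S'$, connectivity would force all of $a,b,c$ into $V(C)$; but then $a$ and $c$ would be adjacent (as $G[V(C)]$ is a clique), contradicting that $a,c$ are the non-adjacent endpoints of an induced $P_3$. Hence every induced $P_3$ of $G-S'$ — if one existed — would have all three vertices in $V(G-S')\setminus V(C)$.

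Next I would dispatch the remaining case by the routine set inclusion $V(G-S')\setminus V(C)\subseteq V(G-S^\star)$: any vertex $x$ of the left-hand side satisfies $x\notin V(C)$ and $x\notin S^\star\setminus V(C)$, and these two facts already force $x\notin S^\star$. Since $G-S'$ and $G-S^\star$ are both induced subgraphs of $G$, they induce the same graph on the common vertex set $V(G-S')\setminus V(C)$, so an induced $P_3$ of $G-S'$ avoiding $V(C)$ would be an induced $P_3$ of $G-S^\star$. As $S^\star$ is a solution to $I$, $G-S^\star$ is a cluster graph and thus has no induced $P_3$, a contradiction. Therefore $G-S'$ has no induced $P_3$, i.e.\ $S'=(S^\star\setminus V(C))\cup N_G(V(C))$ is a solution to $I$.

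I do not expect a genuinely hard step: the whole argument is an ``exchange'' of the clique for its neighborhood, safe because $C$ is a module-like clique that gets disconnected from the rest once $N_G(V(C))$ is deleted. The only points needing care are the set bookkeeping around open neighborhoods of sets — in particular that $V(C)$ and $N_G(V(C))$ are disjoint, so deleting $N_G(V(C))$ removes exactly the $V(C)$-to-rest edges while keeping all of $V(C)$ — and the small graph-theoretic fact that a $P_3$, being connected and containing a non-edge, cannot straddle the boundary of a clique component.
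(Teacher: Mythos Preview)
Your proof is correct and follows essentially the same approach as the paper's: both argue that after removing $N_G(V(C))$ the clique $V(C)$ becomes an isolated component (so no induced $P_3$ can meet it), and that any induced $P_3$ avoiding $V(C)$ would already be an induced $P_3$ in $G-S^\star$, contradicting that $S^\star$ is a solution. Your observation that neither $\alpha$ nor the marking procedure is actually used is also accurate---the lemma only needs that $V(C)$ induces a clique in $G$.
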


\begin{proof}
Consider some clique $C$ in $G-(\mathsf{support}(\alpha)\cup(\bigcup_{v\in\alpha^{-1}(1)}\mathsf{mark}(v)))$. As $S^\star$ is a solution to $I$, every induced $P_3$ in $G-(S^\star\setminus V(C))$ must contain at least one vertex from $C$. However, in $G-N_G(V(C))$, $C$ is an isolated clique and hence there exists no induced $P_3$ that contains at least one of its vertices. So, $G-(S^\star\setminus V(C))\cup N_G(V(C))$ does not have any induced $P_3$, and hence $(S^\star\setminus V(C))\cup N_G(V(C))$ is a solution to $I$.
\end{proof}

So, Lemmata \ref{lem:moduleUnit} and \ref{lem:moduleExchange} naturally give rise to a reduction operation where each clique whose size is larger than the size of its neighbrhood is shrunk to the size of its neighborhood.

Finally, we devise our merged reduction rule:

\begin{definition}\label{def:CVDRule}
The {\em {\sc  Cluster Vertex Deletion} reduction rule} is defined as follows:
\begin{itemize}
\item {\bf reduce:} Let $I=G$ be an instance of {\sc Cluster Vertex Deletion}. Use the algorithm in Proposition \ref{prop:solveLP} to compute an optimal solution $\alpha$ to the classic LP corresponding to it (Definitions \ref{def:HSLP} and \ref{def:CVDtoHS}). Let $\{\mathsf{mark}(v)\}|_{v\in\alpha^{-1}(1)},D$ be the output of {\sf Marking}$(\epsilon,G,\alpha)$.
Output $I'=G'$ where $G'$ is obtained from $G-D$ as follows: for every clique $C$ in $G-(\mathsf{support}(\alpha)\cup(\bigcup_{v\in\alpha^{-1}(1)}\mathsf{mark}(v)))$, delete (arbitrarily chosen) $\max\{|V(C)|-|N_{G-D}(V(C))|\}$ vertices from $C$.

\item {\bf lift:} Given $I,I'$ and a solution $S'$ to $I'$, output  $S=D\cup S''$ where $S''$ is obtained from $S'$ as follows: for every clique $C$ in $G-(\mathsf{support}(\alpha)\cup(\bigcup_{v\in\alpha^{-1}(1)}\mathsf{mark}(v)))$ such that $V(C)$ is not contained in $G'$ and $V(C)\subseteq S'$, remove $V(C)\cap S'$ and add $N_G(V(C))$ instead.
\end{itemize}
\end{definition}

Before we prove our main theorem, we present a simple lemma that will help us derive a tighter bound on the number of vertices in the output graph.

\begin{lemma}\label{lem:cvdSupport}
Let $I=G$ be an instance of {\sc Cluster Vertex Deletion}, and let $\alpha$ be a solution to the {\sc $3$-Hitting Set} instance corresponding to $G$. Then, $|\mathsf{support}(\alpha)|\leq 3\mathsf{frac}(I)-2|\alpha^{-1}(1)|$.
\end{lemma}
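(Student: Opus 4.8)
The plan is to prove the bound $|\mathsf{support}(\alpha)| \le 3\mathsf{frac}(I) - 2|\alpha^{-1}(1)|$ by a direct counting argument on the LP objective, exploiting the fact that $\mathsf{frac}(I) = \sum_{v \in V(G)} \alpha(v)$ since $\alpha$ is (by hypothesis) optimal. First I would partition $\mathsf{support}(\alpha)$ into $A_1 = \alpha^{-1}(1)$ (the vertices receiving value exactly $1$) and $A_{<1} = \mathsf{support}(\alpha) \setminus A_1$ (the vertices receiving a value strictly between $0$ and $1$); note that optimality of $\alpha$ means it assigns no value greater than $1$, so this is a genuine partition. Write $a = |A_1|$ and $b = |A_{<1}|$, so that $|\mathsf{support}(\alpha)| = a + b$.

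**The key inequality.**

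The heart of the argument is to show that every vertex in $A_{<1}$ carries LP-value at least $\tfrac{1}{3}$, which gives $\mathsf{frac}(I) = \sum_v \alpha(v) \ge a \cdot 1 + b \cdot \tfrac{1}{3} = a + \tfrac{b}{3}$, and rearranging yields $b \le 3\mathsf{frac}(I) - 3a$, hence $a + b \le 3\mathsf{frac}(I) - 2a = 3\mathsf{frac}(I) - 2|\alpha^{-1}(1)|$ as desired. To establish the claim that $\alpha(v) \ge \tfrac{1}{3}$ for $v \in \mathsf{support}(\alpha)$: I would invoke complementary slackness (Proposition~\ref{prop:slackness}), exactly as in the proof of Theorem~\ref{thm:support}. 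Since $\alpha(v) > 0$, the corresponding dual constraint is tight, i.e.\ $\sum_{S \in \mathcal{F}: v \in S} \beta(x_S) = 1$ for an optimal dual solution $\beta$; in particular there is at least one constraint (induced $P_3$) $S$ with $v \in S$ and $\beta(x_S) > 0$, hence that constraint is tight in the primal: $\sum_{u \in S} \alpha(u) = 1$. Since $|S| = 3$ (it is an induced $P_3$), and $\alpha$ assigns no value exceeding... actually this only gives that some coordinate in $S$ is $\ge \tfrac13$, not necessarily $v$. So I would instead argue more carefully: among the three vertices of $S$, suppose for contradiction two of them lie outside $\mathsf{support}(\alpha)$; then the third must carry value $\ge 1$, forcing $\alpha = 1$ there and contradicting... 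Hmm, the cleaner route: if $v \in \mathsf{support}(\alpha) \setminus \alpha^{-1}(1)$, I want $\alpha(v) \ge \tfrac13$. Actually the intended lemma may just follow from the general fact (as in Lemma~\ref{lem:dHSFinalElemSize}-style reasoning with $d=3$): after noting every tight set containing $v$ has its values summing to $1$ over $3$ vertices. I would therefore prove: \emph{every} vertex in $\mathsf{support}(\alpha)$ has value $\ge \tfrac13$ — because if $0 < \alpha(v) < \tfrac13$, perturbing $\alpha$ downward at $v$ and using slackness to redistribute would contradict optimality, or more simply, every constraint through $v$ involves two other vertices whose values then must sum to $> \tfrac23$, and a short averaging/exchange argument over all such constraints shows $v$'s value can be pushed down to $0$ without violating feasibility, contradicting that $\alpha$ is a vertex (or minimum-support) optimum.

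**The main obstacle.**

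The delicate point — and the step I expect to require the most care — is precisely the claim that optimal LP solutions to the $3$-Hitting Set LP can be taken with all support values $\ge \tfrac13$, or equivalently that the specific optimal $\alpha$ produced by Proposition~\ref{prop:solveLP} has this property. The paper likely intends $\alpha$ here to be a basic/vertex optimal solution, for which a cleaner structural claim holds, or it relies on the earlier observation that after the element reduction rule one may assume values lie in a restricted range. I would resolve this by either (i) restricting attention to a basic optimal solution and invoking the half-integrality-type structure of hitting-set-LP basic solutions together with complementary slackness to pin down that each support value is at least $\tfrac13$, or (ii) noting that it suffices to exhibit \emph{some} optimal $\alpha$ with this property and the marking procedure is free to choose it — then constructing it greedily by repeatedly lowering any support coordinate below $\tfrac13$ to zero while preserving feasibility (possible since the two partners in every violated constraint then sum to more than $\tfrac23 > $ the slack needed), which terminates and yields an optimal solution of the required form. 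With the key inequality $\mathsf{frac}(I) \ge |\alpha^{-1}(1)| + \tfrac13(|\mathsf{support}(\alpha)| - |\alpha^{-1}(1)|)$ in hand, the final rearrangement is immediate.
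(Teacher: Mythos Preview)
Your central claim --- that every vertex $v\in\mathsf{support}(\alpha)$ satisfies $\alpha(v)\ge \tfrac13$ for an optimal LP solution $\alpha$ --- is false, and this breaks the argument. Take $G$ to be a single induced $P_3$ on $\{a,b,c\}$. The LP optimum is $1$, and $\alpha=(0.1,\,0.9,\,0)$ is optimal with $a\in\mathsf{support}(\alpha)$ and $\alpha(a)=0.1<\tfrac13$. You noticed the first attempt via complementary slackness only gives that \emph{some} vertex of a tight set has value $\ge\tfrac13$, not the particular $v$; that observation was correct, and the gap is real. Your fix (ii) does not work either: in the same example, lowering $\alpha(a)$ from $0.1$ to $0$ makes the solution infeasible (the remaining two coordinates sum only to $0.9$). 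The phrase ``the two partners sum to more than $\tfrac23>$ the slack needed'' is the error --- you need the partners to sum to at least $1$, not $\tfrac23$. Fix (i), restricting to basic optimal solutions, would change the statement: the lemma is stated (and used) for an arbitrary optimal $\alpha$, and in any case you would still owe a proof that basic solutions have the claimed property.

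The paper's argument avoids this issue entirely by not trying to bound individual coordinate values. Instead it restricts $\alpha$ to the subinstance $G'=G-\alpha^{-1}(1)$, shows by a simple extension argument that this restriction $\beta$ is \emph{optimal} for the LP of $G'$ with value $\mathsf{frac}(I)-|\alpha^{-1}(1)|$, and then applies Theorem~\ref{thm:support} directly to $\beta$ to get $|\mathsf{support}(\beta)|\le 3\,\mathsf{frac}(G')=3(\mathsf{frac}(I)-|\alpha^{-1}(1)|)$. Adding back the $|\alpha^{-1}(1)|$ vertices gives the bound. The point is that Theorem~\ref{thm:support} already works for \emph{any} optimal solution (via duality, not via a lower bound on coordinates), so one only needs to exhibit optimality of the restriction.
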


\begin{proof}
Let $\beta$ denote the restriction of $\alpha$ to $G-\alpha^{-1}(1)$. Observe that $\beta$ is a solution to the classic LP of the {\sc $3$-Hitting Set} instance corresponding to $G$ and its value is  $\mathsf{frac}(I)-|\alpha^{-1}(1)|$. Further, consider some solution $\beta'$ to the classic LP of the {\sc $3$-Hitting Set} instance corresponding to $G'=G-\alpha^{-1}(1)$, thought of as an instance $I'$ of {\sc Cluster Vertex Deletion}. Then, by extending $\beta'$ to assign $1$ to each vertex in $\alpha^{-1}(1)$, we obtain a solution $\alpha'$ to the classic LP of the {\sc $3$-Hitting Set} instance corresponding to $G$. As $\alpha$ is optimal, its value is at most that of $\alpha'$. Thus, it must hold that the value of $\beta'$ is at least that of $\beta$. Since the choice of $\beta'$ was arbitrary, this implies that $\beta$ is optimal. Hence, by Theorem \ref{thm:support}, 
\[|\mathsf{support}(\beta)|\leq 3\cdot\mathsf{frac}(I')=3\cdot(\mathsf{frac}(I)-|\alpha^{-1}(1)|).\]
Thus, we have that
\[|\mathsf{support}(\alpha)|=|\mathsf{support}(\beta)|+|\alpha^{-1}(1)|\leq 3\mathsf{frac}(I)-2|\alpha^{-1}(1)|.\]
This completes the proof.
\end{proof} 

Based on Lemmata \ref{lem:createModules}, \ref{lem:cvdSafe1}, \ref{lem:cvdSeeOne}, \ref{lem:moduleUnit}, \ref{lem:moduleExchange} and \ref{lem:cvdSupport}, we are now ready to prove the main theorem of this subsection.

\begin{theorem}\label{thm:cvd}
Let $0<\epsilon<1$. The {\sc Cluster Vertex Deletion} problem, parameterized by the fractional optimum of the classic LP, admits a $(1+\epsilon)$-approximate $\max(6,\frac{4}{\epsilon})\cdot\mathsf{frac}$-vertex kernel.
\end{theorem}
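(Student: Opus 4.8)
The plan is to package the sequence of lemmas already established into the formal $(1+\epsilon)$-approximate kernelization algorithm, verify that its {\bf reduce} and {\bf lift} procedures run in polynomial time, bound the number of vertices of the output graph $G'$, and finally check the approximation guarantee. Concretely, {\bf reduce} solves the classic LP of $\mathsf{HS}(G)$ via Proposition \ref{prop:solveLP}, runs {\sf Marking}$(\epsilon,G,\alpha)$, deletes $D$, and shrinks each clique $C$ of $G-(\mathsf{support}(\alpha)\cup\bigcup_{v\in\alpha^{-1}(1)}\mathsf{mark}(v))$ down to the size of $N_{G-D}(V(C))$; {\bf lift} adds $D$ back and, for each shrunk clique fully contained in $S'$, swaps $V(C)\cap S'$ for $N_G(V(C))$. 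Polynomial running time is immediate: LP-solving is polynomial, the marking loop runs once over $\alpha^{-1}(1)$ computing maximal matchings greedily, and the shrinkage/lift steps are linear-time bookkeeping.

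\smallskip
\noindent\textbf{Bounding the vertices.} First I would observe that the vertex set of $G'$ is contained in $\bigl(\mathsf{support}(\alpha)\setminus D\bigr)\cup M\cup X$, where $M=\bigcup_{v\in\alpha^{-1}(1)}\mathsf{mark}(v)$ and $X$ is the set of surviving clique vertices. By Lemma \ref{lem:cvdSeeOne}, each clique $C$ has $N_G(V(C))\subseteq \mathsf{support}(\alpha)\cup M$, and — crucially — distinct cliques have pairwise-disjoint neighborhoods, because each vertex of $\mathsf{support}(\alpha)\setminus D$ sees at most one clique, while each vertex of $M$ sees at most one clique too (it lies in exactly one $\mathsf{mark}(v)$, hence after removal of all marked vertices it is adjacent to a single clique or none — here I should double-check that the marking order does not create a vertex of $M$ adjacent to two surviving cliques; this is handled because $V(C)\cap M=\emptyset$ for every surviving clique, so Lemma \ref{lem:cvdSeeOne} applies verbatim to $M$-vertices as well, as used implicitly in the overview). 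After shrinkage, $|X|=\sum_C\min(|V(C)|,|N_{G-D}(V(C))|)\le\sum_C|N_{G-D}(V(C))|\le|\mathsf{support}(\alpha)\setminus D|+|M|$ by disjointness of the neighborhoods. Hence $|V(G')|\le 2(|\mathsf{support}(\alpha)\setminus D|+|M|)$. Now $|M|\le\sum_{v\in\alpha^{-1}(1)}\frac1\epsilon=\frac1\epsilon|\alpha^{-1}(1)|$ and, by Lemma \ref{lem:cvdSupport}, $|\mathsf{support}(\alpha)|\le 3\mathsf{frac}-2|\alpha^{-1}(1)|$. Writing $a=|\alpha^{-1}(1)|$, we get $|V(G')|\le 2(3\mathsf{frac}-2a)+\frac2\epsilon a=6\mathsf{frac}+(\frac2\epsilon-4)a$; since $a\le\mathsf{frac}$ and for $\epsilon<1$ we have $\frac2\epsilon-4$ could be positive, the worst case is $a=\mathsf{frac}$, giving $|V(G')|\le 6\mathsf{frac}+(\frac2\epsilon-4)\mathsf{frac}=(2+\frac2\epsilon)\mathsf{frac}\le\max(6,\frac4\epsilon)\cdot\mathsf{frac}$. (When $\epsilon\ge\frac12$ the term $\frac2\epsilon-4$ is nonpositive and the bound is $6\mathsf{frac}$; for $\epsilon<\frac12$ it is at most $\frac4\epsilon\mathsf{frac}$.) So the claimed vertex bound holds, and it is in fact \good\ since the bound is phrased in $\mathsf{frac}$ of the input but $\mathsf{frac}(I')\le\mathsf{frac}(I)$ is irrelevant here — actually I should state it in terms of $\mathsf{frac}(I)$ as the theorem does.

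\smallskip
\noindent\textbf{Approximation.} For the ratio, let $S'$ be a solution to $I'$ and $S$ the lifted solution; I must show $\frac{|S|}{\mathsf{opt}(I)}\le(1+\epsilon)\frac{|S'|}{\mathsf{opt}(I')}$. The key facts: (i) $S$ is a feasible solution to $I$ — by Lemma \ref{lem:moduleExchange} each clique-swap preserves feasibility in $G$, and by Lemma \ref{lem:moduleUnit} (the module/clique structure from Lemma \ref{lem:createModules}) the swap only replaces a fully-deleted clique by its neighborhood; $D$ is feasible to add; (ii) $|S|\le|S'|+|D|$, since each swap does not increase size ($|N_G(V(C))|\le|V(C)|$ whenever the clique was shrunk, i.e.\ whenever $V(C)\not\subseteq G'$; if the clique was not shrunk the swap is not triggered), and adding $D$ costs $|D|\le a$; (iii) $\mathsf{opt}(I')\le\mathsf{opt}(I)-|D|+(\text{correction})$ — more precisely, from an optimal $S^\star$ to $I$, Lemma \ref{lem:cvdSafe1} gives $|D\setminus S^\star|\le\epsilon\,\mathsf{opt}(I)$, so $S^\star\cup D$ is a solution to $I$ of size $\le(1+\epsilon)\mathsf{opt}(I)$ containing $D$, and restricting it to $G'$ (applying Lemma \ref{lem:moduleUnit}/\ref{lem:moduleExchange} to shrink to $G'$'s clique sizes) yields a solution to $I'$, so $\mathsf{opt}(I')\le(1+\epsilon)\mathsf{opt}(I)-|D|$. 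Combining via Proposition \ref{prop:numbers}: $\frac{|S|}{\mathsf{opt}(I)}\le\frac{|S'|+|D|}{\mathsf{opt}(I)}$, and since $\mathsf{opt}(I)\ge\frac{\mathsf{opt}(I')+|D|}{1+\epsilon}$, we obtain $\frac{|S|}{\mathsf{opt}(I)}\le(1+\epsilon)\frac{|S'|+|D|}{\mathsf{opt}(I')+|D|}\le(1+\epsilon)\max\bigl(\frac{|S'|}{\mathsf{opt}(I')},1\bigr)\le(1+\epsilon)\frac{|S'|}{\mathsf{opt}(I')}$, where the last step uses $|S'|\ge\mathsf{opt}(I')$. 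Invoking Proposition \ref{prop:strictRule} would be cleaner if the rule were strict, but here I expect to argue the bound directly as above.

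\smallskip
\noindent\textbf{Main obstacle.} The delicate point I anticipate is step (iii): showing that a $D$-containing near-optimal solution to $I$ restricts to a solution to $I'$ of the right size. One must argue that after removing $D$ and shrinking every clique $C$ to size $|N_{G-D}(V(C))|$, the image of $S^\star\cup D$ — appropriately re-routed through clique/neighborhood swaps so that it never "half-deletes" a shrunk clique — is still feasible in $G'$ and has not grown. This re-routing is exactly the content of Lemmata \ref{lem:moduleUnit} and \ref{lem:moduleExchange}, applied now in the forward direction, but it requires care that the modules of $G-D$ (guaranteed by Lemma \ref{lem:createModules}) are still modules in $G'$ after shrinkage and after intersecting with the chosen solution, and that the neighborhoods used for swapping lie inside $V(G')$. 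I would spell this out as a short auxiliary claim before assembling the ratio computation.
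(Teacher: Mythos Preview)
Your proposal follows the same approach as the paper: assemble the {\sc Cluster Vertex Deletion} reduction rule from the marking procedure and the clique-shrinkage, bound $|V(G')|$ via Lemma~\ref{lem:cvdSupport} and disjointness of clique neighbourhoods (Lemma~\ref{lem:cvdSeeOne}), and derive the ratio from $|S|\le|S'|+|D|$ together with $\mathsf{opt}(I')\le(1+\epsilon)\mathsf{opt}(I)-|D|$ and Proposition~\ref{prop:numbers}. Two small points deserve correction.

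First, the bound $|M|\le\frac{1}{\epsilon}|\alpha^{-1}(1)|$ is off by a factor of~$2$: each $\mathsf{mark}(v)=\bigcup\nu_v$ is the vertex set of a matching of size at most $\frac{1}{\epsilon}$, so $|\mathsf{mark}(v)|\le\frac{2}{\epsilon}$ and $|M|\le\frac{2}{\epsilon}|\alpha^{-1}(1)|$. Plug this corrected bound into your computation. Also, the disjointness of clique neighbourhoods restricted to $M$-vertices is not via Lemma~\ref{lem:cvdSeeOne} (which only treats $v\in\mathsf{support}(\alpha)\setminus D$); rather, every vertex of $M$ lies in $V(G)\setminus\mathsf{support}(\alpha)$ and hence in a single clique of the cluster graph $G-\mathsf{support}(\alpha)$, so its neighbours outside $\mathsf{support}(\alpha)\cup M$ all lie in that one clique.

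Second, your ``main obstacle'' is not an obstacle: $G'$ is an induced subgraph of $G$, so for any optimal solution $S^\star$ to $I$, the set $S^\star\cap V(G')$ is automatically a solution to $I'$ (an induced subgraph of a cluster graph is a cluster graph). Since $S^\star\cap V(G')\subseteq S^\star\setminus D$ and Lemma~\ref{lem:cvdSafe1} gives $|S^\star\setminus D|\le(1+\epsilon)\mathsf{opt}(I)-|D|$, the inequality $\mathsf{opt}(I')\le(1+\epsilon)\mathsf{opt}(I)-|D|$ follows in one line, with no re-routing or swaps needed in the forward direction.
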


\begin{proof}
Our lossy kernelization algorithm consists only of the {\sc  Cluster Vertex Deletion} reduction rule. Clearly, it runs in polynomial time. 

First, we consider the number of vertices in the output graph $G'$ of {\bf reduce}. By Lemma~\ref{lem:cvdSupport}, $|\mathsf{support}(\alpha)|\leq 3\cdot\mathsf{frac}(I)-2|\alpha^{-1}(1)|$ (I). Moreover, $|\bigcup_{v\in\alpha^{-1}(1)}\mathsf{mark}(v)| = 2|\bigcup_{v\in\alpha^{-1}(1)}\nu_v|\leq \frac{2}{\epsilon}|\alpha^{-1}(1)|$ (II). By the definition of the reduction rule, for every clique $C$ in $G'-(\mathsf{support}(\alpha)\cup(\bigcup_{v\in\alpha^{-1}(1)}\mathsf{mark}(v)))$, $|V(C)|\leq |N_{G'}(V(C))|$. Additionally, by Lemma \ref{lem:cvdSeeOne}, the neigborhood sets of these cliques are pairwise vertex disjoint. This implies that, altogether, these cliques contain at most $|\mathsf{support}(\alpha)\setminus\alpha^{-1}(1)\cup(\bigcup_{v\in\alpha^{-1}(1)}\mathsf{mark}(v)))|$ vertices. Thus, because the vertex set of $G'$ consists only of these cliques and of $\mathsf{support}(\alpha)\setminus\alpha^{-1}(1)\cup(\bigcup_{v\in\alpha^{-1}(1)}\mathsf{mark}(v))$, we conclude that
\[\begin{array}{lll}
|V(G')| & \leq 2|\mathsf{support}(\alpha)| -2|\alpha^{-1}(1)| + 2|\bigcup_{v\in\alpha^{-1}(1)}\mathsf{mark}(v)| & [\mbox{Last two sentences}]\\

&\leq 2|\mathsf{support}(\alpha)|-2|\alpha^{-1}(1)| + \frac{4}{\epsilon}|\alpha^{-1}(1)|& [\mbox{(II)}]\\

&\leq 6\cdot\mathsf{frac}(I)-4|\alpha^{-1}(1)|-2|\alpha^{-1}(1)| + \frac{4}{\epsilon}|\alpha^{-1}(1)|& [\mbox{(I)}]\\

& = 6\cdot\mathsf{frac}(I) + (\frac{4}{\epsilon}-6)|\alpha^{-1}(1)| &\\

& \leq \max(6,\frac{4}{\epsilon})\cdot\mathsf{frac}(I) & [|\alpha^{-1}(1)|\leq \mathsf{frac}(I)].
\end{array}\]

We turn to prove that {\bf lift} returns a solution having the desired approximation ratio. To this end, suppose that it is given $I,I',S'$ where $S'$ is a solution to $I'$. First, notice that $S'\cup (V(G)\setminus V(G'))$ is a solution to $I$. Thus, because $S$ can be obtained from $S'\cup (V(G)\setminus V(G'))$ by doing deletion and exchange operations as described in Lemmata \ref{lem:moduleUnit} and \ref{lem:moduleExchange}, these lemmata imply that $S$ is a solution to $I$. 

Now, we consider the approximation ratio of $S$. For this, on the one hand, let $\widehat{S}$ be a minimal solution contained in $S'$. By Lemmata \ref{lem:createModules} and \ref{lem:moduleUnit}, for every clique $C'$ in $G'-(\mathsf{support}(\alpha)\cup(\bigcup_{v\in\alpha^{-1}(1)}\mathsf{mark}(v)))$, either $V(C')\subseteq \widehat{S}$ or $V(C')\cap\widehat{S}=\emptyset$. So, because every clique $C$ in $G-(\mathsf{support}(\alpha)\cup(\bigcup_{v\in\alpha^{-1}(1)}\mathsf{mark}(v)))$ such that $V(C)$ is not contained in $G'$ satisfies that $|V(C)\cap V(G')|=|N_{G-D}(V(C))|$, we know that $|S\setminus D|=|\widehat{S}|\leq |S'|$.  Hence, {\em (i)} $|S|\leq |S'|+|D|$. On the other hand, let $S^\star$ be an optimal solution to $I$. By Lemma \ref{lem:cvdSafe1}, $|D\setminus S^\star|\leq \epsilon\mathsf{opt}(I)$. Thus, $|S^\star\setminus D|=\mathsf{opt}(I)-|S^\star\cap D|\geq \mathsf{opt}(I)-(|D|-\epsilon\mathsf{opt}(I))=(1+\epsilon)\mathsf{opt}(I)-|D|$. Further, as $S^\star\cap V(G')$, which is a subset of $S^\star\setminus D$, is a solution to $I'$, we have that $\mathsf{opt}(I')\leq |S^\star\setminus D|$, and hence {\em (ii)} $\mathsf{opt}(I')\leq (1+\epsilon)\mathsf{opt}(I)-|D|$. From {\em (i)} and {\em (ii)}, we conclude that
\[\frac{|S|}{\mathsf{opt}(I)}\leq (1+\epsilon)\frac{|S'|+|D|}{\mathsf{opt}(I')+|D|}\leq (1+\epsilon)\max\{\frac{|S'|}{\mathsf{opt}(I')},\frac{|D|}{|D|}\} = (1+\epsilon)\frac{|S'|}{\mathsf{opt}(I')}.\]
Here, the last inequality follows from Proposition \ref{prop:numbers}. This completes the proof.
\end{proof}

\begin{corollary}
Let $0<\epsilon<1$. The {\sc Cluster Vertex Deletion} problem, parameterized by the optimum, admits a $(1+\epsilon)$-approximate $\max(6,\frac{4}{\epsilon})\cdot\mathsf{opt}$-vertex kernel.
\end{corollary}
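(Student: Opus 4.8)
The plan is to obtain this corollary immediately from Theorem~\ref{thm:cvd} by reparameterizing, in exactly the same manner in which the $\mathsf{opt}$-parameterized corollary of Theorem~\ref{thm:HSElementFrac} was deduced. The only two facts needed are that $\mathsf{frac}(I)\le\mathsf{opt}(I)$ for every instance $I$ (the classic LP of Definition~\ref{def:HSLP} is a relaxation of the integral problem), and that the size bound $\max(6,\frac{4}{\epsilon})\cdot\mathsf{frac}$ is a monotone non-decreasing function of the parameter.

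Concretely, I would take the pair $(\textbf{reduce},\textbf{lift})$ given by Theorem~\ref{thm:cvd} and re-interpret it verbatim as a kernelization algorithm for the $\mathsf{opt}$-parameterization; polynomial running time is inherited unchanged. It then remains to re-check the three conditions in the definition of an $\alpha$-approximate kernelization algorithm. Size: on input $I$, \textbf{reduce} outputs $I'$ with $|V(G')|\le\max(6,\frac{4}{\epsilon})\cdot\mathsf{frac}(I)\le\max(6,\frac{4}{\epsilon})\cdot\mathsf{opt}(I)$, which is the required vertex bound. Parameter non-increase: the proof of Theorem~\ref{thm:cvd} already exhibits, for an optimal solution $S^\star$ to $I$, the inequalities $\mathsf{opt}(I')\le|S^\star\setminus D|\le\mathsf{opt}(I)$, so the (structural) parameter does not grow. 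Approximation: since both parameterizations are structural, $\pi$ is dropped, and the bound
\[\frac{|S|}{\mathsf{opt}(I)}\le(1+\epsilon)\frac{|S'|}{\mathsf{opt}(I')}\]
established in the proof of Theorem~\ref{thm:cvd} makes no reference to the parameter and therefore carries over unchanged. Together these give a $(1+\epsilon)$-approximate $\max(6,\frac{4}{\epsilon})\cdot\mathsf{opt}$-vertex kernel.

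I expect no genuine obstacle: the statement is pure bookkeeping on top of Theorem~\ref{thm:cvd}. The one point to state with care is that we are passing to a \emph{weaker} parameter ($\mathsf{opt}\ge\mathsf{frac}$), so it is monotonicity of the size function that preserves the bound — exactly the reasoning used for the $\mathsf{opt}$-parameterized corollary of Theorem~\ref{thm:HSElementFrac} (and, should one additionally want the version parameterized by a bound $k$ on the solution size, one would invoke Lemma~\ref{lem:lossyKerOptToK}).
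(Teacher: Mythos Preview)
Your proposal is correct and matches the paper's (implicit) reasoning exactly: the corollary is stated without proof in the paper, relying on precisely the observation that $\mathsf{frac}(I)\le\mathsf{opt}(I)$ together with monotonicity of the size bound, just as you describe and just as was done for the $\mathsf{opt}$-parameterized corollary of Theorem~\ref{thm:HSElementFrac}.
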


Due to Lemma \ref{lem:lossyKerOptToK}, we also have the following corollary of Theorem \ref{thm:cvd}.

\begin{corollary}
Let $0<\epsilon<1$. The {\sc Cluster Vertex Deletion} problem, parameterized by a bound $k$ on the solution size, admits a $(1+\epsilon)$-approximate $\max(\frac{6}{1+\epsilon},\frac{4}{(1+\epsilon)\epsilon})\cdot(k+1)$-vertex kernel.
\end{corollary}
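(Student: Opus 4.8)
The plan is to derive this corollary directly from the immediately preceding corollary of Theorem~\ref{thm:cvd} --- the one asserting that {\sc Cluster Vertex Deletion}, parameterized by the optimum, admits a $(1+\epsilon)$-approximate $\max(6,\frac{4}{\epsilon})\cdot\mathsf{opt}$-vertex kernel --- by invoking Lemma~\ref{lem:lossyKerOptToK}. Concretely, I would instantiate Lemma~\ref{lem:lossyKerOptToK} with $\Pi$ being {\sc Cluster Vertex Deletion}, with $\alpha=1+\epsilon$, and with the element-count function (here, vertex-count function) $g(\mathsf{opt})=\max(6,\frac{4}{\epsilon})\cdot\mathsf{opt}$. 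Since {\sc Cluster Vertex Deletion} is a minimization problem in which computing an arbitrary feasible solution is trivial and which admits a constant-factor polynomial-time approximation algorithm, the framework of Section~\ref{sec:prelims} applies and the hypotheses of Lemma~\ref{lem:lossyKerOptToK} are met. The lemma then yields, for the parameterization by a solution-size bound $k$, a $(1+\epsilon)$-approximate kernel with at most $g\!\left(\frac{k+1}{1+\epsilon}\right)$ vertices.

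It then only remains to simplify $g\!\left(\frac{k+1}{1+\epsilon}\right)$. Since $\frac{1}{1+\epsilon}>0$, and for any positive constant $c$ and positive reals $a,b$ one has $c\cdot\max(a,b)=\max(ca,cb)$, we obtain
\[
g\!\left(\frac{k+1}{1+\epsilon}\right)=\max\!\left(6,\tfrac{4}{\epsilon}\right)\cdot\frac{k+1}{1+\epsilon}=\max\!\left(\frac{6}{1+\epsilon},\frac{4}{(1+\epsilon)\epsilon}\right)\cdot(k+1),
\]
which is exactly the bound claimed in the statement, so the argument is complete.

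I do not anticipate a genuine obstacle here: all the substantive work has already been carried out in Theorem~\ref{thm:cvd} (and the corollary that passes from the $\mathsf{frac}$-parameterization to the $\mathsf{opt}$-parameterization) and in Lemma~\ref{lem:lossyKerOptToK}. The only points requiring mild care are checking that Lemma~\ref{lem:lossyKerOptToK} is being applied with $g$ set to the vertex-count bound (rather than the overall size bound), that the $\alpha$-approximate guarantee is preserved verbatim by the lemma, and the elementary algebraic step of pulling $\frac{1}{1+\epsilon}$ inside the maximum.
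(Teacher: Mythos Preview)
Your proposal is correct and follows essentially the same approach as the paper: the paper simply states that the corollary follows from Theorem~\ref{thm:cvd} by applying Lemma~\ref{lem:lossyKerOptToK}, which is precisely what you do (via the intermediate $\mathsf{opt}$-parameterized corollary), together with the straightforward algebraic simplification of $g\!\left(\frac{k+1}{1+\epsilon}\right)$.
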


\subsection{{\sc Feedback Vertex Set in Tournaments}}\label{sec:fvst}

Our lossy kernel will use Theorem \ref{thm:support} and consist of two lossy rules, each to be applied only once. The first rule (to which we will refer as the ``module revealing operation'') will ensure that,
with respect to some linear order on the vertices not in some approximate solution, all consecutive unmarked vertices between two marked vertices form a module and furthermore that there is an essentially unique position to place each vertex (including those in the approximate solution) between them, and the second one (``module shrinkage operation'') will reduce the size of each such module. For simplicity, we will actually merge them together to a single rule.
We begin by reminding that  {\sc Feedback Vertex Set in Tournaments} can be interpreted as a special case of {\sc $3$-Hitting Set}:

\begin{proposition}[\cite{pcbook}]
A tournament $G$ is acyclic if and only if it does not have any triangle (i.e., a directed cycle on three vertices).
\end{proposition}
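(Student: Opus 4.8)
The plan is to prove the two implications separately, with the forward direction being immediate and the reverse direction handled by a shortest-cycle argument.

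For the easy direction, if $G$ is acyclic then it contains no directed cycle of any length, and a triangle (a directed $3$-cycle) is in particular a directed cycle; hence $G$ has no triangle.

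For the converse I would argue the contrapositive: assume that $G$ is \emph{not} acyclic and show that $G$ contains a triangle. Since $G$ has at least one directed cycle, let $C = v_1 \to v_2 \to \cdots \to v_\ell \to v_1$ be a directed cycle in $G$ of \emph{minimum} length; note $\ell \ge 3$, since a tournament has no loops and no pair of opposite arcs between two vertices. I claim $\ell = 3$. Suppose towards a contradiction that $\ell \ge 4$; then $v_1, v_2, v_3$ are pairwise distinct, so, because $G$ is a tournament, exactly one of the arcs $(v_1, v_3)$ and $(v_3, v_1)$ is present. If $(v_1, v_3) \in E(G)$, then $v_1 \to v_3 \to v_4 \to \cdots \to v_\ell \to v_1$ is a directed cycle (its vertices are a subset of those of $C$, hence distinct, and all its arcs are present) of length $\ell - 1 < \ell$, contradicting the minimality of $C$. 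If instead $(v_3, v_1) \in E(G)$, then $v_1 \to v_2 \to v_3 \to v_1$ is a directed cycle of length $3 < \ell$ (here we use $\ell \ge 4$), again contradicting minimality. Hence $\ell = 3$, i.e., $G$ contains a triangle, which completes the proof.

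There is no real obstacle here — this is a textbook fact — so the only care needed is in the bookkeeping of the tournament property (every pair of distinct vertices is joined by exactly one arc, which is what guarantees that the chord between $v_1$ and $v_3$ exists and is oriented in exactly one of the two ways) and in verifying that the shortcut $v_1 \to v_3 \to v_4 \to \cdots \to v_1$ is genuinely a directed cycle. For later use in this section, I would also record the immediate consequence that $\fvstfull$ is an implicit {\sc $3$-Hitting Set} problem: a set $S \subseteq V(G)$ satisfies that $G - S$ is acyclic if and only if $S$ hits every triangle of $G$, so the associated family $\mathcal{F}$ consists of the vertex sets of the triangles of $G$.
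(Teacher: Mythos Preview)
Your proof is correct and is the standard shortest-cycle argument for this classical fact. The paper does not actually prove this proposition at all; it is stated as a citation to \cite{pcbook} and used as a black box, so there is no in-paper proof to compare against.
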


\begin{definition}\label{def:FVSTtoHS}
Given a tournament $G$, define the {\em {\sc $3$-Hitting Set} instance corresponding to $G$} by $\mathsf{HS}(G)=(V(G),\{\{u,v,w\}\subseteq V(G): G[\{u,v,w\}]$ is triangle $\})$.
\end{definition}

\begin{corollary}\label{cor:charByTriangles}
Let $G$ be a tournament. Then, a subset $S\subseteq V(G)$ is a solution to the {\sc $3$-Hitting Set} instance corresponding to $G$ if and only if $G-S$ is acyclic.
\end{corollary}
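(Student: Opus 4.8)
The plan is to simply unfold the definition of a solution to a {\sc $3$-Hitting Set} instance and then invoke the Proposition immediately above, applied to the subtournament $G-S$. First I would recall from Definition~\ref{def:FVSTtoHS} that the family of $\mathsf{HS}(G)$ consists exactly of the vertex sets $\{u,v,w\}$ for which $G[\{u,v,w\}]$ is a triangle. Hence, by the definition of a hitting set, $S\subseteq V(G)$ is a solution to $\mathsf{HS}(G)$ if and only if $S$ intersects every such set, i.e., if and only if there is no triple $\{u,v,w\}\subseteq V(G)\setminus S$ with $G[\{u,v,w\}]$ a triangle.

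Next I would note the elementary fact that $G-S$ is precisely the subtournament of $G$ induced on $V(G)\setminus S$, so for every triple $\{u,v,w\}\subseteq V(G)\setminus S$ we have $(G-S)[\{u,v,w\}]=G[\{u,v,w\}]$; in particular $G[\{u,v,w\}]$ is a triangle if and only if $(G-S)[\{u,v,w\}]$ is a triangle. Therefore the statement ``$S$ is a solution to $\mathsf{HS}(G)$'' is equivalent to ``$G-S$ contains no triangle''.

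Finally, since $G-S$ is itself a tournament, the preceding Proposition applied to $G-S$ gives that $G-S$ contains no triangle if and only if $G-S$ is acyclic. Chaining the two equivalences yields exactly the claim. I do not expect any real obstacle here: the only thing needing even a word of justification is the observation that a triangle is supported on exactly three vertices and is preserved under passing to the induced subtournament, so that deleting $S$ destroys a given triangle precisely when $S$ meets its three-element vertex set.
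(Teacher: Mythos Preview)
Your proposal is correct and is exactly the intended argument: the paper states this as an immediate corollary (with no proof given), and the two-line derivation you outline---unfold Definition~\ref{def:FVSTtoHS}, then apply the preceding Proposition to the subtournament $G-S$---is precisely the implicit reasoning.
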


To perform the module revealing operation, given a graph $G$, we will be working with an optimal solution $\alpha$ to the classic LP of the {\sc $3$-Hitting Set} instance corresponding to $G$. The approximate solution we will be working with will be the support of $\alpha$. For the sake of clarity, we slightly abuse notation and use vertices to refer both to vertices and to the variables corresponding to them, as well as use an instance of {\sc Feedback Vertex Set in Tournaments} to refer also to  the {\sc $3$-Hitting Set} instance corresponding to it when no confusion arises. We will use the following well-known characterization of acyclic digraphs.

\begin{proposition}[Folklore]\label{prop:linOrder}
A digraph $G$ is acyclic if and only if there exists a linear order $<$ on $V(G)$ such that for every arc $(u,v)\in E(G)$, $u<v$. Moreover, given an acyclic digraph $G$, such an order is computable in linear time, and if $G$ is a tournament, then this order is unique.
\end{proposition}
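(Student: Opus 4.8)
The plan is to prove the biconditional in both directions, then establish the algorithmic claims, and finally the uniqueness claim for tournaments. For the ``if'' direction, suppose such a linear order $<$ exists but $G$ has a directed cycle $v_1\to v_2\to\cdots\to v_\ell\to v_1$. Applying the order constraint to each arc along the cycle gives $v_1<v_2<\cdots<v_\ell<v_1$, hence $v_1<v_1$, contradicting irreflexivity of a linear order. So $G$ is acyclic. For the ``only if'' direction, suppose $G$ is acyclic. The cleanest route is to construct the order greedily: I claim every acyclic digraph on a nonempty vertex set has a \emph{source} (a vertex with in-degree $0$). Indeed, if every vertex had an incoming arc, then starting from any vertex and repeatedly walking backwards along incoming arcs, within $|V(G)|+1$ steps we must revisit a vertex, producing a directed cycle---a contradiction. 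Now define $<$ by repeatedly extracting a source, placing it next in the order, and deleting it from the digraph; the remaining digraph is still acyclic (deleting vertices cannot create cycles), so the process continues until all vertices are placed. By construction, whenever $(u,v)\in E(G)$, the vertex $u$ was extracted before $v$ (since $v$ still had the incoming arc from $u$ at the time $u$ was a source, so $v$ could not have been chosen earlier), hence $u<v$.

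For the algorithmic claim, the above extraction procedure is exactly the standard topological sort: maintaining in-degree counters and a queue of current sources, each arc is processed a constant number of times, giving a linear-time (in $|V(G)|+|E(G)|$) implementation. I would just cite this as the standard implementation of the construction above rather than re-derive the bookkeeping.

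For uniqueness when $G$ is a tournament: suppose $<_1$ and $<_2$ are both valid orders and differ, so there exist $u,v$ with $u<_1 v$ but $v<_2 u$. Since $G$ is a tournament, exactly one of $(u,v),(v,u)$ is an arc of $G$. If $(u,v)\in E(G)$, then validity of $<_2$ forces $u<_2 v$, contradicting $v<_2 u$; if $(v,u)\in E(G)$, then validity of $<_1$ forces $v<_1 u$, contradicting $u<_1 v$. Either way we reach a contradiction, so $<_1=<_2$.

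The only mild subtlety---and the one step worth stating carefully rather than waving through---is the existence of a source in a nonempty acyclic digraph, since this is what powers both the ``only if'' direction and the linear-time algorithm; everything else is routine. I do not anticipate any real obstacle, as this is a folklore fact and the proof above is self-contained.
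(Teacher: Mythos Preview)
Your proof is correct and complete. The paper does not actually give a proof of this proposition: it is stated as folklore and left unproved, so there is nothing to compare against beyond noting that your argument is the standard one (topological sort via repeated source extraction for existence and the linear-time claim, and the pairwise-arc argument for uniqueness in tournaments).
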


This gives rise to the following definition.

\begin{definition}
Let $G$ be a tournament, and let $\alpha$ be a solution to the {\sc $3$-Hitting Set} instance corresponding to $G$. Then, the {\em linear order induced by $\alpha$}, denoted $<_{\alpha}$, is the unique linear ordering of $V(G)\setminus \mathsf{support}(\alpha)$ such that for every arc $(u,v)\in E(G-\mathsf{support}(\alpha))$, $u<_{\alpha}v$. We say that two vertices $u,v\in V(G-\mathsf{support}(\alpha))$ are {\em consecutive} in $<_\alpha$ if $u<_\alpha v$ and there is no vertex $w\in V(G-\mathsf{support}(\alpha))$ such that $u<_\alpha w<_\alpha v$; then, $u$ is called the {\em successor} of $v$, and $v$ is called the {\em predecessor} of $u$. 
\end{definition}

We further define the notion of a {\em position} based on this order.

\begin{definition}
Let $G$ be a tournament, and let $\alpha$ be a solution to the {\sc $3$-Hitting Set} instance corresponding to $G$. Let $M\subseteq V(G)\setminus\mathsf{support}(\alpha)$. Then, a vertex $v\in \mathsf{support}(\alpha)\cup M$ {\em $M$-fits} $<_\alpha$ if one of the following conditions holds.
\begin{itemize}
\item For all $u\in V(G)\setminus (\mathsf{support}(\alpha)\cup M)$, $(v,u)\in E(G)$. In this case, we say that $v$ has {\em $0$-position (with respect to $M$)}.
\item There exists $u\in V(G)\setminus (\mathsf{support}(\alpha)\cup M)$ such that for every $r\in V(G)\setminus (\mathsf{support}(\alpha)\cup M)$ where $r\leq_\alpha u$, $(r,v)\in E(G)$, and for every $r\in V(G)\setminus (\mathsf{support}(\alpha)\cup M)$ where $r>_\alpha u$, $(v,r)\in E(G)$. In this case, we say that $v$ has {\em $u$-position (with respect to $M$)}.
\end{itemize}
\end{definition}

We suppose that a $0$-position is the lowest possible, that is, $0 <_{\alpha} u$ for all $u\in V(G)\setminus(\mathsf{suppot}(\alpha)\cup M)$. The following observations are immediate.

\begin{observation}
Let $G$ be a tournament, and let $\alpha$ be a solution to the {\sc $3$-Hitting Set} instance corresponding to $G$. Let $M\subseteq V(G)\setminus\mathsf{support}(\alpha)$. Let $v\in \mathsf{support}(\alpha)\cup M$ be a vertex that {\em $M$-fits} $<_\alpha$. Then,  there exists exactly one element $u\in \{0\}\cup(V(G)\setminus (\mathsf{support}(\alpha)\cup M))$ such that $v$ has $u$-position.
\end{observation}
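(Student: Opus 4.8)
The plan is to split the claim into \emph{existence} and \emph{uniqueness}, handling existence in one line and spending the real effort on uniqueness. Existence is immediate from the hypothesis: since $v$ is assumed to $M$-fit $<_\alpha$, by definition one of the two listed conditions holds; if the first holds then $v$ has $0$-position (so $u=0$ works), and if the second holds then $v$ has $u$-position for the witnessing vertex $u\in V(G)\setminus(\mathsf{support}(\alpha)\cup M)$. Hence at least one suitable $u\in\{0\}\cup(V(G)\setminus(\mathsf{support}(\alpha)\cup M))$ exists, and it remains only to show it is unique.

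For uniqueness I would write $W=V(G)\setminus(\mathsf{support}(\alpha)\cup M)$ and first recall, via Proposition~\ref{prop:linOrder}, that $<_\alpha$ restricts to a total order on $W$ (it is the restriction of the unique topological order of $G-\mathsf{support}(\alpha)$). The single tool needed is the defining property of a tournament: between any two distinct vertices exactly one of the two arcs is present. I would then rule out the two possible kinds of collision. (i) Suppose $v$ has $0$-position and also $u$-position for some $u\in W$. The $0$-position condition gives $(v,u)\in E(G)$, while applying the $u$-position condition with $r=u$ (note $u\le_\alpha u$) gives $(u,v)\in E(G)$, contradicting the tournament property. (ii) Suppose $v$ has $u$-position and $u'$-position for distinct $u,u'\in W$; without loss of generality $u<_\alpha u'$. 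The $u$-position condition applied to $r=u'$ (note $u'>_\alpha u$) gives $(v,u')\in E(G)$, while the $u'$-position condition applied to $r=u'$ (note $u'\le_\alpha u'$) gives $(u',v)\in E(G)$, again a contradiction. Together with existence this yields the claimed unique $u$.

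I would also dispose of the degenerate case $W=\emptyset$ in a sentence: then $\{0\}\cup W=\{0\}$, the first condition of $M$-fitting holds vacuously, and so $u=0$ is trivially the unique choice; no separate reasoning is needed. I do not expect a genuine obstacle here — the proof is a short case analysis — and the only point requiring care is reading off the correct arc orientation from the definitions of $0$-position and $u$-position and invoking the tournament property consistently in cases (i) and (ii).
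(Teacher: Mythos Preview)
Your proof is correct. The paper states this as an Observation without proof, treating it as immediate from the definition of $M$-fitting together with the tournament property; your write-up makes that implicit reasoning explicit via exactly the natural case split, and nothing more is needed.
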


\begin{observation}\label{obs:Mfits}
Let $G$ be a tournament, and let $\alpha$ be a solution to the {\sc $3$-Hitting Set} instance corresponding to $G$. Let $M\subseteq V(G)\setminus\mathsf{support}(\alpha)$. Then, every vertex in $M$  {\em $M$-fits} $<_\alpha$. 
\end{observation}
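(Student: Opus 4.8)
The plan is to read off the required position of each vertex of $M$ directly from the fact that $\mathsf{support}(\alpha)$ hits every triangle of $G$. Indeed, since $\alpha$ is a (fractional, or integral) solution to the $3$-Hitting Set instance whose sets are exactly the triangles of $G$, every triangle meets $\mathsf{support}(\alpha)$, so $\mathsf{support}(\alpha)$ is a set-solution and hence, by Corollary~\ref{cor:charByTriangles}, the tournament $G-\mathsf{support}(\alpha)$ is acyclic. By Proposition~\ref{prop:linOrder}, $<_\alpha$ is then literally the \emph{unique} topological ordering of $G-\mathsf{support}(\alpha)$. Write $A=V(G)\setminus(\mathsf{support}(\alpha)\cup M)$ for the vertex set on which the notion of position is defined, and fix an arbitrary $v\in M$; the goal is to exhibit a $u\in\{0\}\cup A$ such that $v$ has $u$-position with respect to $M$.

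The key step is to pass to the sub-tournament $G[A\cup\{v\}]$. Since $A\cup\{v\}\subseteq V(G)\setminus\mathsf{support}(\alpha)$, this is an induced sub-tournament of the acyclic tournament $G-\mathsf{support}(\alpha)$, hence itself acyclic, so by Proposition~\ref{prop:linOrder} it has a unique topological ordering $\prec$. Its restriction $\prec|_A$ is a topological ordering of $G[A]$, and so is $<_\alpha|_A$ (as $<_\alpha$ topologically orders $G-\mathsf{support}(\alpha)\supseteq G[A]$); by the uniqueness clause of Proposition~\ref{prop:linOrder} applied to $G[A]$, these two orderings coincide. Now I case-split on the position of $v$ in $\prec$. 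If $v\prec u$ for every $u\in A$, then the tournament arc between $v$ and each $u\in A$ points from $v$ to $u$, so $(v,u)\in E(G)$ for all $u\in A$ and $v$ has $0$-position. Otherwise, let $u$ be the $\prec$-largest element of $A$ with $u\prec v$. For $r\in A$ with $r\le_\alpha u$ we get $r\prec v$ (using $\prec|_A=<_\alpha|_A$, together with $u\prec v$), hence $(r,v)\in E(G)$; for $r\in A$ with $r>_\alpha u$ we have $u\prec r$, and since $u$ is the $\prec$-largest element of $A$ below $v$ and $r\ne v$, we get $v\prec r$, hence $(v,r)\in E(G)$. Thus $v$ has $u$-position, and in both cases $v$ $M$-fits $<_\alpha$, as claimed.

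The only point that needs a little care — and the closest thing to an obstacle — is the translation between the two a priori different orders $\prec$ (on $A\cup\{v\}$) and $<_\alpha$ (on $A\cup M$): the definition of $M$-fits is phrased in terms of $<_\alpha$, so one must argue that a position of $v$ forced by $\prec$ is indeed a position relative to $<_\alpha$. This is exactly what the equality $\prec|_A=<_\alpha|_A$ supplies, and it in turn rests only on the uniqueness of topological orderings of tournaments. Everything else is a direct unpacking of the definitions of $<_\alpha$, of $M$-fits, and of $u$-position.
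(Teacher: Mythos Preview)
Your proof is correct. The paper treats this observation as immediate and gives no proof, so there is nothing to compare against directly.

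That said, your argument is slightly more elaborate than necessary. Since $v\in M\subseteq V(G)\setminus\mathsf{support}(\alpha)$, the vertex $v$ already lies in the domain of $<_\alpha$ itself. Hence there is no need to introduce the auxiliary order $\prec$ on $G[A\cup\{v\}]$ and then argue that $\prec|_A=<_\alpha|_A$: you can work with $<_\alpha$ directly. Concretely, for every $r\in A$ either $r<_\alpha v$ (so $(r,v)\in E(G)$) or $v<_\alpha r$ (so $(v,r)\in E(G)$); then either no $r\in A$ satisfies $r<_\alpha v$ (giving $0$-position), or taking $u$ to be the $<_\alpha$-largest element of $A$ with $u<_\alpha v$ immediately gives $u$-position. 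Your detour through $\prec$ is harmless---by uniqueness $\prec$ is just $<_\alpha|_{A\cup\{v\}}$---but it obscures why the observation is ``immediate''.
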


We first show that the vertices in $\mathsf{support}(\alpha)\setminus\alpha^{-1}(1)$ already $\emptyset$-fit $<_\alpha$ (so, they also $M$-fit $<_{\alpha}$ with respect to any $M\subseteq V(G)\setminus\mathsf{support}(\alpha)$). Thus, to reveal modules that give rise to unique positions, we will only deal with vertices in $\alpha^{-1}(1)$.

\begin{lemma}\label{lem:modWRTnot1pack}
Let $G$ be a tournament, and let $\alpha$ be a solution to the {\sc $3$-Hitting Set} instance corresponding to $G$. Let $v\in\mathsf{support}(\alpha)\setminus\alpha^{-1}(1)$. Then, $v$ $\emptyset$-fits $<_\alpha$.
\end{lemma}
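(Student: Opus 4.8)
The plan is to argue by contradiction, mirroring the structure of Lemma~\ref{lem:modWRTnot1} (its analogue for {\sc Cluster Vertex Deletion}). First I would note that since $\alpha$ is a solution, it does not assign any value greater than $1$ (this follows from optimality, but even without optimality we only need that $v\in\mathsf{support}(\alpha)\setminus\alpha^{-1}(1)$ means $0<\alpha(v)<1$). Now suppose $v$ does not $\emptyset$-fit $<_\alpha$. Unpacking the definition of $M$-fit with $M=\emptyset$: failing the first condition means there is some $u\in V(G)\setminus\mathsf{support}(\alpha)$ with $(u,v)\in E(G)$ (i.e.\ $v$ does not dominate everything below $\mathsf{support}(\alpha)$), and failing every instance of the second condition means that there is no ``clean split point'' of the order $<_\alpha$ consistent with the arcs incident to $v$.

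The key combinatorial step is to extract from this failure an explicit triangle through $v$. Since $G-\mathsf{support}(\alpha)$ is a transitive tournament linearly ordered by $<_\alpha$, the set of vertices $u\in V(G)\setminus\mathsf{support}(\alpha)$ with $(u,v)\in E(G)$ is nonempty; let $w$ be the $<_\alpha$-largest such vertex. If $v$ dominated every vertex strictly above $w$ in $<_\alpha$, then $v$ would have $w$-position, contradicting our assumption; so there exists $r\in V(G)\setminus\mathsf{support}(\alpha)$ with $w<_\alpha r$ and $(r,v)\notin E(G)$, i.e.\ $(v,r)\in E(G)$. But $w<_\alpha r$ means $(w,r)\in E(G)$. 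Hence $(w,r),(r,v),(v,w)\in E(G)$ — wait, we have $(w,v)\in E(G)$, not $(v,w)$; so instead we get the arcs $(v,r),(r,?),\dots$ — let me take the other direction: we have $(w,v)$, $(v,r)$, and $(w,r)$, so $\{v,w,r\}$ carries the arcs $(w,v),(v,r)$ but $(w,r)$ rather than $(r,w)$, which is \emph{not} a triangle. The correct extraction is: take $w$ the $<_\alpha$-largest vertex with $(w,v)\in E(G)$ and $r$ the successor-or-later vertex with $(v,r)\in E(G)$; by maximality of $w$ every vertex $s$ with $w<_\alpha s$ satisfies $(v,s)\in E(G)$, and in particular choosing $s$ to be the $<_\alpha$-predecessor situation — the cleanest choice is $r$ the \emph{immediate} successor of $w$ in $<_\alpha$ (if it exists): then $(w,r)\in E(G)$, $(v,r)\in E(G)$ by maximality of $w$, and $(w,v)\in E(G)$, giving again no triangle. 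The actual triangle arises the other way around: pick $w$ the $<_\alpha$-\emph{smallest} vertex with $(v,w)\in E(G)$ lying above some vertex dominating $v$; then its predecessor $p$ satisfies $(p,v)\in E(G)$, $(p,w)\in E(G)$ (since $p<_\alpha w$), and $(v,w)\in E(G)$ — still a transitive triple. I expect the main obstacle to be exactly this: pinning down which pair of ``order-violating'' vertices, together with $v$, genuinely forms a directed triangle, and then checking that $\alpha$ assigns total weight $<1$ to that triple, contradicting that $\alpha$ is a solution.

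So concretely I would: (1) let $A=\{u\in V(G)\setminus\mathsf{support}(\alpha): (u,v)\in E(G)\}$ and $B=\{u\in V(G)\setminus\mathsf{support}(\alpha): (v,u)\in E(G)\}$, which partition $V(G)\setminus\mathsf{support}(\alpha)$ since $G$ is a tournament; (2) observe that $v$ $\emptyset$-fits $<_\alpha$ iff every element of $A$ precedes every element of $B$ in $<_\alpha$ (the $0$-position case being $A=\emptyset$); (3) so if $v$ does not $\emptyset$-fit, there exist $a\in A$, $b\in B$ with $b<_\alpha a$, hence $(b,a)\in E(G)$; together with $(v,b),(a,v)\in E(G)$ this gives the directed triangle $v\to b\to a\to v$, so $\{v,a,b\}$ induces a triangle in $G$; (4) since $a,b\notin\mathsf{support}(\alpha)$ we have $\alpha(a)=\alpha(b)=0$, and $\alpha(v)<1$, so $\alpha(v)+\alpha(a)+\alpha(b)<1$, contradicting that $\alpha$ is a solution to the {\sc $3$-Hitting Set} instance (which requires $\sum_{x\in T}\alpha(x)\ge 1$ for the triangle $T=\{v,a,b\}$). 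This completes the proof, and step (3) — correctly reading off the triangle orientation from $b<_\alpha a$ — is the crux, everything else being bookkeeping against the definitions of $<_\alpha$ and of $M$-fit.
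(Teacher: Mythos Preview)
Your final paragraph (steps (1)--(4)) is a correct proof, and the underlying idea---that a triangle through $v$ and two non-support vertices would have total LP weight $\alpha(v)<1$, contradicting feasibility---is exactly the paper's. The paper packages it slightly more directly: rather than arguing by contradiction via the $A/B$ partition, it observes that $G-(\mathsf{support}(\alpha)\setminus\{v\})$ is triangle-free and hence an acyclic tournament, takes its unique topological order (which necessarily extends $<_\alpha$), and reads off $v$'s position from its predecessor there.
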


\begin{proof}
First, notice that as $\alpha$ is optimal, it does not assign values greater than $1$. Thus, $\alpha(x_v)<1$. So, $G-(\mathsf{support}(\alpha)\setminus\{v\})$ does not have a triangle, else the sum of the variables of its vertices will be less than $1$, contradicting that $\alpha$ is a solution. By Corollary \ref{cor:charByTriangles} and Proposition \ref{prop:linOrder}, this means that $G-(\mathsf{support}(\alpha)\setminus\{v\})$ admits a unique linear order $<$ such that for every arc $x,y)\in E(G-(\mathsf{support}(\alpha)\setminus\{v\}))$, $x<y$, and its restriction to $G-\mathsf{support}(\alpha)$ must equal $<_{\alpha}$. This directly implies the lemma, where if $v$ is first in $<$ then it has $0$-position, and otherwise it has $u$-position where $u$ is its predecessor in $<$.
\end{proof}

To deal with the vertices in $\alpha^{-1}(1)$, we define the following marking procedure.

\begin{definition}\label{def:markFVST}
Given $0<\delta<1$, a tournament $G$ and an optimal solution $\alpha$ to the {\sc $3$-Hitting Set} instance corresponding to $G$, {\sf Marking}$(\delta,G,\alpha)$ is defined as follows.
\begin{enumerate}
\item For every vertex $v\in \alpha^{-1}(1)$, initialize {\sf mark}$(v)=\emptyset$.
\item For every vertex $v\in\alpha^{-1}(1)$:
	\begin{enumerate}
	\item Define the graph $H_v$ as follows: $V(H_v)=V(G)\setminus(\mathsf{support}(\alpha)\cup(\bigcup_{u\in\alpha^{-1}(1)}\mathsf{mark}(u)))$, and $E(H_v)=\{\{w,r\}\subseteq V(H_v): G[\{v,w,r\}]$ is a triangle$\}$.
	\item Compute a maximal matching $\mu_v$ in $H_v$.
	\item If $|\mu_v|>\frac{1}{\delta}$, then let $\nu_v$ be some (arbitrary) subset of $\mu_v$ of size exactly $\frac{1}{\delta}$, and otherwise let $\nu_v=\mu_v$.  Let $\mathsf{mark}(v)=\bigcup\nu_v$ (i.e., $\mathsf{mark}(v)$ is the set of vertices incident to edges in $\nu_v$).
	\end{enumerate}
\item For every vertex $v\in \alpha^{-1}(1)$, output $\mathsf{mark}(v)$. Moreover, output $M=\bigcup\{\mathsf{mark}(v)\}|_{v\in\alpha^{-1}(1)},$ $D=\{v\in \alpha^{-1}(1): |\mathsf{mark}(v)|=\frac{1}{\epsilon}\}$.
\end{enumerate}
\end{definition}

We define {\em regions} based on marked vertices as follows. We will not need this definition for our proof, but we still give it since it provides some intuition regarding which modules are created. We remark that this is the only notion/argument in this subsection that is not necessary.

\begin{definition}
Given $0<\delta<1$, a tournament $G$ and an optimal solution $\alpha$ to the {\sc $3$-Hitting Set} instance corresponding to $G$, let $\{\mathsf{mark}(v)\}|_{v\in\alpha^{-1}(1)},M,D$ be the output of {\sf Marking}$(\delta,G,\alpha)$. Then, an {\em $(M,D)$-region} ({\em region} for short) is a maximal subset $U\subseteq V(G)\setminus(\mathsf{support}(\alpha)\cup M)$ such that there do not exist vertices $v\in M$, $u,w\in V(G)\setminus(\mathsf{support}(\alpha)\cup M)$ such that $u<_\alpha v<_\alpha w$. The collection of regions is denoted by ${\cal R}$.
\end{definition}

We prove that all vertices except for those in $D$, and not just those in $\mathsf{support}(\alpha)\setminus\alpha^{-1}(1)$, now have unique positions when marked vertices are removed.

\begin{lemma}\label{lem:createModulesPack}
Given $0<\delta<1$, a tournament $G$ and an optimal solution $\alpha$ to the {\sc $3$-Hitting Set} instance corresponding to $G$, let $\{\mathsf{mark}(v)\}|_{v\in\alpha^{-1}(1)},M,D$ be the output of {\sf Marking}$(\delta,G,\alpha)$. Then, every vertex $v\in(\mathsf{support}(\alpha)\setminus D)\cup M$ $M$-fits $<_\alpha$.%, and it has either {\em (i)} $0$-position or {\em (ii)} $u$-position where the successor of $u$ (if it exists) belongs to $M$. \hly{not true for vertices of val <1}
\end{lemma}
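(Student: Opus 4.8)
The goal is to show that after the marking procedure, every vertex $v\in(\mathsf{support}(\alpha)\setminus D)\cup M$ has a unique position relative to the linear order $<_\alpha$ once all marked vertices are deleted. By Observation \ref{obs:Mfits}, every vertex of $M$ already $M$-fits $<_\alpha$, so the content is entirely about vertices $v\in\mathsf{support}(\alpha)\setminus D$. For such $v$, Lemma \ref{lem:modWRTnot1pack} already disposes of the case $v\notin\alpha^{-1}(1)$ (those vertices $\emptyset$-fit $<_\alpha$, hence also $M$-fit it, since adding vertices to the removed set only makes it easier to fit). So the real work is the case $v\in\alpha^{-1}(1)\setminus D$, and this is where I would spend the proof.

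The plan is a proof by contradiction. Fix $v\in\alpha^{-1}(1)\setminus D$ and suppose $v$ does not $M$-fit $<_\alpha$. Consider the tournament $G' = G-(\mathsf{support}(\alpha)\cup M)\cup\{v\}$, i.e.\ the vertices not in $\mathsf{support}(\alpha)\cup M$ together with $v$. The restriction of $<_\alpha$ orders $V(G)\setminus(\mathsf{support}(\alpha)\cup M)$ linearly with all arcs forward. The assertion that $v$ $M$-fits $<_\alpha$ is exactly the assertion that $v$ can be inserted into this chain so that all arcs incident to $v$ (in $G'$) are also forward — i.e.\ that $G'$ is acyclic. So $v$ failing to $M$-fit means $G'$ contains a directed cycle, and since $G'$ is a tournament, it contains a triangle. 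That triangle must use $v$ (as $G-(\mathsf{support}(\alpha)\cup M)$ is acyclic). Thus there exist $w,r\in V(G)\setminus(\mathsf{support}(\alpha)\cup M)$ with $\{v,w,r\}$ forming a triangle in $G$. But then $\{w,r\}$ is an edge of the graph $H_v$ from Definition \ref{def:markFVST} (its vertex set is exactly $V(G)\setminus(\mathsf{support}(\alpha)\cup(\bigcup_u\mathsf{mark}(u)))$, which contains $w,r$ since they are not in $M$). Since $v\notin D$, we have $\nu_v=\mu_v$, so $\mathsf{mark}(v)=\bigcup\mu_v$; as $w,r\notin\mathsf{mark}(v)$, neither $w$ nor $r$ is incident to an edge of $\mu_v$, so $\mu_v\cup\{\{w,r\}\}$ is still a matching — contradicting maximality of $\mu_v$. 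This is the same contradiction structure used in Lemmas \ref{lem:createModules} and \ref{lem:cvdSeeOne}.

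The one step I expect to require care is the equivalence "$v$ $M$-fits $<_\alpha$ $\iff$ $G-(\mathsf{support}(\alpha)\cup M)\cup\{v\}$ is acyclic." The definition of $M$-fits is stated in terms of there being a vertex $u$ (or $0$) such that $v$ beats everything $>_\alpha u$ and loses to everything $\le_\alpha u$; I would verify that this is precisely the condition for the natural insertion of $v$ into the chain $<_\alpha$ to yield a topological order of $G-(\mathsf{support}(\alpha)\cup M)\cup\{v\}$, invoking Corollary \ref{cor:charByTriangles} and Proposition \ref{prop:linOrder} (uniqueness of the topological order in a tournament) to pass between "acyclic" and "triangle-free" and to argue the triangle must involve $v$. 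Everything else is bookkeeping: that $w,r$ lie in $V(H_v)$, that $v\notin D$ gives $\nu_v=\mu_v$, and that maximality of the matching is violated. I would present the argument in three short paragraphs: (i) handle $v\notin\alpha^{-1}(1)$ via Lemma \ref{lem:modWRTnot1pack} and $v\in M$ via Observation \ref{obs:Mfits}; (ii) set up the reduction of "$M$-fits" to acyclicity of $G'$; (iii) extract the triangle on $\{v,w,r\}$ and derive the contradiction with maximality of $\mu_v$.
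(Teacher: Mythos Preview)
Your proposal is correct and follows essentially the same route as the paper: dispose of $v\in M$ via Observation~\ref{obs:Mfits} and of $v\in\mathsf{support}(\alpha)\setminus\alpha^{-1}(1)$ via Lemma~\ref{lem:modWRTnot1pack}, then for $v\in\alpha^{-1}(1)\setminus D$ argue that $G-((\mathsf{support}(\alpha)\cup M)\setminus\{v\})$ is triangle-free (hence acyclic, hence $v$ $M$-fits by Proposition~\ref{prop:linOrder}), since any triangle would yield an edge $\{w,r\}$ of $H_v$ with $w,r\notin\mathsf{mark}(v)=\bigcup\mu_v$, contradicting the maximality of $\mu_v$. Your write-up is in fact slightly more careful than the paper's at the final step (you explicitly use that neither endpoint is matched, rather than asserting $\{w,r\}\in\mu_v$ directly).
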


\begin{proof}
By Observation \ref{obs:Mfits}, the lemma is true for vertices in $M$. So, let $v\in\mathsf{support}(\alpha)\setminus D$. Due to Lemma \ref{lem:modWRTnot1pack}, the lemma is correct if $v\notin\alpha^{-1}(1)$, so we next suppose that $v\notin\alpha^{-1}(1)$. We claim that $G-((\mathsf{support}(\alpha)\cup M)\setminus\{v\})$ does not have a triangle. Targeting a contradiction, suppose that it has a triangle $T$. Then, as $\mathsf{support}(\alpha)$  is a solution, necessarily $v$ belongs to the triangle. So, denote $V(T)=\{v,u,w\}$. However, we have that $(u,w)\in\mu_v$ but $(u,w)\notin\nu_v$. This is a contradiction since $v\notin D$.
So far, we conclude  $G-((\mathsf{support}(\alpha)\cup M)\setminus\{v\})$ does not have a triangle. Thus, by Corollary \ref{cor:charByTriangles} and Proposition \ref{prop:linOrder}, this means that $G-((\mathsf{support}(\alpha)\cup M)\setminus\{v\})$ admits a unique linear order $<$ such that for every arc $(x,y)\in E(G-((\mathsf{support}(\alpha)\cup M)\setminus\{v\}))$, $x<y$, and its restriction to $G-\mathsf{support}(\alpha)$ must equal $<_{\alpha}$. This directly implies the lemma, where if $v$ is first in $<$ then it has $0$-position, and otherwise it has $u$-position where $u$ is its predecessor in $<$.
\end{proof}

We remark that Lemma \ref{lem:createModulesPack} will be implicitly used throughout, specifically when we consider vertices $v\in\mathsf{support}(\alpha)\setminus D$ and implicitly suppose that the definition of their position is valid.  An easy consequence of Lemma \ref{lem:createModulesPack} is that all regions are modules. However, we will not need to directly use this, but rather use  Lemma \ref{lem:createModulesPack}.
%\begin{lemma}
%Given $0<\delta<1$, a tournament $G$ and an optimal solution $\alpha$ to the {\sc $3$-HS} instance corresponding to $G$, let $\{\mathsf{mark}(v)\}|_{v\in\alpha^{-1}(1)},M,D$ be the output of {\sf Marking}$(\delta,G,\alpha)$. Then, every region in $\cal R$ is a module in $G-D$.
%\end{lemma}
%
%\begin{proof}
%\hly{todo}
%\end{proof}
Moreover, as a consequence of Lemma \ref{lem:createModulesPack}, we can characterize the triangles in $G$ as follows.

\begin{lemma}\label{lem:traignleCharacterization}
Given $0<\delta<1$, a tournament $G$ and an optimal solution $\alpha$ to the {\sc $3$-Hitting Set} instance corresponding to $G$, let $\{\mathsf{mark}(v)\}|_{v\in\alpha^{-1}(1)},M,D$ be the output of {\sf Marking}$(\delta,G,\alpha)$. Then, every triangle in $G-D$ consists of either 
\begin{enumerate}
\item three vertices of $(\mathsf{support}(\alpha)\setminus D)\cup M$, or
\item a vertex $v\in\mathsf{support}(\alpha)\setminus D$, a vertex $u\in(\mathsf{support}(\alpha)\setminus D)\cup M$ and a vertex $w\in V(G)\setminus (\mathsf{support}(\alpha)\cup M)$ such that either {\em (i)} $(u,v)\in E(G)$, $v$ is of position $0$ or $r<_\alpha w$, and $u$ is of position $r'\geq_\alpha w$, or {\em (ii)} $(v,u)\in E(G)$, $u$ is of position $0$ or $r<_\alpha w$, and $v$ is of position $r'\geq_\alpha w$.
\end{enumerate}
\end{lemma}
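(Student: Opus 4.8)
The goal is to characterize all triangles of $G-D$. The key tool is Lemma~\ref{lem:createModulesPack}, which guarantees that every vertex $v\in(\mathsf{support}(\alpha)\setminus D)\cup M$ has a well-defined $M$-position with respect to $<_\alpha$. First I would dispose of the triangles that avoid $\mathsf{support}(\alpha)$: a triangle with at least one vertex outside $\mathsf{support}(\alpha)\cup M$ cannot have all three vertices outside $\mathsf{support}(\alpha)$, since $G-\mathsf{support}(\alpha)$ is acyclic (as $\mathsf{support}(\alpha)$ is a solution, by Corollary~\ref{cor:charByTriangles}); and more generally, if a triangle contains a vertex $w\in V(G)\setminus(\mathsf{support}(\alpha)\cup M)$, then its other two vertices cannot both lie in $M$ together with $w$ avoiding a triangle with only one $\mathsf{support}(\alpha)$-vertex... actually the cleanest case split is on how many of the triangle's three vertices lie in $V(G)\setminus(\mathsf{support}(\alpha)\cup M)$. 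If that number is $0$, we are in Case~1. If it is $\geq 1$, I would argue it is exactly $1$: two vertices $w,w'\in V(G)\setminus(\mathsf{support}(\alpha)\cup M)$ are comparable in $<_\alpha$, say $w<_\alpha w'$, so the arc between them is $(w,w')$, and then the third vertex $t$ of the triangle (which lies in $(\mathsf{support}(\alpha)\cup M)\setminus D$, hence has an $M$-position) would need arcs $(w',t)$ and $(t,w)$ to close the cycle, i.e.\ $t$ beats $w$ but loses to $w'$; but by the definition of $M$-position, $t$ beats all vertices $\leq_\alpha$ its position-vertex and loses to all strictly greater ones, so $t$ beating $w$ forces $t$ to beat $w'$ (as $w<_\alpha w'$), a contradiction. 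The triangle also cannot contain three vertices of $M\cup(V(G)\setminus(\mathsf{support}(\alpha)\cup M))$ with two in that last set for the same reason; and it cannot avoid $\mathsf{support}(\alpha)$ entirely while containing an outside vertex, because then all three would be in $M\cup(\text{outside})\subseteq V(G)\setminus\mathsf{support}(\alpha)$... wait, $M\subseteq V(G)\setminus\mathsf{support}(\alpha)$, so such a triangle lies in $G-\mathsf{support}(\alpha)$, which is acyclic — contradiction. So a triangle with an outside vertex has at least one $\mathsf{support}(\alpha)$-vertex.

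**The main case.** So assume the triangle $T$ has exactly one vertex $w\in V(G)\setminus(\mathsf{support}(\alpha)\cup M)$, and at least one $\mathsf{support}(\alpha)$-vertex; write the other two as $v\in\mathsf{support}(\alpha)\setminus D$ and $u\in(\mathsf{support}(\alpha)\setminus D)\cup M$ (at least one of them is the $\mathsf{support}(\alpha)$-vertex; by symmetry of naming I take $v$ to be one such). Now $v$ and $u$ both have $M$-positions, say $v$ has position $r_v$ and $u$ has position $r_u$ (each being either $0$ or some vertex of $V(G)\setminus(\mathsf{support}(\alpha)\cup M)$). The triangle is a directed $3$-cycle on $\{v,u,w\}$; exactly one of the arcs between $v$ and $u$ is present, giving the two sub-cases (i) $(u,v)\in E(G)$ and (ii) $(v,u)\in E(G)$. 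Take sub-case (i), $(u,v)\in E(G)$: then to close the cycle we need $(v,w)\in E(G)$ and $(w,u)\in E(G)$, i.e.\ $v$ beats $w$ and $w$ beats $u$. By the definition of position: $v$ beats $w$ means $w$ is strictly above $v$'s position, i.e.\ $v$ has position $0$ or position $r$ with $r<_\alpha w$; and $w$ beats $u$ means $w$ is $\leq_\alpha$ $u$'s position, i.e.\ $u$ has position $r'$ with $r'\geq_\alpha w$ (note $u$'s position is not $0$ here). This is exactly condition~(i) of the statement. Sub-case (ii) is symmetric, swapping the roles of $v$ and $u$, giving condition~(ii). Conversely (if the paper wants the characterization to be tight, though the statement as written is only the "every triangle is of one of these forms" direction, so I may skip this) one checks these configurations do yield triangles.

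**The obstacle.** The only real subtlety is bookkeeping around the position definition: verifying that "$v$ beats $w$" translates correctly into a constraint on $v$'s position relative to $w$ in $<_\alpha$, handling the $0$-position boundary case uniformly (using the convention $0<_\alpha u$ for all outside $u$ stated after the position definition), and making sure the direction of the arc in the triangle matches the cyclic orientation consistently. I expect the hardest part to be the "exactly one outside vertex" reduction — specifically the argument that two outside vertices plus a positioned third vertex cannot form a triangle — which is where the monotonicity built into the $M$-position definition does the work; everything after that is a mechanical unpacking of definitions. I would also need to be careful that when both $v$ and $u$ happen to be in $\mathsf{support}(\alpha)\setminus D$ (rather than one in $M$), nothing changes, since the statement only requires $u\in(\mathsf{support}(\alpha)\setminus D)\cup M$ and $v\in\mathsf{support}(\alpha)\setminus D$, which is satisfied either way.
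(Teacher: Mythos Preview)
Your proposal is correct and follows essentially the same line as the paper: first use that $\mathsf{support}(\alpha)$ is a solution to force at least one vertex of the triangle into $\mathsf{support}(\alpha)\setminus D$, then use Lemma~\ref{lem:createModulesPack} (the $M$-fit of vertices in $(\mathsf{support}(\alpha)\setminus D)\cup M$) to rule out two outside vertices, and finally read off the arc directions from the position data. The paper phrases the ``at most one outside vertex'' step slightly differently---it observes that $v$ $M$-fitting $<_\alpha$ makes $G-((\mathsf{support}(\alpha)\cup M)\setminus\{v\})$ acyclic via Proposition~\ref{prop:linOrder}, hence it cannot contain a triangle on $\{v,w,w'\}$---but this is the same monotonicity you unpack explicitly.

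One small slip to fix: in your two-outside-vertex argument you write ``$t$ beats all vertices $\leq_\alpha$ its position-vertex and loses to all strictly greater ones,'' which is the wrong way around (by the definition, a vertex of position $p$ \emph{loses} to every $r\leq_\alpha p$ and \emph{beats} every $r>_\alpha p$). Your conclusion ``$t$ beating $w$ forces $t$ to beat $w'$'' is nonetheless correct under the right definition, and your main-case analysis uses the correct direction, so this is a cosmetic issue rather than a gap.
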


\begin{proof}
Let $T$ be a triangle in $G-D$. Because $\mathsf{support}(\alpha)$ is a solution to $G$, $T$ must contain at least one vertex from $\mathsf{support}(\alpha)\setminus D$, which we  will denote by $v$. In case the other two vertices of $T$ belong to $(\mathsf{support}(\alpha)\setminus D)\cup M$, then the proof is complete. Thus, suppose that $T$ contains at least one vertex $w\in V(G)\setminus (\mathsf{support}(\alpha)\cup M)$. Because $v$ $M$-fits $<_\alpha$ (by Lemma \ref{lem:createModulesPack}), the third vertex of $T$ cannot also belong to $V(G)\setminus (\mathsf{support}(\alpha)\cup M)$, as otherwise $G-((\mathsf{support}(\alpha)\cup M)\setminus\{v\})$ contains a triangle (which contradicts that $v$ $M$-fits $<_\alpha$ due to Proposition \ref{prop:linOrder}). So, the third vertex, which we denote by $u$, belongs to $(\mathsf{support}(\alpha)\setminus D)\cup M$. We suppose that $v$ is of position $0$ or $r<_\alpha w$, as the proof for the other case, where $v$ is of position $r'\geq_\alpha w$, is symmetric. Then, by the definition of position, $(v,w)\in E(G)$. So, because $T$ is a triangle, this implies that $(w,u),(u,v)\in E(G)$. Now, because $u$ $M$-fits $<_\alpha$ (by Lemma \ref{lem:createModulesPack}), having the arc $(w,u)\in E(G)$ implies that $u$ is of position $r'\geq_\alpha w$. This completes the proof.
\end{proof}

We now argue that $|D|$ is only a $\delta$-fraction of the optimum, and hence it is not ``costly'' to seek only solutions that contain $D$. We remark that as we will apply another (non-strict) lossy rule later, we will need to call {\sf Marking} with $\delta<\epsilon$.

\begin{lemma}\label{lem:fvstDSmall}
Given $0<\delta<1$, a tournament $G$ and an optimal solution $\alpha$ to the {\sc $3$-Hitting Set} instance corresponding to $G$, let $\{\mathsf{mark}(v)\}|_{v\in\alpha^{-1}(1)},M,D$ be the output of {\sf Marking}$(\delta,G,\alpha)$. Let $S^\star$ be a solution to $I$. Then, $|D\setminus S^\star|\leq\delta|S^\star|$.
\end{lemma}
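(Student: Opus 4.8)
The plan is to mimic the argument used for Lemma~\ref{lem:cvdSafe1} in the {\sc Cluster Vertex Deletion} case, but now using triangles in tournaments instead of induced $P_3$'s. Fix a solution $S^\star$ to $I$ (so $G-S^\star$ is acyclic). The key observation is that for each vertex $v\in D$, the vertex $v$ together with any edge $\{u,w\}\in\nu_v$ forms a triangle in $G$ (this is exactly how $H_v$ and hence $\nu_v$ were defined in Definition~\ref{def:markFVST}).

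First I would argue the following: for every $v\in D\setminus S^\star$, the solution $S^\star$ must contain at least one endpoint of every edge in $\nu_v$. Indeed, if $v\notin S^\star$ and some edge $\{u,w\}\in\nu_v$ has neither endpoint in $S^\star$, then the triangle on $\{v,u,w\}$ survives in $G-S^\star$, contradicting that $S^\star$ is a solution (using Corollary~\ref{cor:charByTriangles}). Since $\nu_v$ is a matching of size exactly $\frac{1}{\delta}$ for $v\in D$, this forces $S^\star$ to contain at least $\frac{1}{\delta}$ vertices of $\mathsf{mark}(v)=\bigcup\nu_v$.

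The second key point is that the sets $\{\mathsf{mark}(v)\}|_{v\in\alpha^{-1}(1)}$ are pairwise disjoint — this holds by construction, since in the $v$-th iteration of {\sf Marking}, $H_v$ is defined on the vertex set $V(G)\setminus(\mathsf{support}(\alpha)\cup(\bigcup_{u\in\alpha^{-1}(1)}\mathsf{mark}(u)))$, i.e.\ excluding all previously marked vertices, so no vertex gets marked twice. Consequently $\{\mathsf{mark}(v)\}|_{v\in D}$ are pairwise disjoint subsets of $V(G)$, each contributing at least $\frac{1}{\delta}$ distinct vertices to $S^\star$ whenever the corresponding $v$ lies in $D\setminus S^\star$. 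Therefore $|S^\star|\geq \frac{1}{\delta}\cdot|D\setminus S^\star|$, which rearranges to $|D\setminus S^\star|\leq \delta|S^\star|$, as desired.

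I do not anticipate a genuine obstacle here; the statement is essentially the tournament analogue of Lemma~\ref{lem:cvdSafe1}, and the only things to be careful about are (a) confirming that the obstruction used (a triangle through $v$ and an edge of $\nu_v$) is correctly certified by the definition of $H_v$ in Definition~\ref{def:markFVST}, and (b) confirming the pairwise-disjointness of the marked sets, which follows from the order in which vertices of $\alpha^{-1}(1)$ are processed. A minor bookkeeping point is that $D$ is defined via the threshold $|\mathsf{mark}(v)| = \frac{1}{\epsilon}$ in Definition~\ref{def:markFVST} while the procedure uses $\delta$ for the matching-size cap; I would treat $\frac{1}{\epsilon}$ there as a typo for $\frac{1}{\delta}$ (consistent with the analogous Definition~\ref{def:markCVD}), so that $v\in D$ exactly means $|\nu_v|=\frac{1}{\delta}$, which is what the counting argument needs.
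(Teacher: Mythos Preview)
Your proposal is correct and follows essentially the same argument as the paper's proof: for each $v\in D\setminus S^\star$ the $\frac{1}{\delta}$ edges of the matching $\nu_v$ each yield a triangle through $v$, forcing $\frac{1}{\delta}$ distinct vertices of $\mathsf{mark}(v)$ into $S^\star$, and disjointness of the marked sets gives the bound. Your observation about the $\frac{1}{\epsilon}$ versus $\frac{1}{\delta}$ typo in Definition~\ref{def:markFVST} is also well spotted.
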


\begin{proof}
Consider some vertex $v\in D$. Notice that $v$ together with any edge in $\nu_v$ form a triangle in $G$. Thus, if $v\notin S^\star$, then from every edge in $\nu_v$, at least one vertex must belong to $S^\star$. As $\nu_v$ is a matching, and its size is $\frac{1}{\delta}$, this means that $S^\star$ had to contain at least $\frac{1}{\delta}$ vertices from $\mathsf{mark}(v)$. 
As the sets assigned by $\mathsf{mark}$ are pairwise disjoint, we have that $|D\setminus S^\star|$ can be of size at most $\delta\mathsf{opt}(I)$.
\end{proof}

Intuitively, the arguments above naturally give rise to a reduction rule that deletes $D$. This will be part of our merged rule given later on.

In order to shrink the size of modules, we will need another marking procedure.

\begin{definition}\label{def:markFVST2}
Given $0<\delta,\delta'<1$, a tournament $G$ and an optimal solution $\alpha$ to the {\sc $3$-Hitting Set} instance corresponding to $G$, let $\{\mathsf{mark}(v)\}|_{v\in\alpha^{-1}(1)},M,D$ be the output of {\sf Marking}$(\delta,G,\alpha)$. Then, {\sf ExtraMarking}$(\delta',G,\alpha,M)$ is defined as follows.
\begin{enumerate}
\item For every vertex $v\in \mathsf{support}(\alpha)$, initialize {\sf backw}$(v)=\emptyset$ and {\sf forw}$(v)=\emptyset$.
\item For every vertex $v\in\mathsf{support}(\alpha)\setminus D$:
	\begin{enumerate}
	\item Let $p$ be the position of $v$. %Let $B_v=\{u\in (\mathsf{support}(\alpha)\setminus D)\cup M: (v,u)\in E(G), u$ has $p'$-position for $p'<_{\alpha}p\}$. Let $b_v$ be a minimum (with respect to position) vertex in $B_v$, or $\mathsf{nil}$ if $|B_v|=\emptyset$. Additionally, let $F_v=\{u\in (\mathsf{support}(\alpha)\setminus D)\cup M:$ $(u,v)\in E(G), u$ has $p'$-position for $p'>_{\alpha}p\}$. Let $f_v$ be a maximum (with respect to position) vertex in $F_v$, or $\mathsf{nil}$ if $|F_v|=\emptyset$.
	
	\item Let $\rho^\mathsf{backw}_v=\{u\in V(G)\setminus(\mathsf{support}(\alpha)\cup M\cup \bigcup_{r\in\mathsf{support}(\alpha)}(\mathsf{forw}(r)\cup\mathsf{backw}(r))): u\leq_\alpha p\}$.  If $|\rho^\mathsf{backw}_v|>\frac{1}{\delta'}$, then let $\mathsf{backw}(v)$ be the subset of the $\frac{1}{\delta'}$ largest (according to $<_\alpha$) vertices in $\rho^\mathsf{backw}_v$, and otherwise let $\mathsf{backw}(v)=\rho^\mathsf{backw}_v$.  

	\item Let $\rho^\mathsf{forw}_v=\{u\in V(G)\setminus(\mathsf{support}(\alpha)\cup M\cup \bigcup_{r\in\mathsf{support}(\alpha)}(\mathsf{forw}(r)\cup\mathsf{backw}(r))): p<_\alpha u\}$.  If $|\rho^\mathsf{forw}_v|>\frac{1}{\delta'}$, then let $\mathsf{forw}(v)$ be the subset of the $\frac{1}{\delta'}$ smallest (according to $<_\alpha$) vertices in $\rho^\mathsf{forw}_v$, and otherwise let $\mathsf{forw}(v)=\rho^\mathsf{forw}_v$.  
	\end{enumerate}
\item For every $v\in \mathsf{support}(\alpha)$, output $\mathsf{backw}(v),\mathsf{forw}(v)$, and $\widehat{M}=\bigcup_{v\in\mathsf{support}(\alpha)\setminus D}(\mathsf{backw}(v)\cup\mathsf{forw}(v))$.
\end{enumerate}
\end{definition}

The main utility of this marking scheme is given by the following lemma. 

\begin{lemma}\label{lem:fvstExtraMarkSafe}
For $0<\delta,\delta'<1$, a tournament $G$ and an optimal solution $\alpha$ to the {\sc $3$-Hitting Set} instance corresponding to $G$, let $\{\mathsf{mark}(v)\}|_{v\in\alpha^{-1}(1)},M,D$ be the output~of~{\sf Marking}$(\delta,G,$ $\alpha)$, and $\{b_v,f_v,\mathsf{backw}(v),\mathsf{forw}(v)\}|_{v\in\mathsf{support}(\alpha)},\widehat{M}$ be the output of  {\sf ExtraMarking}$(\delta',G,\alpha,M)$.
Let $v\in\mathsf{support}(\alpha)\setminus D,u\in(\mathsf{support}(\alpha)\setminus D)\cup M, w\in V(G)\setminus(\mathsf{support}(\alpha)\cup M\cup\widehat{M})$ such that $G[\{v,u,w\}]$ is a triangle. Then, the following conditions hold.
\begin{itemize}
\item If $(v,u)\in E(G)$, then $|\mathsf{backw}(v)|=\frac{1}{\delta'}$ and for every $r\in \mathsf{backw}(v)$, $G[\{v,u,r\}]$ is a triangle.
\item Otherwise (when $(u,v)\in E(G)$), then $|\mathsf{forw}(v)|=\frac{1}{\delta'}$ and for every $r\in \mathsf{forw}(v)$, $G[\{v,u,r\}]$ is a triangle.
\end{itemize}
\end{lemma}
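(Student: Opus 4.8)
The plan is to exploit the characterization of triangles provided by Lemma~\ref{lem:traignleCharacterization} together with the structure of the order $<_\alpha$ and the definition of {\sf ExtraMarking}. Fix $v,u,w$ as in the statement, with $G[\{v,u,w\}]$ a triangle. Since $v\in\mathsf{support}(\alpha)\setminus D$, $u\in(\mathsf{support}(\alpha)\setminus D)\cup M$, and $w\in V(G)\setminus(\mathsf{support}(\alpha)\cup M)$ (indeed $w\notin\widehat M$ too), this triangle falls into case~2 of Lemma~\ref{lem:traignleCharacterization}. By Lemma~\ref{lem:createModulesPack}, both $v$ and $u$ $M$-fit $<_\alpha$, so each has a well-defined position; let $p$ be the position of $v$. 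I would consider the two subcases of case~2 of Lemma~\ref{lem:traignleCharacterization}. In subcase (i) we have $(u,v)\in E(G)$, $v$ of position $0$ or $r<_\alpha w$ (so $p<_\alpha w$, writing $p=0$ informally for the $0$-position), and $u$ of position $r'\geq_\alpha w$. In subcase (ii) we have $(v,u)\in E(G)$, $u$ of position $0$ or $<_\alpha w$, and $v$ of position $r'\geq_\alpha w$ (so $w\leq_\alpha p$). Matching these up with the dichotomy in the statement: $(v,u)\in E(G)$ corresponds to subcase (ii), where $w\leq_\alpha p$; and $(u,v)\in E(G)$ corresponds to subcase (i), where $w>_\alpha p$.

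Now consider the case $(v,u)\in E(G)$, so by the above $w\leq_\alpha p$, i.e.\ $w$ lies at or below the position of $v$ in $<_\alpha$. First I would argue $w\in\rho^{\mathsf{backw}}_v$: by definition $\rho^{\mathsf{backw}}_v$ consists of all vertices $x\in V(G)\setminus(\mathsf{support}(\alpha)\cup M\cup\bigcup_{r}(\mathsf{forw}(r)\cup\mathsf{backw}(r)))$ with $x\leq_\alpha p$; we have $w\notin\mathsf{support}(\alpha)\cup M$ by hypothesis and $w\notin\widehat M=\bigcup_{r}(\mathsf{backw}(r)\cup\mathsf{forw}(r))$ by hypothesis, and $w\leq_\alpha p$, so $w\in\rho^{\mathsf{backw}}_v$. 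In particular $\rho^{\mathsf{backw}}_v\neq\emptyset$. If $|\rho^{\mathsf{backw}}_v|\leq\frac1{\delta'}$ then $\mathsf{backw}(v)=\rho^{\mathsf{backw}}_v$, hence $w\in\mathsf{backw}(v)\subseteq\widehat M$, contradicting $w\notin\widehat M$; therefore $|\rho^{\mathsf{backw}}_v|>\frac1{\delta'}$ and $|\mathsf{backw}(v)|=\frac1{\delta'}$, giving the first claimed equality. Next I would show every $r\in\mathsf{backw}(v)$ yields a triangle $G[\{v,u,r\}]$. Since $\mathsf{backw}(v)$ consists of the $\frac1{\delta'}$ \emph{largest} vertices of $\rho^{\mathsf{backw}}_v$ and $w\in\rho^{\mathsf{backw}}_v$ with $w\notin\mathsf{backw}(v)$ (as $w\notin\widehat M$), every $r\in\mathsf{backw}(v)$ satisfies $w<_\alpha r\leq_\alpha p$. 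I would then reconstruct the arc directions: since $v$ has position $p$ and $r\leq_\alpha p$ (and $r\in V(G)\setminus(\mathsf{support}(\alpha)\cup M)$), the definition of position gives $(r,v)\in E(G)$. Since $u$ has position $r'\geq_\alpha w$ and $w<_\alpha r$, I need $r'\geq_\alpha r$ to conclude; here I use that $w$ and $r$ are both genuine unmarked vertices and $r'\geq_\alpha w$ forces, by the structure of positions, that $u$'s arc to $r$ is $(r,u)$ exactly when $r\leq_\alpha r'$, and a short argument from $w<_\alpha r$ together with $w$ and $r$ lying in the appropriate interval shows $r\leq_\alpha r'$, hence $(r,u)\in E(G)$. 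Wait --- more carefully: the triangle $G[\{v,u,w\}]$ with $(v,u),(u,w),(w,v)$? No: $(v,u)\in E(G)$ and it is a triangle, and subcase (ii) says $(v,w)\in E(G)$? Actually subcase~(ii) as stated in Lemma~\ref{lem:traignleCharacterization} has $v$ of position $r'\geq_\alpha w$, so $(w,v)\in E(G)$, and then the triangle forces $(v,u),(u,w)$, i.e.\ the cyclic order is $u\to w\to v\to u$. Hmm, but we assumed $(v,u)\in E(G)$; let me instead just directly use: $G[\{v,u,r\}]$ is a triangle iff the arcs among $v,u,r$ form a directed $3$-cycle. Given $(v,u)\in E(G)$, $(r,v)\in E(G)$ (from $r\leq_\alpha p$ and $v$'s position), we need $(u,r)\in E(G)$; this follows because $u$ has position $r'$ and $w<_\alpha r$ with $r'\geq_\alpha w$, and one checks $r$ sits strictly above $w$ yet the choice of the \emph{largest} vertices of $\rho^{\mathsf{backw}}_v$ keeps $r\leq_\alpha p$, and a comparison of $r$ with $r'$ using that $w$ was the witness in the original triangle yields $(u,r)\in E(G)$. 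The symmetric case $(u,v)\in E(G)$ is handled identically with $\mathsf{forw}(v)$, $\rho^{\mathsf{forw}}_v$, the condition $w>_\alpha p$, and ``smallest'' in place of ``largest''.

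The main obstacle I anticipate is the last step: cleanly deducing that the arc between $u$ and an arbitrary $r\in\mathsf{backw}(v)$ has the same orientation as the arc between $u$ and $w$ in the original triangle. One must carefully compare the positions of $r$, $w$, and $u$ in $<_\alpha$: the key point is that $\mathsf{backw}(v)$ consists of vertices lying in the $<_\alpha$-interval strictly between $w$ and $p$ (the position of $v$), so $w$ is a ``lower bound'' on this interval and $u$'s position $r'$ is $\geq_\alpha w$; to get $(u,r)\in E(G)$ for every such $r$ one needs $r'\geq_\alpha r$ for all $r\in\mathsf{backw}(v)$, which is not automatic from $r'\geq_\alpha w$ alone. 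I expect this is resolved by observing that $w$ itself was chosen so that $G[\{v,u,w\}]$ is a triangle with $u$ of position $r'\geq_\alpha w$; since $\mathsf{backw}(v)$ are the $\frac1{\delta'}$ largest elements of $\rho^{\mathsf{backw}}_v$ while $w\in\rho^{\mathsf{backw}}_v\setminus\mathsf{backw}(v)$, and $\rho^{\mathsf{backw}}_v$ excludes all already-marked vertices, every element of $\mathsf{backw}(v)$ lies in the $<_\alpha$-interval $(w,p]$; then $r'\geq_\alpha w$ must in fact be strengthened to $r'\geq_\alpha p\geq_\alpha r$ because any unmarked vertex of position between $w$ and $p$ would, together with $v$ and $u$, have been available and the marking/position analysis of Lemma~\ref{lem:createModulesPack} forbids $u$'s position from falling strictly inside a region in a way that separates $r$ from $p$. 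Making this precise --- i.e., pinning down exactly why $u$'s position cannot lie strictly between two consecutive elements that would break the argument --- is the technical crux, and I would handle it by a direct case analysis on where $r'$ sits relative to the interval $(w,p]$, using that $u$ $M$-fits $<_\alpha$ and hence behaves uniformly (like a module element) across that interval.
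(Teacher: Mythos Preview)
Your overall approach---use Lemma~\ref{lem:traignleCharacterization} to locate $w$ relative to the positions of $v$ and $u$, deduce $w\in\rho^{\mathsf{backw}}_v$, conclude $|\mathsf{backw}(v)|=\frac1{\delta'}$, then argue every $r\in\mathsf{backw}(v)$ yields a triangle---is exactly the paper's approach, and the part up to and including $|\mathsf{backw}(v)|=\frac1{\delta'}$ is fine.

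The ``obstacle'' you identify at the end, however, is self-inflicted: midway through you swap the position constraints on $u$ and $v$. In subcase~(ii) of Lemma~\ref{lem:traignleCharacterization} (the case $(v,u)\in E(G)$), it is $v$ whose position $p$ satisfies $p\geq_\alpha w$, while $u$'s position $p'$ satisfies $p'=0$ or $p'<_\alpha w$. You state this correctly at the top of your plan, but in the triangle-reconstruction paragraph you write ``$u$ has position $r'$ \ldots\ with $r'\geq_\alpha w$'', which is the constraint on $v$, not on $u$. With the correct constraint $p'<_\alpha w$, the step you call the ``technical crux'' disappears: for any $r\in\mathsf{backw}(v)$ we have $w<_\alpha r$ (since $w\in\rho^{\mathsf{backw}}_v\setminus\mathsf{backw}(v)$ and $\mathsf{backw}(v)$ consists of the \emph{largest} elements of $\rho^{\mathsf{backw}}_v$), hence $p'<_\alpha w<_\alpha r$, which by the definition of $u$'s position gives $(u,r)\in E(G)$. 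Together with $(r,v)\in E(G)$ (from $r\leq_\alpha p$) and $(v,u)\in E(G)$, the cycle $v\to u\to r\to v$ is complete. No case analysis on where $u$'s position sits relative to $(w,p]$ is needed, because $u$'s position is not in that interval at all---it lies strictly below $w$.

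One related slip: you write that to get $(u,r)\in E(G)$ you ``need $r'\geq_\alpha r$'', but by the definition of position that inequality would yield $(r,u)$; what you actually need is $r>_\alpha p'$, and that is precisely what $p'<_\alpha w<_\alpha r$ provides.
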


\begin{proof}
We only give a proof for the case where $(v,u)\in E(G)$, as the proof for the case where $(u,v)\in E(G)$ is symmetric. Then, $(w,v)\in E(G)$. So, Lemma \ref{lem:traignleCharacterization} implies that $u$ is of position $0$ or $p'<_\alpha w$, and $v$ is of position $p\geq_\alpha w$. Thus, $w\in\rho^\mathsf{backw}(v)$.  Having $w\in V(G)\setminus(\mathsf{support}(\alpha)\cup M\cup\widehat{M})$ also means that $w\notin\mathsf{backw}(v)$, and therefore necessarily $|\mathsf{backw}(v)|=\frac{1}{\delta'}$. Now, consider some $r\in \mathsf{backw}(v)$. Because $w\in\rho^\mathsf{backw}(v)$ but $w\notin\mathsf{backw}(v)$, this means that $w<_\alpha r$ (because we insert the largest vertices from $\rho^\mathsf{backw}(v)$ into $\mathsf{backw}(v)$). Hence, since $u$ is of position $0$ or $p'<_\alpha w$, we have that $(u,r)\in E(G)$. Further, by the definition of $\rho^\mathsf{backw}(v)$, we know that $r\leq p$, and therefore $(r,v)\in E(G)$. Thus, indeed $G[\{v,u,r\}]$ is a triangle. This completes the proof.
\end{proof}

We now argue that if either all vertices in $\mathsf{backw}(v)$ are deleted or all vertices in $\mathsf{forw}(v)$ are deleted (or both),  then it is not ``costly'' to seek only solutions that delete $v$ as well.

\begin{lemma}\label{lem:fvstDhatSmall}
Given $0<\delta,\delta'<1$, a tournament $G$ and an optimal solution $\alpha$ to the {\sc $3$-Hitting Set} instance corresponding to $G$, let $\{\mathsf{mark}(v)\}|_{v\in\alpha^{-1}(1)},M,D$ be the output of {\sf Marking}$(\delta,G,\alpha)$, and $\{\mathsf{backw}(v),\mathsf{forw}(v)\}|_{v\in\mathsf{support}(\alpha)},\widehat{M}$ be the output of  {\sf ExtraMarking}$(\delta',G,$ $\alpha,M)$. Let $S'$ be a solution to $G'=G-(D\cup X)$ for $X=V(G)\setminus(\mathsf{support}(\alpha)\cup M\cup\widehat{M})$. Let $Y=\{v\in\mathsf{support}(\alpha)\setminus D: |\mathsf{backw}(v)|=\frac{1}{\delta'},\mathsf{backw}(v)\subseteq S'\}\cup \{v\in\mathsf{support}(\alpha)\setminus D: |\mathsf{forw}(v)|=\frac{1}{\delta'},\mathsf{forw}(v)\subseteq S'\}$. Then, $|Y|\leq\delta'|S'|$.
\end{lemma}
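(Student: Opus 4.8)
The plan is to mimic the charging argument of Lemma~\ref{lem:fvstDSmall}: assign to each $v\in Y$ a pocket of $\frac{1}{\delta'}$ vertices of $S'$, and show that distinct vertices of $Y$ receive disjoint pockets, so that $|S'|\geq\frac{1}{\delta'}|Y|$.

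The first step is to establish the key structural fact: the sets $\{\mathsf{backw}(v)\cup\mathsf{forw}(v)\}_{v\in\mathsf{support}(\alpha)\setminus D}$ are pairwise disjoint. This is immediate from the definition of {\sf ExtraMarking}: when the procedure processes a vertex $v$, the candidate set $\rho^\mathsf{backw}_v$ (respectively $\rho^\mathsf{forw}_v$) explicitly excludes $\bigcup_{r\in\mathsf{support}(\alpha)}(\mathsf{backw}(r)\cup\mathsf{forw}(r))$, which at that moment already contains $\mathsf{backw}(r)\cup\mathsf{forw}(r)$ for every $r$ processed before $v$, and -- in the case of $\rho^\mathsf{forw}_v$ -- also $\mathsf{backw}(v)$ itself, which was fixed in the immediately preceding sub-step. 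Since $\mathsf{backw}(v)\subseteq\rho^\mathsf{backw}_v$ and $\mathsf{forw}(v)\subseteq\rho^\mathsf{forw}_v$, a short induction on the processing order shows that all these sets are pairwise disjoint (and, incidentally, disjoint from $\mathsf{support}(\alpha)\cup M$).

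The second step bounds the contribution of a single $v\in Y$. By the definition of $Y$, either $|\mathsf{backw}(v)|=\frac{1}{\delta'}$ and $\mathsf{backw}(v)\subseteq S'$, or $|\mathsf{forw}(v)|=\frac{1}{\delta'}$ and $\mathsf{forw}(v)\subseteq S'$; in either case $|S'\cap(\mathsf{backw}(v)\cup\mathsf{forw}(v))|\geq\frac{1}{\delta'}$. Summing over all $v\in Y$ and using the disjointness from the first step,
\[
|S'|\ \geq\ \sum_{v\in Y}\bigl|S'\cap(\mathsf{backw}(v)\cup\mathsf{forw}(v))\bigr|\ \geq\ \frac{|Y|}{\delta'},
\]
which rearranges to $|Y|\leq\delta'|S'|$, as desired.

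I do not anticipate a genuine obstacle here; the only point that needs care is the disjointness bookkeeping in {\sf ExtraMarking} -- in particular the fact that $\mathsf{backw}(v)$ is set before $\mathsf{forw}(v)$ and is therefore among the vertices excluded when forming $\rho^\mathsf{forw}_v$, which is what guarantees $\mathsf{backw}(v)\cap\mathsf{forw}(v)=\emptyset$. This is a routine verification from the definition.
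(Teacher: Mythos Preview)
Your proposal is correct and follows essentially the same approach as the paper's proof: both rely on the pairwise disjointness of the sets $\mathsf{backw}(v)$ and $\mathsf{forw}(v)$ (which you verify carefully from the definition of {\sf ExtraMarking}) to charge each $v\in Y$ a private block of $\frac{1}{\delta'}$ vertices inside $S'$. Your write-up is in fact more detailed than the paper's two-sentence argument, and your phrasing via $|S'\cap(\mathsf{backw}(v)\cup\mathsf{forw}(v))|\geq\frac{1}{\delta'}$ cleanly sidesteps the minor imprecision in the paper's claim that ``$|Y|$ is precisely the number of such sets'' (a single $v$ could in principle have both $\mathsf{backw}(v)$ and $\mathsf{forw}(v)$ of size $\frac{1}{\delta'}$ contained in $S'$).
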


\begin{proof}
Because the collection of the sets $\mathsf{backw}(v)$ and $\mathsf{forw}(v)$ taken over all vertices $v\in\mathsf{support}(\alpha)\setminus D$ are pairwise disjoint, $S'$ can contain at most $\delta'|S'|$ such sets of size $\frac{1}{\delta'}$. As $|Y|$ is precisely the number of such sets of size $\frac{1}{\delta'}$ that $S'$ contains, the lemma follows.
\end{proof}

\begin{definition}\label{def:fvstRule}
Given $0<\delta,\delta'<1$, the {\em {\sc FVST}$(\delta,\delta')$ reduction rule} is defined as follows:
\begin{itemize}
\item {\bf reduce:} Let $I=G$ be an instance of {\sc Feedback Vertex Set in Tournaments}. Use the algorithm in Proposition \ref{prop:solveLP} to compute an optimal solution $\alpha$ to the classic LP corresponding to it (Definitions \ref{def:HSLP} and \ref{def:FVSTtoHS}). Let $\{\mathsf{mark}(v)\}|_{v\in\alpha^{-1}(1)},M,D$ be the output of {\sf Marking}$(\epsilon,G,\alpha)$. Let $\{\mathsf{backw}(v),\mathsf{forw}(v)\}|_{v\in\mathsf{support}(\alpha)},\widehat{M}$ be the output of  {\sf ExtraMarking}$(\delta',$ $G,\alpha,M)$.

Output $I'=G'$ where $G'=G-(D\cup X)$ for $X=V(G)\setminus(\mathsf{support}(\alpha)\cup M\cup\widehat{M})$.

\item {\bf lift:} Given $I,I'$ and a solution $S'$ to $I'$, output  $S=S'\cup D\cup Y$ where $Y=\{v\in\mathsf{support}(\alpha)\setminus D: |\mathsf{backw}(v)|=\frac{1}{\delta'},\mathsf{backw}(v)\subseteq S'\}\cup \{v\in\mathsf{support}(\alpha)\setminus D: |\mathsf{forw}(v)|=\frac{1}{\delta'},\mathsf{forw}(v)\subseteq S'\}$.
\end{itemize}
\end{definition}

Just like Lemma \ref{lem:cvdSupport} in Section \ref{sec:P3}, here also we present a simple lemma that will help us derive a tighter bound on the number of vertices in the output graph. Since the proof follows the exact same arguments as the proof of Lemma \ref{lem:cvdSupport}, it is omitted. 

\begin{lemma}\label{lem:fvstSupport}
Let $I=G$ be an instance of {\sc Feedback Vertex Set in Tournaments}, and let $\alpha$ be a solution to the {\sc $3$-Hitting Set} instance corresponding to $G$. Then, $|\mathsf{support}(\alpha)|\leq 3\mathsf{frac}(I)-2|\alpha^{-1}(1)|$.
\end{lemma}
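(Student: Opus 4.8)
The plan is to mirror, essentially verbatim, the proof of Lemma~\ref{lem:cvdSupport}. The only features of {\sc Cluster Vertex Deletion} used there are that it can be cast as a {\sc $3$-Hitting Set} instance in which every forbidden set has size at most three, and that forbidden sets behave well under vertex deletion; both hold for {\sc Feedback Vertex Set in Tournaments}. Indeed, by Definition~\ref{def:FVSTtoHS} the forbidden sets of $\mathsf{HS}(G)$ are exactly the vertex sets of triangles of $G$, and a triple $\{u,v,w\}$ induces a triangle in $G-\alpha^{-1}(1)$ if and only if it induces a triangle in $G$ and $\{u,v,w\}\cap\alpha^{-1}(1)=\emptyset$. (As in Lemma~\ref{lem:cvdSupport}, the intended reading is that $\alpha$ is an \emph{optimal} LP solution, which is how the lemma is invoked via Proposition~\ref{prop:solveLP} in Definition~\ref{def:fvstRule}.)

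First I would let $\beta$ be the restriction of $\alpha$ to $V(G)\setminus\alpha^{-1}(1)$ and let $G'$ denote $G-\alpha^{-1}(1)$, viewed as an instance $I'$ of {\sc Feedback Vertex Set in Tournaments}. Since an optimal solution assigns no value exceeding $1$, every vertex of $\alpha^{-1}(1)$ contributes exactly $1$ to the objective, so $\beta$ is a feasible solution to the classic LP of $\mathsf{HS}(G')$ of value $\sum_u\beta(x_u)=\mathsf{frac}(I)-|\alpha^{-1}(1)|$.

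Second I would show that $\beta$ is in fact optimal for that LP. Given any feasible solution $\beta'$ to the LP of $\mathsf{HS}(G')$, extend it to $\alpha'$ on $V(G)$ by setting $\alpha'(x_u)=1$ for $u\in\alpha^{-1}(1)$ and $\alpha'(x_u)=\beta'(x_u)$ otherwise. Any triangle of $G$ either lies entirely in $V(G')$, and is then hit by $\beta'$, or meets $\alpha^{-1}(1)$, and is then hit because the corresponding variable equals $1$; hence $\alpha'$ is feasible for $\mathsf{HS}(G)$ with value equal to that of $\beta'$ plus $|\alpha^{-1}(1)|$. Optimality of $\alpha$ then gives $\mathrm{val}(\beta)+|\alpha^{-1}(1)|=\mathrm{val}(\alpha)\le\mathrm{val}(\alpha')=\mathrm{val}(\beta')+|\alpha^{-1}(1)|$, so $\mathrm{val}(\beta)\le\mathrm{val}(\beta')$; as $\beta'$ was arbitrary, $\beta$ is optimal and $\mathsf{frac}(I')=\mathsf{frac}(I)-|\alpha^{-1}(1)|$.

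Finally, since $\mathsf{HS}(G')$ is a {\sc $3$-Hitting Set} instance and $\beta$ is an optimal LP solution to it, Theorem~\ref{thm:support} with $d=3$ yields $|\mathsf{support}(\beta)|\le 3\cdot\mathsf{frac}(I')=3(\mathsf{frac}(I)-|\alpha^{-1}(1)|)$, whence $|\mathsf{support}(\alpha)|=|\mathsf{support}(\beta)|+|\alpha^{-1}(1)|\le 3\mathsf{frac}(I)-2|\alpha^{-1}(1)|$, as claimed. I do not expect any genuine obstacle here: the one point requiring a (trivial) check is the triangle-versus-subtournament correspondence that makes the extend-and-restrict argument valid, and everything else is bookkeeping identical to the {\sc Cluster Vertex Deletion} case.
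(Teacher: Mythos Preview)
Your proposal is correct and matches the paper's intended approach exactly: the paper omits the proof of Lemma~\ref{lem:fvstSupport} with the remark that it ``follows the exact same arguments as the proof of Lemma~\ref{lem:cvdSupport},'' and what you wrote is precisely that transfer, including the restrict/extend argument to show $\beta$ is optimal and the appeal to Theorem~\ref{thm:support} with $d=3$. Your observation that the lemma is applied only to the optimal $\alpha$ computed via Proposition~\ref{prop:solveLP} (so that optimality may be assumed) is also on point.
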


Based on Lemmata \ref{lem:traignleCharacterization}, \ref{lem:fvstDSmall}, \ref{lem:fvstExtraMarkSafe}, \ref{lem:fvstDhatSmall} and \ref{lem:fvstSupport}, we are now ready to prove the main theorem of this subsection.

\begin{theorem}\label{thm:fvst}
Let $0<\epsilon<1$. The {\sc Feedback Vertex Set in Tournaments} problem, parameterized by the fractional optimum of the classic LP, admits a $(1+\epsilon)$-approximate $(13+\frac{9}{\epsilon})\mathsf{frac}(I)$-vertex kernel.
\end{theorem}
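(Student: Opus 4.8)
\noindent\emph{Proof plan.} The kernel will be produced by a single application of the {\sc FVST}$(\delta,\delta')$ reduction rule of Definition~\ref{def:fvstRule}, with $\delta$ and $\delta'$ chosen as suitable constant multiples of $\epsilon$; concretely $\delta'=\frac{2\epsilon}{3}$ and $\delta=\frac{\epsilon}{6\epsilon+3}$ will do (both are smaller than $\epsilon$, which is needed because the two marking steps each contribute to the loss). Polynomial running time is immediate from Proposition~\ref{prop:solveLP} and the fact that both marking procedures are greedy. The first substantive point is correctness of \textbf{lift}, i.e.\ that $S=S'\cup D\cup Y$ is a feedback vertex set of $G$; by Corollary~\ref{cor:charByTriangles} it is enough to hit every triangle. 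Triangles meeting $D$ are hit by $D$. For the others, Lemma~\ref{lem:traignleCharacterization} (which uses the ``unique position'' guarantee of Lemma~\ref{lem:createModulesPack}) says each surviving triangle either lies entirely inside $V(G')=(\mathsf{support}(\alpha)\setminus D)\cup M\cup\widehat{M}$, hence is hit by $S'$, or has the form $\{v,u,w\}$ with $v\in\mathsf{support}(\alpha)\setminus D$, $u\in(\mathsf{support}(\alpha)\setminus D)\cup M$, $w\notin\mathsf{support}(\alpha)\cup M$; in the remaining subcase $w\notin\widehat{M}$, Lemma~\ref{lem:fvstExtraMarkSafe} exhibits a whole ``fan'' of $\frac{1}{\delta'}$ triangles $\{v,u,r\}$ with $r$ ranging over $\mathsf{backw}(v)$ (or $\mathsf{forw}(v)$), all inside $G'$ and all sharing the arc between $v$ and $u$, so if $S'$ contains neither $v$ nor $u$ it must contain that fan entirely, putting $v$ into $Y\subseteq S$, while otherwise $S'$ already hits $\{v,u,w\}$.

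For the approximation ratio I would mimic the proof of Theorem~\ref{thm:cvd}. Fix an optimum $S^\star$ of $I$. As $G'$ is an induced subtournament of $G$, the set $S^\star\cap V(G')$ solves $I'$, so $\mathsf{opt}(I')\le|S^\star|-|S^\star\cap D|$; and $|S^\star\cap D|=|D|-|D\setminus S^\star|\ge|D|-\delta\,\mathsf{opt}(I)$ by Lemma~\ref{lem:fvstDSmall}, hence $\mathsf{opt}(I')\le(1+\delta)\,\mathsf{opt}(I)-|D|$. On the other side $|Y|\le\delta'|S'|$ by Lemma~\ref{lem:fvstDhatSmall}, so $|S|\le(1+\delta')|S'|+|D|$. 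Combining these via $\mathsf{opt}(I)\ge\bigl(\mathsf{opt}(I')+|D|\bigr)/(1+\delta)$ and then Proposition~\ref{prop:numbers} together with $|S'|\ge\mathsf{opt}(I')$ gives $\frac{|S|}{\mathsf{opt}(I)}\le(1+\delta)(1+\delta')\frac{|S'|}{\mathsf{opt}(I')}$ (the trivial case $D=\emptyset$ being checked separately), and with the above $\delta,\delta'$ the inequality $(1+\delta)(1+\delta')\le 1+\epsilon$ reduces to $0\le 4\epsilon^2$.

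For the vertex bound, $V(G')$ is the disjoint union of $\mathsf{support}(\alpha)\setminus D$, of $M$, and of $\widehat{M}$. Every $\mathsf{mark}(v)$ has at most $\frac{2}{\delta}$ vertices, so $|M|\le\frac{2}{\delta}|\alpha^{-1}(1)|$; every $\mathsf{backw}(v)$ and $\mathsf{forw}(v)$ has at most $\frac{1}{\delta'}$ vertices, so $|\widehat{M}|\le\frac{2}{\delta'}|\mathsf{support}(\alpha)|$; and Lemma~\ref{lem:fvstSupport} gives $|\mathsf{support}(\alpha)|\le 3\,\mathsf{frac}(I)-2|\alpha^{-1}(1)|$. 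Substituting, $|V(G')|$ becomes an affine function of $|\alpha^{-1}(1)|\in[0,\mathsf{frac}(I)]$, hence is maximised at an endpoint, and equals $\max\bigl(3+\frac{6}{\delta'},\,1+\frac{2}{\delta'}+\frac{2}{\delta}\bigr)\mathsf{frac}(I)$; for $\delta'=\frac{2\epsilon}{3}$ and $\delta=\frac{\epsilon}{6\epsilon+3}$ this is exactly $\bigl(13+\frac{9}{\epsilon}\bigr)\mathsf{frac}(I)$. Being \good{} follows because the whole analysis is expressed through $\mathsf{frac}(I')$ in the usual manner.

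The main obstacle is not this arithmetic but the structural classification underneath it: Lemma~\ref{lem:traignleCharacterization} and Lemma~\ref{lem:fvstExtraMarkSafe} hinge on the fact that after the first marking step each vertex of $\mathsf{support}(\alpha)\setminus D$ occupies a well-defined position in the topological order $<_\alpha$ (Lemma~\ref{lem:createModulesPack}), and the delicate work is the exhaustive case analysis of which triangles survive $G-D$ together with the verification that the forward/backward marked sets constituting $\widehat{M}$ are large enough to ``witness'' every such triangle, so that \textbf{lift} stays correct; one also has to be careful that $\mathsf{support}(\alpha)$, $M$ and $\widehat{M}$ are pairwise disjoint so that the vertex count is not inflated.
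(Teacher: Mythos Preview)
Your proof is correct and follows the same route as the paper's, with the same $\delta'=\frac{2\epsilon}{3}$ but a different (equally valid) choice of $\delta$; the paper takes $\delta=\frac{\epsilon}{3}-\frac{2\epsilon^2}{9}$, and both choices yield exactly $(13+\frac{9}{\epsilon})\mathsf{frac}(I)$ through the same chain of Lemmas~\ref{lem:traignleCharacterization}, \ref{lem:fvstDSmall}, \ref{lem:fvstExtraMarkSafe}, \ref{lem:fvstDhatSmall}, \ref{lem:fvstSupport} and Proposition~\ref{prop:numbers}. (Your closing aside about the kernel being \good{} is extraneous: the theorem does not assert this, and the size bound is expressed in $\mathsf{frac}(I)$, not $\mathsf{frac}(I')$.)
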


\begin{proof}
Our lossy kernelization algorithm consists only of the {\sc FVST$(\delta,\delta')$} reduction rule where $\delta=\frac{\epsilon}{3}-\frac{2\epsilon^2}{9},\delta'=\frac{2\epsilon}{3}$. Clearly, it runs in polynomial time. 

First, we consider the number of vertices in the output graph $G'$ of {\bf reduce}. By Lemma~\ref{lem:fvstSupport}, $|\mathsf{support}(\alpha)|\leq 3\mathsf{frac}(I)-2|\alpha^{-1}(1)|$ (I). Moreover, $|M|=|\bigcup_{v\in\alpha^{-1}(1)}\mathsf{mark}(v)| = 2|\bigcup_{v\in\alpha^{-1}(1)}\nu_v|\leq \frac{2}{\delta}|\alpha^{-1}(1)|$ (II). Additionally, $|\widehat{M}|=|\bigcup_{v\in\mathsf{support}(\alpha)\setminus D}(\mathsf{backw}(v)\cup\mathsf{forw}(v))|\leq \frac{2}{\delta'}|\mathsf{support}(\alpha)|$ (III). As $V(G') \subseteq \mathsf{support}(\alpha)\cup M\cup \widehat{M}$ (more precisely, $V(G') = (\mathsf{support}(\alpha)\setminus D)\cup M\cup \widehat{M}$), we have that
\[\begin{array}{lll}
\smallskip
|V(G')| & \leq |\mathsf{support}(\alpha)| + |M|+|\widehat{M}| & [\mbox{Last sentence}]\\\

\smallskip
&\leq |\mathsf{support}(\alpha)| + \frac{2}{\delta}|\alpha^{-1}(1)|+\frac{2}{\delta'}|\mathsf{support}(\alpha)|\\

\smallskip
&= (1+\frac{2}{\delta'})|\mathsf{support}(\alpha)| + \frac{2}{\delta}|\alpha^{-1}(1)|& [\mbox{(I)}]\\

\smallskip
&\leq (1+\frac{2}{\delta'})(3\mathsf{frac}(I)-2|\alpha^{-1}(1)|) + \frac{2}{\delta}|\alpha^{-1}(1)|& [\mbox{(II)+(III)}]\\

\smallskip
&\leq 3(1+\frac{2}{\delta'})\mathsf{frac}(I) + 2(\frac{1}{\delta}-\frac{2}{\delta'}-1)|\alpha^{-1}(1)|\\

\smallskip
&= 3(1+\frac{3}{\epsilon})\mathsf{frac}(I) +2(\frac{9}{\epsilon(3-2\epsilon)}-\frac{3}{\epsilon}-1)|\alpha^{-1}(1)|& [\mbox{Substitute $\delta$ and $\delta'$}]\\

\smallskip
&= 3(1+\frac{3}{\epsilon})\mathsf{frac}(I) +2(\frac{6}{3-2\epsilon}-1)|\alpha^{-1}(1)|\\

\smallskip
&\leq 3(1+\frac{3}{\epsilon})\mathsf{frac}(I) +10|\alpha^{-1}(1)|& [\epsilon<1]\\

& \leq (13+\frac{9}{\epsilon})\mathsf{frac}(I) & [|\alpha^{-1}(1)|\leq \mathsf{frac}(I)].
\end{array}\]

We turn to prove that {\bf lift} returns a solution having the desired approximation ratio. To this end, suppose that it is given $I,I',S'$ where $S'$ is a solution to $I'$. We first show that $S=S'\cup D\cup Y$ is a solution to $I$. Targeting a contradiction, suppose that this is false, and hence there exists a triangle $T$ in $G-S$. As $D\subseteq S$, this triangle also exists in $G-D$, and hence by Lemma \ref{lem:traignleCharacterization}, $T$ consists of either 
\begin{enumerate}
\item three vertices of $(\mathsf{support}(\alpha)\setminus D)\cup M$, or
\item a vertex $v\in\mathsf{support}(\alpha)\setminus D$, a vertex $u\in(\mathsf{support}(\alpha)\setminus D)\cup M$ and a vertex $w\in V(G)\setminus (\mathsf{support}(\alpha)\cup M)$ such that either {\em (i)} $(u,v)\in E(G)$, $v$ is of position $0$ or $r<_\alpha w$, and $u$ is of position $r'\geq_\alpha w$, or {\em (ii)} $(v,u)\in E(G)$, $u$ is of position $0$ or $r<_\alpha w$, and $v$ is of position $r'\geq_\alpha w$.
\end{enumerate}
Since $S'$ is a solution to $I'$, $T$ must consists of at least one vertex from $V(G)\setminus V(G')=X=V(G)\setminus (\mathsf{support}(\alpha)\cup M\cup\widehat{M})$, and therefore the first case is impossible. Moreover, this implies that in the second case, $w\in X$. We only consider the case where $(v,u)\in E(G)$, as the proof for the other case (when $(u,v)\in E(G)$) follows symmetric arguments. So, $T=G[\{v,u,w\}]$ where $v\in\mathsf{support}(\alpha)\setminus D$, $u\in(\mathsf{support}(\alpha)\setminus D)\cup M$, $w\in V(G)\setminus (\mathsf{support}(\alpha)\cup M\cup\widehat{M})$, $(v,u)\in E(G)$, $u$ is of position $0$ or $r<_\alpha w$, and $v$ is of position $r'\geq_\alpha w$. By Lemma \ref{lem:fvstExtraMarkSafe}, this means that $|\mathsf{backw}(v)|=\frac{1}{\delta'}$ and for every $r\in \mathsf{backw}(v)$, $G[\{v,u,r\}]$ is a triangle. As $S'$ is a solution to $I'$ that excludes $u$ and $v$, and as for every $r\in \mathsf{backw}(v)$, $G[\{v,u,r\}]$ exists in $G'$, we have that $\mathsf{backw}(v)\subseteq S'$. However, this implies that $v\in Y$, and hence $v\in S$, so $T$ cannot exist in $G-S$. As we have reached a contradiction, $S$ is indeed a solution to $I$. 

It remains to consider the approximation ratio of $S$. To this end, first note that  $|S|\leq |S'|+|D|+|Y|$. So, by Lemma \ref{lem:fvstDhatSmall}, {\em (i)} $|S|\leq (1+\delta')|S'|+|D|$. On the other hand, let $S^\star$ be an optimal solution to $I$. Observe that, as $G'$ is a subgraph of $G$, $S^\star\cap V(G')$ is a solution to $I'$. So, $\mathsf{opt}(I')\leq |S^\star\cap V(G')|$. Further, $S^\star\cap V(G)\subseteq S^\star\setminus D$, and by Lemma \ref{lem:fvstDSmall}, $|D\setminus S^\star|\leq \delta|S^\star|$. Thus, $|S^\star\cap V(G')|\leq |S^\star\setminus D| = |S^\star|-|S^\star\cap D|=|S^\star|-(|D|-|D\setminus S^\star|)\leq (1+\delta)|S^\star|-|D|$, which means that {\em (ii)} $\mathsf{opt}(I')\leq (1+\delta)\mathsf{opt}(I)-|D|$. 
Notice that $(1+\delta)(1+\delta')=(1+\frac{\epsilon}{3}-\frac{2\epsilon^2}{9})(1+\frac{2\epsilon}{3})\leq(1+\epsilon)$. Then, from {\em (i)} and {\em (ii)}, we conclude that
\[\begin{array}{ll}
\smallskip
\displaystyle{\frac{|S|}{\mathsf{opt}(I)}}&\leq \displaystyle{(1+\delta)\frac{(1+\delta')|S'|+|D|}{\mathsf{opt}(I')+|D|}}\\

\smallskip
&\leq \displaystyle{(1+\delta)(1+\delta')\frac{|S'|+|D|}{\mathsf{opt}(I')+|D|}}\\

\smallskip
&\leq \displaystyle{(1+\epsilon)\frac{|S'|+|D|}{\mathsf{opt}(I')+|D|}}\\

&\leq \displaystyle{(1+\epsilon)\max\{\frac{|S'|}{\mathsf{opt}(I')}, \frac{|D|}{|D|}\}}\\

&= \displaystyle{(1+\epsilon)\frac{|S'|}{\mathsf{opt}(I')}.}
\end{array}\]
Here, the last inequality follows from Proposition \ref{prop:numbers}. This completes the proof.
\end{proof}

\begin{corollary}
Let $0<\epsilon<1$. The {\sc Feedback Vertex Set in Tournaments} problem, parameterized by the optimum, admits a $(1+\epsilon)$-approximate $(13+\frac{9}{\epsilon})\mathsf{opt}$-vertex kernel.
\end{corollary}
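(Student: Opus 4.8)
The plan is to observe that this corollary is an essentially immediate consequence of Theorem~\ref{thm:fvst}, obtained by the same passage from the $\mathsf{frac}$-parameterization to the $\mathsf{opt}$-parameterization that was used to derive the $d\cdot\mathsf{opt}$-element corollary from Theorem~\ref{thm:HSElementFrac}. Concretely, I would keep exactly the kernelization algorithm of Theorem~\ref{thm:fvst}, namely the {\sc FVST}$(\delta,\delta')$ reduction rule with $\delta=\frac{\epsilon}{3}-\frac{2\epsilon^2}{9}$ and $\delta'=\frac{2\epsilon}{3}$, and simply re-interpret it as a kernelization parameterized by $\mathsf{opt}(I)$ instead of by $\mathsf{frac}(I)$. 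Since in both settings the parameter is structural, $\pi$ may be dropped, so the only things one must verify are: the output size bound in terms of $\mathsf{opt}$, the non-increase of the parameter, and the approximation guarantee of {\bf lift}.

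For the size bound: Theorem~\ref{thm:fvst} already shows that the output graph $G'$ has at most $(13+\frac{9}{\epsilon})\mathsf{frac}(I)$ vertices, and the value of the LP relaxation never exceeds the integral optimum, i.e.\ $\mathsf{frac}(I)\le\mathsf{opt}(I)$; since $k\mapsto(13+\frac{9}{\epsilon})k$ is monotone, this yields $|V(G')|\le(13+\frac{9}{\epsilon})\mathsf{opt}(I)$, as required. For the parameter: the reduced instance $G'=G-(D\cup X)$ is an induced subtournament of $G$, and deleting vertices cannot increase the size of a minimum feedback vertex set, hence $\mathsf{opt}(I')\le\mathsf{opt}(I)$, i.e.\ $\kappa(I')\le\kappa(I)$. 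For the approximation: the proof of Theorem~\ref{thm:fvst} establishes, for the solution $S$ returned by {\bf lift} on $(I,I',S')$, the inequality $\frac{|S|}{\mathsf{opt}(I)}\le(1+\epsilon)\frac{|S'|}{\mathsf{opt}(I')}$, and this inequality never refers to $\mathsf{frac}$; it therefore holds verbatim in the $\mathsf{opt}$-parameterized setting. Putting these three facts together gives a $(1+\epsilon)$-approximate $(13+\frac{9}{\epsilon})\mathsf{opt}$-vertex kernel.

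I expect no genuine obstacle here: the entire content of the corollary lies in Theorem~\ref{thm:fvst}, and the passage to the $\mathsf{opt}$-parameterization is pure bookkeeping, legitimate precisely because $\mathsf{frac}\le\mathsf{opt}$ and the size function is monotone (the analogue of Lemma~\ref{lem:lossyKerOptToK} in the opposite, and easier, direction). If anything, the only point worth a second look is confirming that re-reading a $\mathsf{frac}$-size guarantee as an $\mathsf{opt}$-size guarantee is compatible with the formal definition of an $\alpha$-approximate $g(\kappa)$-element kernel — which it is, by the same reasoning already applied after Theorem~\ref{thm:HSElementFrac}.
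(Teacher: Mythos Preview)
Your proposal is correct and matches the paper's own (implicit) justification: the corollary is stated without proof, relying on exactly the observation that $\mathsf{frac}(I)\le\mathsf{opt}(I)$ so that the $(13+\tfrac{9}{\epsilon})\mathsf{frac}$-vertex bound from Theorem~\ref{thm:fvst} is also a $(13+\tfrac{9}{\epsilon})\mathsf{opt}$-vertex bound, while the approximation inequality in the proof of Theorem~\ref{thm:fvst} is already phrased in terms of $\mathsf{opt}$. This is the same passage used after Theorems~\ref{thm:HSElementFrac} and~\ref{thm:cvd}.
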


Due to Lemma \ref{lem:lossyKerOptToK}, we also have the following corollary of Theorem \ref{thm:fvst}.

\begin{corollary}
Let $0<\epsilon<1$. The {\sc Feedback Vertex Set in Tournaments} problem, parameterized by a bound $k$ on the solution size, admits a $(1+\epsilon)$-approximate $\frac{13+\frac{9}{\epsilon}}{1+\epsilon}k$-vertex kernel.
\end{corollary}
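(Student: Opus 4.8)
The final statement is Corollary to Theorem~\ref{thm:fvst}: it asserts that \fvst parameterized by a bound $k$ on the solution size admits a $(1+\epsilon)$-approximate $\frac{13+9/\epsilon}{1+\epsilon}k$-vertex kernel. The plan is to derive this as a direct application of Lemma~\ref{lem:lossyKerOptToK} to Theorem~\ref{thm:fvst}, exactly in the way the preceding corollary (the $\OO(\mathsf{opt})$-vertex version) is obtained, only now tracking the dependence on $\epsilon$ through the substitution in Lemma~\ref{lem:lossyKerOptToK}.

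\textbf{Step 1: Restate the starting point.} By Theorem~\ref{thm:fvst} (or, more precisely, the corollary giving the $\mathsf{opt}$-parameterized version), \fvst parameterized by the optimum $\mathsf{opt}$ admits a $(1+\epsilon)$-approximate $g(\mathsf{opt})$-element kernel where $g(\mathsf{opt}) = (13+\frac{9}{\epsilon})\mathsf{opt}$. Here the ``elements'' are vertices of the tournament.

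\textbf{Step 2: Invoke Lemma~\ref{lem:lossyKerOptToK}.} Lemma~\ref{lem:lossyKerOptToK} states that if a minimization problem, parameterized by the optimum, admits an $\alpha$-approximate $g(\mathsf{opt})$-element kernel, then parameterized by a bound $k$ on the solution size it admits an $\alpha$-approximate $g\!\left(\frac{k+1}{\alpha}\right)$-element kernel. We apply this with $\alpha = 1+\epsilon$ and $g(\mathsf{opt}) = (13+\frac{9}{\epsilon})\mathsf{opt}$. Then the element bound becomes
\[
g\!\left(\frac{k+1}{1+\epsilon}\right) = \left(13+\frac{9}{\epsilon}\right)\cdot\frac{k+1}{1+\epsilon} = \frac{13+\frac{9}{\epsilon}}{1+\epsilon}\,(k+1).
\]
This matches the stated bound (up to the harmless $+1$, which the paper's convention absorbs into the $\OO$-style presentation of element bounds; one may equally state it as $\frac{13+9/\epsilon}{1+\epsilon}(k+1)$).

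\textbf{What the main obstacle is (and why there isn't much of one).} There is essentially no obstacle here: the corollary is a mechanical consequence of a lemma and a theorem both proved earlier in the excerpt. The only thing that requires the slightest care is the direction of the substitution in Lemma~\ref{lem:lossyKerOptToK} — one must divide the parameter by $\alpha$, not multiply — and the fact that the ``elements'' in Theorem~\ref{thm:fvst} are vertices, so ``$g(\mathsf{opt})$-element kernel'' indeed means ``$g(\mathsf{opt})$-vertex kernel'' and the conclusion is a vertex bound. One should also note that the parameter $k'$ of the output instance produced by $\mathfrak{B}$ (in the notation of Lemma~\ref{lem:lossyKerOptToK}) still satisfies $k' \le k$, so the reduced instance is a legitimate \fvst instance with a solution-size bound, preserving the format required of a kernel. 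Thus the proof is just: ``Due to Lemma~\ref{lem:lossyKerOptToK} applied to the $(1+\epsilon)$-approximate $(13+\frac{9}{\epsilon})\mathsf{opt}$-vertex kernel from Theorem~\ref{thm:fvst}.''
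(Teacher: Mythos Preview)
Your proposal is correct and follows exactly the paper's own approach: the paper derives this corollary by a one-line application of Lemma~\ref{lem:lossyKerOptToK} to Theorem~\ref{thm:fvst}, just as you do. Your observation about the $(k+1)$ versus $k$ is accurate—the analogous \cvd corollary in the paper keeps the $(k+1)$, so the omission here is a harmless slip in the paper's statement.
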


%\subsection{{\sc Triangle Packing in Tournaments}}
%
%\hly{todo}

%!TEX root = Main.tex
\section{Conclusion}\label{sec:conclusion}
In this paper, we presented positive results on the kernelization complexity of \probdHS, as well as for its special cases  \cvdfull and  \fvstfull. First, we proved that if we allow the kernelization to be {\em lossy} with a qualitatively better loss than the best possible approximation ratio of polynomial time approximation algorithms, then one can obtain kernels where the number of elements is linear for every fixed $d$.  Further, we extended the notion of lossy kernelization algorithms to {\em lossy kernelization protocols} and, then, presented our main result: For any $\epsilon>0$,  {\sc $d$-Hitting Set} admits a (randomized) pure $(d-\delta)$-approximate kernelization protocol of call size $\OO(k^{1+\epsilon})$. Here, the number of rounds and $\delta$ are fixed constants (that depend only on $d$ and $\epsilon$). Finally, we complemented the aforementioned results as follows: for the  special cases of   \probTHS, namely, \cvdfull and  \fvstfull, we showed that for any $0<\epsilon<1$, they admits a $(1+\epsilon)$-approximate $\OO(\frac{1}{\epsilon}\cdot\mathsf{opt})$-vertex kernel.

We conclude the paper with a few interesting open problems.
\begin{enumerate}
\item Does \dhsfull{d} admit a kernel with $f(d)\cdot k^{d-1-\epsilon}$ elements for some fixed $\epsilon>0$, or, even, with just $f(d)\cdot k$ elements? 
\item Does   \dhsfull{d} admit a $(1+\epsilon)$-approximate $\OO(f(\epsilon) \cdot  k)$-element kernel (or protocol)? 
\item Does   \dhsfull{d} admit a $(1+\epsilon)$-approximate $\OO(f(\epsilon) \cdot  k)$-bits kernel (or protocol)?
\item Do \fvstfull\  and {\sc Cluster Vertex Deletion} admit linear vertex kernels? 
\item Are lossy kernelization protocols ``more powerful'' than lossy kernelization algorithms?
\end{enumerate}
% We firmly believe that the answer to all these questions, apart from the first, is affirmative. 
%We believe that the answer to the first question is negative.

\bibliographystyle{plainurl}
\bibliography{book_kernels_fvf,Refs,references,reference-kernel,metakernels_extended}
%\bibliography{Refs,references,reference-kernel,metakernels_extended}

\newpage
\appendix

\section{Problem Definitions}\label{app:problems}

\noindent{\bf Vertex Cover (VC).} Given a graph $G$, compute a minimum-sized vertex cover $S$ of $G$, that is, a subset $S\subseteq V(G)$ such that every edge in $G$ is incident to at least one vertex in $S$.

\medskip
\noindent{\bf $d$-Hitting Set ($d$-HS).} Given a universe $U$ and a family of sets ${\cal F}\subseteq 2^U$ where each set in $\cal F$ has size $d$, compute a minimum-sized hitting set $S$ of $\cal F$, that is, a subset $S\subseteq U$ such that every set in $\cal F$ has non-empty intersection with $S$. 

Note that {\sc Vertex Cover} is equivalent to {\sc $2$-Hitting Set}.

\medskip
\noindent{\bf Cluster Vertex Deletion (CVD).} Given a graph $G$, compute a minimum-sized subset $S\subseteq V(G)$ such that $G-S$ is a cluster graph.

%\paragraph{Induced $P_3$ Packing.} Given a graph $G$, compute a maximum-sized collection of vertex-disjoint induced $P_3$'s in $G$.

\medskip
\noindent{\bf Feedback Vertex Set in Tournaments (FVST).} Given a tournament $G$, compute a minimum-sized subset $S\subseteq V(G)$ such that $G-S$ is acyclic.

%\paragraph{Triangle Packing in Tournaments.} Given a tournament $G$, compute a maximum-sized collection of vertex-disjoint triangles (cycles on three vertices) in $G$.

\end{document}